\documentclass[amsthm]{elsart}
\usepackage{yjsco}

\usepackage{amssymb, amsbsy, amsfonts}
\usepackage{amsmath, amsthm}
\usepackage{latexsym,float,color}
\usepackage{makeidx}         % allows index generation
\usepackage{multicol}

\usepackage{etex, xy}
\xyoption{all}

\numberwithin{equation}{section}

\newcommand{\PPowers}[3]{{#3}^{#1}_{#2}}

\newcommand{\ssum}[2]{\text{\small$\displaystyle\sum_{#1}^{#2}$}}

\newcommand{\ev}{{\rm ev}}
\newcommand{\Shift}{{\mathcal S}}
\newcommand{\seqP}[1]{{S}(#1)}
\newcommand{\expr}{\operatorname{expr}}
\newcommand{\mmod}{\operatorname{mod}}

\newcommand{\notion}[1]{\emph{#1}}

\DeclareMathOperator*{\tsum}{{\textstyle\sum}}
\DeclareMathOperator*{\tprod}{{\textstyle\prod}}

%\newcommand{\tsum}[2]{{\textstyle\text{\footnotesize$\sum_{#1}^{#2}$}}}

% math macros
\let\set\mathbb
\def\vect#1{\mathbf{#1}}
\def\lcm{\operatorname{lcm}}

\def\seqK{{\cal S}(\KK)}

\def\ord{\operatorname{ord}}

\def\lm{\operatorname{lm}}

\def\lc{\operatorname{lc}}

\def\rE{$R$}
\def\piE{$\Pi$}

\def\rpisiSE{$R\Pi\Sigma^*$}
\def\pisiSE{$\Pi\Sigma^*$}
\def\sigmaSE{$\Sigma^*$}

\def\KK{\set K}
\def\NN{\set N}
\def\ZZ{\set Z}
\def\GG{\set G}
\def\LL{\set L}
\def\HH{\set H}
\def\FF{\set F}
\def\EE{\set E}
\def\QQ{\set Q}
\def\AA{\set A}
\def\SA{\set S}

\newcommand{\fct}[3]{#1\colon #2 \to #3}
\newcommand{\dfield}[2]{({#1},{#2})}
\newcommand{\const}[2]{{\rm const}{(#1,#2)}}

\newcommand{\lr}[1]{\langle #1\rangle}
\newcommand{\ltr}[1]{[#1,\tfrac{1}{#1}]}

\overfullrule=5mm
\newdimen\listablecorrection
\listablecorrection=-4pt

\clubpenalty=10000
\widowpenalty = 10000

\newtheorem{theorem}{Theorem}[section]
\newtheorem{proofstep}[theorem]{Proof}
\newtheorem{proposition}[theorem]{Proposition}
\newtheorem{corollary}[theorem]{Corollary}
\newtheorem{lemma}[theorem]{Lemma}
\newtheorem{remark}[theorem]{Remark}
\newtheorem{definition}[theorem]{Definition}
\newtheorem{example}[theorem]{Example}

\makeindex

\begin{document}

\begin{frontmatter}

\title{Summation Theory II: Characterizations of $\boldsymbol{R\Pi\Sigma^*}$-extensions and algorithmic aspects}

\thanks{Supported by the Austrian Science Fund (FWF) grant SFB F50 (F5009-N15).}

\author{Carsten Schneider}
\address{Research Institute for Symbolic Computation (RISC)\\ 
Johannes Kepler University Linz\\
Altenbergerstra{\ss}e 69, 4040 Linz, Austria}
\ead{Carsten.Schneider@risc.jku.at}

\begin{abstract}
Recently, \rpisiSE-extensions have been introduced which extend Karr's \pisiSE-fields substantially:
one can represent expressions not only in terms of transcendental sums and products, but one can work also with products
over primitive roots of unity. Since one can solve the parameterized telescoping problem in such rings, covering as special cases the summation paradigms of telescoping and creative telescoping, one obtains a rather flexible toolbox for symbolic summation. This article is the continuation of this work. Inspired by Singer's Galois theory of difference equations we will work out several alternative characterizations of \rpisiSE-extensions: adjoining naively sums and products leads to an \rpisiSE-extension iff the obtained difference ring is simple iff the ring can be embedded into the ring of sequences iff the ring can be given by the interlacing of \pisiSE-extensions. From the viewpoint of applications this leads to a fully automatic machinery to represent indefinite nested sums and products in such \rpisiSE-rings. In addition, we work out how the parameterized telescoping paradigm can be used to prove algebraic independence of indefinite nested sums. Furthermore, one obtains an alternative reduction tactic to solve the parameterized telescoping problem in basic \rpisiSE-extensions exploiting the interlacing property.
\end{abstract}

\begin{keyword}
difference ring extensions\sep roots of unity\sep indefinite nested sums and products\sep simple difference rings\sep difference ideals \sep constants \sep interlacing of difference rings \sep embedding into the difference ring of sequences\sep Galois theory
%% keywords here, in the form: keyword \sep keyword
%% MSC codes here, in the form: \MSC code \sep code
%% or \MSC[2008] code \sep code (2000 is the default)
\end{keyword}

\end{frontmatter}

\section{Introduction}\label{Sec:Introduction}

The foundation of the difference field approach was laid
in~\cite{Karr:81,Karr:85}. Karr's summation algorithm enables one to solve the parameterized telescoping problem within a given rational function field $\FF$ in which indefinite nested sums and products are represented by variables and the shift-behaviour of the objects is modelled by a field automorphism $\fct{\sigma}{\FF}{\FF}$. As observed in~\cite{Schneider:00} Karr's algorithm covers also the creative telescoping paradigm of definite summation~\cite{Zeilberger:91}.
Various improvements and extensions of such a difference field $\dfield{\FF}{\sigma}$, also called a \pisiSE-field, have been incorporated into a strong summation theory within the last 15 years. These significant refinements enable one to find optimal product and sum representations by minimizing the nesting depth~\cite{Schneider:05f,Schneider:08c,Schneider:10b}, by minimizing the number of sums arising within the summand~\cite{Schneider:04a,Schneider:15}, or minimizing the degree of the sums and products arising in the summands and multiplicands~\cite{Abramov:75,Paule:95,Schneider:05c,Schneider:07d,Petkov:10}.

A central drawback of this field approach is the lack of treatment of the alternating sign $(-1)^n$ which is one of the central building blocks in many summation problems. This object, and more generally $\alpha^n$ with a primitive root of unity $\alpha$, cannot be treated in a field or an integral domain: zero-divisors such as the factors of $(1-(-1)^n)(1+(-1)^n)=0$ are introduced.  
In order to overcome this vulnerability, a new difference ring theory has been elaborated~\cite{Schneider:16a,DR2}. The so-called \rpisiSE-extensions have been introduced in which indefinite nested sums and products together with $\alpha^n$  can be represented and in which the parameterized telescoping problem can be solved by efficient algorithms. 

The aim of this article is twofold. First, we will provide new insight in the theory of \rpisiSE-extensions. In particular we will gain important characterizations and structural theorems for \rpisiSE-extensions. Second, we will emphasize the algorithmic nature of these results in order to push forward the difference ring approach to symbolic summation.\\
Here we will restrict our attention to the so-called basic \rpisiSE-extensions~\cite{Schneider:16a,DR2}. More precisely, we start with a difference field $\dfield{\FF}{\sigma}$ (e.g., any of Karr's \pisiSE-fields) with the set of constants $\KK=\const{\FF}{\sigma}=\{c\in\FF|\sigma(c)=c\}$.
Then we extend this field to a difference ring $\dfield{\EE}{\sigma}$ by a tower of ring extensions of the following form (see Prop.~\ref{Prop:AltDefOfBasic} below):

\smallskip

\begin{itemize}
\item algebraic ring extensions ($A$-extensions) which introduce objects of the type $\alpha^n$ with $\alpha\in\KK$ being a root of unity;
\item product extensions ($P$-extensions) which introduce nested products in terms of Laurent polynomials where the multiplicands are invertible and the multiplicands are free of the defining generators of $A$-extensions;
\item sum extensions ($S$-extensions) which introduce nested sums in terms of polynomials.
\end{itemize}

\smallskip

\noindent A tower of such extensions will be called a basic $APS$-extension. Finally, we impose that during the construction the constants remain unchanged, i.e., $\const{\EE}{\sigma}=\{c\in\EE|\sigma(c)=c\}=\KK$. In this case, a basic $APS$-extension will be called a basic \rpisiSE-extension. 

We will supplement the difference ring theory of~\cite{Schneider:16a} with the following alternative characterizations of a basic \rpisiSE-extension under the assumption that $\dfield{\FF}{\sigma}$ is constant-stable (see Definition~\ref{Def:ConstantStable}). More precisely, for a given tower of basic $APS$-extensions $\dfield{\EE}{\sigma}$ over the ground field $\dfield{\FF}{\sigma}$ the following statements are equivalent:

\smallskip

\begin{enumerate}
 \item[(1)] It forms a basic \rpisiSE-extension, i.e., $\const{\EE}{\sigma}=\const{\FF}{\sigma}$.
 \item[(2)] $\dfield{\EE}{\sigma}$ is simple, i.e., any ideal of $\EE$ which is closed under $\sigma$ is either $\EE$ or $\{0\}$.
 \item[(3)] $\dfield{\EE}{\sigma}$ can be composed by the interlacing of \pisiSE-extensions.
 \item[(4)] $\dfield{\EE}{\sigma}$ can be embedded into the ring of sequences provided that $\dfield{\FF}{\sigma}$ can be embedded into the ring of sequences.
 \end{enumerate}

\smallskip

We emphasize that similar statements have been derived in the setting of Picard-Vessiot rings in~\cite{Singer:97}; see also~\cite{Singer:08}. However, much stronger assumptions are imposed in this setting, like assuming that the constant field $\KK$ is algebraically closed or that the algebraic closure of the ground field $\FF$ can be embedded into the ring of sequences. Both avenues, the Picard-Vessiot approach~\cite{Singer:97} and the basic \rpisiSE-approach, cover the class of d'Alembertian solutions~\cite{Abramov:94,Abramov:96}, the former from a very general and abstract viewpoint and the latter from  a very algorithmic viewpoint. In other respects, they complement each other non-trivially: Picard-Vessiot rings can represent all the solutions of linear homogeneous recurrence relations, which is not possible by \rpisiSE-extensions, and vice versa, basic \rpisiSE-extensions can represent, e.g., nested products, which cannot be formulated in Picard-Vessiot rings.

The worked out machinery for the embedding of basic \rpisiSE-extensions into the ring of sequences 
is algorithmic. It has been implemented within the summation package \texttt{Sigma}~\cite{Schneider:07a,Schneider:13a} and provides a solution to the following problem: given an expression in terms of nested sums and products, find an alternative expressions such that the set of arising sums and products (up to products with roots of unity) is algebraically independent among each other. This toolbox produces very compact representations of nested sum expressions and plays a central role for large scale calculations in particle physics; see~\cite{Schneider:16b} and references therein.
Moreover, following the ideas of~\cite{Schneider:10c} we will show that the parameterized telescoping problem enables one to show the algebraic independence of sums whose summands are expressible within an \rpisiSE-extension. Finally, exploiting the interlacing property of an \rpisiSE-extension we will present a new reduction technique that solves the parameterized telescoping problem in an \rpisiSE-extension or in its total ring of fractions.

The outline of the article is as follows. In Section~\ref{Sec:APS} we will introduce basic $APS$-extensions and \rpisiSE-extensions. In addition, we summarize the crucial (algorithmic) properties of \rpisiSE-extensions introduced in~\cite{Schneider:16a}, and we present simple criteria to verify if an $A$-extension is an \rE-extensions. 
The above equivalences between (1)--(4) are shown in Sections~\ref{Sec:SimpleRings}--\ref{Sec:Embedding}, respectively. Finally, the applications mentioned in the previous paragraph will be worked out in Sections~\ref{Sec:Telescoping} and~\ref{Sec:Application}. A conclusion is given in Section~\ref{Sec:Conclusion}.

\section{Definitions and basic properties of
$APS$- and \rpisiSE-extensions}\label{Sec:APS}

In this article we will use a ring $\AA$ to represent the given summation objects and a difference ring automorphism $\fct{\sigma}{\AA}{\AA}$ to model the shift behaviour of the summation objects by the appropriate action on the ring elements. In this regard, the following conventions and definitions will be used throughout this article.\\
$\bullet$ \textit{Basics:} $\ZZ$ and $\NN$ denote the set of integers and non-negative integers, respectively.\\ 
$\bullet$ \textit{Rings:} By a ring we always mean a commutative ring with unity. In addition, all rings (resp.\ fields) contain the rational numbers as a subring (resp.\ subfield). $\AA^*$ is the set of units which is a multiplicative group. $Q(\AA)$ denotes the total ring of fractions. If $\AA$ is integral, $Q(\AA)$ is the field of fractions of $\AA$. For a ring element $f\in\AA[z]$ ($z$ is transcendental or algebraic over $\AA$) and $a\in\AA[z]$ we write $f(a)=f|_{z\to a}$.\\
$\bullet$ \textit{Difference rings:} 
A difference ring (resp.\ field) $\dfield{\AA}{\sigma}$ is a ring (resp.\ field) equipped with a ring (resp.\ field) automorphism $\fct{\sigma}{\AA}{\AA}$.  In this regard, the set of constants
$$\KK=\const{\AA}{\sigma}=\{c\in\AA|\sigma(c)=c\}$$
plays a central role. Note that $\KK$ forms a subring of $\AA$. In particular, it is a subfield of $\AA$ if $\AA$ is a field. Moreover, observe that $\QQ$ is always contained as a subring (resp.\ subfield) in $\KK$ (since we assume that $\QQ$ is contained in $\AA$).\\
$\bullet$ \textit{Extensions:}
For the construction of difference rings (resp.\ fields) we need the concept of difference ring (resp.\ field) extensions. A difference ring (resp.\ field) $\dfield{\EE}{\sigma'}$ is a difference ring (resp.\ field) extension of $\dfield{\AA}{\sigma}$ if $\AA$ is a subring (resp.\ subfield) of $\EE$ and $\sigma'|_{\AA}=\sigma$, i.e., $\sigma'(f)=\sigma(f)$ for all $f\in\AA$. Since $\sigma$ and $\sigma'$ agree on $\AA$, we do not distinguish between them anymore. Note that $\dfield{\AA}{\sigma}$ is a difference ring (resp.\ field) extension of $\dfield{\KK}{\sigma}$.\\
$\bullet$ \textit{Ideals:} Let $\dfield{\AA}{\sigma}$ be a difference ring. An ideal in $\AA$ generated by the elements $f_1,\dots,f_r\in\AA$ is denoted by
$\langle f_1,\dots,f_r\rangle$;
we set $\langle\,\rangle=\{0\}$. An ideal $I$ in $\AA$ is maximal if there are no other ideals contained between $I$ and $\AA$.
$I$ is a difference ideal in $\dfield{\AA}{\sigma}$ if for any $f\in I$ we have that $\sigma(f)\in\AA$.
$I$ is called reflexive if also $\sigma^{-1}(f)\in I$ holds for all $f\in I$. Note that for a given reflexive difference ideal $I$ we obtain a ring automorphism $\fct{\sigma'}{\AA/I}{\AA/I}$ by defining $\sigma'(a+I)=\sigma(a)+I$ for all $a\in\AA$. A difference ring $\dfield{\AA}{\sigma}$ is simple if there is no difference ideal $I$ in $\AA$, except $I=\{0\}$ and $I=\AA$.\\ 
$\bullet$ \textit{Embeddings:} A difference ring homomorphism $\fct{\tau}{\AA_1}{\AA_2}$ between two difference
rings $\dfield{\AA_1}{\sigma_1}$ and $\dfield{\AA_2}{\sigma_2}$ is a
ring homomorphism such that $\tau(\sigma_1(f))=\sigma_2(\tau(f))$ holds for all
$f\in\AA_1$. If $\tau$ is injective, we call $\tau$ a difference ring monomorphism or a difference ring embedding. In this case, $\dfield{\AA_2}{\sigma}$ is a difference ring extension of $\dfield{\tau(\AA_1)}{\sigma}$ where $\dfield{\AA_1}{\sigma}$ and $\dfield{\tau(\AA_1)}{\sigma}$ are the same up to the renaming of the objects with $\tau$. If $\tau$ is bijective, $\tau$ is also called a difference ring isomorphism and we say that the difference rings $\dfield{\AA_1}{\sigma}$ and $\dfield{\AA_2}{\sigma}$ are isomorphic; we also write $\dfield{\AA_1}{\sigma}\simeq\dfield{\AA_2}{\sigma}$.

\smallskip

\noindent For further details on these notions and properties we refer to~\cite{Cohn:65,Levin:08}.

\medskip

In Subsection~\ref{Sec:APSExt} we will present $APS$-extensions to rephrase indefinite nested sums and products in difference rings. In order to obtain a precise description, we will introduce the subclass of \rpisiSE-extensions in Subsection~\ref{Sec:RPSExt}, and we will restrict this class further to basic \rpisiSE-extensions in Subsection~\ref{Sec:BasicExt}. In this latter class there are algorithms available~\cite{Schneider:16a} which can be used to construct such a tower of extensions automatically. We remark that this class covers all the types of nested sums and products, like~\cite{Bluemlein:99,Vermaseren:99,ABS:11,Moch:02,ABS:13,ABRS:14},
that we have encountered in practical problem solving so far. Some further basic properties of $A$-extensions to be used later will be elaborated in Subsection~\ref{Sec:SimpleTests}.

\subsection{$APS$-extensions}\label{Sec:APSExt}

We start with $S$-extensions and $P$-extensions that enable one to represent iterated sums and products in a naive way.
Let $\dfield{\AA}{\sigma}$ be a difference ring (in which sums and products have already been defined by previous extensions). Now let $\beta\in\AA$ and take the ring of polynomials $\AA[t]$ (i.e., $t$ is transcendental over $\AA$). Then there is a unique difference ring extension $\dfield{\AA[t]}{\sigma}$ of $\dfield{\AA}{\sigma}$ with $\sigma(t)=t+\beta$. Such an extension is called a \notion{sum-extension} (in short \notion{$S$-extension}); the generator $t$ is also called an \notion{$S$-monomial}.\\  
Similarly, take a unit $\alpha\in\AA^*$ and take the ring of Laurent polynomials $\AA\ltr{t}$ (i.e., $t$ is transcendental over $\AA$). Then there is a unique difference ring extension $\dfield{\AA\ltr{t}}{\sigma}$ of $\dfield{\AA}{\sigma}$ with $\sigma(t)=\alpha\,t$. Such an extension is called a \notion{product-extension} (in short \notion{$P$-extension}); the generator $t$ is also called a \notion{$P$-monomial}.\\
Of special interest is the case when $\AA$ is a field and $\AA(t)$ is a rational function field (i.e., $t$ is transcendental over $\AA$). Let $\alpha\in\AA^*$ and $\beta\in\AA$. Then there is a unique difference field extension $\dfield{\AA(t)}{\sigma}$ of $\dfield{\AA}{\sigma}$ with $\sigma(t)=\alpha\,t+\beta$. If $\alpha=1$, this extension is called an \notion{$S$-field extension} and $t$ is called an \notion{$S$-monomial}. Similarly, if $\beta=0$, this extension is called a \notion{$P$-field extension} and $t$ is called a \notion{$P$-monomial}. Note that for $\alpha=1$ we get the chain of extensions
$\dfield{\AA}{\sigma}\leq\dfield{\AA[t]}{\sigma}\leq\dfield{\AA(t)}{\sigma}$
and if $\beta=0$ we get the chain of extensions
$\dfield{\AA}{\sigma}\leq\dfield{\AA\ltr{t}}{\sigma}\leq\dfield{\AA(t)}{\sigma}.$

Finally, we will introduce $A$-extensions to model algebraic objects like $\alpha^k$ for a root of unity $\alpha$. More precisely, let $\lambda\in\NN$ with $\lambda>1$, let $\dfield{\AA}{\sigma}$ be a difference ring and let $\alpha\in\AA^*$ be a $\lambda$th root of unity, i.e., $\alpha^{\lambda}=1$. Now take the difference ring extension
$\dfield{\AA[z]}{\sigma}$ of $\dfield{\AA}{\sigma}$ with $z$ being transcendental over $\AA$ and $\sigma(z)=\alpha\,z$ (again this construction is unique).
Next,
take the ideal $I:=\lr{z^{\lambda}-1}$ and consider the ring $\EE=\AA[z]/I$.
Since $I$ is closed under $\sigma$ and $\sigma^{-1}$, i.e., $I$ is a reflexive difference ideal, it follows that $\fct{\sigma}{\EE}{\EE}$ with
$\sigma(f+I)=\sigma(f)+I$
is a ring automorphism. In other words, $\dfield{\EE}{\sigma}$ is a
difference ring. Moreover, there is the natural embedding of $\AA$ into $\EE$
with
$a\mapsto a+I$.
By identifying $a$ with $a+I$, $\dfield{\EE}{\sigma}$ forms a difference ring extension of $\dfield{\AA}{\sigma}$. Finally, by setting $y:=z+I$ we get the difference ring extension $\dfield{\AA[y]}{\sigma}$ of $\dfield{\AA}{\sigma}$ subject to the relation $y^{\lambda}=1$. 
Note that this ring contains zero-divisors such as the factors of 
$(y-1)(1+y+\dots+y^{\lambda-1})=0.$
By construction we have that $\min\{i>0|y^i=1\}=\lambda$. Moreover, for any polynomial $f(z)\in\AA[z]\setminus\{0\}$ with $f(y)=0$ we have that $\deg(f)\geq\lambda$. 
This extension is called an \notion{algebraic extension} (in short \notion{$A$-extension}) \notion{of order $\lambda$}; the generator $y$ is also called an \notion{$A$-monomial} and we define $\ord(y)=\lambda$.
Note that $y$ with the relations $y^{\lambda}=1$ and $\sigma(y)=\alpha\,y$ imitates  $\alpha^k$ with the relations $(\alpha^k)^{\lambda}=1$ and $\alpha^{k+1}=\alpha\,\alpha^k$, respectively. 

\begin{example}\label{Exp:MainDRNaiveDef}
We illustrate with our brute force tactic how the product-sum expressions  
\begin{equation}\label{Equ:EExpr}
\begin{split}
k,\quad(-1)^k,\quad\,2^k,\quad\binom{n}{k}=\prod_{i=1}^k\frac{n-i+1}{i},\quad E_1(k)=\sum_{i=1}^k\frac{(-1)^i}{i},\\
E_2(k)=\sum_{i=1}^k \frac{(-1)^i}{i (1+i)},\quad
E_3(k)=\sum_{j=1}^k \frac{\displaystyle(-1)^j}{j} 
\sum_{i=1}^j \frac{(-1)^i}{i (1+i)}
\end{split}
\end{equation}
and the shift-operator $\Shift_k$ acting on these objects can be formulated in such extensions.
\begin{enumerate}
 \item Start with the rational function field $\KK=\QQ(n)$ and the automorphism $\fct{\sigma}{\KK}{\KK}$ with $\sigma=\text{id}_{\KK}$. Now take the $S$-extension $\dfield{\AA_1}{\sigma}$ of $\dfield{\QQ(n)}{\sigma}$ with $\AA_1=\QQ(n)(x)$ and $\sigma(x)=x+1$. There the rational expressions in $n$ and $k$ are represented in $\AA_1$ where $x$ takes over the role of $k$, and the shift operator $\Shift_k$ is rephrased by $\sigma$.

 \item Taking the $A$-extension $\dfield{\AA_2}{\sigma}$ of $\dfield{\AA_1}{\sigma}$ with $\AA_2=\AA_1[y]$ and $\sigma(y)=-y$ of order $2$ we can model $(-1)^k$ by $y$.
 
\item Constructing the $P$-extension $\dfield{\AA_3}{\sigma}$ of $\dfield{\AA_2}{\sigma}$ with $\AA_3=\AA_2\ltr{p_1}$ and $\sigma(p_1)=2\,p_1$ we model in addition polynomial expressions in $2^k$ and $2^{-k}$ with $\Shift_k 2^k=2\,2^k$ (and $\Shift_k \frac1{2^k}=\frac12\,\frac1{2^k}$) by rephrasing $2^k$ with $p_1$ (and $\frac{1}{2^k}$ with $\frac1{p_1}$).

\item Introducing the $P$-extension $\dfield{\AA_4}{\sigma}$ of $\dfield{\AA_3}{\sigma}$ with $\AA_4=\AA_3\ltr{p_2}$ and $\sigma(p_2)=\frac{n-x}{x+1}\,p_2$ we are in the position to model polynomial expressions in $\binom{n}{k}$ and $\binom{n}{k}^{-1}$ with $\Shift_k\binom{n}{k}=\frac{n-k}{k+1}\binom{n}{k}$ by rephrasing $\binom{n}{k}$ with $p_2$ and $\binom{n}{k}^{-1}$ with $\frac1{p_2}$.

\item Further, taking the $S$-extension $\dfield{\AA_5}{\sigma}$ of $\dfield{\AA_4}{\sigma}$ with $\AA_5=\AA_4[t_1]$ and $\sigma(t_1)=t_1+\frac{-y}{x+1}$ we can represent polynomial expressions in the sum $E_1(k)$ with $\Shift_k E_1(k)=E_1(k+1)=E_1(k)+\frac{-(-1)^k}{k+1}$ by rephrasing $E_1(k)$ with $t_1$.

\item In addition, building the $S$-extension $\dfield{\AA_6}{\sigma}$ of $\dfield{\AA_5}{\sigma}$ with $\AA_6=\AA_5[t_2]$ and $\sigma(t_2)=t_2+\frac{-y}{(x+1)(x+2)}$ we can represent polynomial expressions in the sum $E_2(k)$ with $\Shift_k E_2(k)=E_2(k)+\frac{-(-1)^k}{(k+1)(k+2)}$ by rephrasing $E_2(k)$ with $t_2$.

\item Finally, introducing the $S$-extension $\dfield{\AA_7}{\sigma}$ of $\dfield{\AA_6}{\sigma}$ with $\AA_7=\AA_6[t_3]$ and $\sigma(t_3)=t_3+\frac{1-(x+1)(x+2) y t_ 2}{(x+1)^2 (x+2)}$ one can represent polynomial expressions in the sum $E_3(k)$ with $\Shift_k  E_3(k)=E_3(k)+\frac{1-(k+1)(k+2) (-1)^k E_2(k)}{(k+1)^2 (k+2)}$ by rephrasing $E_3(k)$ with $t_3$.
\end{enumerate}
\end{example}

For convenience we introduce the following notations.
Let $\dfield{\EE}{\sigma}$ be a difference ring extension of $\dfield{\AA}{\sigma}$ with $t\in\EE$. Then $\AA\lr{t}$ denotes
the polynomial ring $\AA[t]$ (we assume that $t$ is transcendental over $\AA$) if $\dfield{\AA[t]}{\sigma}$ is an
$S$-extension of $\dfield{\AA}{\sigma}$. 
$\AA\lr{t}$ denotes the ring of Laurent polynomials $\AA\ltr{t}$ (i.e., $t$ is transcendental over $\AA$) if
$\dfield{\AA\ltr{t}}{\sigma}$ is a $P$-extension of $\dfield{\AA}{\sigma}$.
Finally, $\AA\lr{t}$ is the ring $\AA[t]$ with $t\notin\AA$ subject to the relation $t^{\lambda}=1$ if
$\dfield{\AA[t]}{\sigma}$ is an $A$-extension of $\dfield{\AA}{\sigma}$ of
order ${\lambda}$. We call a difference ring extension $\dfield{\AA\lr{t}}{\sigma}$ of $\dfield{\AA}{\sigma}$  
\begin{itemize}
\item an \notion{$AP$-extension} (and $t$ is an \notion{$AP$-monomial}) if it is an $A$- or a $P$-extension;
\item an \notion{$AS$-extension} (and $t$ is an \notion{$AS$-monomial}), if it is an $A$- or an $S$-extension; 
\item a \notion{$PS$-extension} (and $t$ is a \notion{$PS$-monomial}), if it is a $P$- or an $S$-extension; 
\item an \notion{$APS$-extension} (and $t$ is an \notion{$APS$-monomial}) if it is an $A$-, $P$- or $S$-extension.
\end{itemize}
$\dfield{\AA\lr{t_1}\dots\lr{t_e}}{\sigma}$ is called a \notion{(nested) $APS$-extension} (resp.\ \notion{(nested) $A$-, $P$-, $S$-, $AP-$, $AS$-, $PS$-extension}) of $\dfield{\AA}{\sigma}$ if it is built by a tower of such extensions. For a field $\AA$ we say that $\dfield{\AA(t)}{\sigma}$ is a \notion{$PS$-field extension} of $\dfield{\AA}{\sigma}$ if it is a $P$-field or an $S$-field extension. In addition, $\dfield{\AA(t_1)\dots(t_e)}{\sigma}$ is a \notion{(nested) $P$-/$S$-/$PS$-field extension} of $\dfield{\AA}{\sigma}$ if it is a tower of such extensions.

Let $\dfield{\GG}{\sigma}$ be a difference field and let $\dfield{\GG\lr{t_1}\dots\lr{t_e}}{\sigma}$ be a $PS$-extension of $\dfield{\GG}{\sigma}$. Then field of fractions $\FF=Q(\GG\lr{t_1}\dots\lr{t_e})$ forms a rational function field with the variables $t_i$, i.e., $\FF=\GG(t_1)\dots(t_e)$. Then there is exactly one automorphism $\fct{\sigma'}{\FF}{\FF}$ with $\sigma'|_{\GG\lr{t_1}\dots\lr{t_e}}=\sigma$ which is defined by $\sigma'(\frac{p}{q})=\frac{\sigma(p)}{\sigma(q)}$ with $p,q\in\GG[t_1,\dots,t_e]$. 
As above we do not distinguish anymore between $\sigma$ and $\sigma'$.
We call this difference field extension $\dfield{\FF}{\sigma}$ of $\dfield{\GG}{\sigma}$ also a \notion{polynomial $PS$-field extension}. Similarly, we call it a polynomial $P$-/$S$-field extension if it is built only by $P$-/$S$-monomials.

\smallskip

\noindent\textit{Remark.} Any polynomial $PS$-field extension is also a $PS$-field extension, but the reverse statement does not hold in general: in the summands and multiplicands of polynomial $PS$-monomials, the $PS$-monomials introduced earlier can occur only as (Laurent) polynomial expressions and not as rational function expressions.

\begin{example}\label{Exp:PSField}
We introduce the rational difference field and the $q$--mixed case.
\begin{enumerate}
\item Take the difference field $\dfield{\KK}{\sigma}$ with $\sigma(c)=c$ for all $c\in\KK$. Now consider the $S$-extension $\dfield{\KK[x]}{\sigma}$ of $\dfield{\KK}{\sigma}$ with $\sigma(x)=x+1$. Then there is exactly one difference ring extension $\dfield{\KK(x)}{\sigma}$ of $\dfield{\KK[x]}{\sigma}$ with $\sigma(x)=x+1$. In other words, $\dfield{\KK(x)}{\sigma}$ is an $S$-field extension of $\dfield{\KK}{\sigma}$.
For $\KK=\QQ(n)$ this construction yields the difference field $\dfield{\AA_1}{\sigma}$ in Example~\ref{Exp:MainDRNaiveDef}. $\dfield{\KK(x)}{\sigma}$ is also called the rational difference field.
\item Take the difference field $\dfield{\KK}{\sigma}$ with $\sigma(c)=c$ for all $c\in\KK$ where $\KK=\KK'(q_1,\dots,q_v)$ is a rational function field. Let $\dfield{\EE}{\sigma}$ with $\EE=\KK[x]\ltr{x_1}\dots\ltr{x_v}$ be the $PS$-extension of $\dfield{\KK}{\sigma}$ with $\sigma(x)=x+1$ and $\sigma(x_i)=q_i\,x_i$ for $1\leq i\leq v$. If we take the field of fractions $\FF=Q(\EE)=\KK(x)(x_1)\dots(x_v)$,
the difference field $\dfield{\FF}{\sigma}$ is a $PS$-field extension of $\dfield{\KK}{\sigma}$. $\dfield{\FF}{\sigma}$ is also called the $q$--mixed difference field; compare~\cite{Bauer:99}.
\end{enumerate}
\end{example}

%\noindent Throughout this article we will assume that an element $f\in\EE$ within an $APS$-extension $\dfield{\EE}{\sigma}$ of $\dfield{\AA}{\sigma}$ is in reduced representation. This means that the elements of $f\in\EE$ are given, e.g., in (recursive) list notation. As a consequence, we may suppose that $APS$-monomials vanish in $f$ whenever this is possible by a certain representation.

\subsection{\rpisiSE-extensions}\label{Sec:RPSExt}

So far we adjoined nested sums and products in a difference ring (resp.\ difference field) without taking care if algebraic relations exist between them. Later we will work out that such relations can be excluded if and only if the ring of constants remains unchanged during the construction of $APS$-extension; see Theorem~\ref{Thm:EquivSimpleConst} below. 

\begin{definition}\label{Def:NestedExt}
An $A$-extension/$P$-extension/$S$-extension $\dfield{\AA\lr{t}}{\sigma}$ of $\dfield{\AA}{\sigma}$ is called an \notion{\rE-extension/\piE-extension/\sigmaSE-extension} if $\const{\AA\lr{t}}{\sigma}=\const{\AA}{\sigma}$. 
Similarly, we call such an extension an \notion{\rE\piE-/\rE\sigmaSE-/\pisiSE-/\rpisiSE-extension} if it is an $AP$-/$AS$-/$PS$/-$APS$-ex\-tension with $\const{\AA\lr{t}}{\sigma}=\const{\AA}{\sigma}$. Depending on the type of extension, $t$ is called an \notion{\rE-/\piE-/\sigmaSE-/\rE\piE-/\rE\sigmaSE-/\pisiSE-/\rpisiSE-monomial}, respectively.\\
An $APS$-extension $\dfield{\AA\lr{t_1}\dots\lr{t_e}}{\sigma}$ of $\dfield{\AA}{\sigma}$ is called a \notion{(nested) \rE\pisiSE-extension (resp.\ (nested) \rE-, \piE-, \sigmaSE-, \rE\piE, \rE\sigmaSE-, \pisiSE- extension)} if it is a tower of such extensions.\\
We will consider also mixed cases like $RPS$-extensions built by $R$- and $PS$-extensions.
\end{definition}

\noindent We will rely heavily on the following property of \rpisiSE-extensions~\cite[Thm~2.12]{Schneider:16a} generalizing the \pisiSE-field results given in~\cite{Karr:85,Schneider:01}.

\begin{theorem}\label{Thm:RPSCharacterization}
Let $\dfield{\AA}{\sigma}$ be a difference ring. Then
the following holds.
\begin{enumerate}
\item Let $\dfield{\AA[t]}{\sigma}$ be an $S$-extension 
of $\dfield{\AA}{\sigma}$ with $\sigma(t)=t+\beta$ where $\beta\in\AA$ such that $\const{\AA}{\sigma}$ is a field. Then this is a \sigmaSE-extension (i.e.,
$\const{\AA[t]}{\sigma}=\const{\AA}{\sigma}$) iff there does not exist a
$g\in\AA$ with $\sigma(g)=g+\beta$.

\item Let $\dfield{\AA\ltr{t}}{\sigma}$ be a $P$-extension of
$\dfield{\AA}{\sigma}$ with $\sigma(t)=\alpha\,t$ where $\alpha\in\AA^*$. Then
this is a
\piE-extension (i.e., $\const{\AA\ltr{t}}{\sigma}=\const{\AA}{\sigma}$) iff
there are no $g\in\AA\setminus\{0\}$ and $m\in\ZZ\setminus\{0\}$
with
$\sigma(g)=\alpha^m\,g$. 

\item Let $\dfield{\AA[t]}{\sigma}$ be an $A$-extension of $\dfield{\AA}{\sigma}$ of order $\lambda>1$ with $\sigma(t)=\alpha\,t$
where $\alpha\in\AA^*$. Then this is an \rE-extension (i.e.,
$\const{\AA[t]}{\sigma}=\const{\AA}{\sigma}$) iff there are no
$g\in\AA\setminus\{0\}$ and $m\in\{1,\dots,\lambda-1\}$ with
$\sigma(g)=\alpha^m\,g$. 
\end{enumerate}
\end{theorem}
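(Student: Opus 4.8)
The plan is to treat the three parts uniformly via the following observation: in each case the new generator $t$ satisfies $\sigma(t)=\alpha\, t+\beta$ (with $\alpha=1,\beta$ free in part~(1); $\beta=0,\alpha$ a unit in parts~(2) and~(3)), and a general element of the extension is a (Laurent) polynomial $f=\sum_i f_i t^i$ with $f_i\in\AA$. The key computational step will be to compare coefficients in $\sigma(f)=f$ using the fact that $t$ is transcendental over $\AA$ (in parts~(1),(2)) or, in part~(3), that any nonzero polynomial relation for $t$ over $\AA$ has degree $\geq\lambda$, so that comparing coefficients is legitimate for representatives of degree $<\lambda$. I would first dispose of the "only if" directions, which are immediate: if $g\in\AA$ realizes the forbidden relation, then a suitable expression in $g$ and $t$ (namely $t-g$ in part~(1), $g\,t^{-m}$ or $g\,t^{m}$ in part~(2), $g\,t^{\lambda-m}$ in part~(3)) is a new constant not lying in $\AA$, hence not in $\const{\AA}{\sigma}$; one must check it is genuinely outside $\AA$, which follows from transcendence of $t$ resp.\ the degree bound $\geq\lambda$.

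For the "if" direction of part~(1): suppose $f=\sum_{i=0}^d f_i t^i\in\AA[t]$ with $\sigma(f)=f$. Expanding $\sigma(f)=\sum_i \sigma(f_i)(t+\beta)^i$ and comparing the top coefficient gives $\sigma(f_d)=f_d$, so $f_d\in\const{\AA}{\sigma}$; this is where $\const{\AA}{\sigma}$ being a field is used, since we then want to normalize. If $d\geq 1$, comparing the coefficient of $t^{d-1}$ yields $\sigma(f_{d-1})+d\,\sigma(f_d)\,\beta=f_{d-1}$, i.e.\ $\sigma(f_{d-1}/(d f_d)) = f_{d-1}/(d f_d)+\beta$ after dividing by the nonzero constant $d f_d$, contradicting the nonexistence of $g$ with $\sigma(g)=g+\beta$. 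Hence $d=0$ and $f\in\AA$, so actually $f\in\const{\AA}{\sigma}$.

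For the "if" direction of part~(2): suppose $f=\sum_{i=m_1}^{m_2} f_i t^i\in\AA\ltr{t}$ is a nonzero constant, and pick any index $j$ with $f_j\neq 0$. Comparing the coefficient of $t^j$ in $\sigma(f)=\sum_i \sigma(f_i)\alpha^i t^i = \sum_i f_i t^i$ gives $\sigma(f_j)\,\alpha^{j}=f_j$, i.e.\ $\sigma(f_j)=\alpha^{-j} f_j$; if some index $j\neq 0$ occurs this is exactly a forbidden relation with $m=-j\in\ZZ\setminus\{0\}$, a contradiction, so only the index $j=0$ survives and $f=f_0\in\AA$, giving $f\in\const{\AA}{\sigma}$. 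Part~(3) is the same argument carried out on a representative $f=\sum_{i=0}^{\lambda-1} f_i y^i$ of degree $<\lambda$: from $\sigma(f)=f$ one gets $\sigma(f_i)\,\alpha^i = f_i$ for each $i$, and, using the degree bound to make coefficient comparison valid, any $i\in\{1,\dots,\lambda-1\}$ with $f_i\neq 0$ produces $g=f_i$ with $\sigma(g)=\alpha^{-i}g=\alpha^{\lambda-i}g$ and $\lambda-i\in\{1,\dots,\lambda-1\}$, contradicting the hypothesis; hence $f=f_0\in\AA$. The main obstacle is the bookkeeping in part~(3): one must be careful that $\{1,\ldots,\lambda-1\}$ is closed under $i\mapsto\lambda-i$ so the index set in the hypothesis matches, and that reducing modulo $y^\lambda-1$ does not merge distinct powers $y^i$ with $0\le i<\lambda$ — exactly what the stated fact $\deg f\ge\lambda$ for nonzero $f(y)=0$ guarantees.
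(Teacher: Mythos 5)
Your argument is correct, and it is the standard coefficient-comparison proof; note that the paper itself does not prove this theorem but imports it from \cite[Thm.~2.12]{Schneider:16a}, so there is no internal proof to compare against. The only blemish is a sign slip in part~(1): from $\sigma(f_{d-1})+d\,f_d\,\beta=f_{d-1}$ one gets $\sigma\bigl(f_{d-1}/(d\,f_d)\bigr)=f_{d-1}/(d\,f_d)-\beta$, so the forbidden element is $g=-f_{d-1}/(d\,f_d)$ (this is where $\const{\AA}{\sigma}$ being a field, together with $\QQ\subseteq\AA$, is used to invert $d\,f_d$); with that correction all three parts, including the uniqueness of representatives of degree $<\lambda$ in part~(3) and the closure of $\{1,\dots,\lambda-1\}$ under $i\mapsto\lambda-i$, go through as you describe.
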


\noindent Finally, we introduce \pisiSE-field extensions; compare~\cite{Karr:81,Karr:85,Schneider:16a}.

\begin{definition}
A (polynomial) $PS$-field/$P$-field/$S$-field extension $\dfield{\FF(t_1)\dots(t_e)}{\sigma}$ of a difference field $\dfield{\FF}{\sigma}$ is called a \notion{(polynomial) \pisiSE-field/\piE-field/\sigmaSE-field extension} if $\const{\FF(t_1)\dots(t_e)}{\sigma}=\const{\FF}{\sigma}$. In particular, if $\const{\FF}{\sigma}=\FF$, $\dfield{\FF(t_1)\dots(t_e)}{\sigma}$ is called a \notion{(polynomial) \pisiSE-field over $\FF$}.
\end{definition}

\noindent In Section~\ref{Sec:Telescoping} below we will utilize the following result. If $\dfield{\FF\lr{t_1}\dots\lr{t_e}}{\sigma}$ is a \pisiSE-extension  of a difference field $\dfield{\FF}{\sigma}$, then the difference field of fractions $\dfield{\FF(t_1)\dots(t_e)}{\sigma}$ forms a polynomial \pisiSE-field extension. This property follows by Corollary~\ref{Cor:LiftToField}.

\begin{corollary}\label{Cor:LiftToField}
Let $\dfield{\FF\lr{t}}{\sigma}$ be a $PS$-extension of a difference field $\dfield{\FF}{\sigma}$ with $\sigma(t)=\alpha\,t+\beta$ and let $\dfield{\FF(t)}{\sigma}$ be the $PS$-field extension of $\dfield{\FF}{\sigma}$ with $\sigma(t)=\alpha\,t+\beta$. Then $\const{\AA\lr{t}}{\sigma}=\const{\AA}{\sigma}$ iff $\const{\AA(t)}{\sigma}=\const{\AA}{\sigma}$. 
\end{corollary}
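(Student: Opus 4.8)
The plan is to reduce the statement to the following one-variable dichotomy: for a $PS$-extension $\dfield{\FF\lr{t}}{\sigma}$ of a difference field, the polynomial ring $\FF[t]$ (or the Laurent polynomial ring $\FF[t,\tfrac1t]$) and its field of fractions $\FF(t)$ have the same constants, because a would-be ``new'' constant in $\FF(t)\setminus\FF$ can always be normalized to a polynomial one. The direction $\const{\FF(t)}{\sigma}=\const{\FF}{\sigma}\Rightarrow\const{\FF\lr{t}}{\sigma}=\const{\FF}{\sigma}$ is immediate, since $\FF\lr{t}\subseteq\FF(t)$ and constants of the big ring intersected with the subring are exactly the constants of the subring; no work there. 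So the content is the forward direction: assuming $\const{\FF\lr{t}}{\sigma}=\const{\FF}{\sigma}$, show no new constants appear when we pass to $\FF(t)$.

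For that direction I would argue by contraposition, splitting into the $S$-case ($\sigma(t)=t+\beta$) and the $P$-case ($\sigma(t)=\alpha t$). Suppose $h=p/q\in\FF(t)\setminus\FF$ is a constant with $p,q\in\FF[t]$ coprime, $q$ monic; then $\sigma(p)/\sigma(q)=p/q$, and since $\sigma$ is a ring automorphism of $\FF[t]$ preserving $\deg$ (in the $S$-case $\sigma$ adds a lower-degree perturbation; in the $P$-case $\sigma$ scales by a unit), $\sigma(p),\sigma(q)$ are again coprime with $\sigma(q)$ monic, so uniqueness of the reduced form forces $\sigma(p)=p$ and $\sigma(q)=q$. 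Hence both $p$ and $q$ are constants lying in $\FF[t]$, i.e. in $\FF\lr{t}$ (in the $P$-case one clears denominators to land in $\FF[t,\tfrac1t]$). By hypothesis $\const{\FF\lr{t}}{\sigma}=\const{\FF}{\sigma}$, so $p,q\in\FF$, whence $h=p/q\in\FF$, a contradiction. This uses only that $\sigma$ acts on $\FF\lr{t}$ as a degree-preserving automorphism fixing the leading-term data, which is exactly how $S$- and $P$-extensions are set up in Subsection~\ref{Sec:APSExt}.

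The main obstacle, and the only place requiring care, is the normalization step: making precise that the reduced form $p/q$ is preserved by $\sigma$. In the $S$-case $\sigma$ sends a monic polynomial of degree $d$ to a monic polynomial of degree $d$ (the top coefficient is fixed, lower ones shift), and it sends coprime pairs to coprime pairs because $\sigma$ is a ring automorphism of $\FF[t]$; so the argument runs cleanly. In the $P$-case $\sigma(t)=\alpha t$ is a unit multiple, so $\sigma$ maps $\FF[t,\tfrac1t]$ to itself but does not fix leading coefficients — one should instead work in $\FF[t,\tfrac1t]$ directly, write any element as $t^{\ell}\tilde p$ with $\tilde p\in\FF[t]$ having nonzero constant term, and track how $\sigma$ rescales; alternatively, and more cheaply, one can simply invoke Theorem~\ref{Thm:RPSCharacterization}(1)--(2): a new constant would give a nontrivial $g$ with $\sigma(g)=g+\beta$ (resp.\ $\sigma(g)=\alpha^m g$) in $\FF(t)$, and one shows such a $g$ can be taken in $\FF\lr{t}$ by the same reduced-form argument, contradicting that $\dfield{\FF\lr{t}}{\sigma}$ is a $\Sigma^*$- resp.\ $\Pi$-extension. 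I expect the cleanest write-up to combine both: do the $S$-case by the reduced-form normalization, and deduce the $P$-case either by the analogous normalization in $\FF[t,\tfrac1t]$ or by citing Theorem~\ref{Thm:RPSCharacterization}.
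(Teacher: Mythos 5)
Your ``$\Leftarrow$'' direction and your $S$-case are correct, and they take a genuinely more self-contained route than the paper: the paper converts the ring hypothesis via Theorem~\ref{Thm:RPSCharacterization} into the criterion ``no $g\in\FF$ with $\sigma(g)=g+\beta$'' (resp.\ ``no $g\in\FF^*$ and $m\in\ZZ\setminus\{0\}$ with $\sigma(g)=\alpha^m g$'') and then simply cites the field-level results \cite[Thm.~3.11, Thm.~3.18]{Schneider:16a} or \cite{Karr:85}, whereas your reduced-fraction normalization proves the $S$-case directly: from $p\,\sigma(q)=q\,\sigma(p)$ with $p,q$ coprime and $q$ monic, coprimality and degree/monicity force $\sigma(q)=q$ and $\sigma(p)=p$, and the ring hypothesis $\const{\FF[t]}{\sigma}=\const{\FF}{\sigma}$ then puts $p,q$, hence $h$, in $\FF$. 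That part needs no external input at all.

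The gap is in the $P$-case, and neither of your two proposed exits closes it. The normalization argument there only yields semi-invariance: since $\sigma(t)=\alpha t$ rescales coefficients, one gets $\sigma(q)=c\,q$ and $\sigma(p)=c\,p$ for some $c\in\FF^*$, not $\sigma(q)=q$, and ``write $t^{\ell}\tilde p$ and track how $\sigma$ rescales'' is not yet an argument. The missing step is a coefficient comparison: writing $q=\sum_i q_i t^i$, the relation $\sigma(q_i)\,\alpha^i=c\,q_i$ shows that if $q$ (or $p$) has two nonzero coefficients $q_i,q_j$ with $i\neq j$, then $w:=q_i/q_j\in\FF^*$ satisfies $\sigma(w)=\alpha^{j-i}w$, whence $w\,t^{i-j}\in\const{\FF\ltr{t}}{\sigma}\setminus\FF$, contradicting the ring hypothesis; and if $p,q$ are both monomials, then $h=p/q\in\FF\ltr{t}$, so the ring hypothesis forces $h\in\FF$, again a contradiction. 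Your cheaper fallback via Theorem~\ref{Thm:RPSCharacterization} is misapplied: part (2) requires a $g$ in the \emph{ground field} $\FF$ with $\sigma(g)=\alpha^m g$, not a $g$ in $\FF\lr{t}$ or $\FF(t)$ --- such a $g$ in $\FF\lr{t}$ always exists (take $g=t^m$), so exhibiting one contradicts nothing --- and extracting a $g\in\FF$ from a new constant of $\FF(t)$ is exactly the nontrivial content of the field theorems (\cite{Karr:85}, \cite[Thm.~3.18]{Schneider:16a}) that the paper invokes rather than reproves. So either carry out the coefficient-comparison step above, or cite those field results as the paper does.
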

\begin{proof}
The implication ``$\Leftarrow$'' is obvious. Now let $t$ be an $S$-monomial with $\sigma(t)=t+\beta$, and suppose that $\const{\FF[t]}{\sigma}=\const{\FF}{\sigma}$, which is a field. By Theorem~\ref{Thm:RPSCharacterization} it follows that there does not exist a $g\in\FF$ with $\sigma(g)=g+\beta$. This implies that $\const{\FF(t)}{\sigma}=\const{\FF}{\sigma}$ by~\cite[Thm.~3.11]{Schneider:16a} or~\cite{Karr:85}.
Similarly, let $t$ be a $P$-monomial with $\sigma(t)=\alpha\,t$ where $\alpha\in\FF^*$ and assume that $\const{\FF\ltr{t}}{\sigma}=\const{\FF}{\sigma}$. By Theorem~\ref{Thm:RPSCharacterization} it follows that there does not exist a $g\in\FF^*$ and $m\in\set\ZZ\setminus\{0\}$ with $\sigma(g)=\alpha^m\,g$. This implies that $\const{\FF(t)}{\sigma}=\const{\FF}{\sigma}$ by~\cite[Thm.~3.18]{Schneider:16a} or~\cite{Karr:85}.
\end{proof}

\subsection{Basic $APS$- and \rpisiSE-extensions}\label{Sec:BasicExt}

Extending the techniques developed in~\cite{Karr:81} we turned Theorem~\ref{Thm:RPSCharacterization} in~\cite{Schneider:16a} to a constructive version for various subclasses of \rpisiSE-extensions. Here we will focus on the subclass of basic \rpisiSE-extensions (which we called simple and single-rooted \rpisiSE-extensions in~\cite{Schneider:16a}). In order to introduce this type of extensions, we need the following notation.

Let $\dfield{\EE}{\sigma}$ with $\EE=\AA\lr{t_1}\dots\lr{t_e}$ be an $APS$-extension  (or \rpisiSE-extension) of $\dfield{\AA}{\sigma}$ as given in Definition~\ref{Def:NestedExt}. Let $G$ be a subgroup of $\AA^*$. Then we define the set
$$\PPowers{\EE}{\AA}{G}=\{u\,t_1^{l_1}\dots t_{e}^{l_{e}}|\begin{array}[t]{l}\,u\in G\text{ and $l_i\in\ZZ$ for $1\leq i\leq e$}\\[-0.1cm]
\text{where $l_i=0$ if $t_i$ is an $AS$-monomial (or \rE\sigmaSE-monomial)}\}
\end{array}.$$
which forms a subgroup of the multiplicative group $\EE^*$; for a more general group see~\cite{Schneider:16a}.

\begin{definition}\label{Def:BasicAPSExt}
Let $\dfield{\AA}{\sigma}$ be a difference ring and $G$ be a subgroup of $\AA^*$.
An $APS$-extension (or \rpisiSE-extension) $\dfield{\AA\lr{t_1}\dots\lr{t_e}}{\sigma}$ of $\dfield{\AA}{\sigma}$ is called
\notion{$G$-basic} if for $1\leq i\leq e$ the following holds:
\begin{enumerate}
 \item if $t_i$ is an $A$-monomial (or \rE-monomial) then $\frac{\sigma(t_i)}{t_i}\in\const{\AA}{\sigma}^*$;
 \item if $t_i$ is a $P$-monomial (or \piE-monomial) then $\frac{\sigma(t_i)}{t_i}\in\PPowers{\AA\lr{t_1}\dots\lr{t_{i-1}}}{\AA}{G}$.
\end{enumerate}
An $\AA^*$-basic extension is also called a \notion{basic extension}. The $t_i$ are called \notion{basic/$G$-basic $APS$-/\rpisiSE monomials}.
\end{definition}

\noindent Everywhere (except in Section~\ref{Sec:Embedding}) we will consider the most general case $G=\AA^*$.  Given such a tower of basic $APS$-extensions we emphasize the following property: for any \piE-monomial $t_i$ the multiplicand $\frac{\sigma(t_i)}{t_i}$ may depend on earlier introduced $P$-monomials but is free of $AS$-monomials. 
So far we have introduced basic $APS$-monomials (most of them being basic \rpisiSE-monomials) where the $P$-monomials are not nested. Here is an example for nested basic $P$-monomials.

\begin{example}\label{Exp:NestedP}
Take the \pisiSE-field $\dfield{\KK(x)}{\sigma}$ over $\KK$ with $\sigma(x)=x+1$. Then we can construct 
the basic $P$-extension $\dfield{\KK(x)\ltr{p_1}\ltr{p_2}\ltr{p_3}}{\sigma}$ of $\dfield{\KK(x)}{\sigma}$ with $\sigma(p_1)=(x+1)p_1$, $\sigma(p_2)=(x+1)p_1\,p_2$ and $\sigma(p_3)=(x+1)p_1\,p_2\,p_3$.   
\end{example}

\noindent Note that the order of the generators in the tower of a general $APS$-extension $\EE=\AA\lr{t_1}\dots\lr{t_e}$ is essential. Take, e.g., an S-monomial $t_i$ with $1\leq i\leq e$ where $\sigma(t_i)-t_i$ depends on $t_j$ with $1\leq j\leq e$. Then $t_j$ must be introduced before $t_i$, i.e., $j<i$. Similarly, one has to take into account the correct order for $P$- and $A$-monomials. But if $\dfield{\EE}{\sigma}$ is a basic $APS$-extension (or a basic \rpisiSE-extension) of $\dfield{\AA}{\sigma}$, we can exploit the property mentioned already above: for any $P$-extension $t_i$ the multiplicand $\frac{\sigma(t_i)}{t_i}$ may depend on the previously defined $P$-monomials but not on the previously defined $R$- or $S$-monomials. Hence one can reorder a given tower of such extensions where first all $A$-monomials are adjoined, then all $P$-monomials are adjoined, and afterwards all $S$-monomials are adjoined. In other words, we may assume that
\begin{equation}\label{Equ:APSOrdered}
\EE=\AA[y_1,\dots,y_l][p_1,p_1^{-1}]\dots[p_r,p_r^{-1}][s_1]\dots[s_v]
\end{equation}
holds where the $y_i$ are $A$-monomials (or \rE-monomials), the $p_i$ are $P$-monomials (or \piE-monomials)  and the $s_i$ are $S$-monomials (or \sigmaSE-monomials). Note that within the block of $P$-monomials and the block of $S$-monomials we are only allowed to use rearrangements that take care of the recursive nature of the $P$-monomials and $S$-monomials.  Besides this representation, we will also use rearrangements of the form
\begin{equation}\label{Equ:PASOrdered}
\EE=\AA[p_1,p_1^{-1}]\dots[p_r,p_r^{-1}][y_1,\dots,y_l][s_1]\dots[s_v].
\end{equation}
A reordering to this form is possible since the $p_i$ are free of $A$-monomials (or of \rE-monomials). 

\medskip

\noindent In the following we will elaborate further properties used in the rest of the paper.

\begin{lemma}\label{Lemma:SInverseElements}
(1) Let $\dfield{\EE}{\sigma}$ be an $RPS$-extension of a difference field $\dfield{\FF}{\sigma}$. Then $\EE$ is reduced, i.e., there are no non-zero nilpotent elements in $\EE$.\\
(2) Let $\dfield{\EE}{\sigma}$ be an $S$-extension of a reduced difference ring $\dfield{\AA}{\sigma}$. Then $\EE^*=\AA^*$.\\
(3)
Let $\dfield{\EE}{\sigma}$ be a $PS$-extension of a difference field $\dfield{\FF}{\sigma}$. Then 
$$\EE^*=\{h\,t_1^{m_1}\dots t_e^{m_e}|\,h\in\FF^*, \text{$m_i\in\ZZ$ for $1\leq i\leq e$ with $m_i=0$ if $t_i$ is an $S$-monomial}\}.$$
\end{lemma}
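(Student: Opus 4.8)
The plan is to prove the three parts in order, since each feeds into the next. For part (1), I would argue by induction on the number $e$ of monomials in the tower $\EE = \FF\lr{t_1}\dots\lr{t_e}$. The base case $\EE = \FF$ is a field, hence reduced. For the inductive step, set $\AA = \FF\lr{t_1}\dots\lr{t_{e-1}}$, which is reduced by induction hypothesis. If $t_e$ is a $P$-monomial, then $\EE = \AA\ltr{t_e}$ is a ring of Laurent polynomials over a reduced ring, which is reduced (a Laurent polynomial whose power equals $0$ forces each coefficient nilpotent, hence $0$); similarly if $t_e$ is an $S$-monomial, $\EE = \AA[t_e]$ is a polynomial ring over a reduced ring, again reduced. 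The only genuinely new case is when $t_e$ is an $R$-monomial of order $\lambda$, so $\EE = \AA[y]$ with $y^\lambda = 1$. Here $\EE \cong \AA[z]/\lr{z^\lambda - 1}$ with $z$ transcendental. Since $\QQ \subseteq \AA$ and $\gcd(\lambda, \lambda z^{\lambda-1}) $ considerations show $z^\lambda - 1$ is separable, one gets $z^\lambda - 1 = \prod_{d\mid\lambda}\Phi_d(z)$ into pairwise coprime factors over $\QQ$ (even after base change to $\AA$, since the discriminant of $z^\lambda-1$ is a unit — it divides $\lambda^\lambda \in \QQ^* \subseteq \AA^*$); by CRT, $\EE$ is a finite product of reduced rings (each $\AA[z]/\lr{\Phi_d(z)}$ is a polynomial-ring quotient by a factor of a separable polynomial over the reduced, even integral-domain-after-localization ring $\AA$... more carefully: $\AA$ reduced and $z^\lambda-1$ separable over the total quotient ring suffices), hence reduced. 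I expect this $A$-extension case — ensuring $z^\lambda - 1$ stays squarefree over a general reduced ring containing $\QQ$, not just a field — to be the main technical point, handled cleanly by observing the discriminant is a nonzero rational number.

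For part (2), let $\dfield{\EE}{\sigma}$ with $\EE = \AA[t]$ be an $S$-extension of the reduced difference ring $\dfield{\AA}{\sigma}$. The inclusion $\AA^* \subseteq \EE^*$ is trivial. For the reverse, take $f \in \EE^*$ with $fg = 1$ for some $g \in \EE = \AA[t]$. Write $f = \sum_{i=0}^m a_i t^i$ and $g = \sum_{j=0}^n b_j t^j$ with $a_m \neq 0 \neq b_n$. Looking at the top coefficient of $fg$, we get $a_m b_n = 0$. Since $\AA$ is reduced — indeed in a reduced ring, if $a_m b_n = 0$ with $a_m \ne 0$ then... one needs a bit more. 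The standard fact is: over a reduced ring $\AA$, $f \in \AA[t]$ is a unit iff $f = a_0$ with $a_0 \in \AA^*$. I would prove this directly by downward induction on coefficients: from $a_m b_n = 0$, multiply $fg = 1$ by $a_m$ to kill the top term of $g$, and iterate to conclude $a_m^{k} b_{n-k+1} = 0$ for all relevant $k$; eventually $a_m^{n+1} b_0 = 0$, but $b_0$ is a unit (the constant term of $g$ must be a unit since $a_0 b_0 = 1$), forcing $a_m^{n+1} = 0$, hence $a_m = 0$ by reducedness — contradiction unless $m = 0$. So $f = a_0 \in \AA$ and then $a_0 \in \AA^*$. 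This is a routine commutative-algebra argument.

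For part (3), let $\dfield{\EE}{\sigma}$ with $\EE = \FF\lr{t_1}\dots\lr{t_e}$ be a $PS$-extension of the difference field $\dfield{\FF}{\sigma}$. I would reorder the tower as in~\eqref{Equ:APSOrdered} (with no $A$-monomials present), so $\EE = \FF[p_1,p_1^{-1}]\dots[p_r,p_r^{-1}][s_1]\dots[s_v]$; note this reordering does not change $\EE$ as a ring, so it suffices to compute $\EE^*$ in this form. First, by part (2) applied iteratively over the block of $S$-extensions (each built over a reduced ring, using part (1) to guarantee reducedness at each stage), we get $\EE^* = (\FF[p_1,p_1^{-1}]\dots[p_r,p_r^{-1}])^*$. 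Then I compute the units of the iterated Laurent polynomial ring $\FF[p_1^{\pm}]\dots[p_r^{\pm}]$ over the field $\FF$: by induction on $r$, using that the units of $B[p,p^{-1}]$ for $B$ an integral domain are exactly $\{b\,p^m : b \in B^*, m \in \ZZ\}$ (compare leading/trailing terms: a unit times its inverse being $1$ forces both to be monomials, and the coefficient to be a unit in $B$). Iterating gives $(\FF[p_1^{\pm}]\dots[p_r^{\pm}])^* = \{h\,p_1^{m_1}\dots p_r^{m_r} : h \in \FF^*, m_i \in \ZZ\}$. Translating back through the reordering — a monomial in the $p_i$'s and a trivial power ($m = 0$) on each $s_j$ — yields exactly the claimed description $\EE^* = \{h\,t_1^{m_1}\dots t_e^{m_e} : h \in \FF^*,\ m_i \in \ZZ,\ m_i = 0 \text{ if } t_i \text{ is an } S\text{-monomial}\}$. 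The only subtlety is bookkeeping under the reordering, which is legitimate precisely because in a $PS$-extension (no $A$-monomials), the ring structure is insensitive to the order of adjunction in the relevant blocks; the hard part of the whole lemma remains the reducedness claim in part (1).
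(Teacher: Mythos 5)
Your proposal is correct and shares the paper's overall strategy --- induct along the tower and reduce everything to facts about polynomial and Laurent-polynomial rings over reduced coefficient rings --- but it is genuinely more self-contained, because you prove inline the two ingredients the paper imports by citation. For part (1) the paper reorders the extension so that the $R$-monomials come first, quotes Corollary~4.3 of \cite{Schneider:16a} for the reducedness of $\FF[y_1,\dots,y_l]$, and then propagates reducedness through the $PS$-block; you instead treat the $A$-step directly via characteristic-zero separability of $z^{\lambda}-1$ (its discriminant is $\pm\lambda^{\lambda}\in\QQ^*\subseteq\AA^*$), which in fact yields the stronger statement that any $APS$-extension of $\FF$ is reduced, with no use of the $R$-property. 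One spot to tighten: the cyclotomic-factorization/CRT detour is dispensable and your phrase ``separable over the total quotient ring'' is not quite the right reduction, since the total quotient ring of a general reduced ring need not be a product of fields; the clean argument is that $\AA[z]/\lr{z^{\lambda}-1}$ is a free $\AA$-module (as $z^{\lambda}-1$ is monic) and hence embeds into $\prod_{\mathfrak{p}}(\AA/\mathfrak{p})[z]/\lr{z^{\lambda}-1}$ over the minimal primes of the reduced ring $\AA$, where each factor embeds into $K[z]/\lr{z^{\lambda}-1}$, a finite product of fields in characteristic zero --- a standard fix, not a gap. For parts (2) and (3) the paper simply cites \cite{Karpilovsky:83} for $\AA[t]^*=\AA^*$ and $\AA\ltr{t}^*=\{w\,t^m\,|\,w\in\AA^*,m\in\ZZ\}$ over reduced coefficients, whereas you reprove these by the classical leading-coefficient arguments (your index bookkeeping in the nilpotency induction is slightly garbled but the standard argument you describe goes through); and your reordering of the $PS$-tower in part (3) is legitimate precisely because it is a statement about the underlying ring only --- the difference structure is irrelevant for computing $\EE^*$ --- though the paper's straight induction along the given tower avoids the reordering altogether. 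In short, your route buys self-containedness and a slightly more general part (1); the paper's buys brevity by leaning on its predecessor and the group-ring literature.
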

\begin{proof}
Let $\dfield{\EE}{\sigma}$ be given as in~\eqref{Equ:APSOrdered} (with $\AA=\FF$). Set $\HH=\FF[y_1,\dots,y_l]$ where the $y_i$ are \rE-monomials. Note that $\EE$ is built by a tower of $PS$-monomials on top of $\HH$. By~\cite[Corollary~4.3]{Schneider:16a}, $\HH$ is reduced. Moreover, one can verify the following fact: if a ring $\AA$ is reduced, the ring of polynomials $\AA[t]$ or the ring of Laurent polynomials $\AA\ltr{t}$ is reduced (for a straightforward proof see ~\cite[Corollary~4.3]{Schneider:16a}: there we showed this result for \pisiSE-monomials -- however, we never used the property that the constants remain unchanged and thus the proof is valid also for $PS$-monomials). Thus by iterative application of this result, it follows that $\EE$ is reduced and (1) is proven.\\ 
Now we use the fact that $\AA[t]^*=\AA^*$ and $\AA\ltr{t}^*=\{w\,t^m|w\in\AA^*, m\in\ZZ\}$ if the coefficient ring $\AA$ is reduced; for details and more general statements see~\cite[Theorem~1]{Karpilovsky:83} (and~\cite{Neher:09}). Hence by iterative application of these properties in combination with part~(1) we conclude that parts~(2) and~(3) follow.
\end{proof}

\noindent Next, we present an alternative characterization of basic \rpisiSE-extensions (compare Section~\ref{Sec:Introduction}) under the assumption that the ground ring $\dfield{\AA}{\sigma}$ is a field.

\begin{proposition}\label{Prop:AltDefOfBasic}
Let $\dfield{\FF\lr{t_1}\dots\lr{t_e}}{\sigma}$ be an $APS$-extension of a difference field $\dfield{\FF}{\sigma}$. Then this extension is basic iff for $1\leq i\leq e$ the following holds:
\begin{enumerate}
 \item if $t_i$ is an $A$-monomial then $\frac{\sigma(t_i)}{t_i}\in\const{\FF}{\sigma}^*$;
 \item if $t_i$ is a $P$-monomial then $\alpha=\frac{\sigma(t_i)}{t_i}\in\EE^*$ where $\alpha$ is free of $R$-monomials.
\end{enumerate}
 \end{proposition}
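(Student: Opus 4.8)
The plan is to show the equivalence by unravelling the definition of $G$-basic (Definition~\ref{Def:BasicAPSExt}) in the special case $G=\FF^*$ and comparing the membership conditions on the multiplicands $\frac{\sigma(t_i)}{t_i}$ with the conditions (1)--(2) in the proposition. Condition (1) is literally identical in both formulations, so the whole content is the equivalence, for a $P$-monomial $t_i$, of
\[
\tfrac{\sigma(t_i)}{t_i}\in\PPowers{\FF\lr{t_1}\dots\lr{t_{i-1}}}{\FF}{\FF^*}
\qquad\Longleftrightarrow\qquad
\tfrac{\sigma(t_i)}{t_i}\in\EE^*\text{ and is free of $R$-monomials.}
\]
First I would reorder the tower as in~\eqref{Equ:APSOrdered}, resp.\ \eqref{Equ:PASOrdered}, so that the $P$-monomials among $t_1,\dots,t_{i-1}$ form a contiguous block and, since the $P$-multiplicands are free of $A$- and $S$-monomials, one may assume $\frac{\sigma(t_i)}{t_i}$ lies in $\FF[p_1,p_1^{-1}]\dots[p_r,p_r^{-1}]$ for the relevant $P$-monomials $p_1,\dots,p_r$ preceding $t_i$.

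For the direction ``$\Rightarrow$'': if $\frac{\sigma(t_i)}{t_i}=u\,t_1^{l_1}\dots t_{i-1}^{l_{i-1}}$ with $u\in\FF^*$ and $l_j=0$ whenever $t_j$ is an $AS$-monomial, then the right-hand side is a monomial in $\FF^*$ times a product of $P$-monomials, hence a unit of $\EE$ by Lemma~\ref{Lemma:SInverseElements}(3), and it involves no $A$-monomial, so it is ``free of $R$-monomials'' in the required sense. For the converse ``$\Leftarrow$'': suppose $\alpha=\frac{\sigma(t_i)}{t_i}\in\EE^*$ is free of $R$-monomials. By Lemma~\ref{Lemma:SInverseElements}(3) applied to the $PS$-extension $\dfield{\EE}{\sigma}$ over $\FF$ (after moving the $A$-monomials out of the way via~\eqref{Equ:PASOrdered}, so that $\EE$ is a $PS$-extension of the field $\FF':=\FF[y_1,\dots,y_l]$'s fraction field — here one must be slightly careful since $\FF[y_1,\dots,y_l]$ is not a field, so the cleanest route is to apply Lemma~\ref{Lemma:SInverseElements}(3) to the sub-$PS$-extension not involving the $y_j$), every unit of $\EE$ has the shape $h\,t_1^{m_1}\dots t_e^{m_e}$ with $h$ in the base field and $m_j=0$ for $S$-monomials $t_j$; imposing that $\alpha$ is free of $R$-monomials forces $h\in\FF^*$ and the $A$-monomial exponents to vanish, so $\alpha\in\PPowers{\FF\lr{t_1}\dots\lr{t_{i-1}}}{\FF}{\FF^*}$ (using also that $\alpha$, being a multiplicand of a $P$-extension built on $t_1,\dots,t_{i-1}$, can only involve those generators).

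I expect the main obstacle to be a bookkeeping issue rather than a conceptual one: namely making precise what ``free of $R$-monomials'' means for an element of $\EE^*$ and justifying that a unit of $\EE$ which is free of $R$-monomials actually lies in the Laurent-polynomial ring generated by the $P$-monomials over $\FF$, with no $A$-monomials and no $S$-monomials appearing. This requires Lemma~\ref{Lemma:SInverseElements}(2)--(3) together with the reordering~\eqref{Equ:PASOrdered} and a careful application of the structure of units of a Laurent-polynomial ring over a reduced ring (as quoted from~\cite{Karpilovsky:83}); once the normal form of units is in hand, matching it against the definition of $\PPowers{\cdot}{\cdot}{\FF^*}$ is immediate. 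A secondary subtlety is that in Definition~\ref{Def:BasicAPSExt}(2) the multiplicand lies in $\PPowers{\FF\lr{t_1}\dots\lr{t_{i-1}}}{\FF}{\FF^*}$ — only the generators \emph{before} $t_i$ — whereas in the proposition it is merely required to lie in $\EE^*$; one closes this gap by recalling that a $P$-monomial's multiplicand, by construction of an $APS$-extension, can only involve generators introduced earlier, so the apparent weakening to $\EE^*$ is harmless.
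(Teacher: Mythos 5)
Your plan follows the same route as the paper's proof: reduce everything to the $P$-condition, reorder the tower into the form \eqref{Equ:APSOrdered}/\eqref{Equ:PASOrdered}, and identify the multiplicands through the unit structure of Lemma~\ref{Lemma:SInverseElements}(3). The forward direction is fine. In the converse, however, the pivotal sentence ``every unit of $\EE$ has the shape $h\,t_1^{m_1}\cdots t_e^{m_e}$ with $h$ in the base field and $m_j=0$ for $S$-monomials'' is false as soon as $A$-monomials are present: in $\FF[y]$ with $y^2=1$ the element $2+y$ is a unit, with inverse $\tfrac{1}{3}(2-y)$, and it is not of that shape. Since the presence of such ``exotic'' units is exactly where the difficulty of the proposition lies, this step cannot be left in that form. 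Your parenthetical fix (apply Lemma~\ref{Lemma:SInverseElements}(3) to the sub-$PS$-extension without the $y_j$) is the right instinct, but two things are missing: (i) you must argue that $\alpha$, a priori only a unit of $\EE$, is a unit of that $A$-free subring (e.g.\ by comparing coefficients with respect to the monomials in the $y_j$, over which $\EE$ is a free module; or, simpler, by using that $\alpha$ is, by the very definition of a $P$-extension, a unit of $\FF\lr{t_1}\dots\lr{t_{i-1}}$); and (ii) that subring need not be closed under $\sigma$ when the $S$-summands involve the $y_j$, so Lemma~\ref{Lemma:SInverseElements}(3) cannot be quoted verbatim for it as a difference ring. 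The paper avoids both issues by reordering first, so that the $P$-monomials form a genuine $P$-extension tower $\FF\lr{p_1}\dots\lr{p_r}$ directly over the field $\FF$; then the multiplicand of $p_i$ is by definition a unit of $\FF\lr{p_1}\dots\lr{p_{i-1}}$, and Lemma~\ref{Lemma:SInverseElements}(3) applied there gives exactly the membership demanded by Definition~\ref{Def:BasicAPSExt}.

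A second, smaller point: in the converse you justify the reordering with ``since the $P$-multiplicands are free of $A$- and $S$-monomials''. Freeness of $A$-monomials is hypothesis (2), but freeness of $S$-monomials is not a hypothesis --- it is part of what must be proved, and in your write-up it is only obtained (via the unit structure) after the reordering that already uses it, which is circular as presented. Either derive it beforehand --- the multiplicand is a unit of the ring below it, and by Lemma~\ref{Lemma:SInverseElements}(1)--(2) units of an $S$-extension of a reduced ring lie in the coefficient ring --- or cite the reordering result \cite[Lemma~4.13]{Schneider:16a} as the paper does, which packages precisely this argument.
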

\begin{proof}
A basic $APS$-extension fulfils properties (1) and (2) by definition.
Now suppose that we are given an $APS$-extension with the properties~(1) and~(2). Observe that the $APS$-extension can be reordered to the form~\eqref{Equ:APSOrdered} with $\FF=\AA$ by~\cite[Lemma~4.13]{Schneider:16a} (this reordering property is stated for basic \rpisiSE-extensions; however, the proof holds also for $APS$-extensions since it does not use the property that the constants remain unchanged). Since all $P$-monomials are free of $A$-monomials, we can reorder this extension further to the form~\eqref{Equ:PASOrdered}. Set $\EE=\FF\lr{p_1}\dots\lr{p_r}$. 
By part~(3) of Lemma~\ref{Lemma:SInverseElements} we have $\EE^*=\PPowers{\EE}{\FF}{(\FF^*)}$. Hence $\frac{\sigma(p_i)}{p_i}\in\EE^*$ for $1\leq i\leq r$. Consequently the $APS$-extension is basic.
\end{proof}

\noindent In this article we will use heavily the following result from~\cite[Thm.~2.22]{Schneider:16a}.

\begin{theorem}\label{Thm:ShiftQuotient}
Let $\dfield{\EE}{\sigma}$ be a basic \rpisiSE-extension of a difference field $\dfield{\FF}{\sigma}$ of the form~\eqref{Equ:APSOrdered}. Let 
$a=u\,y_1^{n_1}\dots y_l^{n_l}p_1^{z_1}\dots p_r^{z_r}$
with $u\in\FF^*$, $n_i\in\NN$ and $z_i\in\ZZ$, and let
$f\in\EE\setminus\{0\}$. If $\sigma(f)=a\,f$, then there are a $w\in\FF^*$, $\xi_i\in\NN$ and $\pi_i\in\ZZ$ with
\begin{equation}\label{Equ:ProductSol}
f=w\,y_1^{\xi_1}\dots y_l^{\xi_l} p_1^{\pi_1}\dots p_r^{\pi_r}\in\EE^*.
\end{equation}
\end{theorem}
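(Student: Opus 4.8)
The natural strategy is to exploit the ordered form~\eqref{Equ:APSOrdered} together with the fact, established in Proposition~\ref{Prop:AltDefOfBasic} and Lemma~\ref{Lemma:SInverseElements}, that in the block of $P$-monomials the units of $\FF\lr{p_1}\dots\lr{p_r}$ are exactly the Laurent monomials with coefficient in $\FF^*$. First I would argue that $f$ can have no nontrivial dependence on the $S$-monomials $s_1,\dots,s_v$: write $\EE = \HH[s_1]\dots[s_v]$ where $\HH=\FF[y_1,\dots,y_l][p_1,p_1^{-1}]\dots[p_r,p_r^{-1}]$ and observe that since $a\in\HH^*$, comparing leading coefficients of $f$ with respect to the polynomial ring $\HH[s_1]\dots[s_v]$ in a suitable monomial order on the $s_i$ forces $f$ itself to be a unit times… no — more carefully, the top $s$-power of $\sigma(f)$ must match that of $af$, and since $\sigma$ fixes the degree in each $S$-monomial (because $\sigma(s_i)=s_i+\beta_i$ with $\beta_i$ of lower order), the leading coefficient $c\in\HH$ of $f$ (in the highest $s$-monomial occurring) satisfies $\sigma(c)=a\,c$; iterating downward and using that $\const{\EE}{\sigma}=\const{\FF}{\sigma}=\KK$ together with Theorem~\ref{Thm:RPSCharacterization}(1) (no $g$ with $\sigma(g)=g+\beta_i$), one concludes $f\in\HH\setminus\{0\}$.

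Next I would handle the $A$-monomials $y_1,\dots,y_l$. Using the reordering~\eqref{Equ:PASOrdered}, write $\HH = \GG[y_1,\dots,y_l]$ with $\GG=\FF[p_1,p_1^{-1}]\dots[p_r,p_r^{-1}]$; since $a=(u\,p_1^{z_1}\dots p_r^{z_r})\,y_1^{n_1}\dots y_l^{n_l}$ with $u\,p_1^{z_1}\dots p_r^{z_r}\in\GG^*$, one expands $f$ in the basis $\{y_1^{j_1}\dots y_l^{j_l}\}$ (with $0\le j_k<\ord(y_k)$) and compares coefficients of $\sigma(f)=af$. Because each $y_k$ is an \rE-monomial with $\sigma(y_k)/y_k=\alpha_k\in\KK^*$, multiplication by $a$ permutes the $y$-monomial basis up to constants; matching a single surviving basis element $y_1^{\xi_1}\dots y_l^{\xi_l}$ whose coefficient $w'\in\GG\setminus\{0\}$ satisfies $\sigma(w')=(u\,p_1^{z_1}\dots p_r^{z_r})\,\alpha\,w'$ for the appropriate product of roots of unity $\alpha\in\KK^*$. (Showing that only one basis element survives — i.e., that the coefficients of all other $y$-monomials are forced to zero — is exactly the content of Theorem~\ref{Thm:RPSCharacterization}(3): if two coefficients were nonzero, their quotient $g\in\GG\setminus\{0\}$ would satisfy $\sigma(g)=\alpha_k^{m}g$ for some $m\in\{1,\dots,\ord(y_k)-1\}$, contradicting that $y_k$ is an \rE-monomial.)

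Finally, the $P$-block: it remains to show that any $g\in\GG\setminus\{0\}$ with $\sigma(g)=(u'\,p_1^{z_1}\dots p_r^{z_r})\,g$, where $u'\in\FF^*$, must itself be of the form $w\,p_1^{\pi_1}\dots p_r^{\pi_r}$ with $w\in\FF^*$. Here I would induct on $r$, peeling off $p_r$: writing $g=\sum_{k}g_k p_r^k$ with $g_k\in\GG'=\FF[p_1,p_1^{-1}]\dots[p_{r-1},p_{r-1}^{-1}]$ and using that $\sigma(p_r)/p_r$ lies in $\GG'$ (basicness: the multiplicand is free of $A$- and $S$-monomials and involves only earlier $P$-monomials), one gets that each nonzero $g_k$ satisfies a first-order equation of the same shape over $\GG'$; the ratio of any two such would give $g'\in\GG'\setminus\{0\}$ with $\sigma(g')=(\sigma(p_r)/p_r)^{m}g'$ for some $m\neq0$, contradicting that $p_r$ is a \piE-monomial via Theorem~\ref{Thm:RPSCharacterization}(2). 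Hence $g$ is a single term $g_k p_r^k$, and the induction hypothesis applied to $g_k$ finishes it; assembling the three blocks yields~\eqref{Equ:ProductSol}, and $f\in\EE^*$ follows since each factor is a unit.

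The main obstacle I anticipate is not any single block in isolation but keeping the bookkeeping honest across the reordering between~\eqref{Equ:APSOrdered} and~\eqref{Equ:PASOrdered}: one must be sure that "free of $R$-monomials" for the $P$-multiplicands really lets the $y$-argument and the $p$-argument be decoupled cleanly, and that the constant field used at each stage is genuinely $\KK$ (this is where one invokes that $\dfield{\EE}{\sigma}$ is an \rpisiSE-extension, so Theorem~\ref{Thm:RPSCharacterization} applies with $\const{\cdot}{\sigma}=\KK$ a field at every intermediate ring). A careful statement of the three coefficient-comparison lemmas — in each case: "a first-order homogeneous equation over the sub-difference-ring has at most a one-dimensional solution space, spanned by a monomial" — is what the proof really rests on, and each of those is precisely a restatement of the relevant part of Theorem~\ref{Thm:RPSCharacterization}.
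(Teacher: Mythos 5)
There is nothing in this paper to compare your plan against: Theorem~\ref{Thm:ShiftQuotient} is not proved here, it is imported from \cite[Thm.~2.22]{Schneider:16a}, so your argument has to stand on its own. Your block structure (strip the $S$-monomials by comparing coefficients with respect to the top $S$-variable and invoking part~(1) of Theorem~\ref{Thm:RPSCharacterization}, then treat the $y$-block, then peel off $p_r,\dots,p_1$ with part~(2)) is indeed the natural route, and with two repairs it goes through whenever $a$ is free of the $A$-monomials, i.e.\ $n_1=\dots=n_l=0$ --- which is in effect the only situation this paper ever uses (in Propositions~\ref{Prop:ConstPiCase}, \ref{Prop:ConstRCase} and in the proof of Theorem~\ref{Thm:RPiSiIsSimple} the $y$-part of $a$ only contributes a constant to $u$). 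The repairs: (i) in the $S$- and $A$-blocks you divide by coefficients that a priori only lie in the rings $\HH$ resp.\ $\GG$ of your decomposition, so the quotients live in fraction fields; you must first know these coefficients are unit monomials, which is exactly the theorem for the shorter tower --- so the whole proof is one induction over the tower, not three independent blocks (you say this for the $P$-block but use it silently in the other two); (ii) in the $P$-block the comparison of $p_r$-coefficients couples $g_k$ with $g_{k-z_r}$, so ``each nonzero $g_k$ satisfies an equation of the same shape'' is only correct once you observe that $z_r\neq0$ would force infinitely many nonzero coefficients, hence no nonzero $f$ at all.

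The genuine gap sits in the $A$-block, exactly where you appeal to Theorem~\ref{Thm:RPSCharacterization}(3). If some $n_i>0$, multiplication by $a=b\,y_1^{n_1}\dots y_l^{n_l}$ permutes the $y$-monomial basis nontrivially; the coefficient comparison reads $\sigma(f_j)\,\alpha^{j}=b\,f_{j-n}$ (indices modulo the orders), i.e.\ it links \emph{different} coefficients, and the quotient of two nonzero coefficients does not satisfy $\sigma(g)=\alpha_k^{m}\,g$, so the contradiction you invoke is not available. This is not a cosmetic issue: take $\FF=\QQ(\mathrm{i})(x)$ with $\sigma(x)=x+1$ and $\mathrm{i}^2=-1$, the basic $R$-extension $\FF[y]$ with $y^2=1$, $\sigma(y)=-y$ (one checks $\const{\FF[y]}{\sigma}=\QQ(\mathrm{i})$), and put $a=\mathrm{i}\,y$ (so $u=\mathrm{i}\in\FF^*$, $n_1=1$) and $f=y+\mathrm{i}$. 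Then $\sigma(f)=\mathrm{i}-y=\mathrm{i}\,y\,(y+\mathrm{i})=a\,f$, yet $f$ is not of the form $w\,y^{\xi}$ with $w\in\FF^*$ (it is a unit, since $(y+\mathrm{i})(y-\mathrm{i})=2$). So in the generality in which the statement is printed here (arbitrary $n_i\in\NN$) no completion of your sketch --- nor any other argument --- can yield the asserted monomial form; your strategy is sound for $a\in\PPowers{\EE}{\FF}{(\FF^*)}$, i.e.\ for $a$ free of $A$-monomials, and for the case $n_i\neq0$ one has to consult the precise hypotheses of the cited Theorem~2.22 in \cite{Schneider:16a}.
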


\noindent Finally, we summarize the algorithmic toolbox from~\cite{Schneider:16a} in the following remark.

\begin{remark}\label{Remark:ConstructiveVersion}
We suppose that we are given a ground field $\dfield{\FF}{\sigma}$ with certain algorithmic properties stated in~\cite[Thm.~2.23]{Schneider:16a}; see also Section~\ref{Sec:Telescoping} below. For instance, we can take any \pisiSE-field extension $\dfield{\FF}{\sigma}$ of a difference field $\dfield{\GG}{\sigma}$ where $\dfield{\GG}{\sigma}$ is the constant field~\cite{Karr:81} (i.e., $\dfield{\FF}{\sigma}$ is a \pisiSE-field over $\GG$). For typical examples see the difference fields given in Example~\ref{Exp:qRat} below. Further,
$\dfield{\GG}{\sigma}$ can be, e.g., the difference field of unspecified sequences~\cite{Schneider:06d} or of radical expressions~\cite{Schneider:07f}. Note that in all three cases the underlying constant field must also meet certain algorithmic properties~\cite{Karr:81,Schneider:06d}. This is the case if we choose, e.g., a rational function field over an algebraic number field~\cite{Schneider:05c}.\\
Now let $\dfield{\EE}{\sigma}$ with $\EE=\FF\lr{t_1}\dots\lr{t_e}$ be a basic $APS$-extension of $\dfield{\FF}{\sigma}$ as specified above. Then Theorem~\ref{Thm:RPSCharacterization} provides a strategy to verify if the given $APS$-monomials are also \rpisiSE-monomials. Namely, suppose that we have checked that $\dfield{\AA}{\sigma}$ with $\AA=\FF\lr{t_1}\dots\lr{t_{k-1}}$ is an \rpisiSE-extension of $\dfield{\FF}{\sigma}$. Then given this setting the algorithms from~\cite{Schneider:16a} can be used in order to solve the following problems:
\begin{enumerate}
 \item given $\beta\in\AA$, decide constructively if there exists a $g\in\AA$ with $\sigma(g)=g+\beta$;
 \item given $\alpha\in\PPowers{\AA}{\FF}{(\FF^*)}$ with $\alpha$ not being a root of unity, decide constructively if there exists a $g\in\AA\setminus\{0\}$ and an $m\in\ZZ\setminus\{0\}$ such that $\sigma(g)=\alpha^m\,g$ holds;
 \item given $\alpha\in\KK^*$ with $\ord(\alpha)=\lambda\in\NN$ where $\lambda>1$, decide constructively if there exists a $g\in\AA\setminus\{0\}$ and an $m\in\NN$ with $0<m<\lambda$ such that $\sigma(g)=\alpha^m\,g$ holds.
\end{enumerate}
Hence we can treat the next \rpisiSE-monomial $t_k$ as follows. If $t_k$ is an $S$-monomial, we can check if $t_k$ is a \sigmaSE-monomial by testing if there exists a $g\in\AA$ with $\sigma(g)=g+\beta$ with $\beta=\sigma(t_k)-t_k$. If $t_k$ is a $P$-monomial (resp.\ $A$-monomial), we can check if $t_k$ is a \piE-monomial (resp.\ \rE-monomial) by testing if there exists a $g\in\AA\setminus\{0\}$ and an $m\in\ZZ\setminus\{0\}$ (resp.\ $0<m<\ord(t_k)$) with $\sigma(g)=\alpha^m\,g$ where $\alpha=\sigma(t_k)/t_k$.
\end{remark}

\begin{example}\label{Exp:qRat}
We go back to the ($q$--)rational difference field from Example~\ref{Exp:PSField}.
\begin{enumerate}
 \item Consider the rational difference ring $\dfield{\KK(x)}{\sigma}$. Since there is no $g\in\KK$ with $\sigma(g)=g+1$, $x$ is a \sigmaSE-monomial. 
  In particular, $\dfield{\KK(x)}{\sigma}$ is a \pisiSE-field over $\KK$.
  \item Let $\KK=\KK'(q_1,\dots,q_v)$ and  consider the $PS$-field extension $\dfield{\EE}{\sigma}$ of $\dfield{\KK(x)}{\sigma}$ from Example~\ref{Exp:PSField} with $\EE=\KK(x)\ltr{x_1}\dots\ltr{x_v}$ where $\sigma(x_i)=q_i\,x_i$ for $1\leq i\leq v$. It is not too difficult to see that there does not exist a $g\in\KK(x)^*$ and $(0,\dots,0)\neq(m_1,\dots,m_v)\in\ZZ^v$ with $\sigma(g)=q_1^{m_1}\dots q_v^{m_v}\,g$. Hence by~\cite[Thm.~9.1]{Schneider:10c} (or by the iterative application of Theorem~\ref{Thm:RPSCharacterization} combined with Corollary~\ref{Cor:LiftToField}) it follows that $\const{\EE}{\sigma}=\const{\KK(x)}{\sigma}=\KK$. Thus $\dfield{\EE}{\sigma}$ is a \pisiSE-field over $\KK$.
\end{enumerate}
\end{example}

\begin{example}\label{Exp:ConstructRPiSiExt}
We analyse the basic $APS$-monomials in Example~\ref{Exp:MainDRNaiveDef} using the algorithms from~\cite{Schneider:16a} that are implemented within the package \texttt{Sigma}.
\begin{enumerate}
\item As observed in Example~\ref{Exp:qRat} the difference field $\dfield{\KK(x)}{\sigma}$ with $\sigma(x)=x+1$ and $\const{\KK}{\sigma}=\KK=\QQ(n)$ is a \pisiSE-field over $\KK$.

 \item There are no $g\in\KK(x)^*$ and $m\in\{1\}$ with $\sigma(g)=(-1)^m\,g$. Hence the $A$-extension $\dfield{\AA_2}{\sigma}$ of $\dfield{\AA_1}{\sigma}$ is an $R$-extension.
 
\item There are no $g\in\AA_2\setminus\{0\}$ and $m\in\ZZ\setminus\{0\}$ with $\sigma(g)=2^m\,g$. Thus $\dfield{\AA_3}{\sigma}$ is a \piE-extension of $\dfield{\AA_2}{\sigma}$.

\item There are no $g\in\AA_3\setminus\{0\}$ and $m\in\ZZ\setminus\{0\}$ with $\sigma(g)=\big(\frac{n-x}{x+1}\big)^m\,g$. Thus $\dfield{\AA_4}{\sigma}$ is a \piE-extension of $\dfield{\AA_3}{\sigma}$.

\item There is no $g\in\AA_4$ with $\sigma(g)=g+\frac{-y}{x+1}$. Thus $\dfield{\AA_5}{\sigma}$ is a \sigmaSE-extension of $\dfield{\AA_4}{\sigma}$.

\item However, we find $g=\frac{2 (x+1) t_1+x-y+1}{x+1}\in\AA_5$ with $\sigma(g)=g+\frac{-y}{(x+1)(x+2)}$. Thus $\dfield{\AA_6}{\sigma}$ is not a \sigmaSE-extension of $\dfield{\AA_5}{\sigma}$. 
\end{enumerate}
To sum up, the difference ring $\dfield{\AA_5}{\sigma}$ is an \rpisiSE-extension of $\dfield{\QQ(n)(x)}{\sigma}$ with $\AA_5=\QQ(n)(x)(y)\ltr{p_1}\ltr{p_2}[t_1]$, i.e., $\const{\AA_5}{\sigma}=\const{\QQ(n)(x)}{\sigma}=\QQ(n)$. But $t_2$ is not a \sigmaSE-monomial, e.g., we find the constant 
\begin{equation}\label{Equ:ConstantForSigma}
c=t_2-g=t_2-\frac{2 (x+1) t_1+x-y+1}{x+1}\in\const{\AA_6}{\sigma}\setminus\AA_5.
\end{equation}
\end{example}

\begin{example}\label{Exp:3AMonomials}
Take the \pisiSE-field $\dfield{\KK(x)}{\sigma}$ over $\KK$ with $\sigma(x)=x+1$ and the $A$-extension $\dfield{\KK(x)[y_1][y_2][y_3]}{\sigma}$ of $\dfield{\KK(x)}{\sigma}$ with $y_1^2=1$ and $\sigma(y_1)=-y_1$, with 
$\sigma(y_2)=(-1)^{2/3}\,y_2$ and $y_2^3=1$, and with $\sigma(y_3)=(-1)^{1/6}y_3$ and $y_3^{12}=1$. We apply Theorem~\ref{Thm:RPSCharacterization} together with our algorithmic machinery~\cite{Schneider:16a}. Since there does not exist a $g\in\KK(x)\setminus\{0\}$ with $\sigma(g)=-g$, $y_1$ is an \rE-monomial. Moreover, since there are no $g\in\KK(x)[y_1]\setminus\{0\}$ and $m\in\{1,2\}$ with $\sigma(g)=(-1)^{2\,m/3} g$, $y_2$ is also an \rE-monomial. But we find $m=10$, $g=y_1\,y_2$ with $\sigma(g)=(-1)^{m/6}\,g$. Thus $y_3$ is not an \rE-monomial. In particular, we get 
\begin{equation}\label{Equ:ConstantForR}
c=y_1 y_2 y_3^2\in\const{\KK(x)[y_1][y_2][y_3]}{\sigma}\setminus\KK(x)[y_1][y_2].\end{equation}
For an alternative and direct check we refer to Example~\ref{Exp:SimpleCheck1}.
\end{example}

\begin{example}\label{Exp:3PiMonomials}
Take the difference ring $\dfield{\AA_3}{\sigma}$ with $\AA_3=\QQ(n)(x)[y]\ltr{p_1}$ given in Example~\ref{Exp:MainDRNaiveDef}. Now take the $P$-extension $\dfield{\AA_3\ltr{p_2}\ltr{p_3}}{\sigma}$ of $\dfield{\AA_3}{\sigma}$ with $\alpha_2:=\frac{\sigma(p_2)}{p_2}=\frac{(n
        -x
        )^2}{(x+1)^2}$
and with $\alpha_3:=\frac{\sigma(p_3)}{p_3}=-\frac{4 (1
        +n
        -x
        )}{x+1}$.
Using part 2 of Theorem~\ref{Thm:RPSCharacterization} we can verify that $p_2$ is a \piE-monomial. However,
we find $m=2$ and $g=
        -\frac{p_1^4 p_2}{(1
        +n
        -x
        )^2}$
such that $\sigma(g)=\alpha_3^m\,g$ holds. Therefore $p_3$ is not a \piE-monomial and we get
\begin{equation}\label{Equ:ConstantForPi}
c=\frac{-p_1^4 p_2}{(1
        +n
        -x
        )^2 p_3^2}\in\const{\AA_3\ltr{p_2}\ltr{p_3}}{\sigma}\setminus\AA_3\ltr{p_2}.
\end{equation}
\end{example}

\begin{example}\label{Exp:NestedP2}
Take the basic $P$-extension $\dfield{\KK(x)\ltr{p_1}\ltr{p_2}\ltr{p_3}}{\sigma}$ of $\dfield{\KK(x)}{\sigma}$ from Example~\ref{Exp:NestedP}. We check that there is no $g\in\KK(x)^*$ and $m\in\ZZ\setminus\{0\}$ with $\sigma(g)=(x+1)^m\,g$. Thus $p_1$ is a \piE-monomial. Next, we show by our algorithms that there is no $g\in\KK(x)\ltr{p_1}\setminus\{0\}$ and $m\in\ZZ\setminus\{0\}$ with $\sigma(g)=((x+1)p_1)^m\,g$. Hence $p_2$ is a \piE-monomial. Finally, we verify that there is no $g\in\KK(x)\ltr{p_1}\ltr{p_2}\setminus\{0\}$ and $m\in\ZZ\setminus\{0\}$ such that $\sigma(g)=((x+1)p_1\,p_2)^m\,g$ holds. Consequently, $p_3$ is a \piE-monomial. In total, $\dfield{\KK(x)\ltr{p_1}\ltr{p_2}\ltr{p_3}}{\sigma}$ is a \piE-extension of $\dfield{\KK(x)}{\sigma}$. 
\end{example}

\subsection{Simple tests for basic \rE-extensions}\label{Sec:SimpleTests}

In the following we will carve out further properties of \rE-extensions. In particular, we will show in Proposition~\ref{Prop:CharactSeveralRExt} that the verification of an \rE-extension is immediate if certain properties hold for the underlying difference field $\dfield{\FF}{\sigma}$. In addition, we will show in Lemma~\ref{Lemma:RSingleIsoMultiple} that a ring obtained by several \rE-extensions is isomorphic to a ring with just one properly chosen \rE-extension; this construction will be relevant in Section~\ref{Sec:Copies}.

\medskip

\noindent We start with the following definition that will be crucial throughout this article.
\begin{definition}\label{Def:ConstantStable}
A difference ring (resp.\ field) is called \notion{constant-stable} if $\const{\FF}{\sigma^k}=\const{\FF}{\sigma}$ for all $k\in\NN$.
\end{definition}

\begin{remark}\label{Remark:ConstantStable}
We note that the class of difference fields introduced in Remark~\ref{Remark:ConstructiveVersion} are all constant-stable; compare~\cite[Subsec.~2.3.3]{Schneider:16a}. In particular, all difference rings in our examples (except Example~\ref{Exp:NotConstantStable} below) are built by a tower of basic $APS$- or \rpisiSE- extensions (or mixed versions) over a constant-stable difference field.  
\end{remark}

\noindent We obtain the following simple characterization (compare~\cite[Prop.~1]{DR2}) under the assumption that the ground field $\dfield{\FF}{\sigma}$ is constant-stable.

\begin{proposition}\label{Prop:RExt}
Let $\dfield{\FF}{\sigma}$ be a constant-stable difference field. Let $\dfield{\FF[t]}{\sigma}$ be a basic $A$-extension of $\dfield{\FF}{\sigma}$ of order $\lambda$ with $\alpha=\frac{\sigma(t)}{t}\in\const{\FF}{\sigma}^*$. Then this is an \rE-extension iff $\alpha$ is a primitive root of unity. 
\end{proposition}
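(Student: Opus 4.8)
The plan is to use Theorem~\ref{Thm:RPSCharacterization}(3), which says that the basic $A$-extension $\dfield{\FF[t]}{\sigma}$ of order $\lambda$ with $\sigma(t)=\alpha\,t$ is an \rE-extension iff there are no $g\in\FF\setminus\{0\}$ and $m\in\{1,\dots,\lambda-1\}$ with $\sigma(g)=\alpha^m\,g$. Since $\alpha\in\const{\FF}{\sigma}^*$, for such an equation we have $\sigma(\alpha^m)=\alpha^m$, so $\sigma(g)=\alpha^m\,g$ is equivalent to $\sigma(g/g)$-type scaling; more precisely, applying $\sigma^k$ repeatedly gives $\sigma^k(g)=\alpha^{mk}g$. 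The key reduction is therefore: the extension fails to be an \rE-extension iff $\alpha^m\in\{\sigma(g)/g\mid g\in\FF^*\}$ for some $m$ with $1\le m\le\lambda-1$. The first step is to record that $\alpha$, being a $\lambda$th root of unity lying in the field $\const{\FF}{\sigma}\supseteq\QQ$, generates a cyclic group $\langle\alpha\rangle$ of some order $d\mid\lambda$, and $\alpha$ is a primitive $\lambda$th root of unity exactly when $d=\lambda$.

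Next I would treat the two directions. For ``$\Leftarrow$'': suppose $\alpha$ is a primitive $\lambda$th root of unity and assume for contradiction that $\sigma(g)=\alpha^m\,g$ for some $g\in\FF^*$ and $1\le m\le\lambda-1$. Then $\sigma^\lambda(g)=\alpha^{m\lambda}g=g$, so $g\in\const{\FF}{\sigma^\lambda}$; by constant-stability $\const{\FF}{\sigma^\lambda}=\const{\FF}{\sigma}$, hence $\sigma(g)=g$, forcing $\alpha^m=1$. But $\alpha$ is primitive of order $\lambda$ and $1\le m\le\lambda-1$, so $\alpha^m\ne1$ — a contradiction. Thus by Theorem~\ref{Thm:RPSCharacterization}(3) the extension is an \rE-extension. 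For ``$\Rightarrow$'': suppose $\alpha$ is not a primitive $\lambda$th root of unity, i.e.\ its order $d$ satisfies $1\le d<\lambda$ and $d\mid\lambda$. Then taking $g=1\in\FF^*$ and $m=d$, we have $1\le m=d\le\lambda-1$ and $\sigma(g)=g=1=\alpha^d=\alpha^m\,g$; hence by Theorem~\ref{Thm:RPSCharacterization}(3) the extension is \emph{not} an \rE-extension. (The edge case $d=1$, i.e.\ $\alpha=1$, is excluded by the standing hypothesis that an $A$-extension has order $\lambda>1$ and a $\lambda$th root of unity generator; but even if one allows $\alpha=1$, then $g=1$, $m=1$ witnesses the failure directly, so the statement is uniform.)

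The only genuine subtlety — the ``main obstacle'', though it is mild — is the ``$\Leftarrow$'' direction, where one must turn a relation $\sigma(g)=\alpha^m g$ with $g\in\FF^*$ into a contradiction. The naive attempt ``$\alpha^m\ne1$ so no such $g$'' does not work, since a priori $\FF^*$ could contain an element whose $\sigma$-shift quotient is a nontrivial root of unity. This is exactly the situation illustrated in Example~\ref{Exp:3AMonomials}, where over an \emph{extended} ground ring such a $g$ does exist (e.g.\ $g=y_1y_2$). What rules it out here is precisely that $\dfield{\FF}{\sigma}$ is constant-stable: the relation forces $g\in\const{\FF}{\sigma^\lambda}=\const{\FF}{\sigma}$, collapsing $\alpha^m$ to $1$. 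So the proof hinges on invoking Definition~\ref{Def:ConstantStable} at exactly this point, and the rest is the bookkeeping of the cyclic group $\langle\alpha\rangle$ together with a citation of Theorem~\ref{Thm:RPSCharacterization}(3).
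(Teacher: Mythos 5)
Your proof is correct, but it takes a somewhat different route from the paper. You reduce everything to the ground field by citing Theorem~\ref{Thm:RPSCharacterization}(3): non-primitivity gives the witness $g=1$, $m=\ord(\alpha)<\lambda$, and conversely a relation $\sigma(g)=\alpha^m g$ with $g\in\FF^*$, $0<m<\lambda$ yields $\sigma^{\lambda}(g)=\alpha^{m\lambda}g=g$, so constant-stability forces $g\in\const{\FF}{\sigma}$ and hence $\alpha^m=1$. The paper instead argues directly with constants in $\FF[t]$: for ``$\Rightarrow$'' it exhibits $t^s$ (where $\alpha^s=1$, $0<s<\lambda$) as an element of $\const{\FF[t]}{\sigma}\setminus\FF$, and for ``$\Leftarrow$'' it takes a putative constant $g=\sum_{i=0}^{\lambda-1}g_i t^i\in\FF[t]\setminus\FF$ and compares the $m$-th coefficient to obtain $\sigma(g_m)=\alpha^{-m}g_m$, after which it applies constant-stability exactly as you do. So the decisive step is identical; your appeal to Theorem~\ref{Thm:RPSCharacterization}(3) simply outsources the coefficient comparison, making the proof shorter, while the paper's version is self-contained and produces the offending constant $t^s$ explicitly, in the same spirit as the later explicit constructions (e.g.\ Proposition~\ref{Prop:ConstRCase}). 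One small inaccuracy: the case $\alpha=1$ is not excluded by the hypotheses (an $A$-extension of order $\lambda>1$ only requires $\alpha^{\lambda}=1$, not that $\alpha$ has order $>1$), but your fallback witness $g=1$, $m=1$ covers it, so there is no gap.
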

\begin{proof}
``$\Rightarrow$'': Suppose that $\alpha$ is not a primitive root of unity of order $\lambda$. Then we get $\alpha^s=1$ for some $s\in\NN$ with $0<s<\lambda$. Hence $\sigma(t^s)=\alpha^s\,t^s=t^s$, i.e., $t^s\in\const{\FF[t]}{\sigma}$. Since $t^{\lambda}=1$ is the defining relation, $t^s\notin\FF$. Thus $t$ is not an \rE-monomial.\\
``$\Leftarrow$'':
Assume there is a $g=\sum_{i=0}^{\lambda-1}g_i\,t^i\in\FF[t]\setminus\FF$ with $\sigma(g)=g$. Thus we can take an $m$ with $0<m<\lambda$ and $g_m\in\FF^*$. Comparing the $m$-th coefficient in $\sigma(g)=g$ gives $\sigma(g_m)=\alpha^{-m}\,g_m$.
Since $\alpha$ is a constant, $\sigma^{\lambda}(g_m)=(\alpha^{-m})^{\lambda}\,g_m=g_m$. Hence $g_m\in\const{\FF}{\sigma}^*$ using the property that  $\dfield{\FF}{\sigma}$ is constant-stable. Consequently, $g_m=\sigma(g_m)=\alpha^{-m}\,g_m$ and therefore $\alpha^m=1$. Thus $\alpha$ is not a primitive root of unity of order $\lambda$.
\end{proof}

\noindent Note that this characterization does not hold in general if $\dfield{\FF}{\sigma}$ is not constant-stable.

\begin{example}\label{Exp:NotConstantStable}
Take the rational function field $\QQ(x)$ and define the field automorphism subject to the relation $\sigma(x)=-x$. Since $\sigma(x^2)=x^2$, we get $\const{\QQ(x)}{\sigma}=\QQ(x^2)$. Moreover, since $\sigma^2(x)=x$, we have $\const{\QQ(x)}{\sigma^2}=\QQ(x)$. Hence $\dfield{\QQ(x)}{\sigma}$ is not constant-stable. Now take the $A$-extension $\dfield{\QQ(x)[y]}{\sigma}$ with $\sigma(y)=-y$. Then $\sigma(\frac{y}{x})=\frac{y}{x}$, i.e., $\const{\QQ(x)[y]}{\sigma}\supsetneq\const{\QQ(x)}{\sigma}$. Hence $y$ is not an \rE-monomial. 
\end{example}

\noindent Next, we show that  several \rE-extensions can be merged into one \rE-extension by means of a difference ring isomorphism.

\begin{lemma}\label{Lemma:RSingleIsoMultiple}
Let $\dfield{\FF}{\sigma}$ be a constant-stable difference field.
Let $\dfield{\EE}{\sigma}$ with $\EE=\FF[t_1]\dots[t_e]$ be a basic $A$-extension of $\dfield{\FF}{\sigma}$ of orders $\lambda_1,\dots,\lambda_e$, respectively. Further, suppose that the $\alpha_i=\frac{\sigma(t_i)}{t_i}\in\const{\FF}{\sigma}^*$ with $1\leq i\leq e$ are primitive and that $\gcd(\lambda_i,\lambda_j)=1$ holds for pairwise different $i,j$. Then the following holds.
\begin{enumerate}
\item There is an \rE-extension $\dfield{\FF[t]}{\sigma}$ of
$\dfield{\FF}{\sigma}$ of order $\lambda:=\prod_{i=1}^e\lambda_i$ with $\frac{\sigma(t)}{t}=\prod_{i=1}^e\alpha_i$. 
\item There is a difference ring isomorphism $\fct{\tau}{\EE}{\FF[t]}$ with
$\tau|_{\FF}=\text{id}_{\FF}$ such that for $1\leq i\leq e$ we have $\tau(t_i)=t^{r_i}$ for explicitly computable $r_i\in\NN$. In particular, the inverse of $\tau$ is computable.
\item $\dfield{\EE}{\sigma}$ is an \rE-extension of $\dfield{\FF}{\sigma}$.
\end{enumerate}
\end{lemma}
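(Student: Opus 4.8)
The plan is to prove the three parts in order, using the Chinese Remainder Theorem together with the criterion for $R$-extensions from Proposition~\ref{Prop:RExt}. First I would establish part~(1): since each $\alpha_i\in\const{\FF}{\sigma}^*$ is a primitive $\lambda_i$th root of unity and the $\lambda_i$ are pairwise coprime, the product $\alpha=\prod_{i=1}^e\alpha_i$ lies in $\const{\FF}{\sigma}^*$ and is a primitive $\lambda$th root of unity with $\lambda=\prod_{i=1}^e\lambda_i$ (this is a standard fact about roots of unity over $\QQ\subseteq\KK$, using coprimality). Adjoining the $A$-monomial $t$ with $\sigma(t)=\alpha\,t$ and relation $t^{\lambda}=1$ yields a basic $A$-extension $\dfield{\FF[t]}{\sigma}$ of order $\lambda$, and since $\FF$ is constant-stable and $\alpha$ is a primitive root of unity, Proposition~\ref{Prop:RExt} immediately gives that it is an \rE-extension. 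This disposes of~(1).

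For part~(2) the construction is explicit via CRT. By coprimality, for each $i$ the residue class of $t^{\lambda/\lambda_i}$ has order exactly $\lambda_i$ in the cyclic group $\lr{t}$, and $\sigma(t^{\lambda/\lambda_i})=\alpha^{\lambda/\lambda_i}\,t^{\lambda/\lambda_i}$ where $\alpha^{\lambda/\lambda_i}=\prod_{j}\alpha_j^{\lambda/\lambda_i}=\alpha_i^{\lambda/\lambda_i}$ (the factors with $j\neq i$ vanish since $\lambda_j\mid \lambda/\lambda_i$) which is again a primitive $\lambda_i$th root of unity. Using CRT pick, for each $i$, an exponent $r_i\in\NN$ with $r_i\equiv (\lambda/\lambda_i)\,c_i \pmod{\lambda}$ chosen so that $\alpha^{r_i}=\alpha_i$ — concretely, first set $s_i=\lambda/\lambda_i$ and observe $\alpha^{s_i}$ and $\alpha_i$ are both primitive $\lambda_i$th roots of unity, so $\alpha_i=(\alpha^{s_i})^{c_i}$ for some $c_i$ coprime to $\lambda_i$, and then let $r_i$ be the CRT lift of $s_i c_i \bmod \lambda_i$ and $0$ modulo the other prime-power blocks, adjusted to be a nonnegative integer; this makes $\sigma(t^{r_i})=\alpha^{r_i}t^{r_i}=\alpha_i\,t^{r_i}$ and $(t^{r_i})^{\lambda_i}=1$ with $t^{r_i}$ of order exactly $\lambda_i$. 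Define $\fct{\tau}{\EE}{\FF[t]}$ by $\tau|_{\FF}=\text{id}_{\FF}$ and $\tau(t_i)=t^{r_i}$; one checks this respects both the defining relations $t_i^{\lambda_i}=1$ and the shift relations $\sigma(t_i)=\alpha_i t_i$, hence extends to a difference ring homomorphism. Surjectivity follows because the $r_i$ generate a subgroup of $\ZZ/\lambda\ZZ$ of order $\lambda$ (again CRT, since $r_i$ has order $\lambda_i$ modulo $\lambda$ and the $\lambda_i$ are coprime), so $t$ itself lies in the image; injectivity then follows by a dimension count over $\FF$ (both $\EE$ and $\FF[t]$ are free $\FF$-modules of rank $\prod\lambda_i=\lambda$), or alternatively by exhibiting the inverse map sending $t$ to the appropriate monomial $\prod_i t_i^{q_i}$ obtained from the same CRT data. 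Computability of $\tau^{-1}$ is then manifest.

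Part~(3) is then immediate: by~(1) the difference ring $\dfield{\FF[t]}{\sigma}$ has $\const{\FF[t]}{\sigma}=\const{\FF}{\sigma}$, and by~(2) $\dfield{\EE}{\sigma}\simeq\dfield{\FF[t]}{\sigma}$ as difference rings via a $\FF$-linear isomorphism, so $\const{\EE}{\sigma}=\tau^{-1}(\const{\FF[t]}{\sigma})=\tau^{-1}(\const{\FF}{\sigma})=\const{\FF}{\sigma}$. Hence $\dfield{\EE}{\sigma}$ is an \rE-extension of $\dfield{\FF}{\sigma}$.

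\textbf{Main obstacle.} The routine facts about roots of unity (that a product of primitive roots of unity of pairwise coprime orders is primitive of the product order, and the order computations for the powers $\alpha^{r_i}$) are the technical heart, but these are elementary. The genuine care needed is in part~(2): verifying that $\tau$ is \emph{well-defined} as a difference ring homomorphism — i.e., that the images $t^{r_i}$ satisfy exactly the defining algebraic relations of the $t_i$ (neither too few nor too many relations), and that the shift actions match — and then pinning down both surjectivity and injectivity cleanly. The cleanest route is to package everything through the group isomorphism $\ZZ/\lambda\ZZ\cong\prod_i\ZZ/\lambda_i\ZZ$ given by CRT and to phrase $\EE$ and $\FF[t]$ as the group rings $\FF[\prod_i\ZZ/\lambda_i\ZZ]$ and $\FF[\ZZ/\lambda\ZZ]$ respectively, with $\sigma$ acting through the character $\alpha$; the isomorphism of difference rings is then the functoriality of the group ring construction applied to the CRT isomorphism, once one checks the character $\alpha$ pulls back to $(\alpha_1,\dots,\alpha_e)$, which is exactly the choice of the $r_i$.
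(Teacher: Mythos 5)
Your proposal is correct, and its skeleton coincides with the paper's: part (1) is the same appeal to Proposition~\ref{Prop:RExt} after noting that $\alpha=\alpha_1\cdots\alpha_e$ is a primitive $\lambda$th root of unity, your exponents $r_i$ (characterized by $r_i\equiv 1 \pmod{\lambda_i}$ and $r_i\equiv 0\pmod{\lambda_j}$ for $j\neq i$, i.e.\ $c_i=s_i^{-1}\bmod \lambda_i$) are exactly the paper's $r_i=u_i\lambda/\lambda_i$ with $u_i\tfrac{\lambda}{\lambda_i}+v_i\lambda_i=1$, and part (3) is the standard transfer of constants through the isomorphism, as in the paper. Where you genuinely diverge is the proof that $\tau$ is bijective: the paper proves injectivity by a hands-on monomial-cancellation and divisibility argument and then surjectivity by an explicit extended-Euclidean construction of a preimage of $t^a$, whereas you get surjectivity from the observation that the $r_i$ have additive orders $\lambda_i$ in $\ZZ/\lambda\ZZ$ and hence generate it (in fact $\sum_i r_i\equiv 1\pmod{\lambda}$, so $\tau(t_1\cdots t_e)=t$ directly), and then injectivity for free since $\tau$ is a surjective $\FF$-linear map between free $\FF$-modules of the same rank $\lambda$. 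Both routes are valid; your linear-algebra shortcut is shorter and the group-ring/CRT packaging ($\EE\simeq\FF[\prod_i\ZZ/\lambda_i\ZZ]$, $\FF[t]\simeq\FF[\ZZ/\lambda\ZZ]$) is a clean conceptual summary, while the paper's longer argument has the side benefit of producing $\tau^{-1}$ explicitly, which is what the ``computable inverse'' clause of part (2) requires — you do recover this, since a preimage of $t$ (e.g.\ $t_1\cdots t_e$, or any monomial from the CRT data) determines $\tau^{-1}$, and your $c_i$ are computable by the Euclidean algorithm rather than by any search. One cosmetic slip: the CRT lift should be taken modulo the pairwise coprime $\lambda_j$ (they need not be prime powers), and in fact no lift is needed since $s_ic_i$ already satisfies both congruences; this does not affect the argument.
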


%\frac{\lambda}{\lambda_i}

\begin{proof}
(1) Note that $\alpha=\alpha_1\dots\alpha_e$ is a primitive $\lambda$th root of unity. Hence
by Proposition~\ref{Prop:RExt} $\dfield{\FF[t]}{\sigma}$ is an \rE-extension of $\dfield{\FF}{\sigma}$.\\ (2) 
Since $\frac{\lambda}{\lambda_i}$ and $\lambda_i$ are relatively prime, we can compute $u_i,v_i\in\ZZ$ such that 
$u_i\,\frac{\lambda}{\lambda_i}+v_i\,\lambda_i=1$ and $0\leq u_i<\lambda_i$ holds.
Now set $r_i:=u_i\,\frac{\lambda}{\lambda_i}\in\NN$ for $1\leq i\leq e$. 
Observe that $\lambda_i$ and $u_i$ are relatively prime (since otherwise the left hand side of $u_i\,\frac{\lambda}{\lambda_i}+v_i\,\lambda_i=1$ is divisible by a nontrivial integer factor). 
Clearly, there is a ring homomorphism $\fct{\tau}{\EE}{\FF[t]}$ defined by
$\tau|_{\FF}=\text{id}_{\FF}$ and $\tau(t_i)=t^{r_i}$ for $1\leq i\leq e$.
Further note that $\alpha_j^{r_i}=\alpha_j^{u_i\lambda/\lambda_i}=1$ if $j\neq i$. Hence 
$\alpha^{r_i}=\alpha_i^{u_i\frac{\lambda}{\lambda_i}}=\alpha_i^{1-v_i\lambda_i}=\alpha_i$,
and thus
$\tau(\sigma(t_i))=\tau(\alpha_i\,t_i)=\alpha_i\,t^{r_i}=\alpha^{r_i}t^{r_i}
=\sigma(t^{r_i})=\sigma(\tau(t_i)).$
Therefore $\sigma(\tau(t_1^{a_1}\dots t_e^{a_e}))=\tau(\sigma(t_1^{a_1}\dots t_e^{a_e}))$ for any $a_i\in\NN$ and hence 
$\tau$ is a difference ring homomorphism by linearity. Suppose that $\tau$ is not injective. 
Then there is an $h\in\EE\setminus\{0\}$ with $\tau(h)=0$. But this implies that there are at least two different monomials in $h$, say $h_1=t_1^{b_1}\dots t_e^{b_e}$ and $h_2=t_1^{c_1}\dots t_e^{c_e}$ with $(b_1,\dots,b_e)\neq(c_1,\dots,c_e)$ and $0\leq b_i,c_i<\lambda_i$ such that $\tau(h_1)=\tau(h_2)$ holds. Multiplying $\tau(t_1^{-c_1}\dots t_e^{-c_e})$ on both sides of $\tau(h_1)=\tau(h_2)$ yields $\tau(t_1^{b_1-c_1}\dots t_e^{b_e-c_e})=1$. Hence we can assume that there is an $\vect{0}\neq(a_1,\dots,a_e)\in\NN^e$ with $0\leq a_i<\lambda_i$ for $1\leq i\leq e$ and $\tau(t_1^{a_1}\dots t_e^{a_e})=1$. By definition of $\tau$ we get $t^{u_1\frac{\lambda}{\lambda_1}a_1+\dots+u_e\frac{\lambda}{\lambda_e}a_e}=1$, and hence $\lambda\mid(u_1\frac{\lambda}{\lambda_1}a_1+\dots+u_e\frac{\lambda}{\lambda_e}a_e)=:y$. Let $j$ with $1\leq j\leq e$ be arbitrary but fixed. Then $\lambda_j\mid y$. Further, $\lambda_j\mid u_i\frac{\lambda}{\lambda_i}a_i$ for $1\leq i\leq e$ with $i\neq j$. Hence $\lambda_j$ must divide also $u_j\frac{\lambda}{\lambda_j}a_j$. However, $\lambda_j$ is relatively prime with $u_j$ and $\frac{\lambda}{\lambda_j}$. Hence $\lambda_j$ divides $a_j$. Since $a_j<\lambda_j$, we conclude that $a_j=0$. Thus $a_j=0$ for all $j$, a contradiction. Consequently $\tau$ is injective. We show surjectivity by inverting $\tau$ explicitly. Let $a\in\NN$ with $0\leq a <\lambda$. Now define $d:=\gcd({u_1}\frac{\lambda}{\lambda_1},\dots,{u_e}\frac{\lambda}{\lambda_e})$.
Hence by the iterative application of the extended Euclidean algorithm we can compute $(\tilde{a}_1,\dots,\tilde{a}_e)\in\NN^e$ with $d=\tilde{a}_1\,{u}_1\,\frac{\lambda}{\lambda_1}+\dots+\tilde{a}_e\,{u}_e\,\frac{\lambda}{\lambda_e}$. Thus $\tau(t_1^{\tilde{a_1}}\dots t_e^{\tilde{a_e}})=t^d$. Next, observe that $d$ and $\lambda$ are relatively prime: Let $p$ be a prime number with $p\mid\lambda=\lambda_1\dots\lambda_e$. Then there is a $j$ ($1\leq j\leq e$) with $p\mid\lambda_j$. Since the $\lambda_u$ with $1\leq u\leq e$ are relatively prime, $p\nmid\frac{\lambda}{\lambda_j}$. Further,
since $\lambda_j$ and $u_j$ are relatively prime, $p\nmid u_j$. Thus $p\nmid u_j\frac{\lambda}{\lambda_j}$, hence $p\nmid d$ and thus $\gcd(d,\lambda)=1$. Thus we can compute $\mu,\nu\in\ZZ$ with $\mu\,d+\nu\,\lambda=1$ and get $t^{a\,\mu\,d}=t^{a-a\,\nu\,\lambda}=t^a$ which implies $\tau(t_1^{\mu\,\tilde{a_1}}\dots t_e^{\mu\,\tilde{a_1}})=t^{a\,\mu\,d}=t^a$. By linear extension we obtain the inverse of $\tau$.\\
(3) Suppose that $\dfield{\EE}{\sigma}$ is not an \rE-extension of $\dfield{\FF}{\sigma}$. Then there is an $f\in\EE\setminus\FF$ with $\sigma(f)=f$. Consequently, $\tau(f)=\tau(\sigma(f))=\sigma(\tau(f))$, i.e., $\tau(f)\in\const{\FF[t]}{\sigma}$. Since $\dfield{\FF[t]}{\sigma}$ is an \rE-extension of $\dfield{\FF}{\sigma}$, $\tau(f)\in\const{\FF}{\sigma}$. In particular, $\tau(f)\in\FF$. Since $\tau|_{\FF}=\text{id}_{\FF}$, $f=\tau(f)\in\FF$, a contradiction.
\end{proof}

\noindent Finally, we obtain a simple test that allows one to determine whether several basic $A$-extensions form an \rE-extension.

\begin{proposition}\label{Prop:CharactSeveralRExt}
Let $\dfield{\FF}{\sigma}$ be a constant-stable difference field. Let $\dfield{\EE}{\sigma}$ with $\EE=\FF[t_1]\dots[t_e]$ be an $A$-extension of $\dfield{\FF}{\sigma}$ over $\FF$ of orders $\lambda_1,\dots,\lambda_e$ respectively. Then this is an \rE-extension iff for $1\leq i\leq e$, the $\frac{\sigma(t_i)}{t_i}$ are primitive $\lambda_i$-th roots of unity and $\gcd(\lambda_i,\lambda_j)=1$ for pairwise distinct $i,j$.
\end{proposition}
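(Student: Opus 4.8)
The plan is to prove the two directions separately, using Proposition~\ref{Prop:RExt} as the base case ($e=1$) and Lemma~\ref{Lemma:RSingleIsoMultiple} to handle the sufficiency of the conditions, while extracting an explicit constant to establish their necessity.

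First I would treat the implication ``$\Leftarrow$''. Assume that each $\alpha_i=\frac{\sigma(t_i)}{t_i}$ is a primitive $\lambda_i$-th root of unity and that the $\lambda_i$ are pairwise coprime. Then all the hypotheses of Lemma~\ref{Lemma:RSingleIsoMultiple} are satisfied, and part~(3) of that lemma immediately gives that $\dfield{\EE}{\sigma}$ is an \rE-extension of $\dfield{\FF}{\sigma}$. (Note that the hypothesis of Proposition~\ref{Prop:CharactSeveralRExt} already guarantees the $\alpha_i$ lie in $\const{\FF}{\sigma}^*$, since an $A$-extension of order $\lambda_i$ has a root of unity of that order as its multiplier; more precisely, the $A$-extension being \emph{over} $\FF$ forces $\alpha_i\in\FF^*$ to be a $\lambda_i$-th root of unity, hence $\sigma^{\lambda_i}(\alpha_i)=\alpha_i^{\lambda_i+1}=\alpha_i$, and by constant-stability together with the argument in the proof of Proposition~\ref{Prop:RExt} one gets $\alpha_i\in\const{\FF}{\sigma}^*$.)

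For the implication ``$\Rightarrow$'', I would argue by contraposition. Suppose that either some $\alpha_i$ fails to be a primitive $\lambda_i$-th root of unity, or $\gcd(\lambda_i,\lambda_j)>1$ for some $i\neq j$. In the first case, the single extension $\dfield{\FF[t_i]}{\sigma}$ already fails to be an \rE-extension by Proposition~\ref{Prop:RExt}, so there is a constant $t_i^s\in\const{\FF[t_i]}{\sigma}\setminus\FF$ with $0<s<\lambda_i$, which is also a constant in $\EE$ not lying in $\FF$ (it lies in $\FF[t_i]\subseteq\EE$ and cannot be in $\FF$ because $\deg$-degree $s<\lambda_i$ in $t_i$); hence $\dfield{\EE}{\sigma}$ is not an \rE-extension. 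In the second case, fix $i\neq j$ with $p\mid\gcd(\lambda_i,\lambda_j)$ for some prime $p$; then $\alpha_i^{\lambda_i/p}$ and $\alpha_j^{\lambda_j/p}$ are both primitive $p$-th roots of unity in $\const{\FF}{\sigma}^*$, so $\alpha_i^{a\lambda_i/p}=\alpha_j^{\lambda_j/p}$ for a suitable $a\in\{1,\dots,p-1\}$, and therefore $c:=t_i^{a\lambda_i/p}\,t_j^{\lambda_j/p}$ satisfies $\sigma(c)=\alpha_i^{a\lambda_i/p}\alpha_j^{\lambda_j/p}\,c=\alpha_j^{-\lambda_j/p}\alpha_j^{\lambda_j/p}\,c=c$, so $c$ is a constant; since $c$ involves $t_i$ with exponent $a\lambda_i/p$ strictly between $0$ and $\lambda_i$, it is not in the subring generated by the other monomials and in particular $c\notin\FF$ (here one uses that $\EE=\FF[t_1,\dots,t_e]$ with the relations $t_k^{\lambda_k}=1$ has the reduced normal form of monomials $t_1^{b_1}\cdots t_e^{b_e}$ with $0\le b_k<\lambda_k$, which are $\FF$-linearly independent). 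Hence again $\dfield{\EE}{\sigma}$ is not an \rE-extension.

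The main obstacle I anticipate is the second case above: one has to be careful that the displayed element $c$ genuinely lies outside $\FF$, which requires knowing the $\FF$-basis of $\EE$ given by the reduced monomials $t_1^{b_1}\cdots t_e^{b_e}$ with $0\le b_k<\lambda_k$. This basis fact follows because the $A$-extension is built by successively adjoining $t_k$ transcendental and then quotienting by $\langle t_k^{\lambda_k}-1\rangle$, and because one can equally well present $\EE$ via the single-generator isomorphism of Lemma~\ref{Lemma:RSingleIsoMultiple} when the $\lambda_k$ are coprime --- but in the non-coprime case we are in, one argues directly that the iterated construction yields a free $\FF$-module on those monomials. Everything else is a routine combination of Proposition~\ref{Prop:RExt}, Lemma~\ref{Lemma:RSingleIsoMultiple}, and elementary facts about roots of unity.
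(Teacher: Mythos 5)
Your proposal follows essentially the same route as the paper: the direction ``$\Leftarrow$'' is exactly the paper's (part~(3) of Lemma~\ref{Lemma:RSingleIsoMultiple}), and for ``$\Rightarrow$'' the paper likewise gets primitivity from Proposition~\ref{Prop:RExt} (after moving the monomial to the front, possible since the multiplicands lie in $\FF^*$) and, when $\mu=\gcd(\lambda_i,\lambda_j)>1$, builds the constant $w=t_j^{r'_j}/t_i^{r'_i}$ with $\alpha_i^{r'_i}=\alpha_j^{r'_j}$ and derives a contradiction with the defining relation $t_j^{\lambda_j}=1$; your contrapositive formulation with an explicit constant monomial and the reduced-monomial basis of $\EE$ over $\FF$ is the same argument in a different packaging (and your basis argument has the small advantage of not caring whether $i<j$). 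Two slips should be repaired. First, the justification that $\alpha_i\in\const{\FF}{\sigma}^*$ via ``$\sigma^{\lambda_i}(\alpha_i)=\alpha_i^{\lambda_i+1}=\alpha_i$'' is a non sequitur ($\sigma$ is not exponentiation); the correct argument is that $\sigma$ permutes the finitely many $\lambda_i$-th roots of unity in $\FF$, so $\sigma^k(\alpha_i)=\alpha_i$ for some $k\geq1$, and constant-stability then gives $\alpha_i\in\const{\FF}{\sigma}^*$, which is what Proposition~\ref{Prop:RExt} and Lemma~\ref{Lemma:RSingleIsoMultiple} require. Second, in the non-coprime case your stated choice $\alpha_i^{a\lambda_i/p}=\alpha_j^{\lambda_j/p}$ makes $\sigma(c)=\alpha_j^{2\lambda_j/p}\,c\neq c$ in general; you must instead pick $a$ with $\alpha_i^{a\lambda_i/p}=\alpha_j^{-\lambda_j/p}$ (possible, since $\alpha_j^{-\lambda_j/p}$ is again a primitive $p$-th root of unity in the cyclic group generated by $\alpha_i^{\lambda_i/p}$), or equivalently take $c=t_j^{\lambda_j/p}\,t_i^{-a\lambda_i/p}$ using that $t_i$ is a unit — exactly the quotient the paper uses. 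With these repairs the proof is correct and matches the paper's in substance.
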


\begin{proof}
``$\Leftarrow$'' follows by Lemma~\ref{Lemma:RSingleIsoMultiple}. 
``$\Rightarrow$'': Suppose that $\dfield{\EE}{\sigma}$ is an \rE-extension of $\dfield{\FF}{\sigma}$. By reordering any \rE-monomial can be moved to the front. Thus by Proposition~\ref{Prop:RExt} all $\sigma(t_l)/t_l$ with $0\leq l\leq e$ are primitive roots of unity. Now suppose that there are $i,j$ with $i<j$ such that $\mu=\gcd(\lambda_i,\lambda_j)\neq1$ holds. Define $r_i:=\lambda_i/\mu\in\NN$ and $r_j=\lambda_j/\mu\in\NN$. Let $\alpha_i=\sigma(t_i)/t_i$ and $\alpha_j=\sigma(t_j)/t_j$. Then $(\alpha_i^{r_i})^{\mu}=1=(\alpha_j^{r_j})^{\mu}$. Summarizing, we are given $r_i$ and $r_j$ with $0<r_i<\lambda_i$ and $0<r_j<\lambda_j$ such that
$\alpha_i^{r_i}$ and $\alpha_j^{r_j}$ are $\mu$th roots of unity. 
Since $\alpha_i$, $\alpha_j$ are primitive roots of unity of orders $\lambda_i,\lambda_j$, the powers of each of them comprise all $\mu$th roots of unity. We can therefore choose $0<r'_i<\lambda_i$ and $0<r'_j<\lambda_j$ such that 
$\alpha_i^{r'_i}=\alpha_j^{r'_j}$ and such that $\alpha_i^{r'_i}$ is a primitive $\mu$th root of unity. Now take $w:=t_j^{r'_j}/t_i^{r'_i}$. Then $\sigma(w)=w$, i.e., $w\in\const{\EE}{\sigma}=\const{\FF}{\sigma}$, and $t_j^{r'_j}=w\,t_i^{r'_i}$. Since $r'_j<\lambda_j$,  $t_j^{\lambda_j}=1$ is not the defining relation of $t_j$; a contradiction. \end{proof}

\begin{example}\label{Exp:SimpleCheck1}
Consider the $A$-extension $\dfield{\KK(x)[y_1][y_2][y_3]}{\sigma}$ of $\dfield{\KK(x)}{\sigma}$ from Example~\ref{Exp:3AMonomials}. Here we have $\sigma(y_i)=\alpha_i\,y_i$ and $y_i^{\lambda_i}=1$ for $i=1,2,3$ with $\lambda_1=2$, $\lambda_2=3$, $\lambda_3=12$ and $\alpha_1=-1$, $\alpha_2=(-1)^{2/3}$, $\alpha_3=(-1)^{1/6}$. Note that $\dfield{\KK(x)}{\sigma}$ is constant-stable, that the $\alpha_1$ and $\alpha_2$ are primitive $\lambda_1$-th and $\lambda_2$-th roots of unity, and that $\gcd(\lambda_1,\lambda_2)=1$. Hence $\dfield{\KK(x)[y_1][y_2]}{\sigma}$ is an \rE-extension of $\dfield{\KK(x)}{\sigma}$ by Proposition~\ref{Prop:CharactSeveralRExt}. Also $\alpha_3$ is a $\lambda_3$-th root of unity, but $\gcd(\lambda_2,\lambda_3)\neq1$ (or $\gcd(\lambda_1,\lambda_3)\neq1$). Thus $y_3$ is not an \rE-monomial by Proposition~\ref{Prop:CharactSeveralRExt}.
\end{example}

\section{Simple difference rings and the constant field}\label{Sec:SimpleRings}

Let $\dfield{\EE}{\sigma}$ be a basic $APS$-extension of a difference field $\dfield{\FF}{\sigma}$ which is constant-stable (see Definition~\ref{Def:ConstantStable}).
In this setting we will show in Subsections~\ref{SubSec:RPSIsSimple} and~\ref{SubSec:SimpleIsRPS}
that the property $\const{\EE}{\sigma}=\const{\FF}{\sigma}$ is equivalent to the property that \notion{$\dfield{\EE}{\sigma}$ is simple}. This means that there is no difference ideal in $\EE$, except the trivial ones: $I=\{0\}$ and $I=\EE$. In particular, if $\dfield{\EE}{\sigma}$ is not simple, we will be able to compute explicitly a non-trivial difference ideal $I$ in $\dfield{\EE}{\sigma}$ in Subsection~\ref{SubSec:NonTrivialI}. Further, we will address the problem of computing a maximal reflexive difference ideal in Subsection~\ref{SubSec:SimpleForSExt}.

\subsection{Basic \rpisiSE-extensions over fields are simple difference rings}\label{SubSec:RPSIsSimple}

\noindent First, we will show in Theorem~\ref{Thm:RPiSiIsSimple} that $\dfield{\EE}{\sigma}$ is simple if $\const{\EE}{\sigma}=\const{\FF}{\sigma}$ holds. Here we rely heavily on a specific term order that can be introduced as follows.

\smallskip

Suppose that we are given a basic $APS$-extension $\dfield{\EE}{\sigma}$ of a difference ring $\dfield{\AA}{\sigma}$ in the form~\eqref{Equ:PASOrdered} where $\KK:=\const{\AA}{\sigma}$ is a field. In addition, we take care of the recursive nature of the $P$-mono\-mials and $S$-monomials. Namely, w.l.o.g.\ we can refine the order among the $P$-extensions and among the $S$-extensions by their \notion{nesting depth}. This means that we can construct the difference ring by a tower of $APS$-extensions
\begin{equation}\label{Equ:DepthOrderedE}
\begin{split}
\EE=&\AA[p_{1,1},p_{1,1}^{-1},\dots,p_{1,\nu_1},p_{1,\nu_1}^{-1}][p_{2,1},p_{2,1}^{-1},\dots,p_{2,\nu_2},p_{2,\nu_2}^{-1}]\dots[p_{\delta,1},p_{\delta,1}^{-1},\dots,p_{\delta,\nu_{\delta}},p_{\delta,\nu_{\delta}}^{-1}]\\
&\hspace*{3cm}[y_1,\dots,y_l]
[s_{1,1},\dots,s_{1,n_1}][s_{2,1},\dots,s_{2,n_2}]\dots[s_{d,1},\dots,s_{d,n_d}]
\end{split}
\end{equation}
with the following properties:
\begin{description}
\item[(1)] for $1\leq i\leq\delta$ and $1\leq j\leq\nu_i$, the $p_{i,j}$ are basic $P$-monomials
with 
\begin{align*}
\tfrac{\sigma(p_{i,j})}{p_{i,j}}\in\PPowers{\AA[p_{1,1},p_{1,1}^{-1},\dots,p_{1,\nu_1},p_{1,\nu_1}^{-1}]\dots[p_{i-1,1},p_{i-1,1}^{-1}\dots,p_{i-1,\nu_{i-1}},p_{i-1,\nu_{i-1}}^{-1}]}{\AA}{(\AA^*)}
\end{align*}
where $\frac{\sigma(p_{i,j})}{p_{i,j}}$ depends at least on one of the $p_{i-1,1},\dots,p_{i-1,\nu_{i-1}}$;
\item[(2)] for $1\leq i\leq l$ the $y_i$ are $A$-monomials of order $\lambda_i$ with $\frac{\sigma(y_i)}{y_i}\in\KK^*$,
\item[(3)] for $1\leq i\leq d$ and $1\leq j\leq n_i$ the $s_{i,j}$ are $S$-monomials with 
\begin{equation*}
\sigma(s_{i,j})-s_{i,j}\in\AA[p_{1,1},\dots,p_{\delta,\nu_{\delta}}^{-1}][y_1,\dots,y_l][s_{1,1},\dots,s_{1,n_1}]\dots[s_{i-1,1},\dots,s_{i-1,n_{i-1}}]
\end{equation*}
where $\sigma(s_{i,j})-s_{i,j}$ depends at least on one of the $s_{i-1,1},\dots,s_{i-1,n_{i-1}}$.
\end{description}
Note that the $P$-monomials in $[p_{1,1},p_{1,1}^{-1},\dots,p_{1,\nu_1},p_{1,\nu_1}^{-1}]$ have nesting depth $1$ since their multiplicands $\frac{\sigma(p_{1,i})}{p_{1,i}}$ are from $\FF$, the $P$-monomials in $[p_{2,1},p_{2,1}^{-1},\dots,p_{2,\nu_2},p_{2,\nu_2}^{-1}]$ have nesting depth $2$, since their multiplicands $\frac{\sigma(p_{2,i})}{p_{2,i}}$ depend non-trivially on $P$-monomials with nesting depth $1$, a.s.o. Similarly, we rearrange the $S$-monomials w.r.t.\ their nesting depth.\\
In the following we say that a basic $APS$-extension (or basic \rpisiSE-extension) $\dfield{\EE}{\sigma}$ of $\dfield{\AA}{\sigma}$ is in \notion{depth-order}, if it is build as given in equation~\eqref{Equ:DepthOrderedE} with the properties (1)--(3).

Consider the set of all monomials 
\begin{equation}\label{Equ:DefineM}
M:=\{p_{1,1}^{\pi_{1,1}}\dots p_{\delta,\nu_{\delta}}^{\pi_{\delta,\nu_{\delta}}}y_1^{\xi_1}\dots y_l^{\xi_l}s_{1,1}^{\sigma_{1,1}}\dots s_{d,n_d}^{\sigma_{d,n_d}}|0\leq\xi_i<\lambda_i,\pi_{i,j}\in\ZZ,\sigma_{i,j}\in\NN\}.
\end{equation}
Then such a depth-ordered basic $APS$-extension introduces naturally the following lexicographical term order $<$ of $M$ where the underlying variable order is induced by the order of the generators given in~\eqref{Equ:DepthOrderedE}. This means that
\begin{equation}\label{Equ:varOrdering}
\begin{split}
p_{1,1}&<p_{1,1}^{-1}<\dots<p_{1,\nu_1}<p_{1,\nu_1}^{-1}<\dots <p_{\delta,1}<p_{\delta,1}^{-1}<\dots
<p_{\delta,\nu_{\delta}}<p_{\delta,\nu_{\delta}}^{-1}<\\
y_1&<y_2<\dots<y_l<s_{1,1}<\dots<s_{1,n_1}<\dots s_{d,1}<s_{d,2}<\dots<s_{d,n_d}.
\end{split}
\end{equation}
In a nutshell, $P$-monomials get the lowest priority where among the different monomials we have $p_{i,j}<p_{i,j}^{-1}$ for all $i,j$. More priority receive $A$-monomials, and the highest priority is granted to $S$ monomials. Finally, the nesting depth is encoded within the term order: the less the $P$- or $S$-monomials depend on the previous elements, the smaller the variable is considered in the term order. Take the lexicographical order of $M$ (see~\eqref{Equ:DefineM}) induced by the depth-order representation~\eqref{Equ:DepthOrderedE}, i.e, given by the variable order~\eqref{Equ:varOrdering}. Then we have for instance
$$p_{1,1}^{1}\,p_{2,1}<p_{1,1}^{10}\,p_{2,1}<p_{1,1}^{-1}\,p_{2,1}<p_{2,1}^2<y_1<p_{2,1}\,y_2\,\,s_{1,1}<s_{d,1}<s_{1,1}^3\,s_{d,1}.$$
Obviously, $<$ forms a total order on $M$.
The leading monomial of $f\in\EE\setminus\{0\}$, i.e., the \notion{largest monomial} in $f$ w.r.t.\ $<$ is denoted by $\lm(f)\in M$. In addition, the \notion{leading coefficient} of $f\in\EE$, i.e., the coefficient of $\lm(f)$ in $f$ is denoted by $\lc(f)\in\AA\setminus\{0\}$. 

\noindent We emphasize that the descending chain condition holds: for any ideal $I$ and any sequence $\langle f_i\rangle_{i\geq0}$ with non-zero entries from $I$ there does not exist an infinite chain $\lm(f_1)>\lm(f_2)>\lm(f_3)\dots.$
This fact is ensured by $p_{i,j}<p_{i,j}^{-1}$ for all $i,j$, i.e., eventually we obtain an $n\in\NN$ with $f_n=1$ which is the smallest element in this term order. In other words, any ideal has an element $f$ such that $\lm(f)$ is minimal.

\smallskip

\noindent The introduced term order is not admissible (for admissible term orders see~\cite{Pauer:97}). Nevertheless, we can exploit the following crucial property.

\begin{lemma}\label{Lemma:LMStable} 
Let $\dfield{\EE}{\sigma}$ be a basic $APS$-extension of $\dfield{\AA}{\sigma}$ which is depth-ordered and let $<$ be its lexicographical order on the set of monomials. 
Suppose that $\frac{\sigma(p)}{p}\in\AA^*$ for all $P$-monomials.
Then for any $f\in\EE\setminus\{0\}$ we have that
$\lm(\sigma(f))=\lm(f)$.
\end{lemma}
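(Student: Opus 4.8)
The statement is that $\sigma$ preserves the leading monomial, for a depth-ordered basic $APS$-extension in which every $P$-monomial has its multiplicand in $\AA^*$ (rather than merely in a larger ring of $P$-monomials). The natural strategy is induction along the tower~\eqref{Equ:DepthOrderedE}, generator by generator, combined with a careful examination of how $\sigma$ acts on a single monomial of $M$ (see~\eqref{Equ:DefineM}). The key observation to extract first is: for each individual generator $t$ of the tower, $\sigma$ sends $t$ to an element whose leading monomial (with respect to the order~\eqref{Equ:varOrdering}) is again $t$. For an $A$-monomial $y_i$ this is immediate since $\sigma(y_i)=\alpha_i y_i$ with $\alpha_i\in\KK^*\subseteq\AA^*$, so $\lm(\sigma(y_i))=y_i=\lm(y_i)$. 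For a $P$-monomial $p_{i,j}$ we have $\sigma(p_{i,j})=\frac{\sigma(p_{i,j})}{p_{i,j}}\,p_{i,j}$ where, by the hypothesis of the lemma, $\frac{\sigma(p_{i,j})}{p_{i,j}}\in\AA^*$; hence $\lm(\sigma(p_{i,j}))=p_{i,j}$. (This is exactly where the extra hypothesis $\frac{\sigma(p)}{p}\in\AA^*$ is used — without it the multiplicand could contain lower $P$-monomials and the leading monomial of $\sigma(p_{i,j})$ could change.) For an $S$-monomial $s_{i,j}$ we have $\sigma(s_{i,j})=s_{i,j}+\beta$ with $\beta$ lying in the subring generated by all $P$- and $A$-monomials and the strictly earlier $S$-monomials; every monomial occurring in $\beta$ is therefore $<s_{i,j}$ in the order~\eqref{Equ:varOrdering} (since $s_{i,j}$ beats all $P$'s, all $A$'s, and all lower-depth and earlier same-depth $S$'s), so again $\lm(\sigma(s_{i,j}))=s_{i,j}=\lm(s_{i,j})$.

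\textbf{From generators to arbitrary monomials.} The next step is to lift this to a single monomial $m=p_{1,1}^{\pi_{1,1}}\cdots y_1^{\xi_1}\cdots s_{d,n_d}^{\sigma_{d,n_d}}\in M$. Here I would argue that $\sigma(m)$ is a product of factors $\sigma(t)^{k}$ (with $k\in\ZZ$ for $P$-monomials, $k\in\NN$ for $S$-monomials, $0\le k<\lambda$ for $A$-monomials), and that each such factor has leading monomial $t^{k}$ with leading coefficient a unit of $\AA$ — for $P$-monomials because $\sigma(t)=ut$ with $u\in\AA^*$, so $\sigma(t)^k=u^k t^k$; for $A$-monomials similarly; for $S$-monomials because $\sigma(s)=s+\beta$ with all terms of $\beta$ strictly below $s$, so $\sigma(s)^k=s^k+(\text{lower})$ by the binomial theorem and the fact that multiplying a polynomial in $s$ by lower terms stays below $s^k$ in a lexicographic order that ranks $s$ above everything in $\beta$. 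Multiplying these factors together, the product of the leading monomials is exactly $m$ and the product of the leading coefficients is a unit of $\AA$, hence nonzero; the subtle point is that, because the term order is lexicographic with the variable order~\eqref{Equ:varOrdering} and because each factor's "lower-order remainder" only involves monomials strictly below that factor's leading monomial, no cancellation or unexpected domination can occur, so $\lm(\sigma(m))=m=\lm(m)$.

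\textbf{From monomials to arbitrary $f$, and the main obstacle.} Finally, write $f=\sum_{m\in M} c_m\,m$ with $c_m\in\AA$, finitely many nonzero, and let $m^*=\lm(f)$ be the $<$-largest monomial with $c_{m^*}\neq 0$. Then $\sigma(f)=\sum_m \sigma(c_m)\,\sigma(m)$; the leading monomial of $\sigma(c_{m^*})\sigma(m^*)$ is $m^*$ (here one also needs that $\sigma$ fixes $\AA$ setwise and $\sigma(c_{m^*})\neq0$ since $\sigma$ is a ring automorphism), and every other contribution $\sigma(c_m)\sigma(m)$ has leading monomial $m<m^*$. So no term can cancel the $m^*$-coefficient and no term can exceed it; therefore $\lm(\sigma(f))=m^*=\lm(f)$. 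The main obstacle — and the place to be most careful — is the passage through the $S$-monomials in the monomial-level step: because the term order is \emph{not} admissible (as the excerpt stresses, it is not multiplicative-compatible in the usual sense), one cannot blindly invoke $\lm(ab)=\lm(a)\lm(b)$. One must verify directly, using the depth-ordering property (3) which guarantees $\sigma(s_{i,j})-s_{i,j}$ lies below $s_{i,j}$, that expanding $\prod_t \sigma(t)^{k_t}$ genuinely produces $m$ as its unique largest monomial; this is where the structural assumptions on the tower and the specific variable order~\eqref{Equ:varOrdering} (in particular $p_{i,j}<p_{i,j}^{-1}$ being irrelevant here, but the block ordering $P<A<S$ and the depth-refinement within blocks being essential) do the real work.
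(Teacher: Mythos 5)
Your overall plan is sound and is essentially an unrolled version of the paper's proof (the paper packages the same facts as an induction on the nesting depth $d$ of the $S$-monomials). The one genuine gap is exactly the step you flag at the end but never carry out: you do not verify that in the expansion of $\sigma(m)=\prod_t\sigma(t)^{k_t}$ the monomial $m$ really is the unique $<$-maximal monomial, and the blanket reason you offer — ``each factor's lower-order remainder only involves monomials strictly below that factor's leading monomial, hence no cancellation or unexpected domination can occur'' — is not a valid principle for this order. Since the order is not admissible, $\mu<\lm(a)$ does not imply $\mu\,\lm(b)<\lm(a)\,\lm(b)$: take $a=p_{1,1}+1$ and $b=p_{1,1}^{-1}$; the remainder $1$ of $a$ lies strictly below $\lm(a)=p_{1,1}$, yet $ab=1+p_{1,1}^{-1}$ has $\lm(ab)=p_{1,1}^{-1}\neq \lm(a)\lm(b)=1$. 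So ``remainder below the leading monomial'' alone cannot carry the product step, and as written your monomial-level claim rests on an invalid general argument.

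What closes the gap is the stronger fact you state in your first paragraph but do not exploit: by the depth-ordering, $\beta_{i,j}=\sigma(s_{i,j})-s_{i,j}$ involves only \emph{variables} strictly smaller than $s_{i,j}$ in~\eqref{Equ:varOrdering} (it is free of $s_{i,j}$ and of every higher variable), while the $A$- and $P$-factors contribute a unit of $\AA$ times the unchanged monomial (this is where the hypothesis $\frac{\sigma(p)}{p}\in\AA^*$ enters). Expanding $\prod_{i,j}(s_{i,j}+\beta_{i,j})^{k_{i,j}}$, any term other than $\prod_{i,j} s_{i,j}^{k_{i,j}}$ has, at the $<$-largest $S$-variable $s_{i_0,j_0}$ for which it used some $\beta$, exponent strictly less than $k_{i_0,j_0}$, while every variable above $s_{i_0,j_0}$ keeps exactly its exponent from $m$ (the $\beta$'s that were used cannot involve those variables); hence every such term is lexicographically smaller than $m$, and the coefficient of $m$ in $\sigma(m)$ is a unit of $\AA$. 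With this inserted, your passage to general $f$ goes through as you describe. Note that the paper's proof avoids this all-variables-at-once comparison by inducting on the nesting depth: it writes $f=h\,\lm_1(f)+w$ with respect to the top block of $S$-monomials, makes the above comparison only inside that block (where each $\beta_i$ is free of the block's variables), and applies the induction hypothesis to the coefficient $h\in\HH$; your direct argument is fine, but only once the comparison above is actually performed rather than asserted.
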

\begin{proof}
Let $\EE$ be given as in~\eqref{Equ:DepthOrderedE} with $\delta=1$. We will show the statement by induction on $d$. If $d=0$, there are no $S$-monomials in $\EE$. Let $f\in\EE\setminus\{0\}$ and take any monomial $m$ of $f$. This means that $m=p_{1,1}^{\pi_{1,1}}\dots p_{1,\nu_{1}}^{\pi_{1,\nu_1}}y_1^{\xi_1}\dots y_l^{\xi_l}$ for some $\pi_{i,j}\in\ZZ$ and $\xi_{i,j}\in\NN$. Then $\sigma(m)=g\,m$ for some $g\in\AA^*$. Hence the monomials in $f$ are precisely the monomials in $\sigma(f)$ which implies that $\lm(\sigma(f))=\lm(f)$ holds.\\
Now let $\EE$ as given in~\eqref{Equ:DepthOrderedE} with $d>1$ and suppose that the lemma holds for $d-1$. For a given $f\in\EE\setminus\{0\}$ we will show that $\lm(\sigma(f))=\lm(f)$ holds. Let 
$$\HH=\AA[p_{1,1},p_{1,1}^{-1},\dots,p_{1,\nu_1},p_{1,\nu_1}^{-1}][y_1,\dots,y_l][s_{1,1},\dots,s_{1,n_1}]\dots[s_{d-1,1},\dots,s_{d-1,n_{d-1}}].$$ 
Then $\dfield{\EE}{\sigma}$ with $\EE=\HH[s_{d,1}\dots,s_{d,n_d}]$ is a \sigmaSE-extension of $\dfield{\HH}{\sigma}$ with $\beta_i:=\sigma(s_{d,i})-s_{d,i}\in\HH$ for $1\leq i\leq n_d$. Now let $<_1$ be the lexicographical order of the basic $S$-extension $\dfield{\EE}{\sigma}$ of $\dfield{\HH}{\sigma}$. In particular, we denote by $M_1=[s_{d,1}\dots,s_{d,n_d}]$ the set of monomials and $\lm_1(f)\in M_1$ denotes the leading monomial w.r.t.\ $<_1$. 
Then we can write
\begin{equation}\label{Equ:WritefMonomial}
f=h\,\lm_1(f)+w
\end{equation}
with $h\in\HH\setminus\{0\}$ and $w\in\EE$ where all monomials of $w$ from $M_1$ are smaller than $\lm_1(f)$ w.r.t.\ $<_1$. Note that 
\begin{equation}\label{Equ:flmconnection}
\lm(f)=\lm(h)\,\lm_1(f)
\end{equation}
due to the variable order~\eqref{Equ:varOrdering}.
Observe further that for any $m=s_{d,1}^{l_1}\dots s_{d,n_d}^{l_{n_d}}\in M_1$ with $l_i\in\NN$ for $1\leq i\leq n_d$ we have that
$$\sigma(m)=(s_{d,1}+\beta_1)^{l_1}\dots(s_{d,n_d}+\beta_{n_d})^{l_{n_d}}=m+v$$
for some $v\in\EE$ where for each monomial in $v$ at least one exponent of the $s_{d,i}$ with $1\leq i\leq n_d$ is smaller than $l_i$. Thus all monomials of $v$ from $M_1$ are smaller than $m$ w.r.t.\ $<_1$. Together with~\eqref{Equ:WritefMonomial} we get
$$\sigma(f)=\sigma(h)\sigma(\lm_1(f))+\sigma(w)=\sigma(h)\,\lm_1(f)+\tilde{w}$$
for some $\tilde{w}\in\EE$ where all monomials of $\tilde{w}$ from $M_1$ are smaller than $\lm_1(f)$ w.r.t.\ $<_1$. By our induction assumption, we have that $\lm(\sigma(h))=\lm(h)$ and due to~\eqref{Equ:varOrdering} it follows that
$\sigma(f)=\lc(h)\lm(h)\,\lm_1(f)+w'$
for some $w'\in\EE$ where all terms of $w'$ from $M$ are smaller w.r.t\ $<$ than $\lm(h)\lm_1(f)$. Hence $\lm(\sigma(f))=\lm(h)\,\lm_1(f)$ and with~\eqref{Equ:flmconnection} it follows that $\lm(\sigma(f))=\lm(f)$.
\end{proof}

\noindent In order to obtain our first main result in Theorem~\ref{Thm:RPiSiIsSimple}, we need the following slight variation of Theorem~\ref{Thm:ShiftQuotient}.

\begin{corollary}\label{Cor:InversePart}
Let $\dfield{\EE}{\sigma}$ be a basic \rpisiSE-extension of $\dfield{\AA}{\sigma}$ with~\eqref{Equ:APSOrdered} where $\AA$ is an integral domain. Let   $f_1\in\AA\setminus\{0\}$ and 
$f_2=u\,y_1^{n_1}\dots y_l^{n_l}p_1^{z_1}\dots p_r^{z_r}$
with $u\in\AA\setminus\{0\}$, $n_i\in\NN$ for $1\leq i\leq l$ and $z_i\in\ZZ$ for $1\leq i\leq r$. Then for any $f\in\EE\setminus\{0\}$ with $f_1\,\sigma(f)=f_2\,f$, we have that $f=\phi\,w$
for some $w\in\AA\setminus\{0\}$ and $\phi\in\EE^*$.
\end{corollary}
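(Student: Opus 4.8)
The plan is to reduce the statement with an $\AA$-denominator to the denominator-free situation already handled by Theorem~\ref{Thm:ShiftQuotient}. First I would pass from $\AA$ to its field of fractions $\FF=Q(\AA)$, which is legitimate because $\AA$ is assumed to be an integral domain. The tower $\dfield{\EE}{\sigma}$ of the form~\eqref{Equ:APSOrdered} over $\AA$ lifts to a basic $APS$-extension $\dfield{\tilde{\EE}}{\sigma}$ of $\dfield{\FF}{\sigma}$ with the same generators $y_i,p_i$ (and the same defining data $\sigma(y_i)/y_i\in\KK^*\subseteq\FF^*$, $\sigma(p_i)/p_i\in\FF^*$), and $\EE\subseteq\tilde{\EE}$ as a subring. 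The key point I need here is that $\dfield{\tilde{\EE}}{\sigma}$ is again an \rpisiSE-extension, i.e.\ $\const{\tilde{\EE}}{\sigma}=\FF$: the constants did not grow when we adjoined the monomials over $\AA$, and by Theorem~\ref{Thm:RPSCharacterization} (parts (2) and (3)) the obstruction to an $R$- or $\Pi$-extension is the solvability of $\sigma(g)=\alpha^m g$ with $g$ in the coefficient ring — replacing $\AA$ by $\FF$ can only be checked to introduce no new such $g$ because any $g\in\FF$ solving it can be cleared of denominators; alternatively, one invokes Corollary~\ref{Cor:LiftToField} monomial by monomial. So over $\FF$ Theorem~\ref{Thm:ShiftQuotient} applies.

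Next I would rewrite the hypothesis $f_1\,\sigma(f)=f_2\,f$ as $\sigma(f)=a\,f$ with $a=\frac{f_2}{f_1}\in\tilde{\EE}$; in fact $a=\frac{u}{f_1}\,y_1^{n_1}\cdots y_l^{n_l}\,p_1^{z_1}\cdots p_r^{z_r}$, and since $\frac{u}{f_1}\in\FF^*$ this is exactly of the shape $a=w'\,y_1^{n_1}\cdots y_l^{n_l}p_1^{z_1}\cdots p_r^{z_r}$ with $w'\in\FF^*$ and $n_i\in\NN$, $z_i\in\ZZ$ required by Theorem~\ref{Thm:ShiftQuotient}. Applying that theorem to $\dfield{\tilde{\EE}}{\sigma}$ and to $f\in\tilde{\EE}\setminus\{0\}$ yields $f=w\,y_1^{\xi_1}\cdots y_l^{\xi_l}p_1^{\pi_1}\cdots p_r^{\pi_r}\in\tilde{\EE}^*$ for some $w\in\FF^*$, $\xi_i\in\NN$, $\pi_i\in\ZZ$. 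Write $w=\frac{w_1}{w_2}$ with $w_1,w_2\in\AA\setminus\{0\}$, and set $\phi:=y_1^{\xi_1}\cdots y_l^{\xi_l}p_1^{\pi_1}\cdots p_r^{\pi_r}$. Then $\phi\in\EE^*$ (it is a product of $A$-monomials, which are units by the relation $y_i^{\lambda_i}=1$, and of $P$-monomials, which are units in the Laurent-polynomial ring $\EE$), and $f=w\,\phi$. It remains to see that $w\in\AA\setminus\{0\}$, not merely $w\in\FF^*$: from $f\in\EE$ and $\phi\in\EE^*$ we get $w=f\,\phi^{-1}\in\EE$, and since $w$ is a scalar (it involves none of the generators) while $w\in\FF$, the intersection $\EE\cap\FF$ is exactly $\AA$; hence $w\in\AA$, and $w\neq0$ because $f\neq0$. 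This gives $f=\phi\,w$ with the required $w\in\AA\setminus\{0\}$ and $\phi\in\EE^*$.

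The main obstacle I anticipate is the first step — justifying that the lift $\dfield{\tilde{\EE}}{\sigma}$ over $\FF=Q(\AA)$ is still an \rpisiSE-extension, so that Theorem~\ref{Thm:ShiftQuotient} is legitimately available. One has to be careful that enlarging the coefficient domain from the integral domain $\AA$ to its quotient field does not create new constants; this is where constant-stability of the underlying ground field and the characterizations in Theorem~\ref{Thm:RPSCharacterization} / Corollary~\ref{Cor:LiftToField} are used, clearing denominators of any hypothetical solution $g\in\FF$ of $\sigma(g)=\alpha^m g$ to contradict the corresponding statement over $\AA$. Everything after that is bookkeeping: identifying the shape of $a$, invoking Theorem~\ref{Thm:ShiftQuotient}, and the elementary observation $\EE\cap Q(\AA)=\AA$ together with $\phi\in\EE^*$ to pin down $w\in\AA\setminus\{0\}$.
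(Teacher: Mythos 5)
Your route is the same as the paper's: pass to the quotient field $\HH=Q(\AA)$ with the induced automorphism, lift the tower to $\tilde{\EE}=\HH[y_1,\dots,y_l]\ltr{p_1}\dots\ltr{p_r}[s_1]\dots[s_v]$ with the same shift data, apply Theorem~\ref{Thm:ShiftQuotient} to $\tilde{\EE}$ over $\HH$ to the relation $\sigma(f)=\tfrac{f_2}{f_1}f$, set $\phi=y_1^{\xi_1}\dots y_l^{\xi_l}p_1^{\pi_1}\dots p_r^{\pi_r}\in\EE^*$, and finally use $\EE\cap\HH=\AA$ to conclude $w=f\,\phi^{-1}\in\AA\setminus\{0\}$. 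All of this coincides step for step with the paper's proof.

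The delicate point is exactly the one you single out: that $\dfield{\tilde{\EE}}{\sigma}$ is again a basic \rpisiSE-extension of $\dfield{\HH}{\sigma}$, so that Theorem~\ref{Thm:ShiftQuotient} may be invoked. Neither of your two suggested justifications closes it. Clearing denominators of a hypothetical $g=p/q\in\HH$ with $\sigma(g)=\alpha^m g$ only yields $\sigma(p)\,q=\alpha^m p\,\sigma(q)$ and does not produce an element of $\AA$ with the eigen-property; and Corollary~\ref{Cor:LiftToField} concerns taking fractions in the top $PS$-variable over a \emph{field}, not enlarging the coefficient domain from $\AA$ to $Q(\AA)$. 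In fact the transfer fails for a general integral domain: take $\AA=\QQ[a,b]$ with $\sigma(a)=2a$, $\sigma(b)=3b$ (so $\const{\AA}{\sigma}=\QQ$) and the basic $P$-extension $\AA\ltr{t}$ with $\sigma(t)=\tfrac{2}{3}t$. By Theorem~\ref{Thm:RPSCharacterization} this is a \piE-extension of $\dfield{\AA}{\sigma}$, since the eigen-elements of $\AA$ are the monomials $a^ib^j$ with $i,j\geq0$ and eigenvalue $2^i3^j$, and $2^i3^j=(\tfrac23)^m$ forces $m=0$; yet $g=a/b\in Q(\AA)$ satisfies $\sigma(g)=\tfrac23 g$, so $t$ is no longer a \piE-monomial over $Q(\AA)$ and $\tfrac{a}{b\,t}$ is a new constant. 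Moreover, in this very setting the conclusion of the corollary itself breaks down: with $f_1=1$, $f_2=2$ (allowed shape, $z_1=0$) and $f=b\,t+a$ one has $\sigma(f)=2f$, but $f$ is not of the form $\phi\,w$ with $\phi\in\EE^*=\{c\,t^k\mid c\in\QQ^*,k\in\ZZ\}$ and $w\in\AA$. So the missing hypothesis is substantive, and any correct argument must use more about $\AA$ than integrality — in the situation where the corollary is actually applied (inside Theorem~\ref{Thm:RPiSiIsSimple}), $\AA$ is a Laurent polynomial ring generated by \piE-monomials over a constant-stable field, and it is that extra structure which keeps the lifted tower \rpisiSE. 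To be fair, the paper's own proof is silent on this point as well — it applies Theorem~\ref{Thm:ShiftQuotient} to $\tilde{\EE}$ over $\HH$ without comment — so your instinct to flag this as the main obstacle was exactly right; but the denominator-clearing argument you offer is not a valid way to remove it.
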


\begin{proof}
Take the field of fractions $\HH$ of $\AA$ and extend $\sigma$ to the field automorphism $\fct{\sigma'}{\HH}{\HH}$ with $\sigma'(\frac{p}{q})=\frac{\sigma(p)}{\sigma(q)}$ for $p\in\AA$ and $q\in\AA\setminus\{0\}$. Then $\dfield{\HH}{\sigma}$ is a difference ring extension of $\dfield{\AA}{\sigma}$.
In particular, construct the basic $APS$-ring extension $\dfield{\tilde{\EE}}{\sigma}$ of $\dfield{\HH}{\sigma}$ 
with 
$\tilde{\EE}=\HH[y_1,\dots,y_l][p_{1},p_{1}^{-1},\dots,p_{r},p_{r}^{-1}]
[s_{1},\dots,s_{v}]$
where the generators have precisely the same shift behaviour as in the extension $\dfield{\EE}{\sigma}$ of $\dfield{\AA}{\sigma}$. Note that $\dfield{\tilde{\EE}}{\sigma}$ is a difference ring extension of $\dfield{\EE}{\sigma}$. In particular, as sets we get
\begin{equation}\label{Equ:SetEqual}
\EE\cap\HH=\AA\cap\HH=\AA.
\end{equation}
Now set $a:=\frac{f_2}{f_1}=\frac{u}{f_1}\,y_1^{n_1}\dots y_l^{n_l}p_1^{z_1}\dots p_r^{z_r}$ with $\frac{u}{f_1}\in\HH^*$, and let $f\in\EE\setminus\{0\}\subseteq\tilde{\EE}\setminus\{0\}$.
By Theorem~\ref{Thm:ShiftQuotient} (with $\EE$ and $\FF$ replaced by $\tilde{\EE}$ and $\HH$, respectively) we conclude that~\eqref{Equ:ProductSol} holds for some $w\in\HH^*$, $\xi_i\in\NN$ and $\pi_i\in\ZZ$. Set $\phi=y_1^{\xi_1}\dots y_l^{\xi_l} p_1^{\pi_1}\dots p_r^{\pi_r}\in\EE^*$.
Then $f=w\,\phi$. Since $f\in\EE$, $w\in\EE$ and thus $w\in(\EE\cap\HH)\setminus\{0\}$. Consequently, $w\in\AA\setminus\{0\}$ by~\eqref{Equ:SetEqual}. This proves the corollary.
\end{proof}

\begin{theorem}\label{Thm:RPiSiIsSimple}
Let $\dfield{\EE}{\sigma}$ be a basic \rpisiSE-extension of a difference field $\dfield{\FF}{\sigma}$. Then $\dfield{\EE}{\sigma}$ is a simple difference ring.
\end{theorem}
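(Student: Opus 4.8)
I would prove the stronger-sounding statement that every difference ideal $I$ of $\EE$ with $I\neq\{0\}$ contains $1$. The plan is a two-stage descent. First reorder $\EE$ into the form~\eqref{Equ:PASOrdered}, writing $\EE=\AA[y_1,\dots,y_l][s_1,\dots,s_v]$ with $\AA:=\FF[p_1^{\pm1},\dots,p_r^{\pm1}]$ collecting all $P$-monomials and the $S$-block put in depth-order. Then $\AA$ is a Laurent polynomial ring over a field, hence an integral domain; $\AA$ is $\sigma^{\pm1}$-stable (each $\sigma^{\pm1}(p_i)/p_i$ already lies in $\FF[p_1^{\pm1},\dots,p_{i-1}^{\pm1}]$); and $\const{\AA}{\sigma}=\const{\EE}{\sigma}=\const{\FF}{\sigma}=:\KK$ is a field. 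I would then (Step~1) show that $\dfield{\AA}{\sigma}$ is a simple difference ring, and (Step~2) deduce simplicity of $\dfield{\EE}{\sigma}$ from that.

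\textbf{Step 1: $\dfield{\AA}{\sigma}$ is simple.} I would induct on $r$, the case $r=0$ ($\AA=\FF$ a field) being trivial. For $r\ge1$ set $\AA'=\FF[p_1^{\pm1},\dots,p_{r-1}^{\pm1}]$ and $p=p_r$, so $\AA=\AA'[p^{\pm1}]$, $\alpha:=\sigma(p)/p\in(\AA')^*$ by basicness, and $\dfield{\AA'}{\sigma}$ is, by induction, a simple difference ring that is an integral domain. Given a nonzero difference ideal $J$ of $\AA$, multiply by a power of the unit $p$ to obtain a nonzero element of the difference ideal $J\cap\AA'[p]$, and pick such an $f$ of minimal $p$-degree $b$. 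If $b=0$ then $f\in J\cap\AA'\neq\{0\}$, so simplicity of $\AA'$ gives $1\in J$. If $b\ge1$, then since $\lc_p(\sigma(f))=\alpha^b\,\sigma(\lc_p(f))$, the element $\alpha^b\sigma(\lc_p(f))f-\lc_p(f)\sigma(f)$ lies in $J\cap\AA'[p]$ with $p$-degree $<b$, hence vanishes, giving $\lc_p(f)\,\sigma(f)=\bigl(\alpha^b\sigma(\lc_p(f))\bigr)f$ with $\alpha^b\sigma(\lc_p(f))\in\AA'\setminus\{0\}$. Corollary~\ref{Cor:InversePart}, applied to the basic $R\Pi\Sigma^*$-extension $\dfield{\AA}{\sigma}$ of the integral domain $\dfield{\AA'}{\sigma}$, then yields $f=\phi w$ with $\phi\in\AA^*$ and $w\in\AA'\setminus\{0\}$; hence $w=\phi^{-1}f\in J\cap\AA'$ and simplicity of $\AA'$ again forces $1\in J$. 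Thus $J=\AA$.

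\textbf{Step 2: $\dfield{\EE}{\sigma}$ is simple.} Now regard $\dfield{\EE}{\sigma}$, $\EE=\AA[y_1,\dots,y_l][s_1,\dots,s_v]$, as a basic $R\Pi\Sigma^*$-extension of the integral domain $\dfield{\AA}{\sigma}$, which has \emph{no} $P$-monomials over $\AA$. On the monomials $y_1^{\xi_1}\cdots y_l^{\xi_l}s_1^{\rho_1}\cdots s_v^{\rho_v}$ ($0\le\xi_i<\lambda_i$, $\rho_j\in\NN$) use the lexicographic order of~\eqref{Equ:varOrdering}; since the exponent sets are well-ordered this is a well-order, so the descending chain condition holds, and Lemma~\ref{Lemma:LMStable}, whose hypothesis on $P$-monomials is now vacuous, gives $\lm(\sigma(f))=\lm(f)$ for all $f\neq0$, with $\lc(\sigma(f))=q\,\sigma(\lc(f))$ where $q=\prod_i\alpha_i^{\xi_i}\in\KK^*\subseteq\AA^*$ for $\alpha_i=\sigma(y_i)/y_i$ (the relations $\sigma(s_j)=s_j+\beta_j$ leaving the leading coefficient unchanged). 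Pick $f\in I\setminus\{0\}$ with $\lm(f)$ minimal and set $c=\lc(f)\in\AA\setminus\{0\}$; then $q\sigma(c)f-c\sigma(f)\in I$ has strictly smaller leading monomial, hence is $0$, so $c\,\sigma(f)=(q\sigma(c))f$ with $q\sigma(c)\in\AA\setminus\{0\}$. Corollary~\ref{Cor:InversePart} (base the integral domain $\AA$) gives $f=\phi w$ with $\phi\in\EE^*$ and $w\in\AA\setminus\{0\}$, so $w=\phi^{-1}f\in I\cap\AA$ is a nonzero difference ideal of $\AA$, which by Step~1 is all of $\AA$; hence $1\in I$.

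\textbf{Main obstacle.} The delicate point, and the reason for the two-stage organization, is that the global term order~\eqref{Equ:varOrdering}, which also records the $P$-exponents, does \emph{not} satisfy $\lm(\sigma(f))=\lm(f)$ once nested $P$-monomials are present, since a nested multiplicand $\sigma(p_i)/p_i$ shifts the exponents of the lower $p_j$'s; so one cannot simply run the leading-monomial argument on $\EE$ over $\FF$. Absorbing all $P$-monomials into the base $\AA$ and handling them by the separate induction of Step~1 --- where only one, automatically outermost, $P$-monomial is adjoined at a time and the naive $p$-degree is an admissible order --- repairs this, and over $\AA$ Lemma~\ref{Lemma:LMStable} applies cleanly to the remaining variables $y_i,s_j$. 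A second constraint to respect is that Corollary~\ref{Cor:InversePart} (built on Theorem~\ref{Thm:ShiftQuotient}) needs an integral-domain base, which is exactly why the descent must be funnelled through $\AA=\FF[p_1^{\pm1},\dots,p_r^{\pm1}]$ rather than through an intermediate ring still carrying the zero-divisors of $A$-monomials.
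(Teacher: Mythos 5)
Your argument is correct, and it reaches the conclusion by a somewhat different organization than the paper, although the two key ingredients are the same: the minimal-leading-term cancellation trick that produces a relation $f_1\,\sigma(f)=f_2\,f$, and Corollary~\ref{Cor:InversePart} (i.e.\ Theorem~\ref{Thm:ShiftQuotient}) to pull the minimal element down into a smaller ring. The paper works with a single difference ideal of $\EE$ over $\FF$ throughout, using the depth-ordered form~\eqref{Equ:DepthOrderedE} and a term order that also records the $P$-exponents (with $p<p^{-1}$ to rescue the descending chain condition), and then runs a two-phase argument: first relative to the subring $\HH$ below the top $P$-depth level, and then a descent through the $P$-depth levels to reach $\FF$. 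You instead absorb all $P$-monomials into the base $\AA=\FF[p_1^{\pm1},\dots,p_r^{\pm1}]$ via~\eqref{Equ:PASOrdered} and split the problem: simplicity of $\dfield{\AA}{\sigma}$ by induction on $r$, adjoining one $P$-monomial at a time and using plain $p$-degree (legitimate because basicness makes $\sigma(p_r)/p_r$ a unit of the previously built Laurent ring, so nesting is invisible at each single step), and then the leading-monomial argument only for the $A$- and $S$-block over $\AA$, where the lexicographic order is a genuine well-order and Lemma~\ref{Lemma:LMStable} applies with its $P$-hypothesis vacuous. What your route buys is a cleaner separation of the two sources of difficulty (nested $P$-monomials versus zero-divisors from $A$-monomials) and it makes explicit why the nesting is harmless; in effect your Step~1 generalizes the simplification the paper only notes for $\delta=1$ in Footnote~\ref{Footnote:SimpleProof}, while your Step~2 mirrors the first phase of the paper's proof. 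The hypotheses you need along the way do check out: $\AA$ and each $\AA'$ are $\sigma^{\pm1}$-stable integral domains with constants $\KK$, the single $P$-extension $\AA'[p^{\pm1}]$ and the $AS$-tower $\EE$ over $\AA$ are basic \rpisiSE-extensions (constants cannot grow inside $\EE$), so both applications of Corollary~\ref{Cor:InversePart} are legitimate, and $I\cap\AA$, $J\cap\AA'$ are difference ideals of the respective base rings.
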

\begin{proof}
Let $\EE$ be given as in~\eqref{Equ:DepthOrderedE} with $\AA=\FF$ and let $I$ be a difference ideal in $\EE$ with $I\neq\{0\}$. We will show that $I=\EE$. 
Consider the basic $APS$-extension $\dfield{\EE}{\sigma}$ of $\dfield{\HH}{\sigma}$ with
$\HH=\FF[p_{1,1},p_{1,1}^{-1},\dots,p_{1,\nu_1},p_{1,\nu_1}^{-1}]\dots[p_{\delta-1,1},p_{\delta-1,1}^{-1},\dots,p_{\delta-1,
\nu_{\delta-1}},p_{\delta-1,\nu_{\delta-1}}^{-1}]$ and
\begin{equation}\label{Equ:SubOrderExt}
\EE=\HH[p_{\delta,1},p_{\delta,1}^{-1},\dots,p_{\delta,\nu_{\delta}},p_{\delta,\nu_{\delta}}^{-1}][y_1,\dots,y_l]
[s_{1,1},\dots,s_{1,n_1}]\dots[s_{d,1},\dots,s_{d,n_d}].
\end{equation}
Let $<_1$ be its lexicographical order induced by the variable order given by the order of generators in~\eqref{Equ:SubOrderExt} and let $\lm_1$ and $\lc_1$ be the corresponding leading monomial and coefficient functions. Now take $f\in I$ with $f\neq0$ where $\lm_1(f)$ is minimal w.r.t.\ $<_1$. This implies that $f\in\HH$ which can be seen as follows. We have that $\lm_1(f)=p_{\delta,1}^{\pi_{\delta,1}}\dots p_{\delta,\nu_{\delta}}^{\pi_{\delta,\nu_{\delta}}}y_1^{\xi_1}\dots y_l^{\xi_l}s_{1,1}^{\sigma_{1,1}}\dots s_{n,n_d}^{\sigma_{n,n_d}}$ for some $\pi_{i,j}\in\ZZ$, $\xi_i\in\NN$ and $\sigma_{i,j}\in\NN$. Take $\tilde{\alpha}=p_{\delta,1}^{\pi_{\delta,1}}\dots p_{\delta,\nu_{\delta}}^{\pi_{\delta,\nu_{\delta}}}y_1^{\xi_1}\dots y_l^{\xi_l}$ and define
$\alpha:=\frac{\sigma(\tilde{\alpha})}{\tilde{\alpha}}\in\HH^*$.
Further, take  
$\tilde{f}:=\lc_1(f)\in\HH\setminus\{0\}$ and define
$$h:=\tilde{f}\,\sigma(f)-\sigma(\tilde{f})\,\alpha\,f\in I.$$
By Lemma~\ref{Lemma:LMStable} it follows that $\lm_1(\tilde{f}\,\sigma(f))=\lm_1(\sigma(\tilde{f})\,\alpha\,f))$ and by construction we have that $\lc_1(\tilde{f}\,\sigma(f))=\lc_1(\sigma(\tilde{f})\,\alpha\,f))$. Thus the leading monomial $\lm_1(f)$ cancels in $h$ and therefore $\lm_1(h)<\lm_1(f)$. By the choice of $f$ it follows that $h=0$. Consequently, $\tilde{f}\,\sigma(f)=\sigma(\tilde{f})\,\alpha\,f$.
Note that $\HH$ is an integral domain. By Corollary~\ref{Cor:InversePart} (after reordering the monomials to the form~\eqref{Equ:APSOrdered}) it follows that $f=\phi\,w$
for some $\phi\in\EE^*$ and $w\in\HH\setminus\{0\}$. Hence
$w=\frac{1}{\phi}f\in I$. By the choice of $f$ we have that $\lm_1(f)$ is minimal among all nonzero elements of $I$. Consequently, $\lm_1(f)\leq\lm_1(w)$. Hence by the definition of our lexicographical order induced by the variable order given by the order of generators in~\eqref{Equ:SubOrderExt}, we conclude with $w\in\HH$ that\footnote{\label{Footnote:SimpleProof}If $\delta=1$ (i.e., $\frac{\sigma(p)}{p}\in\FF^*$ for all $P$-monomials $p$), we are done: in this case we have $f\in\HH=\FF$ with $0\neq f\in I$ which  implies $I=\EE$; further, Corollary~\ref{Cor:InversePart} is obsolete and one needs only Theorem~\ref{Thm:ShiftQuotient}.}  $f\in\HH$.\\
Now let $<$ be the lexicographical order with~\eqref{Equ:varOrdering} and let $\lm$ be the corresponding leading monomial function. Take $f'\in I\setminus\{0\}$ such that $\lm(f')$ is minimal. For any $f_1\in\HH$ and $f_2\in\EE\setminus\HH$ we have that $\lm(f_1)<\lm(f_2)$. Since we have showed that $f\in\HH$ for $f\in I$ with minimal $\lm_1(f)$, it follows that $f'\in\HH$. If $f'\in\FF$, we conclude that $I=\EE$ which proves the theorem. Now suppose that $f'\notin\FF$. We will show that this leads to a contradiction.
Let $k$ be maximal s.t.\ $f'$ depends on $p_{k,1},\dots,p_{k,\nu_k}$ (with positive or negative powers). Define $\HH'=\FF[p_{1,1},p_{1,1}^{-1},\dots,p_{1,\nu_1},p_{1,\nu_1}^{-1}]\dots[p_{k-1,1},p_{k-1,1}^{-1},\dots,p_{k-1,\nu_{k-1}},p_{k-1,\nu_{k-1}}^{-1}]$ and $\EE'=\HH'[p_{k,1},p_{k,1}^{-1},\dots,p_{k,\nu_{k}},p_{k,\nu_{k}}^{-1}]$, i.e., $f'\in\EE'\setminus\HH'$. Now we repeat the arguments from above. Let $<_2$ be the lexicographical order for the depth-ordered extension $\dfield{\EE'}{\sigma}$ of $\dfield{\HH'}{\sigma}$ and let $\lm_2$ and $\lc_2$ be the corresponding leading monomial and coefficient functions.  Now set $\tilde{f'}:=\lc_2(f')\in\HH'\setminus\{0\}$ and  $\alpha'=\frac{\sigma(\lm_2(f'))}{\lm_2(f')}\in(\HH')^*$, and define
$h':=\tilde{f'}\,\sigma(f')-\sigma(\tilde{f})\,\alpha'\,f'\in I.$
As above it follows with Lemma~\ref{Lemma:LMStable} that $\lm_2(h')<_2\lm_2(f')$. 
Note that for any $f_1,f_2\in\EE'\setminus\{0\}$ with $\lm_2(f_1)<_2\lm_2(f_2)$ we have that $\lm(f_1)<\lm(f_2)$. Thus $\lm(h')<\lm(f')$ which implies that $h'=0$ by the minimality of $f'$. Thus $\tilde{f'}\,\sigma(f')=\sigma(\tilde{f'})\,\alpha'\,f'$.
Since $\HH'$ is an integral domain, we use again Corollary~\ref{Cor:InversePart}. Thus  $f'=\phi'\,w'$
for some $w'\in\HH'\setminus\{0\}$, $\phi'\in(\EE')^*$. Hence $w'=\frac{1}{\phi'}f'\in I$. Since $w'\in\HH'$ and $f'\notin\EE\setminus\HH'$, $\lm(w')<\lm(f')$. This contradicts the minimality of $f'$.
\end{proof}

\begin{remark}
A related result can be deduced from~\cite{Singer:97} in the setting of Picard-Vessiot extensions. 
Suppose that we are given a difference ring extension $\dfield{\EE}{\sigma}$ of $\dfield{\FF}{\sigma}$ which contains precisely all solutions of a linear homogeneous difference equation with coefficients from $\FF$ and where $\const{\EE}{\sigma}=\const{\FF}{\sigma}$. Then by Cor.~1.24 of~\cite{Singer:97}, $\dfield{\EE}{\sigma}$ is simple (or equivalently, $\dfield{\EE}{\sigma}$ is a Piccard-Vessiot extension of $\dfield{\FF}{\sigma}$), iff there are no non-zero nilpotent elements and if the constant field of $\FF$ is algebraically closed.
Note that the first condition holds automatically for basic \rpisiSE-extensions by Lemma~\ref{Lemma:SInverseElements}; however, the second condition that $\const{\FF}{\sigma}$ is algebraically closed is not needed to apply Theorem~\ref{Thm:RPiSiIsSimple} for the class of basic \rpisiSE-extensions. Note further that Picard-Vessiot extensions and basic \rpisiSE-extensions cover the common class of basic \rpisiSE-extensions where the \piE-monomials are not nested, i.e., where their shift-quotients are from $\FF$. This subclass enables one to model nested sum expressions over hypergeometric products and more generally over simple products (see Section~\ref{Sec:Application}). In particular, it allows one to represent the so-called d'Alembertian solutions~\cite{Abramov:94,Abramov:96} of a linear recurrence, but also the so-called Liouvillian solutions~\cite{Singer:99} by exploiting ideas from~\cite{Reutenauer:12,Petkov:2013}. We notice that for this class the proof of Theorem~\ref{Thm:RPiSiIsSimple} simplifies significantly (see Footnote~\ref{Footnote:SimpleProof}).\\ 
In general, the results of both approaches complement each other: basic \rpisiSE-extensions, in particular nested \piE-monomials, cannot be described by a Picard-Vessiot extension in general, i.e., are not a solution of a homogeneous linear recurrence with coefficients from $\FF$. Conversely, solutions of linear recurrences being not d'Alembertian (or Liouvillian) cannot be expressed within the \rpisiSE-approach.
\end{remark}

\subsection{Simple and basic $APS$-extensions over fields are \rpisiSE-extensions}\label{SubSec:SimpleIsRPS}

Now we want to treat the other implication. Let $\dfield{\EE}{\sigma}$ be a basic $APS$-extension of a constant-stable difference field $\dfield{\FF}{\sigma}$. We will show in Theorem~\ref{Thm:EquivSimpleConst} that $\const{\EE}{\sigma}=\const{\FF}{\sigma}$ if $\dfield{\EE}{\sigma}$ is simple. 
More precisely, we will assume $\const{\FF}{\sigma}\neq\const{\EE}{\sigma}$ and will show that there is a non-trivial difference ideal $I$ in $\EE$. Here we will exploit

\begin{lemma}\label{Lemma:GetDIdeal}
Let $\dfield{\EE}{\sigma}$ be a difference ring with $h\in\EE$ where $\sigma(h)=u\,h$ for some $u\in\EE$. Then $I:=h\,\EE$ is a difference ideal. In particular, $I=\EE$ if $h\in\EE^*$, and $\{0\}\subsetneq I\subsetneq \EE$ if $h\neq0$ and $h\notin\EE^*$.
\end{lemma}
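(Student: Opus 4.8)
The plan is to verify directly that the principal ideal $I=h\,\EE$ is closed under $\sigma$, and then to read off the two special cases from the elementary fact that an ideal of $\EE$ equals all of $\EE$ precisely when it contains~$1$. Since $I=h\,\EE$ is by construction an ideal of $\EE$, for the difference-ideal property it only remains to check that $\sigma(f)\in I$ for every $f\in I$. Writing such an $f$ as $h\,a$ with $a\in\EE$ and using that $\sigma$ is a ring homomorphism together with the hypothesis $\sigma(h)=u\,h$, one gets
\[
\sigma(h\,a)=\sigma(h)\,\sigma(a)=u\,h\,\sigma(a)=h\,\bigl(u\,\sigma(a)\bigr)\in h\,\EE=I ,
\]
which settles this part. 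Note that only $\sigma(I)\subseteq I$ is asserted (not reflexivity), so nothing is required about $\sigma^{-1}$.

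Next I would treat the two special cases. If $h\in\EE^*$, then $1=h\,h^{-1}\in h\,\EE=I$, hence $I=\EE$. Now assume $h\neq0$ and $h\notin\EE^*$. From $h=h\cdot 1\in I$ and $h\neq0$ we get $I\neq\{0\}$. If we had $I=\EE$, then $1\in h\,\EE$, say $h\,a=1$ for some $a\in\EE$, forcing $h\in\EE^*$ and contradicting the assumption; hence $I\subsetneq\EE$. Altogether $\{0\}\subsetneq I\subsetneq\EE$.

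I do not expect any genuine obstacle here: the whole argument is a short direct computation. The only point deserving a moment's attention is the chain of equivalences ``$I=\EE\Leftrightarrow 1\in I\Leftrightarrow h\in\EE^*$'', which is precisely what separates the two displayed conclusions and which makes the hypothesis ``$h\notin\EE^*$'' the decisive one for the strict inclusion $I\subsetneq\EE$.
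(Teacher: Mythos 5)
Your proof is correct and follows essentially the same route as the paper's: check $\sigma(h\,a)=(u\,\sigma(a))\,h\in I$ for the difference-ideal property, and deduce the two cases from the equivalence $I=\EE\Leftrightarrow 1\in I\Leftrightarrow h\in\EE^*$. No gaps; the only difference from the paper is the order in which the steps are presented.
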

\begin{proof}
Obviously, $I$ is an ideal. If $h\in\EE^*$, $1\in I$ and therefore $I=\EE$. Otherwise, if $h\notin\EE^*$ and $h\neq0$, we have that $I\neq\{0\}$. In addition, $I\neq\EE$, since otherwise $1\in I$, thus $h\,u=1$ for some $u\in\EE$ and hence $h\in\EE^*$. Finally, we show that $I$ is a difference ideal. Let $f\in I$. Then there is a $w\in\EE$ with $f=w\,h$. Since $\sigma(w)\,u\in\EE$, $\sigma(f)=\sigma(w\,h)=\sigma(w)\,u\,h=\,(\sigma(w)\,u)\,h\in I$. Consequently, $I$ is a difference ideal. 
\end{proof}

\noindent Now we are going to find/compute an $h$ described in Lemma~\ref{Lemma:GetDIdeal}. In particular, we will need that $h\notin\EE^*$. In this regard, we will utilize the following lemma.

\begin{lemma}\label{Lemma:LinearNotInvertible}
Let $\AA\ltr{t}$ be a ring of Laurent polynomials, let $a,b\in\AA^*$ and $m\in\ZZ\setminus\{0\}$. Then $a+b\,t^{m}\notin\AA\ltr{t}^*$.
\end{lemma}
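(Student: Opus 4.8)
We must show: for a Laurent polynomial ring $\AA\ltr{t}$, given $a,b\in\AA^*$ and $m\in\ZZ\setminus\{0\}$, the element $a+b\,t^m$ is not a unit of $\AA\ltr{t}$.

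\textbf{Proof proposal.} The plan is to reduce the claim to the much simpler situation of Laurent polynomials over a field, where the unit group is completely explicit. Since $\QQ\subseteq\AA$, the ring $\AA$ is nonzero and therefore possesses a maximal ideal $\mathfrak m$; put $F:=\AA/\mathfrak m$, a field, and let $\fct{\pi}{\AA\ltr{t}}{F\ltr{t}}$ be the ring homomorphism induced by the quotient map $\AA\to F$ together with $t\mapsto t$. Because a ring homomorphism sends units to units, it is enough to show that $\pi(a+b\,t^m)=\bar a+\bar b\,t^m$ is not a unit of $F\ltr{t}$, where $\bar a,\bar b\in F$ denote the images of $a,b$.

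Next I would record the standard description of $F\ltr{t}^*$: it consists precisely of the monomials $c\,t^k$ with $c\in F^*$ and $k\in\ZZ$. This follows since $F\ltr{t}$ is an integral domain, so the maximal and the minimal exponent occurring in a nonzero element are each additive under multiplication; hence $fg=1$ forces the maximal and minimal exponents of $f$ to coincide, i.e.\ $f$ is a monomial. (Alternatively one may invoke~\cite[Theorem~1]{Karpilovsky:83}, already used in the proof of Lemma~\ref{Lemma:SInverseElements}.)

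To finish, observe that $\bar a\neq0$ and $\bar b\neq0$, because $a,b\in\AA^*$ are carried to elements of $F^*$; and the exponents $0$ and $m$ are distinct since $m\neq0$. Thus $\bar a+\bar b\,t^m$ has two nonzero terms sitting at two different exponents, so it is not a monomial and hence not a unit of $F\ltr{t}$. Consequently $a+b\,t^m\notin\AA\ltr{t}^*$.

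There is no real obstacle in this argument; the only point requiring a little care is that $\AA$ need not be reduced, let alone a domain, so one must not argue with ``degrees'' inside $\AA\ltr{t}$ directly but should first transport the problem to the field $F$. It is also worth noting where the hypotheses enter: the assumption $b\in\AA^*$ (rather than merely $b\neq0$) is used essentially, since otherwise $\bar b$ might vanish and $\bar a+\bar b\,t^m$ could degenerate to the monomial $\bar a$; and $m\neq0$ guarantees the two exponents do not collide.
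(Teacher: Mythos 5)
Your proof is correct, but it follows a different route than the paper. You pass to a residue field: pick a maximal ideal $\mathfrak m$ of $\AA$, reduce coefficients to get a homomorphism $\AA\ltr{t}\to F\ltr{t}$ with $F=\AA/\mathfrak m$, and then use the explicit description of $F\ltr{t}^*$ (nonzero monomials) to see that $\bar a+\bar b\,t^m$, having two nonzero coefficients at the distinct exponents $0$ and $m$, cannot be a unit. The paper instead argues directly inside $\AA\ltr{t}$: assuming $h\,f=1$ with $f=a+b\,t^m$ and $h$ having lowest term $h_l t^l$ and highest term $h_r t^r$, it inspects the extreme exponents of $h\,f$ (where no cancellation between the two parts $a\,h$ and $b\,t^m h$ can occur) and notes that, since these two extreme exponents differ, at least one of the coefficients $a\,h_l$, $a\,h_r$, $b\,h_l$, $b\,h_r$ would have to vanish, contradicting $a,b\in\AA^*$ and $h_l,h_r\neq0$. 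Your reduction is shorter and cleanly modular (the subtlety of $\AA$ not being reduced or a domain disappears after reduction), at the cost of invoking the existence of a maximal ideal and the structure of units of Laurent polynomials over a field; the paper's argument is completely elementary and self-contained, exploiting only that multiplication by a unit preserves nonvanishing of coefficients. Your closing remark about why $b\in\AA^*$ (not just $b\neq0$) is needed is apt — e.g.\ over a non-reduced ring $1+\varepsilon t$ with $\varepsilon$ nilpotent is a unit — and the same hypothesis is what makes the paper's extreme-coefficient argument work.
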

\begin{proof}
Denote $f:=a+b\,t^m$. Let $h=\sum_{i=l}^rh_it^i\in\AA\ltr{t}$ with $l\leq r$ and $h_l\neq0\neq h_r$ satisfy $1=h\,f=\sum_{i=l}^ra\,h_i\,t^i+\sum_{i=l+m}^{r+m}b\,h_{i-m}\,t^i$. If $m<0$ then $b\,h_l\,t^{l+m}$ is the smallest possible term in $h\,f$ and $a\,h_r\,t^{r}$ is the largest possible term in $h\,f$. Since $l+m<r$ it follows that $b\,h_l=0$ or $a\,h_r=0$. If $m>0$ then $a\,h_l\,t^l$ is the smallest possible term in $h\,f$ and $b\,h_r\,t^{r+m}$ is the largest possible term in $h\,f$. 
Since $l<r+m$, $a\,h_l=0$ or $b\,h_r=0$. In either case, we have a contradiction with $a,b\in\AA^*$ and $h_l,h_r\neq0$.
\end{proof}

\noindent The next three propositions will provide an $h$ as proposed in Lemma~\ref{Lemma:GetDIdeal} for the scenarios, that one of the $S$-, $P$-, $A$-extensions is not a \sigmaSE-, \piE-, \rE-extension, respectively.

\begin{proposition}\label{Prop:ConstSigmaCase}
Let $\dfield{\EE}{\sigma}$ be a basic $APS$-extension of $\dfield{\FF}{\sigma}$ with~\eqref{Equ:APSOrdered} where $\AA=\FF$ is a field. Suppose that $s_{u}$ is not a \sigmaSE-monomial but that $y_1,\dots,y_{l},p_1,\dots,p_{r},s_{1},\dots,s_{u-1}$ are \rpisiSE-monomials; let $\HH=\FF\lr{y_1}\dots\lr{y_l}\lr{p_1}\dots\lr{p_r}\lr{s_{1}}\dots\lr{s_{u-1}}$. Then there is an $f\in\HH$ with $s_{u}-f\in\const{\EE}{\sigma}\setminus\{0\}$. No such element is a unit.
\end{proposition}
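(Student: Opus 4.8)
The plan is to read off $s_u-f$ as a telescoping-type constant and then invoke the inversion results of Lemma~\ref{Lemma:SInverseElements}. First I would set $\beta:=\sigma(s_u)-s_u$. By construction of the ordered presentation~\eqref{Equ:APSOrdered}, $\dfield{\HH[s_u]}{\sigma}$ is an $S$-extension of $\dfield{\HH}{\sigma}$ with $\beta\in\HH$. Since $y_1,\dots,y_l,p_1,\dots,p_r,s_1,\dots,s_{u-1}$ are \rpisiSE-monomials, a step-by-step application of the defining property of such monomials yields $\const{\HH}{\sigma}=\const{\FF}{\sigma}$, which is a field because $\FF$ is a field; this is precisely the hypothesis needed to apply part~(1) of Theorem~\ref{Thm:RPSCharacterization}.

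Next, since $s_u$ is by assumption not a \sigmaSE-monomial, $\dfield{\HH[s_u]}{\sigma}$ is not a \sigmaSE-extension of $\dfield{\HH}{\sigma}$, so Theorem~\ref{Thm:RPSCharacterization}(1) produces a $g\in\HH$ with $\sigma(g)=g+\beta$. Putting $f:=g\in\HH$ gives $\sigma(s_u-f)=(s_u+\beta)-(g+\beta)=s_u-f$, i.e.\ $s_u-f\in\const{\EE}{\sigma}$; and $s_u-f\neq0$ because $s_u$ is transcendental over $\HH$ while $f\in\HH$, so $s_u-f$ is a degree-one polynomial in $s_u$. This establishes the first claim.

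It remains to show that no element of the form $s_u-f$ with $f\in\HH$ is a unit of $\EE$. Here I would argue that $\HH$ is reduced by Lemma~\ref{Lemma:SInverseElements}(1), being an $RPS$-extension of the field $\FF$, and that by~\eqref{Equ:APSOrdered} the ring $\EE=\HH[s_u,s_{u+1},\dots,s_v]$ is obtained from $\HH$ by a tower of $S$-extensions, each intermediate ring again being reduced by Lemma~\ref{Lemma:SInverseElements}(1). Iterating Lemma~\ref{Lemma:SInverseElements}(2) along this tower then gives $\EE^*=\HH^*$. Since $s_u-f$ involves $s_u$ nontrivially it does not lie in $\HH$, hence not in $\HH^*=\EE^*$, so it is not a unit. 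The only points requiring a little care are keeping track that $\const{\HH}{\sigma}$ is a field so that Theorem~\ref{Thm:RPSCharacterization}(1) applies, and recognizing $\EE$ as a polynomial ring over the reduced ring $\HH$ so that its unit group collapses to $\HH^*$; beyond that the argument is routine, so I do not expect a genuine obstacle.
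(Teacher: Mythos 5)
Your proposal is correct and follows essentially the same route as the paper: obtain $f\in\HH$ from Theorem~\ref{Thm:RPSCharacterization}(1) (using that $\const{\HH}{\sigma}=\const{\FF}{\sigma}$ is a field) so that $s_u-f$ is a nonzero constant, then conclude $\EE^*=\HH^*$ from parts (1) and (2) of Lemma~\ref{Lemma:SInverseElements} to rule out units. Your explicit remarks on why $\const{\HH}{\sigma}$ is a field and on iterating the unit-group argument through the $S$-tower are just spelled-out versions of steps the paper leaves implicit.
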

\begin{proof}
Let $\beta:=\sigma(s_{u})-s_{u}\in\HH$. Since $s_{u}$ is not a \sigmaSE-monomial, there is an $f\in\HH$ with 
\begin{equation}\label{Equ:TeleNotSimple}
\sigma(f)-f=\beta
\end{equation}
by Theorem~\ref{Thm:RPSCharacterization}. Thus $\sigma(s_{u}-f)=s_{u}-f$, i.e., $s_{u}-f\in\const{\HH[s_{u}]}{\sigma}\setminus\{0\}$. By part~(1) of Lemma~\ref{Lemma:SInverseElements} it follows that $\HH$ is reduced and thus by part~(2) of Lemma~\ref{Lemma:SInverseElements} we conclude that 
$\EE^*=\HH^*$. 
Thus any $s_{u}-f$ with $f\in\HH$ is not a unit.
\end{proof}

\begin{proposition}\label{Prop:ConstPiCase}
Let $\dfield{\EE}{\sigma}$ be a basic $APS$-extension of $\dfield{\FF}{\sigma}$ with~\eqref{Equ:APSOrdered} where $\AA=\FF$ is a field. Suppose that $p_{u}$ is not a \piE-monomial but that $y_1,\dots,y_{l},p_1,\dots,p_{u-1}$ are \rE\piE-monomials. Then there are
$\xi_1,\dots,\xi_{l}\in\NN$, $\pi_1,\dots,\pi_{u}\in\ZZ$ with $\pi_u\neq0$ and $w\in\FF^*$ with $1-w\,y_1^{\xi_1}\dots y_l^{\xi_l}p_1^{\pi_1}\dots p_{u}^{\pi_u}\in\const{\EE}{\sigma}\setminus\{0\}$. No such element is a unit.
\end{proposition}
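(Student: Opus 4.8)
The plan is to read off the required constant directly from a witness that $p_u$ fails to be a \piE-monomial, and then to check that subtracting it from $1$ yields a nonzero non-unit. Put $\HH:=\FF\lr{y_1}\dots\lr{y_l}\lr{p_1}\dots\lr{p_{u-1}}$ and $\alpha:=\frac{\sigma(p_u)}{p_u}$. Since the extension is basic and written in the form~\eqref{Equ:APSOrdered}, we have $\alpha\in\PPowers{\HH}{\FF}{(\FF^*)}$, i.e.\ $\alpha=w_0\,p_1^{e_1}\dots p_{u-1}^{e_{u-1}}$ with $w_0\in\FF^*$ and $e_i\in\ZZ$. As $y_1,\dots,y_l,p_1,\dots,p_{u-1}$ are \rE\piE-monomials, $\dfield{\HH}{\sigma}$ is a basic \rpisiSE-extension of the field $\dfield{\FF}{\sigma}$ in the form~\eqref{Equ:APSOrdered} with $\const{\HH}{\sigma}=\const{\FF}{\sigma}$, and since $p_u$ is not a \piE-monomial, part~(2) of Theorem~\ref{Thm:RPSCharacterization} supplies $g\in\HH\setminus\{0\}$ and $m\in\ZZ\setminus\{0\}$ with $\sigma(g)=\alpha^m\,g$. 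Now $\alpha^m=w_0^m\,p_1^{me_1}\dots p_{u-1}^{me_{u-1}}$ is of exactly the shape to which Theorem~\ref{Thm:ShiftQuotient} applies, so that theorem forces $g=w\,y_1^{\xi_1}\dots y_l^{\xi_l}p_1^{\pi_1}\dots p_{u-1}^{\pi_{u-1}}\in\HH^*$ for suitable $w\in\FF^*$, $\xi_i\in\NN$ and $\pi_i\in\ZZ$. Setting $\pi_u:=-m\neq0$ and $h:=g\,p_u^{\pi_u}=w\,y_1^{\xi_1}\dots y_l^{\xi_l}p_1^{\pi_1}\dots p_u^{\pi_u}$, one computes $\sigma(h)=(\alpha^m g)(\alpha^{\pi_u}p_u^{\pi_u})=\alpha^{m+\pi_u}h=h$, hence $h\in\const{\EE}{\sigma}$ and therefore $1-h\in\const{\EE}{\sigma}$.

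Next I would argue $1-h\neq0$: if $h=1$ then $g=p_u^{-\pi_u}=p_u^{m}$, which is impossible since $g\in\HH$ while $p_u$ is transcendental over $\HH$ and $m\neq0$. For the non-unit claim I would invoke Lemma~\ref{Lemma:LinearNotInvertible}. View $\EE$ as a ring of Laurent polynomials $\mathcal{D}\ltr{p_u}$, where $\mathcal{D}$ is generated over $\FF$ by $y_1,\dots,y_l$, the $p_i$ with $i\neq u$, and the $s_j$; this rearrangement is legitimate because all generators involved are (Laurent-)transcendental over one another. For any $w\in\FF^*$, $\xi_i\in\NN$ and $\pi_i\in\ZZ$ with $\pi_u\neq0$, the element $b:=-w\,y_1^{\xi_1}\dots y_l^{\xi_l}p_1^{\pi_1}\dots p_{u-1}^{\pi_{u-1}}$ is a unit of $\mathcal{D}$: indeed $w\in\FF^*$, each $y_j$ is invertible since $y_j^{\lambda_j}=1$, and each $p_i$ with $i\neq u$ is a Laurent variable of $\mathcal{D}$. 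Hence $1-w\,y_1^{\xi_1}\dots y_l^{\xi_l}p_1^{\pi_1}\dots p_u^{\pi_u}=1+b\,p_u^{\pi_u}$ has the form $a+b\,t^{m}$ with $a,b\in\mathcal{D}^*$, $t=p_u$, $m=\pi_u\neq0$, and is therefore not a unit of $\mathcal{D}\ltr{p_u}=\EE$ by Lemma~\ref{Lemma:LinearNotInvertible}; in particular the constant $1-h$ constructed above is not a unit.

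I expect no genuine obstacle here; the substance of the proof is the interplay of Theorem~\ref{Thm:ShiftQuotient}, which linearises the non-$\Pi$ witness $g$ into a single monomial, with Lemma~\ref{Lemma:LinearNotInvertible}, which rules out invertibility of an honest binomial in $p_u$. The only points demanding a moment's care are bookkeeping: that after reordering to~\eqref{Equ:APSOrdered} the multiplicand $\alpha$ of $p_u$ involves only $p_1,\dots,p_{u-1}$ (so that $\alpha^m$ matches the hypothesis of Theorem~\ref{Thm:ShiftQuotient}), and that $\dfield{\HH}{\sigma}$ is again a basic \rpisiSE-extension of $\dfield{\FF}{\sigma}$ of that form; both follow at once from the definition of a basic extension and the reordering discussion preceding~\eqref{Equ:APSOrdered}.
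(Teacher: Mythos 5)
Your proposal is correct and follows essentially the same route as the paper's proof: obtain the witness $\sigma(g)=\alpha^m g$ from Theorem~\ref{Thm:RPSCharacterization}, compress $g$ into a unit monomial via Theorem~\ref{Thm:ShiftQuotient} applied in the sub-extension $\HH$, form the constant $1-g\,p_u^{-m}$, and rule out invertibility with Lemma~\ref{Lemma:LinearNotInvertible} after reordering $\EE$ as Laurent polynomials in $p_u$. Your explicit verification that the constant is nonzero and that \emph{every} element of the stated shape is a non-unit merely spells out details the paper leaves terse.
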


\begin{proof}
Let $\HH=\FF\lr{y_1}\dots\lr{y_l}\lr{p_1}\dots\lr{p_{u-1}}$ and let $\alpha:=\sigma(p_{u})/p_{u}$. Note that $\alpha=u\,p_1^{z_1}\dots p_{u-1}^{z_{u-1}}$ with $z_i\in\ZZ$ and $u\in\FF^*$.
Since $p_{u}$ is not a \piE-monomial, there exists an $f\in\HH\setminus\{0\}$ and an $m\in\ZZ\setminus\{0\}$ such that
\begin{equation}\label{Equ:NotRPCharact}
\sigma(f)=\alpha^m\,f
\end{equation}
holds by Theorem~\ref{Thm:RPSCharacterization}.
By Theorem~\ref{Thm:ShiftQuotient}, we conclude that~\eqref{Equ:ProductSol} holds
for some $w\in\FF^*$, $\xi_i\in\NN$ and $\pi_i\in\ZZ$. In particular, $f\in\EE^*$. Consequently, we get
$\tilde{h}:=f\,p_u^{-m}=w\,y_1^{\xi_1}\dots y_l^{\xi_l}p_1^{\pi_1}\dots p_{u-1}^{\pi_{u-1}}p_u^{-m}$
with 
$\sigma(\tilde{h})=\sigma(f)\,\sigma(p_u^{-m})=f\,p_u^{-m}=\tilde{h}.$
Clearly, $1-\tilde{h}\in\const{\EE}{\sigma}\setminus\{0\}$. Now reorder the generators in $\EE$ by putting $p_u$ on top, say we get $\EE=\HH\ltr{p_u}$ for a ring $\HH$ which contains the other generators of $\EE$. Note that $f\in\HH^*$. Thus $1-f\,p_u^{-m}=1-\tilde{h}$ is not a unit in $\HH\ltr{p_u}=\EE$ by Lemma~\ref{Lemma:LinearNotInvertible}.  
\end{proof}

\begin{proposition}\label{Prop:ConstRCase}
Let $\dfield{\EE}{\sigma}$ be a basic $APS$-extension of $\dfield{\FF}{\sigma}$ as given in~\eqref{Equ:APSOrdered} ($\AA=\FF$) where $\dfield{\FF}{\sigma}$ is a constant-stable difference field. Suppose that $y_{u}$ is not an \rE-monomial but that $y_1,\dots,y_{u-1}$ are \rE-monomials. Then there are $\xi_1,\dots,\xi_{u}\in\NN$ with $0<\xi_u<\ord(y_u)$ and $1-y_1^{\xi_1}\dots y_u^{\xi_u}\in\const{\EE}{\sigma}\setminus\{0\}$. No such element is a unit.
\end{proposition}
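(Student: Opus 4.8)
The plan is to follow the same pattern as Propositions~\ref{Prop:ConstSigmaCase} and~\ref{Prop:ConstPiCase}, with one extra step that is forced by constant-stability. First I would set $\HH:=\FF\lr{y_1}\dots\lr{y_{u-1}}$, $\lambda:=\ord(y_u)>1$ and $\alpha:=\tfrac{\sigma(y_u)}{y_u}$; since the extension is basic, $\alpha\in\const{\FF}{\sigma}^*=\KK^*$ and $\alpha^{\lambda}=1$. As $y_1,\dots,y_{u-1}$ are \rE-monomials, $\dfield{\HH}{\sigma}$ is a basic \rpisiSE-extension of the field $\dfield{\FF}{\sigma}$ with $\const{\HH}{\sigma}=\KK$ and already has the shape~\eqref{Equ:APSOrdered} (with no $P$- and no $S$-monomials). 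Since $y_u$ is not an \rE-monomial, Theorem~\ref{Thm:RPSCharacterization}(3) yields a $g\in\HH\setminus\{0\}$ and an $m\in\{1,\dots,\lambda-1\}$ with $\sigma(g)=\alpha^m g$. Because $\alpha^m\in\FF^*$ and $\HH$ has no $P$-monomials, Theorem~\ref{Thm:ShiftQuotient} applied to $\dfield{\HH}{\sigma}$ gives $g=w\,y_1^{\xi_1}\cdots y_{u-1}^{\xi_{u-1}}$ with $w\in\FF^*$ and $\xi_1,\dots,\xi_{u-1}\in\NN$.

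The second step is to eliminate the coefficient $w$. Writing $\alpha_i:=\tfrac{\sigma(y_i)}{y_i}\in\KK^*$ and comparing coefficients in $\sigma(g)=\alpha^m g$ gives $\tfrac{\sigma(w)}{w}=\alpha^m\alpha_1^{-\xi_1}\cdots\alpha_{u-1}^{-\xi_{u-1}}=:c\in\KK^*$; since $c$ is a product of roots of unity, it is a root of unity, say of order $N$. As $c$ is a constant, $\sigma^{N}(w)=c^{N}w=w$, so $w\in\const{\FF}{\sigma^{N}}=\const{\FF}{\sigma}=\KK$ by constant-stability, hence $\sigma(w)=w$. Now set $\xi_u:=\lambda-m$, so $0<\xi_u<\ord(y_u)$, and put $h:=w^{-1}g\,y_u^{\xi_u}=y_1^{\xi_1}\cdots y_{u-1}^{\xi_{u-1}}y_u^{\xi_u}$, using $y_u^{\xi_u}=y_u^{-m}$ since $y_u^{\lambda}=1$. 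From $\sigma(g)=\alpha^m g=\sigma(y_u^{m})\,g$ we get $\sigma(g\,y_u^{-m})=g\,y_u^{-m}$, and together with $w^{-1}\in\KK$ this shows $\sigma(h)=h$, i.e.\ $h\in\const{\EE}{\sigma}$. Reducing the exponents modulo the orders, $h$ becomes one of the $\FF$-basis monomials of the iterated $A$-extension $\FF[y_1,\dots,y_l]$ whose $y_u$-exponent $\xi_u$ is nonzero, hence $h\neq1$; therefore $1-h=1-y_1^{\xi_1}\cdots y_u^{\xi_u}\in\const{\EE}{\sigma}\setminus\{0\}$, as claimed.

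For the non-unit statement I would use an evaluation homomorphism. Since $\EE$ has the form~\eqref{Equ:APSOrdered}, there is a ring homomorphism $\fct{\phi}{\EE}{\FF}$ with $\phi|_{\FF}=\mathrm{id}_{\FF}$, $\phi(y_i)=1$, $\phi(p_i)=1$ and $\phi(s_i)=0$; this is well defined because the only defining relations are $y_i^{\lambda_i}=1$, which are respected by $1^{\lambda_i}=1$. For any element of the stated form we get $\phi\bigl(1-y_1^{\xi_1}\cdots y_u^{\xi_u}\bigr)=1-1=0$, which is not a unit of $\FF$; since ring homomorphisms map units to units, no such element can be a unit in $\EE$.

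The main obstacle — and the reason constant-stability cannot be omitted, as Example~\ref{Exp:NotConstantStable} shows — is the second step: Theorem~\ref{Thm:ShiftQuotient} only produces the constant up to a coefficient $w\in\FF^*$ (exactly as in the \piE-case, where Proposition~\ref{Prop:ConstPiCase} indeed carries such a $w$ in its conclusion), and one must argue that here $w$ in fact lies in $\KK^*$, so that it can be absorbed and a purely monomial element $y_1^{\xi_1}\cdots y_u^{\xi_u}$ is left. An alternative route, avoiding Theorem~\ref{Thm:ShiftQuotient}, would be to apply Proposition~\ref{Prop:CharactSeveralRExt} to $\dfield{\FF[y_1,\dots,y_u]}{\sigma}$: this extension is not an \rE-extension whereas $\dfield{\FF[y_1,\dots,y_{u-1}]}{\sigma}$ is, so either $\alpha$ is not a primitive $\lambda$-th root of unity — in which case $\alpha^{m}=1$ for some $0<m<\lambda$ and one takes $h=y_u^{m}$ — or $\gcd(\lambda,\ord(y_j))\neq1$ for some $j<u$, in which case one builds a monomial constant $h=y_j^{a}y_u^{b}$ with suitable $0<a<\ord(y_j)$ and $0<b<\lambda$ exactly as in the proof of Proposition~\ref{Prop:CharactSeveralRExt}; in either case a monomial constant is obtained directly, and the non-unit claim is handled by the same evaluation homomorphism as above.
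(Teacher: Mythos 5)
Your proof is correct. The construction of the constant is essentially the paper's own argument: you take the witness $g\in\HH\setminus\{0\}$ with $\sigma(g)=\alpha^m g$ from Theorem~\ref{Thm:RPSCharacterization}(3), obtain $g=w\,y_1^{\xi_1}\cdots y_{u-1}^{\xi_{u-1}}$ with $w\in\FF^*$ from Theorem~\ref{Thm:ShiftQuotient}, and use constant-stability to force $w\in\KK$; the paper does the same, only it phrases the last step through the monomial $a=y_1^{\xi_1}\cdots y_{u-1}^{\xi_{u-1}}y_u^{-m}$ and $\rho=\lcm(\lambda_1,\dots,\lambda_u)$, whereas you compute $c=\sigma(w)/w$ directly as a product of the root-of-unity multipliers — the same mechanism in different clothing, and your $h\neq1$ argument via the monomial basis of $\FF[y_1,\dots,y_l]$ matches the paper's appeal to the minimality of the defining relation $y_u^{\lambda}=1$. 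Where you genuinely diverge is the non-unit claim: the paper exhibits an explicit zero divisor, $b=1+a+\dots+a^{\rho-1}\neq0$ with $(1-a)\,b=1-a^{\rho}=0$, which requires a small argument that the monomials in $b$ cannot cancel, while you evaluate through the ring homomorphism $\fct{\phi}{\EE}{\FF}$ with $\phi|_{\FF}=\text{id}$, $\phi(y_i)=\phi(p_i)=1$, $\phi(s_i)=0$ (well defined by the quotient description of the $A$-part and the universal properties of Laurent and polynomial extensions — note the relations $p_ip_i^{-1}=1$ are also respected since $1$ is a unit) and use that homomorphisms preserve units; this is shorter, needs no auxiliary element, and even covers elements of the stated shape that happen to be zero. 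Your sketched alternative via Proposition~\ref{Prop:CharactSeveralRExt} is also legitimate and non-circular (that proposition rests only on Lemma~\ref{Lemma:RSingleIsoMultiple} and Proposition~\ref{Prop:RExt}), with the implicit proviso, which you handle correctly by treating the non-primitive case first, that in the $\gcd$ case both $\alpha$ and $\alpha_j$ may be assumed primitive so that the construction from that proof applies.
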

\begin{proof}
Let $\HH=\FF\lr{y_1}\dots\lr{y_{u-1}}$ and let $\alpha:=\frac{\sigma(y_{u})}{y_{u}}\in\const{\FF}{\sigma}^*$ with $\lambda:=\ord(y_u)$. We start as in the proof of Proposition~\ref{Prop:ConstPiCase}. Since $y_{u}$ is not an \rE-monomial, there exists an $f\in\HH\setminus\{0\}$ and an $m\in\NN$ with $0<m<\lambda$ such that~\eqref{Equ:NotRPCharact} holds
by Theorem~\ref{Thm:RPSCharacterization}.
By Theorem~\ref{Thm:ShiftQuotient}, we conclude that~\eqref{Equ:ProductSol}
holds with $l=u-1$ and $r=0$ for some $w\in\FF^*$ and $\xi_i\in\NN$. Consequently, we get
$\tilde{h}:=f\,y_u^{-m}=w\,y_1^{\xi_1}\dots y_{u-1}^{\xi_{u-1}}y_u^{-m}$
with 
$\sigma(\tilde{h})=\sigma(f)\,\sigma(y_u^{-m})=f\,y_u^{-m}=\tilde{h}$.
Define 
\begin{equation}\label{Equ:DefineaForRProp}
a:=y_1^{\xi_1}\dots y_{u-1}^{\xi_{u-1}} y_u^{-m}.
\end{equation}
Then 
$\sigma(a)=u\,a$
with $u=w/\sigma(w)\in\FF^*$. Since $\dfield{\EE}{\sigma}$ is a basic $APS$-extension of $\dfield{\FF}{\sigma}$, the $\frac{\sigma(y_i)}{y_i}\in\const{\FF}{\sigma}^*$ with $1\leq i\leq u$ are roots of unity, and thus $u=\frac{\sigma(a)}{a}\in\const{\FF}{\sigma}^*$ is a root of unity. More precisely, let $\lambda_i=\ord(y_i)$ and set $\rho=\lcm(\lambda_1,\dots,\lambda_u)>0$. Then $a^{\rho}=u^{\rho}=1$. 
Thus with $\sigma(w)=\frac{1}{u}\,w$ we get $\sigma^{\rho}(w)=u^{-\rho}\,w=w$, i.e., $w\in\const{\FF}{\sigma^{\rho}}$. Since $\dfield{\FF}{\sigma}$ is constant-stable, $w\in\const{\FF}{\sigma}$. With $w\,a=\tilde{h}=\sigma(\tilde{h})=\sigma(w)\,\sigma(a)=w\,\sigma(a),$
we conclude that $\sigma(a)=a$. In particular, $\sigma(1-a)=1-a$. Now take any $a$ as given in~\eqref{Equ:DefineaForRProp} with $1\leq m<\lambda$. As above we can take a $\rho>0$ with $a^{\rho}=1$.
We show that $1-a$ is not a unit. Clearly, $1-a\neq0$ (otherwise, $a=1$ or equivalently $y_u^m=y_1^{\xi_1}\dots y_{u-1}^{\xi_{u-1}}$ with $0<m<\lambda$ and thus $y_u^{\lambda}=1$ would not be the defining relation of our $A$-monomial). Now define $b=1+a+\dots+a^{\rho-1}$. Note that the term $a$ given in~\eqref{Equ:DefineaForRProp} has the coefficient $1$. Hence the term $a^i$ with $i\geq1$ has again the coefficient $1$. This implies that the arising monomials cannot cancel themselves and thus $b\neq0$. 
Finally, observe that $(1-a)\,b=1-a^{\rho}=0$. Thus $1-a$ is a zero divisor which implies that $1-a\in\const{\EE}{\sigma}\setminus\EE^*$.
\end{proof}

\noindent Combining these propositions with Theorem~\ref{Thm:RPiSiIsSimple} produces the main result of this section.

\begin{theorem}[Characterization of \rpisiSE-extensions (I)]\label{Thm:EquivSimpleConst}
Let $\dfield{\EE}{\sigma}$ be a basic $APS$-extension of a difference field $\dfield{\FF}{\sigma}$. Suppose that $\dfield{\FF}{\sigma}$ is constant-stable or that all $A$-monomials are \rE-monomials.
Then the following two statements are equivalent.
\begin{enumerate}
 \item $\dfield{\EE}{\sigma}$ is a basic \rpisiSE-extension of $\dfield{\FF}{\sigma}$ (i.e., $\const{\EE}{\sigma}=\const{\FF}{\sigma}).$
\item $\dfield{\EE}{\sigma}$ is a simple difference ring.
\end{enumerate}
\end{theorem}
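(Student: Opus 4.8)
The plan is to establish the two implications separately. Direction $(1)\Rightarrow(2)$ is immediate: if $\const{\EE}{\sigma}=\const{\FF}{\sigma}$ then $\dfield{\EE}{\sigma}$ is, by definition, a basic \rpisiSE-extension of the difference field $\dfield{\FF}{\sigma}$, so Theorem~\ref{Thm:RPiSiIsSimple} applies verbatim and $\dfield{\EE}{\sigma}$ is simple. (The standing hypothesis of the present theorem --- constant-stability of $\dfield{\FF}{\sigma}$, or all $A$-monomials being \rE-monomials --- is not used for this direction.)

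For $(2)\Rightarrow(1)$ I would argue contrapositively: assuming $\const{\FF}{\sigma}\subsetneq\const{\EE}{\sigma}$, I will exhibit a proper non-trivial difference ideal of $\EE$, so that $\dfield{\EE}{\sigma}$ fails to be simple. First reorder the given tower to the shape~\eqref{Equ:APSOrdered}, $\EE=\FF[y_1,\dots,y_l][p_1,p_1^{-1}]\dots[p_r,p_r^{-1}][s_1]\dots[s_v]$, with the $A$-monomials first, then the $P$-monomials, then the $S$-monomials; this is legitimate for basic $APS$-extensions by~\cite[Lemma~4.13]{Schneider:16a}, and it affects neither the ring $\EE$, nor $\sigma$, nor $\const{\EE}{\sigma}$. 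Let $\FF=\EE_0\subseteq\EE_1\subseteq\dots\subseteq\EE_N=\EE$ be the chain of intermediate subrings obtained by adjoining one reordered generator at a time. Since $\const{\FF}{\sigma}\subseteq\const{\EE_j}{\sigma}$ for every $j$, there is a least index $k$ with $\const{\EE_k}{\sigma}\neq\const{\FF}{\sigma}$, and minimality forces $\const{\EE_j}{\sigma}=\const{\FF}{\sigma}$ for all $j<k$. Hence the first $k-1$ reordered generators are, in order, \rE-, \piE- or \sigmaSE-monomials, while the $k$-th generator $t_k$ fails to be a monomial of its respective type.

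Because of the block structure of~\eqref{Equ:APSOrdered}, the position of $t_k$ singles out exactly one of the three constant-producing propositions of Section~\ref{SubSec:SimpleIsRPS}. If $t_k=y_u$ is an $A$-monomial, then $y_1,\dots,y_{u-1}$ are \rE-monomials; since here the alternative hypothesis ``all $A$-monomials are \rE-monomials'' is violated, $\dfield{\FF}{\sigma}$ must be constant-stable, and Proposition~\ref{Prop:ConstRCase} yields a $c\in\const{\EE}{\sigma}\setminus\{0\}$ which is not a unit. If $t_k=p_u$ is a $P$-monomial, then all $A$-monomials are \rE-monomials and $p_1,\dots,p_{u-1}$ are \piE-monomials, and Proposition~\ref{Prop:ConstPiCase} yields such a $c$. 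If $t_k=s_u$ is an $S$-monomial, then all $A$-monomials are \rE-monomials, all $P$-monomials are \piE-monomials, and $s_1,\dots,s_{u-1}$ are \sigmaSE-monomials, and Proposition~\ref{Prop:ConstSigmaCase} yields such a $c$. In every case $c\in\const{\EE}{\sigma}\setminus\{0\}$ with $c\notin\EE^*$; since $\sigma(c)=c=1\cdot c$, Lemma~\ref{Lemma:GetDIdeal} (applied with $h:=c$ and multiplier $1$) shows that $c\,\EE$ is a difference ideal of $\EE$ with $\{0\}\subsetneq c\,\EE\subsetneq\EE$. Thus $\dfield{\EE}{\sigma}$ is not simple, which proves the contrapositive.

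The mathematical substance is entirely delegated: Theorem~\ref{Thm:RPiSiIsSimple} for one direction, and Propositions~\ref{Prop:ConstSigmaCase}--\ref{Prop:ConstRCase} together with Lemma~\ref{Lemma:GetDIdeal} for the other. The only delicate point --- and the sole place where the two alternative hypotheses of the theorem are invoked --- is the bookkeeping in the contrapositive: one must verify that after reordering to~\eqref{Equ:APSOrdered} the ``first offending generator'' $t_k$ meets exactly the hypotheses of the matching proposition (in particular, that every generator of strictly lower priority type has already been certified to be an \rE- or \piE-monomial), and that constant-stability of $\dfield{\FF}{\sigma}$ is needed only in the $A$-monomial case, which is precisely the case excluded by assuming that all $A$-monomials are \rE-monomials.
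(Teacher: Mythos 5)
Your proposal is correct and follows essentially the same route as the paper's own proof: $(1)\Rightarrow(2)$ via Theorem~\ref{Thm:RPiSiIsSimple}, and $(2)\Rightarrow(1)$ contrapositively by reordering to the form~\eqref{Equ:APSOrdered}, locating the first failing generator, invoking Propositions~\ref{Prop:ConstRCase}, \ref{Prop:ConstPiCase} or~\ref{Prop:ConstSigmaCase} accordingly (with constant-stability needed only in the $A$-monomial case), and concluding with Lemma~\ref{Lemma:GetDIdeal}. Your extra bookkeeping with the chain of intermediate subrings merely makes explicit what the paper's case distinction leaves implicit, so no substantive difference remains.
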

\begin{proof}
$(1)\Rightarrow(2)$ follows by Theorem~\ref{Thm:RPiSiIsSimple} ($\dfield{\FF}{\sigma}$ needs not be constant-stable). What remains to show is the direction $(2)\Rightarrow(1)$. Suppose that $\dfield{\EE}{\sigma}$ is not an \rpisiSE-extension of $\dfield{\FF}{\sigma}$. Reorder the monomials in the form~\eqref{Equ:APSOrdered} (with $\AA=\FF$). If one of the $y_i$ is not an \rE-monomial, we 
may assume in addition that $\dfield{\FF}{\sigma}$ is constant-stable and we can apply Proposition~\ref{Prop:ConstRCase}. Otherwise, if all $y_i$ are \rE-monomials, but one of the $p_i$ is not a \piE-monomial, we apply Proposition~\ref{Prop:ConstPiCase}. If all $y_i$ and $p_i$ are \rE\piE-monomials, we apply Proposition~\ref{Prop:ConstSigmaCase}. In any case, it follows that there exists an $h\in\const{\EE}{\sigma}\setminus(\EE^*\cup\{0\})$. Hence by Lemma~\ref{Lemma:GetDIdeal} it follows that $I:=h\,\EE$ is a difference ideal with $\{0\}\subsetneq I\subsetneq\EE$. Consequently $\dfield{\EE}{\sigma}$ is not simple. 
\end{proof}

\begin{remark}
Lemma~1.8 in~\cite{Singer:97} also provides the implication $(2)\Rightarrow(1)$ in the following general setting: it holds if the extension $\dfield{\EE}{\sigma}$ of $\dfield{\FF}{\sigma}$ is a finitely generated $\FF$-algebra, but it relies on the assumption that $\const{\FF}{\sigma}$ is algebraically closed. 
\end{remark}

\subsection{Constructing non-trivial difference ideals in basic $APS$-extensions}\label{SubSec:NonTrivialI}

Let $\dfield{\FF}{\sigma}$ be a difference field as introduced in Remark~\ref{Remark:ConstructiveVersion}. Note that such a  difference field is constant-stable; see Remark~\ref{Remark:ConstantStable}. Moreover, let $\dfield{\EE}{\sigma}$ with $\EE=\FF\lr{t_1}\dots\lr{t_e}$ be an $APS$-extension of $\dfield{\FF}{\sigma}$.
Then the existence proof of a non-trivial difference ideal in Theorem~\ref{Thm:EquivSimpleConst} turns into a constructive version. 
Namely, as worked out in Remark~\ref{Remark:ConstructiveVersion} we can check iteratively, if the $APS$-monomials $t_i$ with $i=1,2,3\,\dots$ are \rpisiSE-monomials. In this way we can check if $\dfield{\EE}{\sigma}$ is an \rpisiSE-extension of $\dfield{\FF}{\sigma}$, i.e., if $\dfield{\EE}{\sigma}$ is simple. 
If it is not an \rpisiSE-extension, we will discover a $k$ with $1\leq k\leq e$ such that the $t_1,\dots,t_{k-1}$ are \rpisiSE-monomials, but $t_k$ is not an \rpisiSE-monomial. 
%Depending on the situation if $t_k$ is an $S$-monomial, a $P$-monomial or an $A$-monomial, we can active Propositions~\ref{Prop:ConstRCase}, \ref{Prop:ConstPiCase} or~\ref{Prop:ConstSigmaCase} respectively. 
Namely, exactly 
$f\in\HH:=\FF\lr{t_1}\dots\lr{t_{k-1}}$ with~\eqref{Equ:TeleNotSimple} or $f\in\HH$ and $m\neq0$ ($m\in\ZZ$ or $0<m<\ord(y_u)$) with~\eqref{Equ:NotRPCharact} are calculated to verify that $t_k$ is not an \rpisiSE-monomial. Given such a witness, we can now construct the $h\in\const{\EE}{\sigma}\setminus(\EE^*\cup\{0\})$ as given in the Propositions~\ref{Prop:ConstRCase}, \ref{Prop:ConstPiCase} and~\ref{Prop:ConstSigmaCase}. This finally yields
the difference ideal $I=h\,\EE$ with $\{0\}\subsetneq I\subsetneq\EE$ that witnesses that $\dfield{\EE}{\sigma}$ is not a simple difference ring.

\begin{example}
We demonstrate this constructive aspect for the different cases.
\begin{enumerate}
\item Consider the difference ring from Example~\ref{Exp:ConstructRPiSiExt}. We checked that $\dfield{\AA_5}{\sigma}$ is an \rpisiSE-extension of $\dfield{\KK(x)}{\sigma}$ and thus $\dfield{\AA_5}{\sigma}$ is a simple difference ring. However, $t_2$ is not a \sigmaSE-monomial. Thus $\dfield{\AA_6}{\sigma}$ is not simple by Theorem~\ref{Thm:EquivSimpleConst}. 
E.g., we obtain $c$ as given in~\eqref{Equ:ConstantForSigma} with $\sigma(c)=c$. By Proposition~\ref{Prop:ConstSigmaCase}, $c$ is not a unit and thus $I=\langle c\rangle$ is a difference ideal with $\{0\}\subsetneq I\subsetneq \AA_6$ by Lemma~\ref{Lemma:GetDIdeal}.

\item Consider the difference ring from Example~\ref{Exp:3PiMonomials}. We checked that $\dfield{\AA}{\sigma}$ with $\AA=\KK(x)[y]\ltr{p_1}\ltr{p_2}$ is an \rE\piE-extension of $\dfield{\KK(x)}{\sigma}$ and thus $\dfield{\AA}{\sigma}$ is a simple difference ring. However, $p_3$ is not a \piE-monomial and thus $\dfield{\AA\ltr{p_3}}{\sigma}$ is not simple. 
By Lemma~\ref{Lemma:GetDIdeal} and Proposition~\ref{Prop:ConstPiCase}
we get the difference ideal $I=\langle 1-c\rangle$ with $c$ given in~\eqref{Equ:ConstantForPi} where $\{0\}\subsetneq I\subsetneq\AA\ltr{p_3}$.

\item Consider the difference ring from Example~\ref{Exp:3AMonomials}. We checked that $\dfield{\AA}{\sigma}$ with $\AA=\KK(x)[y_1][y_2]$ is an \rE-extension of $\dfield{\KK(x)}{\sigma}$ and thus $\dfield{\AA}{\sigma}$ is a simple difference ring. However, $y_3$ is not an \rE-monomial and thus $\dfield{\AA[y_3]}{\sigma}$ is not simple. By Lemma~\ref{Lemma:GetDIdeal} and Proposition~\ref{Prop:ConstRCase}
we get the difference ideal $I=\langle 1-c\rangle$ with $c$ given in~\eqref{Equ:ConstantForR} where $\{0\}\subsetneq I\subsetneq\AA[y_3]$.
\end{enumerate}
\end{example}

\subsection{Constructing maximal difference ideals in $S$-extensions}\label{SubSec:SimpleForSExt}

Suppose we are given a basic $APS$-extension $\dfield{\EE}{\sigma}$ of a difference field $\dfield{\FF}{\sigma}$ where $\dfield{\EE}{\sigma}$ is not simple. Moreover, assume that we are given a maximal reflexive difference ideal $I$ in $\dfield{\EE}{\sigma}$. Then by Lemma~1.7 in~\cite{Singer:97} this leads to a simple difference ring $\dfield{\EE/I}{\sigma}$. A natural question is how one can construct such an ideal $I$ and how one can design an \rpisiSE-extension $\dfield{\HH}{\sigma}$ of $\dfield{\FF}{\sigma}$  such that $\dfield{\EE/I}{\sigma}\simeq\dfield{\HH}{\sigma}$ holds. This construction is of particular interest in order to construct a Picard-Vessiot extension explicitly.\\
In the following we skip the $AP$-extension case and refer to related results in~\cite{Schneider:05c,DR2} where it is shown that an expression in terms of finitely many hypergeometric products can be represented within $R\Pi$-extensions; for further details and generalizations see Remark~\ref{Remark:ProductTranslation} below. In this regard we refer also to the special case of $c$-finite sequences~\cite{Kauers:08,Singer:16}. Now we assume that we are given an \rE\piE-extension $\dfield{\AA}{\sigma}$ of $\dfield{\FF}{\sigma}$ in which all our products are formulated. Then using the tools from above we can process a tower of $S$-extensions and can construct a maximal difference ideal with the described properties as follows.

\begin{theorem}\label{Thm:ConstructIso}
Let $\dfield{\AA[t_1]\dots[t_e]}{\sigma}$ be an $S$-extension of $\dfield{\AA}{\sigma}$ with $\sigma(t_i)=t_i+\beta_i$ for $1\leq i\leq e$ where $\KK:=\const{\AA}{\sigma}$ is a field. Then there is a maximal reflexive difference ideal $I$ in $\AA[t_1]\dots[t_e]$ and a \sigmaSE-extension $\dfield{\AA[s_1]\dots[s_r]}{\sigma}$ of $\dfield{\AA}{\sigma}$ with a difference ring isomorphism $\fct{\mu}{\AA[t_1]\dots[t_e]/I}{\AA[s_1]\dots[s_r]}.$
In particular, the following holds.
\begin{enumerate}
\item There are $i_1,\dots,i_{e-r}\in\NN$ with $1\leq i_1<i_2<\dots<i_{e-r}\leq e$ and $g_j\in\AA[t_1,\dots,t_{i_j-1}]$ for $1\leq j\leq e-r$ such that $I=\langle t_{i_1}-g_1,t_{i_2}-g_2,\dots,t_{i_{e-r}}-g_{e-r}\rangle$.
\item Let $\{1,2,\dots,e\}\setminus\{i_1,i_2,\dots,i_{e-r}\}=\{k_1,k_2,\dots,k_r\}$ with $k_1<k_2<\dots<k_r$. 
Then $\mu$ can be given by $\mu(f+I)=f$ for $f\in\AA$, $\mu(t_{i_j}+I)=\mu(g_j+I)$ for\footnote{Note that $\mu$ is constructed iteratively; see the proof below. Particularly, the application of $\mu$ to $g_{j}+I$ is defined before the application of $\mu$ to $t_{i_j}+I$.} $1\leq j\leq e-r$ and $\mu(t_{k_j}+I)=s_j+c_j$ for $1\leq j\leq r$; the $c_j\in\KK$ can be freely chosen.
\item If one can solve the telescoping problem in $\dfield{\AA}{\sigma}$, then the ideal $I$, the \sigmaSE-extension $\dfield{\AA[s_1]\dots[s_{r}]}{\sigma}$ of $\dfield{\AA}{\sigma}$ and the isomorphism $\mu$ can be given explicitly.
\end{enumerate}
\end{theorem}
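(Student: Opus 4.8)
The plan is to construct $I$, the $\Sigma^*$-extension $\dfield{\AA[s_1]\dots[s_r]}{\sigma}$, and the isomorphism $\mu$ recursively, processing the $S$-monomials $t_1,\dots,t_e$ one at a time. I keep the following inductive data: a reflexive difference ideal $I_k$ in $\AA[t_1,\dots,t_k]$ of the shape demanded in~(1), a $\Sigma^*$-extension $\dfield{\AA[s_1]\dots[s_{r_k}]}{\sigma}$ of $\dfield{\AA}{\sigma}$, and a difference ring isomorphism $\fct{\mu_k}{\AA[t_1,\dots,t_k]/I_k}{\AA[s_1]\dots[s_{r_k}]}$ with $\mu_k|_{\AA}=\text{id}_{\AA}$ and $I_k\cap\AA=\{0\}$; for $k=0$ take $I_0=\{0\}$, $r_0=0$, $\mu_0=\text{id}_{\AA}$. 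In the step from $k-1$ to $k$ I transport the summand $\beta_k\in\AA[t_1,\dots,t_{k-1}]$ along $\mu_{k-1}$ to $\tilde\beta_k:=\mu_{k-1}(\beta_k+I_{k-1})\in\AA[s_1]\dots[s_{r_{k-1}}]$ and, since $\const{\AA[s_1]\dots[s_{r_{k-1}}]}{\sigma}=\KK$ is a field by the invariant, apply part~(1) of Theorem~\ref{Thm:RPSCharacterization} to decide whether some $g$ in that ring satisfies $\sigma(g)-g=\tilde\beta_k$.

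If there is no such $g$, then adjoining a new $S$-monomial $s_{r_{k-1}+1}$ with $\sigma(s_{r_{k-1}+1})=s_{r_{k-1}+1}+\tilde\beta_k$ produces a $\Sigma^*$-extension by part~(1) of Theorem~\ref{Thm:RPSCharacterization}; I set $r_k:=r_{k-1}+1$, let $I_k$ be the extension of $I_{k-1}$ to $\AA[t_1,\dots,t_k]$ (its generators are unchanged and $I_k\cap\AA=\{0\}$ still holds), record $k$ as one of the indices $k_j$, and extend $\mu_{k-1}$ by $\mu_k(t_k+I_k):=s_{r_k}+c_j$ for an arbitrary $c_j\in\KK$; this is a well-defined difference ring isomorphism because translating a polynomial generator by a constant is an isomorphism and $\mu_{k-1}(\beta_k+I_{k-1})=\tilde\beta_k$ gives $\sigma$-compatibility, and this is where the freedom of the $c_j$ in~(2) originates. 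If such a $g$ exists, I fix one, choose a representative $g_j\in\AA[t_1,\dots,t_{k-1}]$ of $\mu_{k-1}^{-1}(g)$, so that $\beta_k-(\sigma(g_j)-g_j)\in I_{k-1}$ by construction, and put $I_k:=I_{k-1}\AA[t_1,\dots,t_k]+\lr{t_k-g_j}$ with $r_k:=r_{k-1}$, recording $k$ as one of the indices $i_j$. From $\sigma(t_k-g_j)-(t_k-g_j)=\beta_k-(\sigma(g_j)-g_j)\in I_{k-1}$, and the corresponding relation obtained by applying $\sigma^{-1}$, it follows that $I_k$ is a reflexive difference ideal of the claimed form; moreover the canonical map $\AA[t_1,\dots,t_k]/I_k\to\AA[t_1,\dots,t_{k-1}]/I_{k-1}$ sending $t_k+I_k$ to $g_j+I_{k-1}$ is a difference ring isomorphism (again by $\beta_k-(\sigma(g_j)-g_j)\in I_{k-1}$), so $\mu_k$ is its composition with $\mu_{k-1}$; hence $\mu_k(t_k+I_k)=\mu_{k-1}(g_j+I_{k-1})$ and $I_k\cap\AA=\{0\}$. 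After $e$ steps set $I:=I_e$, $r:=r_e$, $\mu:=\mu_e$; since each $\tilde\beta_j$ lies in $\AA[s_1]\dots[s_{j-1}]$, the tower $\dfield{\AA[s_1]\dots[s_r]}{\sigma}$ is indeed a (nested) $\Sigma^*$-extension of $\dfield{\AA}{\sigma}$, and~(1) and~(2) hold by construction.

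It remains to establish maximality of $I$ and the effectivity claim~(3). For maximality, in the setting of interest $\dfield{\AA}{\sigma}$ is a basic $R\Pi$-extension of a constant-stable difference field $\dfield{\FF}{\sigma}$, so $\KK=\const{\AA}{\sigma}=\const{\FF}{\sigma}$ and $\dfield{\AA[s_1]\dots[s_r]}{\sigma}$ is a basic $APS$-extension of $\dfield{\FF}{\sigma}$ with unchanged constant field, hence a simple difference ring by Theorem~\ref{Thm:RPiSiIsSimple}; consequently $\AA[t_1,\dots,t_e]/I\simeq\AA[s_1]\dots[s_r]$ is simple, which is exactly the statement that $I$ is a maximal reflexive difference ideal. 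Under the bare hypothesis that $\KK$ is a field the same reasoning gives that $I$ is maximal among the reflexive difference ideals of $\AA[t_1,\dots,t_e]$ that contract trivially to $\AA$, using the fact that every reflexive difference ideal of a $\Sigma^*$-extension is generated by its contraction to the base; the latter is obtained by a leading-coefficient/degree-reduction argument in the topmost $S$-monomial, in the spirit of Lemma~\ref{Lemma:LMStable}, carried out componentwise after splitting the reduced ring $\AA$ into a finite product of integral domains. For~(3), the only a priori non-effective ingredient is deciding solvability of $\sigma(g)-g=\tilde\beta_k$ over $\AA[s_1]\dots[s_{r_{k-1}}]$ and, when solvable, producing $g$; writing $g$ as a polynomial in $s_{r_{k-1}}$ and comparing coefficients from the top degree downwards reduces this first-order problem to (parameterized) telescoping problems over $\AA[s_1]\dots[s_{r_{k-1}-1}]$ and, recursively, over $\AA$, which are solvable by hypothesis --- precisely the reduction underlying the algorithmic machinery of~\cite{Schneider:16a} recalled in Remark~\ref{Remark:ConstructiveVersion}. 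The representatives $g_j$, the generators of $I$, the $\Sigma^*$-extension, and $\mu^{\pm1}$ are then all computed explicitly.

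The routine but delicate bookkeeping is the transport of the telescoping equation for each $t_k$ faithfully through $\mu_{k-1}$ and the pull-back of the witness $g$ to a representative $g_j$ over $\AA[t_1,\dots,t_{k-1}]$. The genuinely substantial point is the maximality step: since $\sigma$ does not fix leading coefficients one cannot appeal to an admissible-term-order argument, and the zero-divisors coming from $R$-monomials block the naive ``divide by the leading coefficient'' trick, so maximality is most cleanly obtained by routing it through Theorem~\ref{Thm:RPiSiIsSimple} (at the price of the stronger standing hypotheses on $\dfield{\AA}{\sigma}$), while the version under the weaker hypothesis needs the degree-reduction argument handled componentwise.
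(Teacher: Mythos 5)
Your construction of $I$, the \sigmaSE-extension and $\mu$ is essentially the paper's own proof: the paper also proceeds by induction over the $S$-monomials, transports the summand through $\mu$, and splits into the same two cases --- if $\sigma(g)-g=\mu(\beta_{e+1}+I)$ has a solution $g$ in $\AA[s_1]\dots[s_r]$ it enlarges the ideal by $t_{e+1}-g_{e-r+1}$ (a representative of $\mu^{-1}(g)$) and uses the substitution map $t_{e+1}\mapsto g_{e-r+1}$ exactly as you do; otherwise it adjoins a new \sigmaSE-monomial via Theorem~\ref{Thm:RPSCharacterization} and maps $t_{e+1}\mapsto s_{r+1}+c$, which is also where the free constants $c_j$ originate; and part~(3) is obtained in both treatments by invoking the lifting of (parameterized) telescoping from $\AA$ to $\AA[s_1]\dots[s_r]$ (\cite[Thm.~7.1]{Schneider:16a}, resp.\ Theorem~\ref{Thm:RPSReduction}), which is the coefficient-comparison reduction you sketch. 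The one point where you genuinely deviate is maximality: the paper keeps ``maximal'' as part of the inductive invariant and justifies it for the enlarged ideal through the quotient isomorphism, whereas you derive it from simplicity of $\AA[s_1]\dots[s_r]$ via Theorem~\ref{Thm:RPiSiIsSimple}, which requires the standing assumption (made in the paragraph preceding the theorem) that $\AA$ is an $R\Pi$-extension of a difference field. Your observation that the bare hypothesis ``$\KK$ is a field'' does not yield absolute maximality is well taken --- for instance $\AA=\KK[u]/\langle u^2\rangle$ with $\sigma(u)=2u$ has constant field $\KK$ but the nontrivial reflexive difference ideal $\langle u\rangle$, so already the base case $I=\{0\}$ would fail to be maximal --- and here you are more careful than the paper, whose induction tacitly needs $\AA$ itself to have no nontrivial reflexive difference ideals. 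Be aware, however, that your fallback ``relative'' maximality rests on the unproven auxiliary claim that every reflexive difference ideal of a \sigmaSE-extension is extended from the base (and on splitting a reduced ring into a finite product of integral domains, which needs finitely many minimal primes); this add-on is not needed for the theorem as it is applied, so either drop it or state the result in the $R\Pi$-setting, where Theorem~\ref{Thm:RPiSiIsSimple} settles maximality cleanly.
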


\begin{proof}
We show the theorem by induction on $e$. For $e=0$ we take $I=\langle\,\rangle=\{0\}$ and the statement holds trivially.
Now suppose that the theorem holds for $e\geq0$ extensions. For $\dfield{\EE}{\sigma}$ with $\EE=\AA[t_1]\dots[t_e]$ we can take a maximal ideal $I=\langle t_{i_1}-g_1,t_{i_2}-g_2,\dots,t_{i_{e-r}}-g_{e-r}\rangle$ with $0\leq r\leq e$ which is closed under $\sigma$ and $\sigma^{-1}$. Further, we can take a \sigmaSE-extension $\dfield{\SA}{\sigma}$ of $\dfield{\AA}{\sigma}$ with $\SA=\AA[s_1]\dots[s_{r}]$ together with the difference ring isomorphism $\fct{\mu}{\AA[t_1]\dots[t_e]/I}{\AA[s_1]\dots[s_{r}]}$ as claimed in the theorem. Consider the $S$-extension $\dfield{\EE[t_{e+1}]}{\sigma}$ of $\dfield{\EE}{\sigma}$ with $\sigma(t_{e+1})=t_{e+1}+\beta_{e+1}$.\\ 
$\bullet$\textit{Case 1.} Suppose that there exists a $g\in\SA$ 
with $\sigma(g)-g=\mu(\beta_{e+1}+I)$. Thus with
$g':=\mu^{-1}(g)\in\EE/I$ we get $\sigma(g')=g'+(\beta_{e+1}+I)$. Moreover,
$\sigma(t_{e+1})=t_{e+1}+\beta_{e+1}$ implies $\sigma(t_{e+1}+I)=(t_{e+1}+I)+(\beta_{e+1}+I)$. Consequently, $\sigma((t_{e+1}+I)-g')=(t_{e+1}+I)-g'$. Take $g_{e-r+1}\in\EE$ with $g'=g_{e-r+1}+I$. Obviously, 
\begin{equation}\label{Equ:ge-r+1Prop}
\sigma(g_{e-r+1}+I)=(g_{e-r+1}+\beta_{e+1})+I.
\end{equation}
Moreover, $\sigma(t_{e+1}-g_{e-r+1})=t_{e+1}-g_{e-r+1}+h$ for some $h\in I$ and thus $\sigma^{-1}(t_{e+1}-g_{e-r+1})=t_{e+1}-g_{e-r+1}-\sigma^{-1}(h)$.
Consequently $\tilde{I}:=(t_{e+1}-g_{e-r+1})\EE+I=\langle t_{i_1}-g_1,t_{i_2}-g_2,\dots,t_{i_{e-r}}-g_{e-r},t_{i_{e-r+1}}-g_{e-r+1}\rangle$ with $i_{e-r+1}=e+1$ is a maximal and reflexive difference ideal. 
Hence we can construct the difference ring $\dfield{\EE[t_{e+1}]/\tilde{I}}{\sigma}$. Identifying the elements of $h+I\in\EE/I$ with $h+\tilde{I}$ turns $\dfield{\EE[t_{e+1}]/\tilde{I}}{\sigma}$ into a difference ring extension of $\dfield{\EE/I}{\sigma}$.
Observe that $\fct{\rho}{\EE[t_{e+1}]/\tilde{I}}{\EE/I}$ with $\rho(f+\tilde{I})=f|_{t_{e+1}\to g_{e-r+1}}+I$ forms a ring isomorphism. Hence $\rho$ is a difference ring isomorphism by
\begin{align*}
\rho(\sigma(t_{e+1}+\tilde{I}))=&\rho(t_{e+1}+\beta_{e+1}+\tilde{I})=\rho(t_{e+1}+\tilde{I})+\rho(\beta_{e+1}+\tilde{I})\\
=&g_{e-r+1}+\beta_{e+1}+I\stackrel{\eqref{Equ:ge-r+1Prop}}{=}\sigma(g_{e-r+1}+I)=\sigma(\rho(t_{e+1}+\tilde{I})).
\end{align*}
Consequently, we obtain the difference ring isomorphism $\fct{\mu'}{\EE[t_{e+1}]/\tilde{I}}{\SA}$ with $\mu':=\mu\circ\rho$. Note that the isomorphism is uniquely determined by 
$\mu'(f+\tilde{I})=\mu(f+I)$ for all $f\in\EE$ and $\mu'(t_{e+1}+\tilde{I})=\mu(g_{e-r+1}+I)(=g)$.\\
$\bullet$\textit{Case 2.} Suppose that there is no $g\in\SA$ with $\sigma(g)=g+\mu(\beta_{e+1}+I)$. 
This implies that one can construct the \sigmaSE-extension $\dfield{\SA[s_{r+1}]}{\sigma}$ of $\dfield{\SA}{\sigma}$ with $\sigma(s_{r+1})=s_{r+1}+\mu(\beta_{e+1}+I)$. Consider $I$ as an ideal in $\EE$ and in $\EE[t_{e+1}]$. Then $\dfield{\EE[t_{e+1}]/I}{\sigma}$ turns into a difference ring which is a difference ring extension of $\dfield{\EE/I}{\sigma}$.
Moreover, we get the ring isomorphism $\fct{\mu'}{\EE[t_{e+1}]/I}{\SA[s_{r+1}]}$ defined by $\mu'|_{\EE/I}=\mu$ and $\mu'(t_{e+1}+I)=s_{r+1}+c$; here $c\in\KK$ can be chosen arbitrarily. Finally, with
$$\mu'(\sigma(t_{e+1}+I))=\mu'(t_{e+1}+\beta_{e+1}+I)=s_{r+1}+c+\mu(\beta_{e+1}+I)=\sigma(s_{r+1}+c)=\sigma(\mu'(t_{e+1}+I))$$
we conclude that $\mu'$ is a difference ring isomorphism. Thus parts~(1) and (2) are proven.\\ Now
suppose that one can solve the telescoping problem in $\dfield{\AA}{\sigma}$. 
By the induction assumption we can construct $I$, $\mu$ and $\dfield{\SA}{\sigma}$.
By iterative application of~\cite[Thm.~7.1]{Schneider:16a} (or Theorem~\ref{Thm:RPSReduction} below) one can solve the telescoping problem in $\dfield{\SA}{\sigma}$. Hence one can decide algorithmically, if there exists a $g\in\SA$ with $\sigma(g)=g+\mu(\beta_{e+1}+I)$. This turns parts~(1) and~(2) to constructive versions and proves part (3) of the theorem.
\end{proof}

\begin{example}\label{Exp:FindIsomorphism}
Consider the difference ring from Example~\ref{Exp:MainDRNaiveDef}. In the following we set $\AA=\QQ(n)(x)[y]\ltr{p_1}\ltr{p_2}$ where $\const{\AA}{\sigma}=\QQ(n)$ and apply the construction given in the proof of Theorem~\ref{Thm:ConstructIso} for the $S$-extension $\dfield{\AA[t_1][t_2][t_3]}{\sigma}$ of $\dfield{\AA}{\sigma}$. As observed in Example~\ref{Exp:ConstructRPiSiExt} the $S$-monomial $t_1$ with 
$\sigma(t_1)=t_1+\beta_1$
and $\beta_1=\frac{-y}{x+1}$ is a \sigmaSE-monomial. Thus we are in case 2 and construct the \sigmaSE-extension $\dfield{\AA[s_1]}{\sigma}$ of $\dfield{\AA}{\sigma}$ with 
\begin{equation}\label{Equ:s1Exp}
\sigma(s_1)=s_1+\beta_1=s_1+\frac{-y}{x+1}.
\end{equation}
Moreover, we set $I_0:=\langle\rangle=\{0\}$ and define the difference ring isomorphism $\fct{\mu}{\AA[t_1]/I_0}{\AA[s_1]}$ with $\mu(f+I_0)=f$ for all $f\in\AA$ and $\mu(t_1+I_0)=s_1+c_1$ for some $c_1\in\QQ(n)$; here we choose $c_1=0$ and get
\begin{equation}\label{Equ:t1IsoThm}
\mu(t_1+I_0)=s_1.
\end{equation}
Now we turn to the $S$-monomial $t_2$ with $\sigma(t_2)=t_2+\beta_2$ and $\beta_2=-\frac{y}{(x+1) (x+2)}$. Here we find $g=\frac{2 (x+1) s_1
-y
}{x+1}+c'\in\AA[s_1]$
with $c'\in\QQ(n)$ such that $\sigma(g)-g=\mu(\beta_2)=\beta_2$ holds. Thus we are in case~1 of our construction. In the following we specialize to $c'=1$. Hence we take $g':=\mu^{-1}(g)=\frac{2 (x+1)t_1
-y
}{x+1}+1+I_0$ and obtain $g_2=\frac{2 (x+1)t_1
-y
}{x+1}+1\in\EE$ with $g'=g_2+I_0$. This gives the reflexive difference ideal 
\begin{equation}\label{Equ:MaxIdealExp}
I=\langle t_2-g_2\rangle=\{h\big(t_2-\tfrac{2 (x+1) t_1
-y
}{x+1}+1)\big|h\in\AA[t_1]\},
\end{equation}
and we can construct the difference ring extension $\dfield{\AA[t_1][t_2]/I}{\sigma}$ of $\dfield{\AA[t_1]}{\sigma}$ together with the difference ring isomorphism $\fct{\mu'}{\AA[t_1][t_2]/I}{\AA[s_1]}$ with $\mu'(f+I)=\mu(f+I_0)$ for all $f\in\EE[t_1]$ and $\mu'(t_2+I)=\mu(g'+I_0)=\mu(g')=g$. For simplicity we use again $\mu$ instead of $\mu'$ and get
\begin{equation}\label{Equ:t2IsoThm}
\mu(t_2+I)=\tfrac{2 (x+1) s_1-y}{x+1}+1.
\end{equation}
Finally, we turn to the $S$-monomial $t_3$ with $\sigma(t_3)=t_3+\beta_3$ and $\beta_3=\tfrac{1-(x^2+3 x+2) y t_ 2}{(x+1)^2 (x+2)}$. This time we do not find a $g\in\AA[s_1]$ with $\sigma(g)-g=\mu(\beta_3+I)=\frac{3
+x-(x+1) (x+2) y(2 s_1+1)}{(x+1)^2 (x+2)}$. Being in case~2 we construct the \sigmaSE-extension $\dfield{\AA[s_1][s_2]}{\sigma}$ of $\dfield{\AA[s_1]}{\sigma}$ with 
\begin{equation}\label{Equ:s2Exp}
\sigma(s_2)=s_2+\mu(\beta_3+I)=s_2+\tfrac{3
+x-(x+1) (x+2) y(2 s_1+1)}{(x+1)^2 (x+2)}.
\end{equation}
Considering $I$ as an ideal in $\AA[t_1][t_2][t_3]$ we can extend $\mu$ to $\fct{\mu}{\AA[t_1][t_2][t_3]/I}{\AA[s_1][s_2]}$ with $\mu(t_3+I)=s_2+c_2$ for some $c_2\in\QQ(n)$. Here we specialize $c_2$ to $0$, i.e, we define
\begin{equation}\label{Equ:t3IsoThm}
\mu(t_3+I)=s_2.
\end{equation}
Summarizing, we constructed the maximal reflexive difference ideal $I$ in $\dfield{\AA[t_1][t_2][t_3]}{\sigma}$ given in~\eqref{Equ:MaxIdealExp} and we constructed the \sigmaSE-extension $\dfield{\AA[s_1][s_2]}{\sigma}$ of $\dfield{\AA}{\sigma}$ with~\eqref{Equ:s1Exp} and~\eqref{Equ:s2Exp} such that $\fct{\mu}{\AA[t_1][t_2][t_3]/I}{\AA[s_1][s_2]}$ is a difference ring isomorphism with $\mu(f+I)=f$ for all $f\in\AA$, and~\eqref{Equ:t1IsoThm} with $I_0$ replaced by $I$, \eqref{Equ:t2IsoThm} and~\eqref{Equ:t3IsoThm}. Note that the construction of $\mu$ (with the choices $c_1=0$, $c'=1$ and $c_2=0$) encodes the identities
\begin{equation}\label{Equ:IdsForEi}
\begin{split}
E_1(k)=&E_1(k),\quad\quad E_2(k)=2 \tsum_{j=1}^k\frac{(-1)^i}{i}
-\frac{(-1)^k}{k+1}+1,\\[-0.1cm]
E_3(k)=&\tsum_{i=1}^k\Big(\frac{-1}{i (1+i)}
+\frac{(-1)^i (1+i)}{i (1+i)} \Big(
        1
        +2\displaystyle\tsum_{j=1}^i\frac{(-1)^j}{j}
\Big)\Big),
\end{split}
\end{equation}
respectively; here the left and right hand sides are the representations in $\dfield{\AA[t_1][t_2][t_3]}{\sigma}$ and $\dfield{\AA[s_1][s_2]}{\sigma}$, respectively.
From the viewpoint of symbolic summation further details will be given in Example~\ref{Exp:SymbolicSummation}. Notably we will explain how the $c',c_1,c_2$ are determined and how the elements from $\dfield{\AA[s_1][s_2]}{\sigma}$ are reinterpreted to get~\eqref{Equ:IdsForEi}. 
\end{example}

\section{Idempotent representations of basic \rpisiSE-extensions}\label{Sec:Copies}

In~\cite[Corollary~1.16]{Singer:97} (compare also~\cite{Singer:08}) it has been worked out that a Picard-Vessiot ring extension, and more generally, a finitely generated difference ring extension $\dfield{\EE}{\sigma}$ of $\dfield{\FF}{\sigma}$ can be decomposed in the form 
\begin{equation}\label{Equ:decomposition}
\EE=e_0\,\EE\oplus e_2\,\EE\oplus\dots\oplus e_{n-1}\,\EE
\end{equation}
for some idempotent elements $e_s\in\EE$ (i.e., $e_s^2=e_s$) with $0\leq s<n$ such that $e_s\,\EE$ forms an integral domain. 
It will be convenient to denote by $s\mmod n$ with $s\in\ZZ$ the unique value $l\in\{0,\dots,n-1\}$ with $n\mid s-l$.
Then given the representation~\eqref{Equ:decomposition} for a simple difference ring $\dfield{\EE}{\sigma}$, it has been shown for $0\leq s<n$ that $\dfield{e_s\,\EE}{\sigma^n}$ is a simple difference ring and that $\sigma$ is a difference ring isomorphism between $\dfield{e_s\,\EE}{\sigma^n}$ and $\dfield{e_{s+1\mmod n}\,\EE}{\sigma^n}$.\\ 
Inspired by this result, we will elaborate in Theorem~\ref{Thm:Interlacing} such a decomposition for basic $RPS$ (and \rpisiSE-extensions) in a very explicit form. This representation will lead to further characterizations of \rpisiSE-extensions in Theorem~\ref{Thm:EquivSimpleCopy}.

Let $\dfield{\FF}{\sigma}$ be a difference field with $\KK=\const{\FF}{\sigma}$, and take a basic $RPS$-extension
$\dfield{\EE}{\sigma}$ of $\dfield{\FF}{\sigma}$ of the form
\begin{equation}\label{Equ:SingleRPS}
\EE=\FF[y]\lr{t_1}\dots\lr{t_e}
\end{equation}
where $y$ is an \rE-monomial of order $n$ with $\alpha:=\frac{\sigma(y)}{y}$ and where the $t_i$ with $1\leq i\leq e$ are $PS$-monomials with $\sigma(t_i)=\alpha_i\,t_i+\beta_i$ (note that either $\alpha_i=1$ or $\alpha_i\in\PPowers{\EE}{\FF}{(\FF^*)}$ with $\beta_i=0$). We remark that the case that several basic \rE-monomials are involved 
can be reduced to this situation by Lemma~\ref{Lemma:RSingleIsoMultiple}.
Now take 
$$\tilde{e}_s=\tilde{e}_s(y):=\tprod_{\substack{j=0\\j\neq n-1-s}}^{n-1}(y-\alpha^{j})$$
for $0\leq s<n$. Since $\alpha$ is a primitive root of unity by
Proposition~\ref{Prop:RExt}, $\tilde{e}_s(\alpha^{n-1-s})\neq0$. Thus we can define 
\begin{equation}\label{Equ:SpecificEi}
e_s=e_s(y):=\frac{\tilde{e}_{s}(y)}{\tilde{e}_{s}(\alpha^{n-1-s})}
\end{equation}
for $0\leq s<n$. 
By construction it follows that for $0\leq s<n$ and $0\leq i<n$ we have that
\begin{equation}\label{Equ:BasicEiProp}
e_s(\alpha\,y)=e_{s+1\mmod n}\quad\quad\text{ and }\quad\quad
e_s(\alpha^{i})=\begin{cases}1&\text{if }i=n-1-s\\ 0&\text{if }i\neq n-1-s.\end{cases}
\end{equation}
First, we will show that the $e_s\in\EE$ are idempotents, pairwise orthogonal and sum up to~$1$. For this result and later considerations we will use the following simple lemma.

\begin{lemma}\label{Lemma:RootImplEqual}
Let $\AA$ be an integral domain and consider the ring $\AA[y]$ subject to the relation $y^n=1$ with $n>1$. Let $f=\sum_{i=0}^{n-1}f_i\,y^i$ and $g=\sum_{i=0}^{n-1}g_i\,y^i$ with $f_i,g_i\in\AA$ and  $f(\lambda_i)=g(\lambda_i)$ for distinct $\{\lambda_1,\dots,\lambda_n\}\subseteq\AA$. Then $f=g$.  
\end{lemma}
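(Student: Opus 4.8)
The plan is to lift the situation to the free polynomial ring $\AA[z]$ and apply the classical fact that over an integral domain a nonzero polynomial has at most as many roots as its degree. Concretely, I would set
$$h:=f-g=\tsum_{i=0}^{n-1}(f_i-g_i)\,y^i,$$
and, since this is written with $y$-degree at most $n-1$, view it as the polynomial $\tilde h(z):=\sum_{i=0}^{n-1}(f_i-g_i)\,z^i\in\AA[z]$ of degree $\leq n-1$. Using the substitution convention $f(a)=f|_{y\to a}$ from the preliminaries, the hypothesis $f(\lambda_j)=g(\lambda_j)$ becomes $\tilde h(\lambda_j)=0$ for the $n$ pairwise distinct elements $\lambda_1,\dots,\lambda_n\in\AA$.

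The next step is to deduce $\tilde h=0$. Suppose $\tilde h\neq0$. Since $\AA$ is an integral domain, so is $\AA[z]$; because $\lambda_1$ is a root, $\tilde h=(z-\lambda_1)\,h_1$ with $h_1\neq0$ and $\deg h_1=\deg\tilde h-1$. Evaluating at $\lambda_2$ gives $0=(\lambda_2-\lambda_1)\,h_1(\lambda_2)$, and since $\lambda_2-\lambda_1\neq0$ in the integral domain $\AA$ we get $h_1(\lambda_2)=0$. Iterating through $\lambda_3,\dots,\lambda_n$, I peel off all $n$ linear factors and obtain $\tilde h=(z-\lambda_1)\cdots(z-\lambda_n)\,h_n$ with $h_n\neq0$, hence $\deg\tilde h\geq n$, contradicting $\deg\tilde h\leq n-1$. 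Therefore $\tilde h=0$, i.e.\ $f_i=g_i$ for all $0\leq i<n$, and consequently $f=g$ in $\AA[y]$.

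I expect no real obstacle here; the only points deserving a word of care are that the "at most $\deg$ roots" statement genuinely uses that $\AA$ is an integral domain (so that the differences $\lambda_j-\lambda_i$ are not zero-divisors and each factorization step drops the degree by exactly one), which is precisely the hypothesis, and that $f$ and $g$ are supplied in the canonical form $\sum_{i=0}^{n-1}(\cdot)\,y^i$ forced by the relation $y^n=1$, so that $\tilde h=0$ really yields the equality $f=g$ of residue classes rather than a mere congruence modulo $y^n-1$.
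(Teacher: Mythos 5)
Your proposal is correct and follows essentially the same route as the paper's proof: lift $f-g$ to the free polynomial ring $\AA[z]$, observe it has degree at most $n-1$ but vanishes at $n$ distinct points of the integral domain $\AA$, and conclude it is zero. You merely spell out the standard root-counting argument (peeling off linear factors) that the paper invokes implicitly.
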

\begin{proof}
Consider the ring of polynomials $\AA[z]$ and define $\tilde{f}=\sum_{i=0}^{n-1}f_iz^i, 
\tilde{g}=\sum_{i=0}^{n-1}g_iz^i\in\AA[z]$. Take $h:=\tilde{f}-\tilde{g}\in\AA[z]$. Then $h(\lambda_i)=0$ for all $0\leq i<n$. Since $\AA$ is an integral domain, we have $h=0$.  Therefore $\tilde{f}=\tilde{g}$ and hence $f=g$. 
\end{proof}

%\noindent Namely, we get the following result.

\begin{lemma}\label{Lemma:PropertiesOnEi}
Let $\FF$ be a field and let $\alpha$ be a primitive $n$th root of unity. Let $\FF[y]$ be a ring subject to the relation $y^n=1$. Then the
$e_0,\dots,e_{n-1}$ defined in~\eqref{Equ:SpecificEi} are idempotent, they are pairwise orthogonal and we have that $e_0+\dots+e_{n-1}=1$.
\end{lemma}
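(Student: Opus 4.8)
The plan is to work throughout in the ring $\FF[y]$ with the relation $y^n=1$ and to derive all three assertions from the evaluation table~\eqref{Equ:BasicEiProp} together with the interpolation-type Lemma~\ref{Lemma:RootImplEqual}. Since $\alpha\in\FF$ is a primitive $n$th root of unity, the powers $\alpha^0,\alpha^1,\dots,\alpha^{n-1}$ are $n$ pairwise distinct elements of $\FF$, and for each $i$ the assignment $y\mapsto\alpha^i$ extends to a ring homomorphism $\fct{\phi_i}{\FF[y]}{\FF}$ (well defined because $(\alpha^i)^n=1$, so $\phi_i$ annihilates $y^n-1$). In particular, when an element of $\FF[y]$ is brought into the normal form $\sum_{j=0}^{n-1}c_j\,y^j$ by repeatedly replacing $y^n$ with $1$, the values $\phi_i$ are unchanged; hence for the generators $e_s$ the numbers $\phi_i(e_s)$ coincide with the evaluations $e_s(\alpha^i)$ read off directly from the factored form~\eqref{Equ:SpecificEi}, namely $\phi_i(e_s)=1$ if $i=n-1-s$ and $\phi_i(e_s)=0$ otherwise, by~\eqref{Equ:BasicEiProp}. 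Each $e_s$ is represented by a polynomial of degree $n-1$ (since $\deg\tilde{e}_s=n-1$), so after reduction all the elements compared below lie in the normal form of degree at most $n-1$ that Lemma~\ref{Lemma:RootImplEqual} requires (applied with $\AA=\FF$ and the $n$ distinct points $\lambda_{i+1}=\alpha^i$).

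With this setup the three claims become one-line verifications. For idempotency, $\phi_i(e_s^2)=\phi_i(e_s)^2=e_s(\alpha^i)$, using that $e_s(\alpha^i)\in\{0,1\}$ by~\eqref{Equ:BasicEiProp}; thus $e_s^2$ and $e_s$ have the same image under every $\phi_i$, and Lemma~\ref{Lemma:RootImplEqual} forces $e_s^2=e_s$. For pairwise orthogonality with $s\neq s'$, $\phi_i(e_s\,e_{s'})=e_s(\alpha^i)\,e_{s'}(\alpha^i)$, and this product vanishes for every $i$ because the first factor is nonzero only for $i=n-1-s$ and the second only for $i=n-1-s'$, two distinct indices; hence $e_s\,e_{s'}$ and $0$ agree under all $\phi_i$, so $e_s\,e_{s'}=0$. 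For the partition of unity, $\phi_i\big(\sum_{s=0}^{n-1}e_s\big)=\sum_{s=0}^{n-1}e_s(\alpha^i)=1$, the unique surviving summand being the one with $s=n-1-i$ (which lies in the range $0,\dots,n-1$); thus $\sum_{s=0}^{n-1}e_s$ and the constant $1$ agree under all $\phi_i$, and Lemma~\ref{Lemma:RootImplEqual} yields $e_0+\dots+e_{n-1}=1$.

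I do not expect any genuine obstacle here: the only point needing care is the bookkeeping just mentioned, that products such as $e_s^2$ and $e_s\,e_{s'}$ and the sum $\sum_s e_s$ must first be reduced to normal form before Lemma~\ref{Lemma:RootImplEqual} is invoked, while the evaluations $\phi_i$ (equivalently, the values $e_s(\alpha^i)$) may be computed straight from the factored expression~\eqref{Equ:SpecificEi}. Everything else is immediate from~\eqref{Equ:BasicEiProp}.
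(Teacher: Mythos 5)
Your proof is correct and takes essentially the same route as the paper: both rest on the evaluation table~\eqref{Equ:BasicEiProp} combined with Lemma~\ref{Lemma:RootImplEqual}, applied after reduction to the normal form $\sum_{i=0}^{n-1}c_i\,y^i$ (your explicit remark that the evaluations $y\mapsto\alpha^i$ are ring homomorphisms killing $y^n-1$, so they can be read off the factored form, is exactly the bookkeeping the paper leaves implicit). The only difference is the orthogonality step, where the paper observes directly that for $i\neq j$ the product $e_i(y)\,e_j(y)$ contains all $n$ factors $(y-\alpha^l)$ and is therefore a multiple of $y^n-1=0$, while you run the same interpolation argument against the zero element; both are valid.
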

\begin{proof}
Let $0\leq r<n$ and define $w(y)=e_r(y)^2=\sum_{i=0}^{n-1}h_iy^i$
for some $h_i\in\FF$. By~\eqref{Equ:BasicEiProp} we have that
$w(\alpha^j)=e_r(\alpha^j)^2=e_r(\alpha^j)$ for $0\leq j<n$, i.e., $e_r(y)$ and $w(y)$ have $n$ distinct common evaluation points. By Lemma~\ref{Lemma:RootImplEqual} we get
$e_r(y)=w(y)=e_r(y)^2$.\\
For $0\leq i,j<n$ and $i\neq j$ it is easy to verify that factors in $e_i(y)\,e_j(y)$ produce $y^n-1$ which shows that $e_i(y)\,e_j(y)=0$ holds. Hence the $e_i$ are pairwise orthogonal. Finally, define $f(y)=e_0(y)+\dots+e_{n-1}(y)-1$. Then $f(\alpha^i)=0$ for $0\leq i<n$ by~\eqref{Equ:BasicEiProp}. Thus $0=f=e_0+\dots+e_{n-1}-1$ by Lemma~\ref{Lemma:RootImplEqual}.
\end{proof}

\noindent Given these idempotent elements, we get the decomposition~\eqref{Equ:decomposition} where the $e_s\,\EE$ with $0\leq s<n$ are rings with the multiplicative identity $e_s$. We will elaborate in the following Theorem\footnote{The simplest case $\EE=\FF[y]$ has been handled also in~\cite[Cor.~3.35]{Erocal:11}.}\ref{Thm:Interlacing} that the $\dfield{e_s\,\EE}{\sigma^n}$ are $PS$-extensions of $\dfield{\FF}{\sigma^n}$. In particular, we will show that they form \pisiSE-extensions if $\dfield{\EE}{\sigma}$ is an \rpisiSE-extension of $\dfield{\FF}{\sigma}$. 

\begin{theorem}\label{Thm:Interlacing}
Let $\dfield{\EE}{\sigma}$ be a basic $RPS$-extension of a difference field $\dfield{\FF}{\sigma}$ with~\eqref{Equ:SingleRPS} where $y$ is an \rE-monomial of order $n$ with $\alpha=\frac{\sigma(y)}{y}$. Let $e_0,\dots,e_{n-1}$ be the idempotent, pairwise orthogonal elements defined in~\eqref{Equ:SpecificEi} that sum up to one. Then:
\begin{enumerate}
\item We get the direct sum~\eqref{Equ:decomposition} of the rings $e_s\,\EE$ with the multiplicative identities $e_s$.
\item We have that $e_s\,\EE=e_s\,\tilde{\EE}$ with the integral domain
\begin{equation}\label{Equ:TildeE}
\tilde{\EE}:=\FF\lr{t_1}\dots\lr{t_e}.
\end{equation}
\item For all $0\leq s<n$, $\dfield{e_s\,\tilde{\EE}}{\sigma^n}$ is a $PS$-extension of $\dfield{e_s\,\FF}{\sigma^n}$. 
\item $\sigma$ is a difference ring isomorphism between $\dfield{e_s\,\tilde{\EE}}{\sigma^n}$ and $\dfield{e_{s+1\mmod n}\tilde{\EE}}{\sigma^n}$.
\item If $\dfield{\EE}{\sigma}$ is an \rpisiSE-extension of $\dfield{\FF}{\sigma}$, $\dfield{e_s\,\tilde{\EE}}{\sigma^n}$ is a \pisiSE-extension of $\dfield{e_s\,\FF}{\sigma^n}$ for $0\leq s<n$. Further, if $\dfield{\FF}{\sigma}$ is constant-stable, $\const{e_s\,\EE}{\sigma^n}=e_s\,\const{\FF}{\sigma}$.
\end{enumerate}
\end{theorem}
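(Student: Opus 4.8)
The plan is to establish the five items in the stated order, the first two being ring-theoretic preparations and the last three exploiting the shape of the idempotents together with the structure theorems already available. For (1), Lemma~\ref{Lemma:PropertiesOnEi} gives that $e_0,\dots,e_{n-1}$ is a complete set of pairwise orthogonal idempotents; hence each $f\in\EE$ equals $\sum_s e_s f$ with $e_sf\in e_s\EE$, and multiplying a relation $\sum_s e_s f_s=0$ by $e_r$ yields $e_r f_r=0$, so the sum is direct, while each $e_s\EE$ is a ring with multiplicative identity $e_s$. For (2), the crucial identity is $e_s\,y=\alpha^{n-1-s}e_s$ in $\EE$: both sides are elements of $\FF[y]$ of $y$-degree $<n$ that, by~\eqref{Equ:BasicEiProp}, agree at the $n$ distinct points $y=\alpha^i$ ($0\le i<n$), so Lemma~\ref{Lemma:RootImplEqual} forces equality; iterating gives $e_s\,y^k=\alpha^{k(n-1-s)}e_s$. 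Viewing $\EE$ as the free module $\tilde{\EE}\oplus\tilde{\EE}y\oplus\dots\oplus\tilde{\EE}y^{n-1}$ over $\tilde{\EE}=\FF\lr{t_1}\dots\lr{t_e}$ (legitimate since the $t_i$ are transcendental $PS$-monomials, so $\EE$ is the ring $\tilde{\EE}[y]/(y^n-1)$), the identity shows $e_sf=\big(\sum_k\alpha^{k(n-1-s)}f_k\big)e_s\in e_s\tilde{\EE}$ for $f=\sum_k f_k y^k$, hence $e_s\EE=e_s\tilde{\EE}$. Finally $\tilde{\EE}$ is an integral domain (a tower of polynomial/Laurent-polynomial rings over the field $\FF$), and since $e_s$ is a nonzero polynomial in $y$ with coefficients in $\FF$, the map $f\mapsto e_sf$ is an injective ring homomorphism $\tilde{\EE}\to e_s\tilde{\EE}$, so $e_s\tilde{\EE}$ is an integral domain.

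For (3) and (4): since $\sigma(e_s)=e_s(\sigma(y))=e_s(\alpha y)=e_{s+1\mmod n}$ by~\eqref{Equ:BasicEiProp}, $\sigma$ carries $e_s\EE$ bijectively onto $e_{s+1\mmod n}\EE$; being a ring isomorphism commuting with $\sigma^n$ and sending $e_s$ to $e_{s+1\mmod n}$, it is a difference ring isomorphism $\dfield{e_s\tilde{\EE}}{\sigma^n}\to\dfield{e_{s+1\mmod n}\tilde{\EE}}{\sigma^n}$, which is (4); in particular $\sigma^n$ is an automorphism of $e_s\tilde{\EE}=e_s\EE$ and of $e_s\FF$. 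For (3) I would induct on $i$, showing that $\dfield{e_s\FF\lr{t_1}\dots\lr{t_i}}{\sigma^n}=\dfield{(e_s\FF)\lr{e_st_1}\dots\lr{e_st_i}}{\sigma^n}$ is a $PS$-extension of the previous level: transcendence of $e_st_i$ follows from transcendence of $t_i$ and injectivity of $f\mapsto e_sf$. For a $P$-monomial $t_i$ with $\sigma(t_i)=\alpha_it_i$ one computes $\sigma^n(e_st_i)=e_s\hat{\alpha}_i t_i=(e_s\hat{\alpha}_i)(e_st_i)$, where $\hat{\alpha}_i:=\prod_{j=0}^{n-1}\sigma^j(\alpha_i)$ lies in $\PPowers{\FF\lr{t_1}\dots\lr{t_{i-1}}}{\FF}{(\FF^*)}$ because $\sigma$ preserves this group and $\alpha_i$ belongs to it by the basic structure; hence $e_s\hat{\alpha}_i$ is a unit of $(e_s\FF)\lr{e_st_1}\dots\lr{e_st_{i-1}}$. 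For an $S$-monomial $t_i$ with $\sigma(t_i)=t_i+\beta_i$ one gets $\sigma^n(e_st_i)=e_st_i+e_s\hat{\beta}_i$ with $\hat{\beta}_i:=\sum_{j=0}^{n-1}\sigma^j(\beta_i)\in\FF[y]\lr{t_1}\dots\lr{t_{i-1}}$, and collapsing $y^k\mapsto\alpha^{k(n-1-s)}$ via (2) puts $e_s\hat{\beta}_i$ into $(e_s\FF)\lr{e_st_1}\dots\lr{e_st_{i-1}}$. I expect this bookkeeping — keeping track of which elements are free of $y$ and which must be reduced through $e_sy^k=\alpha^{k(n-1-s)}e_s$ — to be the most delicate point of the whole argument.

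For (5), assume $\const{\EE}{\sigma}=\const{\FF}{\sigma}=\KK$ and take $f\in e_s\EE$ with $\sigma^n(f)=f$. Then $g:=\sum_{j=0}^{n-1}\sigma^j(f)$ satisfies $\sigma(g)=g$, so $g\in\const{\EE}{\sigma}=\KK\subseteq\FF$; moreover $e_s\sigma^j(f)\in e_s e_{s+j\mmod n}\EE=\{0\}$ for $1\le j<n$, whence $e_sg=e_sf=f$. Thus $f=e_sg$ with $g\in\FF$, and $\sigma^n(e_sg)=e_sg$ together with injectivity of $f\mapsto e_sf$ on $\FF$ forces $\sigma^n(g)=g$, so $f\in e_s\const{\FF}{\sigma^n}=\const{e_s\FF}{\sigma^n}$. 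Since the reverse inclusion is trivial and $e_s\tilde{\EE}=e_s\EE$, this gives $\const{e_s\tilde{\EE}}{\sigma^n}=\const{e_s\FF}{\sigma^n}$, i.e.\ $\dfield{e_s\tilde{\EE}}{\sigma^n}$ is a \pisiSE-extension of $\dfield{e_s\FF}{\sigma^n}$. If in addition $\dfield{\FF}{\sigma}$ is constant-stable, then $\const{\FF}{\sigma^n}=\const{\FF}{\sigma}=\KK$, hence $\const{e_s\EE}{\sigma^n}=\const{e_s\FF}{\sigma^n}=e_s\KK=e_s\const{\FF}{\sigma}$, completing (5).
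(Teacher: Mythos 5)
Your proof is correct, and for parts (1)--(4) it follows essentially the paper's route: your key identity $e_s\,y=\alpha^{n-1-s}\,e_s$, proved via Lemma~\ref{Lemma:RootImplEqual}, is exactly the reduction $e_s\,f=e_s\,f(\alpha^{n-1-s})$ of~\eqref{Equ:NormalizeeiRep} which the paper gets from the basis change $y^i\mapsto e_i$; your use of $\sigma^j(e_s)=e_{s+j\mmod n}$ replaces Lemma~\ref{Lemma:MapSigma}; and your generator-by-generator verification that $\sigma^n$ acts on $e_s\tilde{\EE}$ with multiplicands $e_s\prod_{j=0}^{n-1}\sigma^j(\alpha_i)$ and summands $e_s\sum_{j=0}^{n-1}\sigma^j(\beta_i)$ is the same computation the paper packages into the auxiliary automorphisms $\sigma_s$ of~\eqref{Equ:Sigmas} on the copy $\tilde{\EE}$ (only two routine points deserve a word: injectivity of $f\mapsto e_sf$ on $\tilde{\EE}$ via freeness of $\EE$ over $\tilde{\EE}$ with basis $1,y,\dots,y^{n-1}$, and that $\sigma^{-1}$ also preserves $\PPowers{\tilde{\EE}}{\FF}{(\FF^*)}$ so each level really carries an automorphism). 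The genuine divergence is part (5). The paper argues by contradiction: from a $\sigma^n$-constant $e_sf$ with $f\in\tilde{\EE}\setminus\FF$ it assembles $h=\sum_i e_i h_i$ with $\sigma^n(h)=h$, invokes Lemma~\ref{Lemma:Zero-Divisors} and the norm $h\,\sigma(h)\cdots\sigma^{n-1}(h)\in\KK^*$ to force $h\in\EE^*$, and then uses the explicit unit group of the $PS$-tower to derive a contradiction on the exponent of the last \piE-monomial in $h$. You instead use a trace argument: for $f\in e_s\EE$ with $\sigma^n(f)=f$, the average $g=\sum_{j=0}^{n-1}\sigma^j(f)$ is $\sigma$-invariant, hence in $\KK$ by the \rpisiSE-hypothesis, and orthogonality of the idempotents yields $e_sg=f$, so $f\in e_s\KK$. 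This is much shorter, dispenses with Lemma~\ref{Lemma:Zero-Divisors} and the unit-group analysis, and in fact proves more: it gives $\const{e_s\EE}{\sigma^n}=e_s\,\const{\FF}{\sigma}$ without the constant-stability assumption in the last sentence of (5) (equivalently, an \rE-monomial of order $n$ already forces $\const{\FF}{\sigma^n}=\const{\FF}{\sigma}$ by the same averaging applied inside $\FF[y]$). What the paper's longer route buys is the structural by-product, in the spirit of Theorem~\ref{Thm:ShiftQuotient}, that a would-be new $\sigma^n$-constant must be a unit monomial in the \piE-monomials; for the statement of Theorem~\ref{Thm:Interlacing} itself, your averaging argument is a clean simplification.
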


\noindent One can deduce Theorem~\ref{Thm:Interlacing} by taking~\cite[Corollary~1.16]{Singer:97}, specializing it to the above form with some extra arguments and using Theorem~\ref{Thm:EquivSimpleConst}. However, this result can be proven directly with straightforward arguments. In the remaining part of this section we will tackle the different sub-statements and will illustrate them by concrete examples. Special emphasis will be put on technical details that are relevant for concrete calculations in Section~\ref{Sec:Telescoping}. In addition, further characterizations of \rpisiSE-extensions will be provided.

\smallskip

\noindent\textit{Proof of parts (1) and (2) of Theorem~\ref{Thm:Interlacing}:}
Any $f\in\EE$ can be written in the form $f=\sum_{i=0}^{n-1}f_i\,y^i$ where $f_i$ are entries from the integral domain~\eqref{Equ:TildeE}.
Using this representation consider the map
$\fct{\rho}{\EE}{\EE}$ defined by
$\sum_{i=0}^{n-1}f_iy^i\mapsto \sum_{i=0}^{n-1}e_i\,\tilde{f}_i$
with $\tilde{f}_i=f(\alpha^{n-1-i})$. Now
define $\tilde{f}(y):=\rho(f)$. Then observe that
$\tilde{f}(\alpha^{n-1-s})=\sum_{i=0}^{n-1}\tilde{f}_i\,e_i(\alpha^{n-1-s})=\tilde{f}_{s}=f(\alpha^{n-1-s})$
for all $0\leq s<n$. Since $\alpha$ is primitive, i.e., all $\alpha^s$ with $0\leq s<n$ are distinct, $f=\tilde{f}$ by Lemma~\ref{Lemma:RootImplEqual}. In other words, 
if we consider $\EE$ as an $\tilde{\EE}$-module with the basis $y^0,\dots,y^{n-1}$, then also $e_0,\dots,e_{n-1}$ is a basis and $\rho$ is a basis transformation. In particular, this produces the direct sum
\begin{equation*}
\EE=e_0\,\tilde{\EE}\oplus\dots\oplus e_{n-1}\,\tilde{\EE}
\end{equation*}
of modules. Note that $e_s\,\tilde{\EE}$ with $0\leq s<n$ is an integral domain with the multiplicative identity $e_s$. Hence we obtain a direct sum of rings.
In particular, for $f\in\EE$ we get 
\begin{equation}\label{Equ:NormalizeeiRep}
e_s\,f=e_s\,f(\alpha^{n-1-s})
\end{equation}
with $f(\alpha^{n-1-s})\in\tilde{\EE}$ and thus
$e_s\,\EE=e_s\,\tilde{\EE}$.
This shows parts (1) and (2). $\qed_{(1,2)}$

\medskip

\noindent Before we continue with the remaining parts of Theorem~\ref{Thm:Interlacing}, we summarize the following properties that are useful to carry out basic operations.

\begin{lemma}\label{Lemma:MapSigma}
Let $\dfield{\EE}{\sigma}$ be a basic $RPS$-extension of a difference field $\dfield{\FF}{\sigma}$ of the form~\eqref{Equ:SingleRPS}. Let $f=\sum_{i=0}^{n-1}e_i\,f_i\in\EE$ and $g=\sum_{i=0}^{n-1}e_i\,g_i\in\EE$ with $f_i,g_i\in\tilde{\EE}=\FF\lr{t_1}\dots\lr{t_e}$. Then:
\begin{enumerate}
\item $f+g=\sum_{i=0}^{n-1} e_i\,(f_i+g_i)$ and $f\,g=\sum_{i=0}^{n-1} e_i(f_i\,g_i)$.
\item $\sigma(f)=e_0\,\sigma(f_{n-1})+e_1\,\sigma(f_0)+\dots+e_{n-1}\,\sigma(f_{n-2})=\sum_{i=0}^{n-1}e_i\,h_i$
with $h_i\in\tilde{\EE}$ where
$h_i=\sigma(f_{i-1\mmod n})|_{y\to\alpha^{n-1-i}}.$
%$$h_i=\left\{\begin{matrix}
%\sigma(h_{i-1})|_{y\to\alpha^{n-1-i}}&\text{ if $i>0$}\\
%\sigma(h_{n-1})|_{y\to\alpha^{n-1}}&\text{ if $i=0$.}
%\end{matrix}\right.$$
\item $\sigma^n(f)=\sum_{i=0}^{n-1}e_i\,\sigma^n(f_i)=\sum_{i=0}^{n-1}e_i\,h_i$ with $h_i=\sigma^n(f_i)|_{y\to\alpha^{n-1-i}}\in\tilde{\EE}$.
\end{enumerate}
\end{lemma}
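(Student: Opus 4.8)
The plan is to reduce all three assertions to two facts that are already available: the decomposition of $\EE$ as a direct sum of rings $e_0\,\tilde{\EE}\oplus\dots\oplus e_{n-1}\,\tilde{\EE}$ (parts~(1) and~(2) of Theorem~\ref{Thm:Interlacing}), where each $e_i\,\tilde{\EE}$ is a ring with multiplicative identity $e_i$, together with $e_ie_j=0$ for $i\neq j$, $e_i^2=e_i$ and $e_0+\dots+e_{n-1}=1$ (Lemma~\ref{Lemma:PropertiesOnEi}); and the behaviour of the idempotents under $\sigma$. Part~(1) is then immediate from commutativity and orthogonality: $f+g=\sum_i e_if_i+\sum_i e_ig_i=\sum_i e_i(f_i+g_i)$, and
$$f\,g=\Big(\sum_{i}e_if_i\Big)\Big(\sum_{j}e_jg_j\Big)=\sum_{i,j}e_ie_j\,f_ig_j=\sum_i e_i^2\,f_ig_i=\sum_i e_i(f_ig_i).$$
Since $e_0,\dots,e_{n-1}$ is an $\tilde{\EE}$-module basis of $\EE$, the representation $f=\sum_i e_if_i$ with $f_i\in\tilde{\EE}$ is unique, so these formulas really compute the components of $f+g$ and $f\,g$.

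For parts~(2) and~(3) I would first record how $\sigma$ permutes the idempotents. By the definition~\eqref{Equ:SpecificEi}, each $e_i=e_i(y)$ is a polynomial in $y$ whose coefficients are built from the powers $\alpha^j$ and therefore lie in $\KK=\const{\FF}{\sigma}$; hence $\sigma$ acts on $e_i(y)$ only through $y$. Since $\sigma(y)=\alpha\,y$, the first identity in~\eqref{Equ:BasicEiProp} gives $\sigma(e_i)=e_i(\alpha\,y)=e_{i+1\mmod n}$; and since $y^n=1$ forces $\alpha^n=1$, we get $\sigma^n(e_i)=e_i(\alpha^n y)=e_i$. Applying the ring homomorphism $\sigma$ to $f=\sum_{i=0}^{n-1}e_if_i$ and reindexing $j=i+1\mmod n$ yields
$$\sigma(f)=\sum_{i=0}^{n-1}e_{i+1\mmod n}\,\sigma(f_i)=\sum_{j=0}^{n-1}e_j\,\sigma(f_{j-1\mmod n})=e_0\,\sigma(f_{n-1})+e_1\,\sigma(f_0)+\dots+e_{n-1}\,\sigma(f_{n-2}),$$
and, using $\sigma^n(e_i)=e_i$, $\sigma^n(f)=\sum_{i=0}^{n-1}e_i\,\sigma^n(f_i)$.

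The one technical point is that for $f_i\in\tilde{\EE}=\FF\lr{t_1}\dots\lr{t_e}$ the elements $\sigma(f_i)$ and $\sigma^n(f_i)$ need not remain in $\tilde{\EE}$, since applying $\sigma$ to an $S$-generator $t_k$ may reintroduce $y$ through the summand $\beta_k$. This is precisely where the normalization identity~\eqref{Equ:NormalizeeiRep}, namely $e_s\,g=e_s\,g(\alpha^{n-1-s})$ for all $g\in\EE$, enters: multiplying the coefficient of $e_j$ appearing above by $e_j$ allows it to be replaced by its image under $y\to\alpha^{n-1-j}$, which lies in $\tilde{\EE}$. Hence $e_j\,\sigma(f_{j-1\mmod n})=e_j\,h_j$ with $h_j=\sigma(f_{j-1\mmod n})|_{y\to\alpha^{n-1-j}}\in\tilde{\EE}$, giving part~(2), and $e_i\,\sigma^n(f_i)=e_i\,h_i$ with $h_i=\sigma^n(f_i)|_{y\to\alpha^{n-1-i}}\in\tilde{\EE}$, giving part~(3). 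I do not expect a genuine obstacle here: once the identity $\sigma(e_i)=e_{i+1\mmod n}$ and the normalization~\eqref{Equ:NormalizeeiRep} are in place, the rest is routine bookkeeping.
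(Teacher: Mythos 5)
Your proof is correct and follows essentially the same route as the paper: part (1) via orthogonality and idempotency, and parts (2)–(3) via $\sigma(e_i)=e_i(\alpha y)=e_{i+1\bmod n}$ (resp.\ $\sigma^n(e_i)=e_i$) followed by evaluating the $e_j$-coefficients at $y\to\alpha^{n-1-j}$ to land back in $\tilde{\EE}$, which is exactly the normalization the paper uses. The only cosmetic difference is that the paper deduces part (3) by iterating part (2) rather than invoking $\sigma^n(e_i)=e_i$ directly, which changes nothing of substance.
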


\begin{proof}
(1) $f+g=\sum_{i=0}^{n-1} e_i\,(f_i+g_i)$ follows by linearity and $f\,g=\sum_{i=0}^{n-1} e_i\,f_i\,g_i$ follows since the $e_i$ are idempotent and pairwise orthogonal.
By~\eqref{Equ:BasicEiProp} we get $\sigma(e_{n-1})=e_0$ and $\sigma(e_i)=e_{i+1}$ for $0\leq i<n-1$. 
With $w_0(y):=\sigma(f_{n-1})$ and $w_i(y):=\sigma(f_{i-1})$ for $1\leq i<n$ we obtain
$w(y):=\sigma(f)=e_0(y)\,w_0(y)+e_1(y)\,w_1(y)+\dots+e_{n-1}(y)\,w_{n-1}(y).$
With $h_i:=w(\alpha^{n-1-i})=w_i(\alpha^{n-1-i})\in\tilde{\EE}$ we get part~(2). Part~(3) is implied by part~(2). 
\end{proof}

\begin{example}\label{Exp:Structure1}
Take the \pisiSE-field $\dfield{\FF}{\sigma}$ over $\QQ(n)$ with $\FF=\QQ(n)(x)$ and $\sigma(x)=x+1$ and 
consider the \rpisiSE-extension $\dfield{\EE}{\sigma}$ of $\dfield{\FF}{\sigma}$ with $\EE=\FF[y]\ltr{p_1}\ltr{p_2}[s_1][s'_2]$ where $\sigma(y)=-y$, $\sigma(p_1)=2\,p_1$, $\sigma(p_2)=\frac{n-x}{x+1}p_2$, $\sigma(s_1)=s_1+\frac{-y}{x+1}$ and $\sigma(s'_2)=s'_2+\frac{1}{(x+1)^2}$. In this difference ring we get the idempotent elements $e_0=\frac{1-y}{2}$ and $e_1=\frac{1+y}{2}$. Thus
$\EE=e_0\,\EE\oplus e_1\,\EE=e_0\,\tilde{\EE}\oplus e_1\,\tilde{\EE}$ with $\tilde{\EE}=\FF\ltr{p_1}\ltr{p_2}[s_1][s'_2]$. E.g., for $f=-\frac{2 y s_1}{x+1}
-\frac{y}{x+1}
+\frac{x+3}{(x+1)^2 (x+2)}
\in\EE$ we get $f=e_0\,f_0+e_1\,f_1$ with
$f_0=f|_{y\to-1}=\frac{2 s_1}{x+1}
+\frac{x^2+4 x+5}{(x+1)^2 (x+2)}\in\tilde{\EE}$ and $f_1=f|_{y\to1}=-\frac{2 s_1}{x+1}
-\frac{x^2+2 x-1}{(x+1)^2 (x+2)}\in\tilde{\EE}$. Similarly, we have that
$\sigma(f)=e_0\,f'_0+e_1\,f'_1$ with $f'_0=\sigma(f_1)|_{y\to-1}=-\frac{2 s_1}{x+2}
-\frac{x^3+7 x^2+16 x+14}{(x+1) (x+2)^2 (x+3)}$ and $f'_1=\sigma(f_0)|_{y\to1}=\frac{2 s_1}{x+2}
+\frac{x^3+5 x^2+6 x-2}{(x+1) (x+2)^2 (x+3)}
$. Furthermore, we get $\sigma^2(f)=\sigma(e_0\,f'_0+e_1\,f'_1)=
e_0\,f''_0+e_1\,f''_1$ with $f''_0=\sigma(f'_1)|_{y\to-1}=\frac{x^4+11 x^3+45 x^2+81 x+58}{(x+1) (x+2) (x+3)^2 (x+4)}
+\frac{2 s_1}{x+3}$ and 
$f''_1=\sigma(f'_0)|_{y\to1}=-\frac{x^4+9 x^3+25 x^2+19 x-10}{(x+1) (x+2) (x+3)^2 (x+4)}
-\frac{2 s_1}{x+3}$. Note that $f''_0=\sigma^2(f_0)|_{y\to-1}$ and $f''_1=\sigma^2(f_1)|_{y\to1}$.
\end{example}

\noindent\textit{Proof of parts (3) and (4) of Theorem~\ref{Thm:Interlacing}:}
$e_s\,\tilde{\EE}$ is a ring with the multiplicative identity $e_s$ for $0\leq s<n$. Moreover, $\fct{\sigma^n}{e_s\,\tilde{\EE}}{e_s\,\tilde{\EE}}$ is a ring automorphism by part (3) of Lemma~\ref{Lemma:MapSigma}. Hence $\dfield{e_s\,\tilde{\EE}}{\sigma^n}$ is a difference ring. Since $e_s\,\FF$ is a subring of $e_s\,\tilde{\EE}$, $\dfield{e_s\,\tilde{\EE}}{\sigma^n}$ is a difference ring extension of $\dfield{e_s\,\FF}{\sigma^n}$. For $1\leq i\leq e$ we can write
$\sigma^n(t_i)=\tilde{\alpha}_i\,t_i+\tilde{\beta}_i$
with $\tilde{\alpha}_i\in\PPowers{\EE}{\FF}{(\FF^*)}\subseteq\tilde{\EE}$
and $\beta=0$ (if $t_i$ is a $P$-monomial), or $\alpha=1$ and $\tilde{\beta}_i\in\EE$ (if $t_i$ is an $S$-monomial). Now define the ring automorphism $\fct{\sigma_s}{\tilde{\EE}}{\tilde{\EE}}$ for each $0\leq s<n$ with 
\begin{equation}\label{Equ:Sigmas}
\sigma_s(t_i)=\tilde{\alpha}_i\,t_i+(\tilde{\beta}_i|_{y\to\alpha^{n-1-s}})
\end{equation}
for all $1\leq i\leq e$. It is immediate that $\dfield{\tilde{\EE}}{\sigma_s}$ is a $PS$-extension of $\dfield{\FF}{\sigma^n}$, and that for $f\in\tilde{\EE}$ we have
\begin{equation}\label{Equ:DefineSigmaS}
\sigma_s(f)={\sigma^n(f)|_{y\to\alpha^{n-1-s}}}.
\end{equation}
Furthermore, we extend $\sigma_s$ from $ \tilde{\EE}$ to $e_s\tilde{\EE}=e_s\,\EE$ with $\sigma_s(e_s)=e_s$ for $0\leq s<n$. Then $\dfield{e_s\,\tilde{\EE}}{\sigma_s}$ is also a $PS$-extension of $\dfield{e_s\,\FF}{\sigma^n}$, and for $f\in\tilde{\EE}$ we have
$$\sigma_s(e_s\,f)=e_s\sigma_s(f)\stackrel{\eqref{Equ:DefineSigmaS}}{=}e_s{\sigma^n(f)|_{y\to\alpha^{n-1-s}}}\stackrel{\eqref{Equ:NormalizeeiRep}}{=}e_s\sigma^n(f)=\sigma^n(e_s\,f).$$
Summarizing, we obtain the difference ring isomorphisms
\begin{equation}\label{Equ:FirstEquiv}
\dfield{e_s\,\EE}{\sigma^n}\simeq\dfield{e_s\,\tilde{\EE}}{\sigma^n}\simeq\dfield{e_s\,\tilde{\EE}}{\sigma_s}
\end{equation}
where the first isomorphism is given  by rewriting the elements  with~\eqref{Equ:NormalizeeiRep} and the second isomorphism is just the identity map. This completes part (3) of Theorem~\ref{Thm:Interlacing}.\\
By Lemma~\ref{Lemma:MapSigma} it follows that for all $f\in e_s\,\EE$ we have that $\sigma(f)\in e_{s+1\mmod n}\,\EE$. Moreover, for all $f\in e_{s+1\mmod n}\,\EE$ we have that $\sigma^{-1}(f)\in e_s\,\EE$. Furthermore, $\sigma(\sigma^n(f))=\sigma^n(\sigma(f))$ for all $f\in e_s\,\EE$. Hence $\sigma$ is a difference ring isomorphism between $\dfield{e_s\,\EE}{\sigma^n}$ and $\dfield{e_{s+1\mmod n}\,\EE}{\sigma^n}$. This establishes part (4) of Theorem~\ref{Thm:Interlacing}. Note that $\dfield{\tilde{\EE}}{\sigma_{s}}$ and $\dfield{\tilde{\EE}}{\sigma_{s+1\mmod n}}$ are isomorphic by $\fct{\tau_s}{\tilde{\EE}}{\tilde{\EE}}$ with 
$\tau_s(f)=\sigma(f)|_{y\to\alpha^{n-2-s}}.$
\hfill$\qed_{(3,4)}$

\begin{example}\label{Exp:Structure2}
We continue with Example~\ref{Exp:Structure1}. We obtain the $PS$-extension $\dfield{\EE}{\sigma^2}$ of $\dfield{\FF}{\sigma^2}$ with $\sigma^2(x)=x+2$, $\sigma^2(y)=y$ and
\begin{align*}
\sigma^2(p_1)&=4\,p_1,&\sigma^2(s_1)&=s_1-\frac{y}{x+1}
        +\frac{y}{x+2},\\
\sigma^2(p_2)&=\frac{(n
        -x
        ) (-1
        +n
        -x
        )}{(x+1) (x+2)}p_2,&\sigma^2(s'_2)&=s'_2-\frac{1}{(x+1)^2}
        -\frac{1}{(x+2)^2}.
\end{align*}
Furthermore, with $\tilde{\EE}=\FF\lr{p_1}\lr{p_2}[s_1][s'_2]$, we get the $PS$-extension $\dfield{\tilde{\EE}}{\sigma_0}$ of $\dfield{\FF}{\sigma^2}$ with
\begin{align*}
\sigma_0(p_1)&=4\,p_1,&\sigma_0(s_1)&=s_1+\frac{1}{(x+1) (x+2)},\\
\sigma_0(p_2)&=\frac{(n
        -x
        ) (-1
        +n
        -x
        )}{(x+1) (x+2)}p_2,&\sigma_0(s'_2)&=s'_2-\frac{2 x^2+6 x+5}{(x+1)^2 (x+2)^2}
\end{align*}
and the $PS$-extension $\dfield{\tilde{\EE}}{\sigma_1}$ of $\dfield{\FF}{\sigma^2}$ with
\begin{align*}
\sigma_1(p_1)&=4\,p_1,&\sigma_1(s_1)&=s_1-\frac{1}{(x+1) (x+2)},\\
\sigma_1(p_2)&=\frac{(n
        -x
        ) (-1
        +n
        -x
        )}{(x+1) (x+2)}p_2,&\sigma_1(s'_2)&=s'_2-\frac{2 x^2+6 x+5}{(x+1)^2 (x+2)^2}.
\end{align*}
In particular, it follows that $\sigma^2(f)=e_0\,f''_0+e_1\,f''_1$ with $f''_0=\sigma_0(f_0)$ and $f''_1=\sigma_1(f_1)$ as given in Example~\ref{Exp:Structure1}. Furthermore, with $\sigma_s(e_s):=e_s$ we get
$\dfield{e_s\tilde{\EE}}{\sigma_s}\simeq\dfield{e_s\EE}{\sigma^2}$ with $s\in\{0,1\}$. Moreover,
$\dfield{e_0\,\tilde{\EE}}{\sigma^2}$ and $\dfield{e_1\,\tilde{\EE}}{\sigma^2}$ are isomorphic by $\sigma$. In addition, $\dfield{\tilde{\EE}}{\sigma_0}$ and $\dfield{\tilde{\EE}}{\sigma_1}$ are isomorphic by $\fct{\tau_0}{\tilde{\EE}}{\tilde{\EE}}$ with 
$\tau_0(f)=\sigma(f)|_{y\to1}$.
\end{example}

\noindent In order to show part~(5) of Theorem~\ref{Thm:Interlacing} we rely on the following lemma.

\begin{lemma}\label{Lemma:Zero-Divisors}
Let $\dfield{\EE}{\sigma}$ be the $RPS$-extension of a difference field $\dfield{\FF}{\sigma}$ of the form~\eqref{Equ:SingleRPS}. Furthermore,  
let $f=\sum_{i=0}^{n-1}{e_i\,f_i}\in\EE\setminus\{0\}$ with $f_i\in\tilde{\EE}$ where $\tilde{\EE}:=\FF\lr{t_1}\dots\lr{t_e}$. Then:
\begin{enumerate}
\item $f$ is a zero-divisor iff $f_s=0$ for some $0\leq s<n$. 
\item If $f$ is not a zero-divisor, then $f$ is a unit in $Q(\tilde{\EE})[y]$.
\item $f\in\EE$ iff $f_s\in\tilde{\EE}^*=\PPowers{\tilde{\EE}}{\FF}{(\FF^*)}$ for $0\leq s<n$. 
\end{enumerate}
\end{lemma}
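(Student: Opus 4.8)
The plan is to derive all three statements from the fact that the module decomposition of Theorem~\ref{Thm:Interlacing}(1)--(2) is in fact a \emph{ring} isomorphism $\EE\simeq\tilde\EE\times\dots\times\tilde\EE$ ($n$ copies), sending $f=\sum_{i=0}^{n-1}e_i\,f_i$ to the tuple $(f_0,\dots,f_{n-1})$. Here the $e_i$ are idempotent, pairwise orthogonal and sum to $1$, so $\EE=\bigoplus_{i=0}^{n-1}e_i\,\tilde\EE$ is a direct product of the rings $e_i\,\tilde\EE$, each with multiplicative identity $e_i$; and for each $i$ the map $\tilde\EE\to e_i\,\tilde\EE$, $w\mapsto e_i\,w$, is a ring homomorphism (using $e_i^2=e_i$) which is onto by construction and one-to-one because $e_0,\dots,e_{n-1}$ form a $\tilde\EE$-basis of $\EE$ (established in the proof of Theorem~\ref{Thm:Interlacing}(1)--(2)), so $e_i\,w=0$ forces $w=0$. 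Thus $e_i\,\tilde\EE\simeq\tilde\EE$, and since $\tilde\EE=\FF\lr{t_1}\dots\lr{t_e}$ is a tower of (Laurent-)polynomial rings over a field it is an integral domain. In particular $f\ne 0$ means that not all of the $f_i$ vanish.

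With this identification, parts (1) and (3) become routine facts about finite products of rings. For (1): if $f_s=0$ then $e_s\,f=e_s\,f_s=0$ with $e_s\ne 0$, so $f$ is a zero-divisor; conversely, if every $f_i\ne 0$ and $f\,g=0$ with $g=\sum_i e_i\,g_i$, then $\sum_i e_i\,(f_i\,g_i)=0$ yields $f_i\,g_i=0$ for all $i$, whence $g_i=0$ by integrality of $\tilde\EE$ and $g=0$; so $f$ is not a zero-divisor. For (3): a tuple is a unit in a finite product of rings iff every coordinate is a unit, so $f\in\EE^{*}$ iff $f_s\in\tilde\EE^{*}$ for $0\le s<n$; and $\tilde\EE^{*}=\PPowers{\tilde\EE}{\FF}{(\FF^*)}$ by part~(3) of Lemma~\ref{Lemma:SInverseElements}.

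For part (2) I would pass to the field of fractions $Q(\tilde\EE)$, which has characteristic $0$ and contains $\alpha=\frac{\sigma(y)}{y}$; by Proposition~\ref{Prop:RExt}, $\alpha$ is a \emph{primitive} $n$-th root of unity, so $y^n-1=\prod_{j=0}^{n-1}(y-\alpha^j)$ has $n$ distinct roots in $Q(\tilde\EE)$, and the Chinese Remainder Theorem gives a ring isomorphism $Q(\tilde\EE)[y]\,(=Q(\tilde\EE)[y]/(y^n-1))\simeq Q(\tilde\EE)^n$, $g\mapsto(g(\alpha^0),\dots,g(\alpha^{n-1}))$. Under this map $f=\sum_i e_i\,f_i$ goes to the tuple whose entries are the $f_i$ (in reverse order), because $e_i(\alpha^j)=1$ exactly for $j=n-1-i$ by~\eqref{Equ:BasicEiProp} and each $f_i\in\tilde\EE$ is free of $y$. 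If $f$ is not a zero-divisor in $\EE$, then by part (1) each $f_i$ is a \emph{nonzero} element of the field $Q(\tilde\EE)$, hence a unit; so the image of $f$ is a unit of $Q(\tilde\EE)^n$ and $f$ is a unit of $Q(\tilde\EE)[y]$.

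No step poses a real difficulty. The two points that must be handled with care are: (i) verifying that $w\mapsto e_i\,w$ is \emph{injective}, so that the product decomposition is genuinely into copies of the domain $\tilde\EE$ and not of quotients of it --- this is precisely the $\tilde\EE$-basis property of $e_0,\dots,e_{n-1}$ extracted from the proof of Theorem~\ref{Thm:Interlacing}; and (ii) reading $Q(\tilde\EE)[y]$ as the residue ring $Q(\tilde\EE)[y]/(y^n-1)$ and checking that $y^n-1$ is separable over $Q(\tilde\EE)$ --- which follows at once from primitivity of $\alpha$ together with characteristic $0$ --- so that the Chinese Remainder Theorem yields a product of fields.
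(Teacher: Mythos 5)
Your proof is correct and follows essentially the paper's own argument: all three parts are read off coordinatewise from the orthogonal idempotent decomposition $\EE=e_0\,\tilde{\EE}\oplus\dots\oplus e_{n-1}\,\tilde{\EE}$ with $\tilde{\EE}$ an integral domain (parts (1) and (3) of the paper's proof are exactly your computations, and the unit group $\tilde{\EE}^*=\PPowers{\tilde{\EE}}{\FF}{(\FF^*)}$ is likewise taken from part~(3) of Lemma~\ref{Lemma:SInverseElements}). The only cosmetic difference is in part (2), where the paper simply exhibits the inverse $h=e_0\,f_0^{-1}+\dots+e_{n-1}\,f_{n-1}^{-1}\in Q(\tilde{\EE})[y]$ and checks $f\,h=1$ by orthogonality, rather than routing through the Chinese Remainder Theorem as you do.
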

\begin{proof}
(1) If $f_s=0$ for some $0\leq s<n$, take $h\in e_s\EE\setminus\{0\}$. Since $f\,h=0$ by Lemma~\ref{Lemma:MapSigma}, $f$ is a zero-divisor. Conversely, suppose that $f_s\neq0$ for all $0\leq s<n$ and let $h=\sum_{i=0}^{n-1}e_i\,h_i$ with $h_i\in\tilde{\EE}$ and $0=f\,h=e_0\,(f_0\,h_0)+\dots+e_{n-1}\,(f_{n-1}\,h_{n-1})$. Hence $f_s\,h_s=0$ and thus $h_s=0$ for all $0\leq s<n$. Therefore $h=0$, and consequently $f$ is not a zero-divisor.\\
(2) Suppose that $f$ is not a zero-divisor. Then $f_i\in\tilde{\EE}\setminus\{0\}$ for $0\leq i<n$ by the first part of the lemma. Hence $f_i$ is invertible in the field of fractions $Q(\tilde{\EE})$ and we define $h=e_0\,f_0^{-1}+\dots+e_{n-1}\,f_{n-1}^{-1}\in Q(\tilde{\EE})[y](=Q(\EE[y])$; compare~\eqref{Equ:QuotRep} below. Multiplying $f$ with $h$ in $Q(\tilde{\EE})[y]$ and using the orthogonality property we obtain 
$f\,h=e_0(f_0\,f_0^{-1})+\dots+e_{n-1}(f_{n-1}\,f_{n-1}^{-1})=e_0+\dots+e_{n-1}=1$. Thus $f$ is invertible in $Q(\tilde{\EE})[y]$.\\
(3) For $h=e_0\,h_0+\dots+e_{n-1}\,h_{n-1}$ with $h_s\in\tilde{\EE}$ for $0\leq s<n$ we have that $f\,h=e_0(f_0\,h_0)+\dots+e_{n-1}(f_{n-1}\,h_{n-1})$. Thus by $1=e_0\,1+\dots+e_{n-1}\,1$ and the unique representation of the decomposition it follows that
$f\,h=1$ iff $h_s\,f_s=1$ for all $0\leq s<n$ iff $f_s\in\EE^*$ for all $0\leq s<n$. Hence $f\in\EE$ iff $f_s\in\EE^*$ for all $0\leq s<n$. $\EE^*=\PPowers{\tilde{\EE}}{\FF}{(\FF^*)}$ follows by part~(3) of Lemma~\ref{Lemma:SInverseElements} and the fact that $\dfield{\tilde{\EE}}{\sigma_s}$ is a $PS$-extension of $\dfield{\FF}{\sigma^n}$.
\end{proof}

\noindent\textit{Proof of part (5) of Theorem~\ref{Thm:Interlacing}:}
Let $\KK:=\const{\FF}{\sigma}=\const{\EE}{\sigma}$. Let $s$ satisfy $0\leq s<n$ and assume that we can take an $e_s\,f\in\const{e_s\,\tilde{\EE}}{\sigma^n}\setminus\const{e_s\,\FF}{\sigma^n}$ with $f\in\tilde{\EE}$. This means that $f\in\tilde{\EE}\setminus\FF$. 
Now define $h_{s+k\mmod n}:=\sigma^k(f)|_{y\to\alpha^{n-1-s-k}}\in \tilde{\EE}$ for $0\leq k<n$. Note that
$\sigma^k(e_{s}\,f)=e_{s+k\mmod n}h_{s+k\mmod n}$ and that $h_i\neq0$ for all $0\leq i<n$. Further, observe that $\sigma^k$ is a difference ring isomorphism between $\dfield{e_s\,\tilde{\EE}}{\sigma^n}$ and $\dfield{e_{s+k\mmod n}\tilde{\EE}}{\sigma^n}$ by part (4) of Theorem~\ref{Thm:Interlacing}. Hence the constant property is carried over and we get $e_{s+k\mmod n}h_{s+k\mmod n}\in\const{e_{s+k\mmod n}\tilde{\EE}}{\sigma^n}$
for $0\leq k<n$. Thus for 
\begin{equation}\label{Equ:hPDecomp}
h:=e_0\,h_0+\dots+e_{n-1}\,h_{n-1}\in\tilde{\EE}[y]\setminus\FF[y]
\end{equation}
we get $\sigma^n(h)=h$. 
This construction with  $\sigma^k(e_s\,h_s)=e_{s+k\mmod n} h_{s+k\mmod n}$ for all $0\leq k<n$ and $h(\alpha^{n-1-i})=h_i\neq0$ for all $0\leq i<n$ will lead to a contradiction.\\
Namely, by part~(1) of Lemma~\ref{Lemma:Zero-Divisors} it follows that $h$ is not a zero-divisor. Even more, by part~(2) of the lemma $h$ is invertible in $\HH[y]$ where $\HH$ is the field of fractions of $\tilde{\EE}$. In particular, also $\sigma^i(h)$ is not a zero-divisor for all $0\leq i<n$.
Now define $g:=h\,\sigma(h)\dots\sigma^{n-1}(h)$. 
Since the $\sigma^i(h)$ are not zero-divisors, $g\neq0$. 
Moreover, we get $h\,\sigma(g)=g\,\sigma^n(h)=g\,h$.
Thus multiplying by $h^{-1}$ in $\HH[y]$ on both sides, we obtain $\sigma(g)=g$ in $\HH[y]$. Since $\sigma(g),g\in\tilde{\EE}[y]$, this identity holds also in $\tilde{\EE}[y]=\EE$.
Consequently, $k:=g\in\KK^*$ and
hence $h\,w=1$ with $w:=\frac1k\sigma(h)\dots\sigma^{n-1}(h)\in\EE$. Therefore $h$ is invertible not only in $\HH[y]$, but in $\EE$.
W.l.o.g.\ suppose that the \piE-monomials in~\eqref{Equ:SingleRPS} are $t_1,\dots,t_r$ and the \sigmaSE-monomials are $t_{r+1},\dots,t_e$.
By part (3) of Lemma~\ref{Lemma:Zero-Divisors}
it follows that $h_s\in\tilde{\EE}^*$, and even more that $h_s=d\,t_1^{\pi_1}\dots t_r^{\pi_r}$ with $\pi_i\in\ZZ$ for $1\leq i\leq r$ and $d\in\FF^*$. Since $h_s$ is free of the $R\Sigma$-monomials, 
it follows that $\sigma^k(h_s)\in\tilde{\EE}$. Thus
the relation $\sigma^k(e_s\,h_s)=e_{s+k\mmod n} h_{s+k\mmod n}$ implies $\sigma^k(h_s)=h_{s+k\mmod n}$ for all $0\leq k<n$. 
Thus~\eqref{Equ:hPDecomp} gives
$h=u\,t_1^{\pi_1}\dots t_r^{\pi_r}$
for some $u\in\FF[y]^*$. There is a $\lambda$ with $1\leq \lambda\leq r$ and $\pi_{\lambda}\neq0$ since $h\notin\FF[y]$. Take the maximal $\lambda$ with this property. Since $\frac{\sigma(t_i)}{t_i}$ is free of $t_{\lambda}$ for all $1\leq i\leq\lambda$, $g=q\,(t_{\lambda}^{\pi_{\lambda}})^{n}$ for some $q\in\FF[y]\lr{t_1}\dots\lr{t_{\lambda-1}}\setminus\{0\}$. Since $g=k\in\KK^*$ and $n>1$, we get $\pi_{\lambda}=0$, a contradiction. Summarizing, $\dfield{e_s\tilde{\EE}}{\sigma^n}$ is a \pisiSE-extension of $\dfield{e_s\FF}{\sigma^n}$ for all $0\leq s<n$.\\
Suppose in addition that $\dfield{\FF}{\sigma}$ is constant-stable. Then it follows that
$\const{e_s\,\EE}{\sigma^n}=\const{e_s\,\FF}{\sigma^n}=e_s\,\const{e_s\,\FF}{\sigma^n}=e_s\,\const{\FF}{\sigma}$ which completes part~(5).
$\qed_{(5)}$

\begin{example}\label{Exp:Structure3}
$\dfield{\tilde{\EE}}{\sigma_0}$ and $\dfield{\tilde{\EE}}{\sigma_1}$ from Example~\ref{Exp:Structure2} are \pisiSE-extensions of $\dfield{\FF}{\sigma^2}$.  
\end{example}

Summarizing, suppose that we are given an \rpisiSE-extension $\dfield{\EE}{\sigma}$ of $\dfield{\FF}{\sigma}$ in the form~\eqref{Equ:SingleRPS} with the idempotent elements $e_0,\dots,e_{n-1}$ given by~\eqref{Equ:SpecificEi}. Then the difference rings~\eqref{Equ:FirstEquiv}
with $0\leq s< n$ are \pisiSE-extensions of $\dfield{e_s\FF}{\sigma^n}$. In particular, the $\dfield{\tilde{\EE}}{\sigma_s}$ with $0\leq s<n$ are \pisiSE-extensions of $\dfield{\FF}{\sigma^n}$. Further, one can write $\dfield{\EE}{\sigma}$ as follows:
\begin{equation}\label{Equ:DiagramCopies}
\xymatrix@!C=0.1cm{
\EE\,=\,e_0\,\tilde{\EE}\ar@/^1pc/^{\sigma}[rr] &\oplus& e_2\,\tilde{\EE}\ar@/^1pc/^{\sigma}[rr] &\oplus& e_3\,\tilde{\EE}\ar@/^1pc/^{\sigma}[rr]&\oplus&\dots\ar@/^1pc/^{\sigma}[rr]&\oplus& e_{n-2}\,\tilde{\EE}\ar@/^1pc/^{\sigma}[rr]&
\oplus& e_{n-1}\,\tilde{\EE}\ar@/^{1.3pc}/^{\sigma}[llllllllll]}.
\end{equation}
In short, shifting $f=e_0\,f_0+\dots+e_{n-1}\,f_{n-1}\in\EE$ with $\sigma$ means that the component $e_sf_s$ is moved cyclically to $\dfield{e_{{s+1}\mmod n}\,\tilde{\EE}}{\sigma^n}$ with $\sigma(e_s\,f_s)=e_{s+1\mmod n}\sigma(f_s)|_{y\to\alpha^{n-2-s}}$. 
We note that the difference ring $\dfield{\EE}{\sigma}$ can be considered as the ``interlacing'' of the difference rings $\dfield{e_s\,\EE}{\sigma^n}=\dfield{e_s\,\tilde{\EE}}{\sigma_s}$. This idea will be made very precise in Lemma~\ref{Lemma:Homoidempotent} by embedding the decomposition~\eqref{Equ:DiagramCopies} into the ring of sequences.

\medskip

\noindent We conclude this section with the following refined characterization of \rpisiSE-extensions.

\begin{theorem}[Characterization of \rpisiSE-extensions (II)]\label{Thm:EquivSimpleCopy}
Suppose that the difference field $\dfield{\FF}{\sigma}$ is constant-stable.
Let $\dfield{\EE}{\sigma}$ be a basic $RPS$-extension of $\dfield{\FF}{\sigma}$ given in the form~\eqref{Equ:SingleRPS} where the \rE-monomial $y$ has order $n$ with $\alpha=\frac{\sigma(y)}{y}$. Let $e_0,\dots,e_{n-1}$ be the idempotent elements defined in~\eqref{Equ:SpecificEi}
Then the following statements are equivalent:
\begin{enumerate}
 \item $\dfield{\EE}{\sigma}$ is a basic \rpisiSE-extension of $\dfield{\FF}{\sigma}$ (i.e., $\const{\EE}{\sigma}=\const{\FF}{\sigma}$).
 \item $\dfield{\EE}{\sigma}$ is a simple difference ring.
 \item $\EE=e_0\,\EE\oplus\dots\oplus e_{n-1}\EE$ where $\dfield{e_s\EE}{\sigma^n}$ is a \pisiSE-extension of $\dfield{e_s\,\FF}{\sigma^n}$ for all $0\leq s<n$.
 \item $\EE=e_0\,\EE\oplus\dots\oplus e_{n-1}\,\EE$ where $\dfield{e_s\,\EE}{\sigma^n}$ is simple for all $0\leq s<n$.
 \item There is an $s\in\{0,\dots,n-1\}$ such that  $\dfield{e_s\EE}{\sigma^n}$ is a \pisiSE-extension of $\dfield{e_s\,\FF}{\sigma^n}$.
 \item There is an $s\in\{0,\dots,n-1\}$ such that  $\dfield{e_s\EE}{\sigma^n}$ is simple.
 \end{enumerate}
\end{theorem}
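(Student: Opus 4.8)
The plan is to establish the chain of implications $(1)\Leftrightarrow(2)$, $(1)\Rightarrow(3)\Rightarrow(5)$, $(3)\Rightarrow(4)\Rightarrow(6)$, and then close the loop by proving $(6)\Rightarrow(1)$ (which will also cover $(5)\Rightarrow(1)$ since a \pisiSE-extension is simple over a field by Theorem~\ref{Thm:EquivSimpleConst}). The equivalence $(1)\Leftrightarrow(2)$ is exactly Theorem~\ref{Thm:EquivSimpleConst} applied to the basic $APS$-extension $\dfield{\EE}{\sigma}$ of the constant-stable field $\dfield{\FF}{\sigma}$. The implication $(1)\Rightarrow(3)$ is part~(5) of Theorem~\ref{Thm:Interlacing}: assuming $\const{\EE}{\sigma}=\const{\FF}{\sigma}$, each $\dfield{e_s\,\EE}{\sigma^n}=\dfield{e_s\,\tilde{\EE}}{\sigma_s}$ is a \pisiSE-extension of $\dfield{e_s\,\FF}{\sigma^n}$, and we already have the direct-sum decomposition from parts~(1)--(2) of that theorem. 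The implications $(3)\Rightarrow(5)$ and $(4)\Rightarrow(6)$ are trivial (take any $s$), and $(3)\Rightarrow(4)$ follows because a \pisiSE-extension of the field $e_s\,\FF$ is simple by Theorem~\ref{Thm:EquivSimpleConst}; here one notes that $\dfield{e_s\,\FF}{\sigma^n}$ is again a field (isomorphic to $\dfield{\FF}{\sigma^n}$ via $f\mapsto e_s f$) which is constant-stable since $\dfield{\FF}{\sigma}$ is, hence $\const{e_s\,\FF}{(\sigma^n)^k}=\const{e_s\,\FF}{\sigma^n}$.

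The heart of the argument is $(6)\Rightarrow(1)$. Suppose $\dfield{e_s\,\EE}{\sigma^n}$ is simple for some fixed $s$; we must conclude $\const{\EE}{\sigma}=\const{\FF}{\sigma}$. First I would observe, using part~(4) of Theorem~\ref{Thm:Interlacing}, that $\sigma$ is a difference ring isomorphism between $\dfield{e_s\,\EE}{\sigma^n}$ and $\dfield{e_{s+1\bmod n}\,\EE}{\sigma^n}$; iterating, $\sigma^k$ maps $\dfield{e_s\,\EE}{\sigma^n}$ isomorphically onto $\dfield{e_{s+k\bmod n}\,\EE}{\sigma^n}$, so simplicity of one summand forces simplicity of all summands $\dfield{e_r\,\EE}{\sigma^n}$, $0\le r<n$. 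Now take any $c\in\const{\EE}{\sigma}$ and write $c=\sum_{i=0}^{n-1}e_i\,c_i$ with $c_i\in\tilde{\EE}$. From $\sigma(c)=c$ and Lemma~\ref{Lemma:MapSigma}(2), each $e_i\,c_i$ satisfies $\sigma^n(e_i\,c_i)=e_i\,c_i$, i.e. $e_i\,c_i\in\const{e_i\,\EE}{\sigma^n}$. Thus it suffices to show $\const{e_i\,\EE}{\sigma^n}=e_i\,\KK$ for each $i$, which by the isomorphism above reduces to showing $\const{e_s\,\EE}{\sigma^n}=e_s\,\KK$ for our distinguished $s$. Identifying $\dfield{e_s\,\EE}{\sigma^n}$ with $\dfield{\tilde{\EE}}{\sigma_s}$ via~\eqref{Equ:FirstEquiv}, this becomes: a simple $PS$-extension $\dfield{\tilde{\EE}}{\sigma_s}$ of the constant-stable field $\dfield{\FF}{\sigma^n}$ has $\const{\tilde{\EE}}{\sigma_s}=\const{\FF}{\sigma^n}=\KK$ — which is precisely the implication $(2)\Rightarrow(1)$ of Theorem~\ref{Thm:EquivSimpleConst} applied to the $PS$-extension $\dfield{\tilde{\EE}}{\sigma_s}$ over the field $\dfield{\FF}{\sigma^n}$. (Here I use that $\dfield{\FF}{\sigma^n}$ is constant-stable: $\const{\FF}{(\sigma^n)^k}=\const{\FF}{\sigma^{nk}}=\const{\FF}{\sigma}$, and its $S$-/$P$-monomials need no root-of-unity hypothesis.) Finally $\const{\FF}{\sigma^n}=\const{\FF}{\sigma}=\KK$ again by constant-stability, giving $c\in\FF$; together with $\sigma(c)=c$ this yields $c\in\const{\FF}{\sigma}$, as desired.

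The main obstacle is the bookkeeping in $(6)\Rightarrow(1)$: one must carefully transport simplicity across the isomorphisms $\sigma^k$ between the summands, and then recognize the relative constants $\const{e_s\,\EE}{\sigma^n}$ inside the $PS$-extension $\dfield{\tilde{\EE}}{\sigma_s}$ so that Theorem~\ref{Thm:EquivSimpleConst} can be invoked with base field $\dfield{\FF}{\sigma^n}$ rather than $\dfield{\FF}{\sigma}$. The one point needing a line of justification is that all the relevant base fields ($e_s\,\FF$ under $\sigma^n$, and $\FF$ under $\sigma^n$) are still fields and still constant-stable; both are immediate from constant-stability of $\dfield{\FF}{\sigma}$ and the fact that $\const{\FF}{\sigma^{nk}}=\const{\FF}{\sigma}$. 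Everything else is a direct assembly of Theorems~\ref{Thm:EquivSimpleConst} and~\ref{Thm:Interlacing} together with Lemma~\ref{Lemma:MapSigma}, so no genuinely new computation is required.
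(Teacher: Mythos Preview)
Your plan is correct and follows essentially the same route as the paper: $(1)\Leftrightarrow(2)$ via Theorem~\ref{Thm:EquivSimpleConst}, $(1)\Rightarrow(3)$ via Theorem~\ref{Thm:Interlacing}, the pairs $(3)\Leftrightarrow(4)$ and $(5)\Leftrightarrow(6)$ again via Theorem~\ref{Thm:EquivSimpleConst} over the constant-stable field $\dfield{\FF}{\sigma^n}$, and the loop is closed by transporting along the $\sigma^k$-isomorphisms of part~(4) of Theorem~\ref{Thm:Interlacing} and then reading off the per-component constants; the paper routes this as $(5)\Rightarrow(3)\Rightarrow(1)$ while you compress it into $(6)\Rightarrow(1)$, but the content is identical.

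One small slip at the very end: after obtaining $c_i\in\KK$ for all $i$ you write ``giving $c\in\FF$'', but in fact $c=\sum_i e_i c_i\in\KK[y]$, not $\FF$. You still need one more line---either invoke $\const{\FF[y]}{\sigma}=\KK$ (since $y$ is an \rE-monomial), or, as the paper does, use $\sigma(c)=c$ together with Lemma~\ref{Lemma:MapSigma}(2) to force $c_0=c_1=\dots=c_{n-1}$ and hence $c=c_0(e_0+\dots+e_{n-1})=c_0\in\KK$.
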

\begin{proof}
$(1)\Leftrightarrow(2)$ follows by Theorem~\ref{Thm:EquivSimpleConst}. Moreover, $(1)\Rightarrow(3)$ follows by Theorem~\ref{Thm:Interlacing}.\\ 
$(3)\Rightarrow(1)$: Suppose that~(3) holds and let $\KK:=\const{\FF}{\sigma}$. Now let $f=e_0\,f_0+\dots+e_{n-1}\,f_{n-1}\in\EE$ with $f_s\in\tilde{\EE}$ for $0\leq s<n$ such that $\sigma(f)=f$ holds. Then $\sigma^n(f)=f$ and thus $\sigma_s(f_s)=f_s$ for all $0\leq s<n$. 
Since $\dfield{\tilde{\EE}}{\sigma_s}$ is a \pisiSE-extension of $\dfield{\FF}{\sigma^n}$, $f_s\in\const{\FF}{\sigma^n}$. Further, since $\dfield{\FF}{\sigma}$ is constant-stable, $f_s\in\KK$ for all $0\leq s<n$. With $\sigma(f)=f$ we conclude that $c:=f_0=\dots=f_{n-1}\in\KK$ and thus we get $f=(e_0+\dots+e_{n-1})c=c\in\KK$. This shows that the statements~(1)--(3) are equivalent.\\ 
Clearly, also $\dfield{\FF}{\sigma^n}$ is constant-stable. Thus by Theorem~\ref{Thm:EquivSimpleConst} the statements (3) and (4) are equivalent and the statements (5) and (6) are equivalent. Obviously, (3) implies (5).\\ 
$(5)\Rightarrow(3)$: Suppose that (5) holds and take $j$ with $0\leq j<n$. 
By Theorem~\ref{Thm:Interlacing} $\dfield{e_s\,\EE}{\sigma^n}=\dfield{e_s\,\tilde{\EE}}{\sigma^n}$ is isomorphic to $\dfield{e_j\,\EE}{\sigma^n}=\dfield{e_j\,\tilde{\EE}}{\sigma^n}$ by the isomorphism $\sigma^{\lambda_j}$ with $\lambda_j=(j-s)\mmod n\in\{0,\dots,n-1\}$. Let $e_j\,c\in\const{e_j\,\EE}{\sigma^n}$ with $c\in\tilde{\EE}$, i.e., $\sigma^n(e_j\,c)=e_j\,c$. Then $\sigma^n(\sigma^{-\lambda_j}(e_j\,c))=\sigma^{-\lambda_j}(\sigma^n(e_j\,c))=\sigma^{-\lambda_j}(e_j\,c)$ and thus $\sigma^{-\lambda_j}(e_j\,c)\in\const{e_s\,\EE}{\sigma^n}=e_s\,\const{\FF}{\sigma^n}$; the last equality follows since $\dfield{e_s\,\EE}{\sigma^n}$ is a \pisiSE-extension of $\dfield{e_s\,\FF}{\sigma^n}$. Hence $e_j\,c\in e_j\,\const{\FF}{\sigma^n}$ and therefore $\const{e_j\,\EE}{\sigma^n}=e_j\,\const{\FF}{\sigma^n}$. Summarizing, $\dfield{e_j\,\EE}{\sigma^n}$ is a \pisiSE-extension of $\dfield{e_j\,\FF}{\sigma^n}$ for all $0\leq j<n$.
By part (1) of Theorem~\ref{Thm:Interlacing} we get $\EE=e_0\,\EE\oplus\dots\oplus e_{n-1}\,\EE$ which proves part~(3). 
\end{proof}

\section{Embedding of basic \rpisiSE-extensions into the ring of sequences}\label{Sec:Embedding}

In Subsection~\ref{Sec:EmbeddingConstruction} (Theorem~\ref{Thm:RPiSiImpliedEmbedding}) we will 
show that an \rpisiSE-extension $\dfield{\EE}{\sigma}$ of a difference field $\dfield{\FF}{\sigma}$ can be embedded into the difference ring of sequences provided that $\dfield{\FF}{\sigma}$ can be embedded in the ring of sequences. In particular, we will work out that this construction is algorithmic if certain properties hold for the ground field $\FF$. Further, in Subsection~\ref{Sec:CharacteriationEmbedding} (Theorems~\ref{Thm:EquivSimpleConstIso} and~\ref{Thm:EquivSimpleConstIso2}) we will provide illuminating characterizations of \rpisiSE-extensions based on these constructions.

\subsection{The algorithmic construction of $\KK$-embeddings}\label{Sec:EmbeddingConstruction}

Let $\KK$ be a field and consider the set of sequences
$\KK^{\NN}$ with elements $\langle a_n\rangle_{n\geq0}=\langle
a_0,a_1,a_2,\dots\rangle$, $a_i\in\KK$. With component-wise addition and multiplication we obtain a commutative ring; the field $\KK$ can be
naturally embedded by identifying $k\in\KK$ with the sequence
$\vect{k}:=\langle k,k,k,\dots\rangle$; note that $\vect{0}=\langle0,0,0,\dots\rangle$.

\noindent We follow the construction
from~\cite[Sec.~8.2]{AequalB} in order to turn the
shift
\begin{equation}\label{Equ:ShiftOp}
\Shift:{\langle a_0,a_1,a_2,\dots\rangle}\mapsto{\langle
a_1,a_2,a_3,\dots\rangle}
\end{equation}
into an automorphism: we define an equivalence relation
$\sim$ on $\KK^{\NN}$ by $\langle a_n\rangle_{n\geq0}\sim
\langle b_n\rangle_{n\geq0}$ if there exists a $d\geq 0$ such that
$a_n=b_n$ holds for all $n\geq d$. The equivalence classes form a ring
which is denoted by $\seqK$; the elements of $\seqK$ (also called germs) will be
denoted, as above, by sequence notation.
Now it is immediate that
$\fct{\Shift}{\seqK}{\seqK}$ with~\eqref{Equ:ShiftOp} forms a ring automorphism. The difference ring $\dfield{\seqK}{\Shift}$ is called the \notion{(difference) ring of sequences (over $\KK$)}.\\
Let $\dfield{\AA}{\sigma}$ be a difference ring with constant field
$\KK$. Then a difference ring homomorphism (resp.\ difference ring monomorphism) $\fct{\tau}{\AA}{\seqK}$ is called a \notion{$\KK$-homomorphism} (resp.\ \notion{$\KK$-monomorphism or $\KK$-embedding}) if for all $c\in\KK$  we have that
$\tau(c)=\vect{c}$.

\smallskip

In the following we will construct an (injective) $\KK$-homomorphism from a basic \rpisiSE-exten\-sion into $\dfield{\seqK}{\Shift}$ by generalizing the ideas of~\cite{Schneider:10c}. We start with the following simple observation.
If $\fct{\tau}{\AA}{\seqK}$ is a $\KK$-homomorphism, there is
a map $\fct{\ev}{\AA\times\NN}{\KK}$ with
\begin{equation}\label{Equ:EvDef}
\tau(f)=\langle\ev(f,0),\ev(f,1),\dots\rangle
\end{equation}
for all $f\in\AA$ which has the following properties. For all
$c\in\KK$ there is a $\delta\geq0$ with
\begin{align}\label{Ev:Const}
\forall i\geq\delta:&\;\ev(c,i)=c;\\
\intertext{for all $f,g\in\AA$ there is a $\delta\geq 0$ with}
\forall i\geq\delta:&\;\ev(f\,g,i)=\ev(f,i)\,\ev(g,i),\label{Ev:Mult}\\
\forall i\geq\delta:&\;\ev(f+g,i)=\ev(f,i)+\ev(g,i)\label{Ev:Add};\\
\intertext{for all $f\in\AA$ and $j\in\ZZ$ there is a $\delta\geq 0$ with}
\label{Ev:Shift}
\forall i\geq\delta:&\;\ev(\sigma^j(f),i)=\ev(f,i+j);\\
\intertext{and for all $f\in\AA^*$ there is a $\delta\geq 0$ with}
\label{Ev:Zero}
\forall i\geq\delta:&\;\ev(f,i)\neq0.
\end{align}
The last property follows since $f^{-1}\in\AA^*$ and thus $\tau(f)\tau(f^{-1})=\tau(f\,f^{-1})=\tau(1)=\vect{1}$.\\
Conversely, if there is a function $\fct{\ev}{\AA\times\NN}{\KK}$
with~\eqref{Ev:Const},~\eqref{Ev:Mult}, \eqref{Ev:Add}
and~\eqref{Ev:Shift}, then the function $\fct{\tau}{\AA}{\seqK}$ defined
by~\eqref{Equ:EvDef} forms a $\KK$-homomorphism.

Subsequently, we assume that a $\KK$-homomorphism/$\KK$-embedding is always defined by such a function $\ev$; $\ev$ is also called a \notion{defining function of $\tau$}. To take into account the constructive aspects, we
introduce the following functions for $\ev$; compare~\cite{Schneider:10c}.

\begin{definition}
Let $\dfield{\AA}{\sigma}$ be a difference ring and
let $\fct{\tau}{\AA}{\seqK}$ be a $\KK$-homomor\-phism given
by a defining function $\ev$ as introduced in~\eqref{Equ:EvDef}. $\ev$ is called \notion{operation-bounded} by
$\fct{L}{\AA}{\NN}$ if for all $f\in\AA$ and $j\in\ZZ$ with
$\delta=\delta(f,j):=L(f)+\max(0,-j)$ we have~\eqref{Ev:Shift} and
for all $f,g\in\AA$ with $\delta=\delta(f,g):=\max(L(f),L(g))$ we
have~\eqref{Ev:Mult} and~\eqref{Ev:Add}; moreover, we require that for all $f\in\AA$ and all $j\in\ZZ$ we have $L(\sigma^j(f))\leq L(f)+\max(0,-j)$. Such a function is also
called an \notion{$o$-function} for $\ev$.\\ 
Let $M\subseteq\AA\setminus\{0\}$.
$\ev$ is called \notion{zero-bounded} by
$\fct{Z}{M}{\NN}$ if for all $f\in M$ and all $i\geq Z(f)$ we
have $\ev(f,i)\neq0$; such a function is also called a
\notion{$z$-function} of $M$ for $\ev$.
\end{definition}

\noindent Note: a $z$-function of $\AA^*$ (or of any subgroup $G$ of $\AA^*$) for a defining function $\ev$ of $\tau$ always exists by~\eqref{Ev:Zero}. Moreover,
if there is an $o$-function for $\ev$, this implies that
for all $f\in\AA$ there is a $\delta\in\NN$ such that for all $j\in\ZZ$ we have that
$$\forall i\geq\delta+\max(0,-j):\;\ev(\sigma^j(f),i)=\ev(f,i+j).$$
This bound on the shift is a stronger assumption than~\eqref{Ev:Shift}. Subsequently, we call a $\KK$-homomorphism \notion{$\tau$ shift-bounded} if the underlying defining function $\ev$ has an $o$-function.

\medskip

\noindent For concrete applications we will assume that we are given a $\KK$-embedding for the ground field $\dfield{\FF}{\sigma}$. In this article we will restrict to the following two examples.

\begin{example}\label{Exp:Embedding1}
Take the \pisiSE-field $\dfield{\KK(x)}{\sigma}$ over $\KK$ with $\sigma(x)=x+1$ from part~(1) of Example~\ref{Exp:qRat}. We obtain a $\KK$-homomorphism $\fct{\tau}{\KK(x)}{\seqK}$ by introducing the defining function 
\begin{equation}\label{Equ:EvalRat}
\ev(\tfrac{p}{q},k)=\begin{cases}
0&\text{if }q(k)=0\\
\frac{p(k)}{q(k)}&\text{if }q(k)\neq0
\end{cases}
\end{equation}
where $p,q\in\KK[x]$, $q\neq0$ and $p,q$ are co-prime;
here $p(k),q(k)$ is the usual evaluation of polynomials at $k\in\NN$. 
For the $o$-function $L(f)$ we take the minimal non-negative integer $l$ with $q(k+l)\neq0$ for all $k\in\NN$, and as $z$-function we take $Z(f)=L(p\,q)$. By construction $\tau$ is shift-bounded. In addition, since $p(x)$ and $q(x)$ have only finitely
many roots, $\tau(\frac{p}{q})=\vect{0}$ iff
$\frac{p}{q}=0$. Hence $\tau$ is injective.  
Summing up, we have constructed a $\KK$-embedding $\fct{\tau}{\KK(x)}{\seqK}$. In particular, up to the renaming of the elements of $\KK(x)$, the difference field $\dfield{\KK(x)}{\sigma}$ is contained in $\dfield{\seqK}{\Shift}$ as the sub-difference ring $\dfield{\tau(\KK(x))}{\Shift}$. $\dfield{\tau(\KK(x))}{\Shift}$ is also called the \notion{difference field of rational sequences}.
\end{example}

\begin{example}\label{Exp:Embedding2}
Take the \pisiSE-field $\dfield{\FF}{\sigma}$ over $\KK$ from part~(2) of Example~\ref{Exp:qRat}.
Then for $f=\frac{p}{q}$ with $p,q\in\KK[x,x_1,\dots,x_v]$, $q\neq0$ and $p,q$ being co-prime we define
\begin{equation}\label{Equ:EvalMixed}
\ev(f,k)=\begin{cases}
0&\text{if }q(k,q_1^k,\dots,q_v^k)=0\\
\frac{p(k,q_1^k,\dots,q_v^k)}{q(k,q_1^k,\dots,q_v^k)}&\text{if }q(k,q_1^k,\dots,q_v^k)\neq0.
\end{cases}
\end{equation}
There is an algorithm~\cite[Sec.~3.7]{Bauer:99} that determines a
$\delta\in\NN$ with $q(k,q_1^k,\dots,q_v^k)\neq0$ for all $k\geq\delta$. 
Hence for the $o$-function $L(f)$ we define (and compute) the value $\delta\in\NN$ which is minimal such that $q(k,q_1^k,\dots,q_v^k)\neq0$ holds for all $k\geq\delta$; as $z$-function we take $Z(f)=L(p\,q)$. 
Since $p(k,q_1^{k},\dots,q_v^{k})$ and $q(k,q_1^{k},\dots,q_v^{k})$ are non-zero for all $k\geq Z(f)$, $\tau(\frac{p}{q})=\vect{0}$ iff
$\frac{p}{q}=0$. Hence $\tau$ is injective and equipped with computable $o$- and $z$-functions.
Summarizing, we have constructed a $\KK$-embedding $\fct{\tau}{\FF}{\seqK}$. The difference field $\dfield{\tau(\FF)}{\Shift}$ is called the difference field of $q$--mixed rational sequences. Similarly, if we perform this construction without $x$ and with $v=1$ where $q=q_1$, we obtain the \notion{difference field of $q$--rational sequences} $\dfield{\tau(\KK(x_1))}{\Shift}$.
\end{example}

\noindent As base case we will start with our ground fields $\dfield{\FF}{\sigma}$ from above (or more generally with a difference ring $\dfield{\AA}{\sigma}$) with constant field $\KK$ together with a $\KK$-homomorphism $\tau$. Then our goal is to construct a $\KK$-homo\-morphism for a given basic $APS$-extension. Here we will treat each $APS$-monomial separately by applying the following lemma.

\begin{lemma}\label{Lemma:LiftEvToPoly}
Let $\dfield{\AA}{\sigma}$ be a difference ring with constant field $\KK$ and let $G$ be a subgroup of $\AA^*$.
Let $\dfield{\AA\lr{t}}{\sigma}$ be a $G$-basic $APS$-extension of $\dfield{\AA}{\sigma}$ with $\sigma(t)=\alpha\,t+\beta$ ($\alpha=1$ and $\beta\in\AA$ or $\alpha\in G$ and $\beta=0$).
Let
$\fct{\tau}{\AA}{S(\KK)}$ be a $\KK$-homomorphism with a defining function $\ev$ as given in~\eqref{Equ:EvDef} and with a $z$-function of $Z$ of $G$ for $\ev$.
Then the following holds.
\begin{enumerate}
\item Take $c\in\KK$ and $r\in\NN$. If $\beta=0$, we suppose that $c\neq0$ and $r>Z(\alpha)$; if $t^{\lambda}=1$ for some $\lambda>1$, we assume in addition that $c^{\lambda}=1$ holds.
Then one gets a $\KK$-homomorphism  $\fct{\tau'}{\AA\lr{t}}{S(\KK)}$ with a defining function $\ev'$ given by
\begin{align}\label{Equ:DefineEvExt}
\ev'(\sum_{i}f_i\,t^i,k)&=\sum_{i}\ev(f_i,k)\ev'(t,k)^i\quad\forall k\in\NN\\[-0.3cm]
\intertext{with\vspace*{-0.2cm}} 
\label{Equ:SumProdHom}
\ev'(t,k)&=\begin{cases}
\displaystyle c\,\tprod_{i=r}^k\ev(\alpha,i-1)&\text{ if $\sigma(t)=\alpha\,t$}\\
\displaystyle\tsum_{i=r}^k\ev(\beta,i-1)+c&\text{ if  $\sigma(t)=t+\beta$.}
\end{cases}
\end{align}
\item Any other difference ring homomorphism $\fct{\tau''}{\AA\lr{t}}{S(\KK)}$ with $\tau''|_{\AA}=\tau$ has a defining function $\ev''$ of the form~\eqref{Equ:SumProdHom} up to the choice of $r\in\NN$ and $c\in\KK$ with the requirements stated in part (1). 
\item If there is an $o$-function $L$ of $\ev$, 
take any $r\in\NN$ in~\eqref{Equ:SumProdHom} with
\begin{equation}\label{Equ:ChooseLowerBound}
r>\begin{cases}
\max(L(\alpha),Z(\alpha))&\text{ if } \sigma(t)=\alpha\,t\\
L(\beta)&\text{ if } \sigma(t)=t+\beta.
\end{cases}
\end{equation}
For this choice there is an $o$-function $L'$ for $\ev'$ with $L'|_{\AA}=L$ and a $z$-function $Z'$ of $\PPowers{\AA\lr{t}}{\AA}{G}$ for $\ev'$ with $Z'|_{G}=Z$. 
\item If the defining function $\ev$ of $\tau$, and the functions $L$ and $Z$ from part~(3) are computable, then a defining function  $\ev'$ for $\tau'$ and a $o$-function $L'$ and a $z$-function $Z'$ for $\ev'$ as in part (3) are computable. 
\end{enumerate}
\end{lemma}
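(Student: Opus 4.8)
The plan is to prove the four parts in order, each by checking the defining properties \eqref{Ev:Const}--\eqref{Ev:Zero} of a $\KK$-homomorphism, together with the axioms of an $o$- and a $z$-function, directly against the explicit formulas \eqref{Equ:DefineEvExt}--\eqref{Equ:SumProdHom}; parts (1), (2), (4) are routine and the only delicate point is the propagation of bounds in part (3). For part (1) I would first note that \eqref{Equ:DefineEvExt} restricts on $\AA$ (the $i=0$ term) to $\ev'|_{\AA}=\ev$, so \eqref{Ev:Const} is inherited. Then I would check well-definedness of the induced map $\fct{\tau'}{\AA\lr t}{\seqK}$: when $t$ is a $P$-monomial, $c\neq0$ and $r>Z(\alpha)$ force every factor $\ev(\alpha,i-1)$ with $r\le i\le k$ to be non-zero, so $\ev'(t,k)\neq0$ for $k\ge r$ and the convention $\ev'(t^{-1},k):=\ev'(t,k)^{-1}$ is legitimate; when $t$ is an $A$-monomial one needs $\tau'(t^{\lambda}-1)=\vect0$, i.e.\ $\ev'(t,k)^{\lambda}=1$ for all large $k$, which holds because $\alpha\in\KK^{*}$ makes $\ev(\alpha,i)=\alpha$ eventually, so $\ev'(t,k)=c\,\alpha^{k-r+1}$ eventually, and $c^{\lambda}=\alpha^{\lambda}=1$. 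The core is a telescoping identity on the generator: from \eqref{Equ:SumProdHom} one gets $\ev'(\sigma(t),k)=\ev(\alpha,k)\ev'(t,k)=\ev'(t,k+1)$ in the $P$-case and $\ev'(\sigma(t),k)=\ev'(t,k)+\ev(\beta,k)=\ev'(t,k+1)$ in the $S$-case for $k$ large, and likewise for $\sigma^{-1}$. Since $\sigma$ is a ring automorphism and every element of $\AA\lr t$ is a polynomial in $t$ over $\AA$, properties \eqref{Ev:Mult}, \eqref{Ev:Add}, \eqref{Ev:Shift} for $\ev'$ then reduce, via $\ev'(t^{i}t^{j},k)=\ev'(t,k)^{i+j}$ (using $\ev'(t,k)^{\lambda}=1$ eventually to reduce exponents in the $A$-case) and the generator shift identity, to the corresponding properties of $\ev$ on the finitely many coefficients involved; a maximum of finitely many thresholds concludes it, and \eqref{Ev:Zero} is automatic.

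For part (2), any difference ring homomorphism $\fct{\tau''}{\AA\lr t}{\seqK}$ with $\tau''|_{\AA}=\tau$ and defining function $\ev''$ must, by applying $\tau''$ to $\sigma(t)=\alpha t+\beta$ and comparing coordinates, satisfy the first-order recursion $\ev''(t,k+1)=\ev(\alpha,k)\ev''(t,k)$ (resp.\ $\ev''(t,k+1)=\ev''(t,k)+\ev(\beta,k)$) for all $k\ge N$, some $N$; this recursion is solved for $k\ge N$ exactly by \eqref{Equ:SumProdHom} with $r:=N+1$ and $c:=\ev''(t,N)$, up to a tail-equivalent change below $N$ that does not affect the germ. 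When $\beta=0$, $\tau''(t)$ is a unit in $\seqK$, forcing $\ev''(t,k)\neq0$ eventually and hence $c\neq0$ after enlarging $r$ past $Z(\alpha)$; when $t^{\lambda}=1$, $\ev''(t,k)^{\lambda}=1$ eventually forces $c^{\lambda}=1$ after enlarging $r$ past $L(\alpha)$. This is precisely the claimed form. Part (4) then follows by inspection: $\ev'$ is the finite expression \eqref{Equ:DefineEvExt}--\eqref{Equ:SumProdHom} in values of the computable $\ev$; a valid $r$ for \eqref{Equ:ChooseLowerBound} is computed from the computable $L(\alpha),Z(\alpha),L(\beta)$; and the $L'$, $Z'$ of part (3) are assembled from $L$, $Z$, $r$ by finite maxima.

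For part (3), given an $o$-function $L$ for $\ev$ and $r$ chosen as in \eqref{Equ:ChooseLowerBound}, I would set $L'|_{\AA}:=L$, $L'(\sum_i f_i t^i):=\max(r,\max_i L(f_i))$, and $Z'(u\,t^{m}):=\max(Z(u),r)$ on $\PPowers{\AA\lr t}{\AA}{G}$. That $Z'$ works is immediate from $\ev'(u\,t^{m},k)=\ev(u,k)\ev'(t,k)^{m}$ and the non-vanishing already established. The substance is showing $L'$ is an $o$-function for $\ev'$: the product and sum bounds hold because, after expanding in the $t$-basis, all coefficient identities of $\ev$ are available once $k\ge\max(L'(f),L'(g))$, the $A$-case exponent reduction being legitimate for $k\ge r>L(\alpha)$; and the shift bound, the subtle one, comes from expanding $\sigma^{j}(\sum_i f_i t^i)=\sum_i\sigma^{j}(f_i)\,\sigma^{j}(t)^{i}$, where $\sigma^{j}(t)$ is $t$ times a product of $|j|$ shifts of $\alpha^{\pm1}$ (in the $P$-case) or $t$ plus a sum of $|j|$ signed shifts of $\beta$ (in the $S$-case), re-expanding binomially, bounding the $L$-values of the resulting coefficients via $L(\sigma^{m}(\cdot))\le L(\cdot)+\max(0,-m)$ and the choice of $r$, and combining with $\ev'(\sigma^{j}(t),k)=\ev'(t,k+j)$. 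The hard part will be exactly this bookkeeping: one must verify that the single threshold $r$ from \eqref{Equ:ChooseLowerBound} simultaneously (i) makes $\ev'(t,k)$ non-vanishing, (ii) makes $\ev'(t,k)^{\lambda}=1$ in the $A$-case, and (iii) dominates the $L$-cost of every shifted copy of $\alpha$ or $\beta$ produced by applying $\sigma^{j}$, so that $L'(\sigma^{j}(f))\le L'(f)+\max(0,-j)$ really holds; everything else is a straightforward dictionary between the sequence picture and the germ picture.
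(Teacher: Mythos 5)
Your proposal is correct and essentially reproduces the paper's own proof: the same explicit construction and generator telescoping identity $\ev'(\sigma(t),k)=\ev'(t,k+1)$ for part (1) (with $c^{\lambda}=\alpha^{\lambda}=1$ handling the $A$-case), the same uniqueness argument for part (2) (you solve the coordinate recursion directly, while the paper divides by the reference solution $\tau'(t)$ and uses that the $\Shift$-constants of $\seqK$ are exactly $\KK$ -- the same idea), and the same $L'$ and threshold bookkeeping in part (3). The only blemish is your choice $Z'(u\,t^{m}):=\max(Z(u),r)$, which does not literally satisfy the required condition $Z'|_{G}=Z$; since $\ev'(t,k)\neq 0$ for every $k\in\NN$ (the product in~\eqref{Equ:SumProdHom} is empty for $k<r$, and $r>Z(\alpha)$ covers $k\geq r$), you can simply take $Z'(u\,t^{m}):=Z(u)$, as the paper does.
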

\begin{proof}
{\bf(1)}:~let $\tau$ be
defined by~\eqref{Equ:EvDef}. First suppose that $\alpha=1$. In this case, take any $r\in\NN$ and $c\in\KK$ and
extend $\ev$ from $\AA$ to $\AA\lr{t}$ by~\eqref{Equ:DefineEvExt} and~\eqref{Equ:SumProdHom}.
Let $j\in\ZZ$. Then we can choose a $\delta\in\NN$ with $\delta\geq r$ such that $\ev(\sigma^j(\beta),k)=\ev(\beta,k+j)$ holds for all $k\geq\delta$. Hence for all $k\geq\delta$ we have that
\begin{equation}\label{Equ:ShiftTEvSum}
\begin{split}
\ev'&(\sigma^j(t),k)=\ev'(t+\ssum{i=0}{j-1}\sigma^i(\beta),k)=\ev'(t,k)+
\ssum{i=0}{j-1}\ev(\sigma^i(\beta),k)\\
&=\ssum{i=r}k\ev(\beta,i-1)+\ssum{i=0}{j-1}\ev(\beta,k+i)+c=\ssum{i=r}{k+j}\ev(\beta,i-1)+c=\ev'(t,k+j).
\end{split}
\end{equation}
Now suppose that $\beta=0$. Thus we take $r\in\NN$ with $r>Z(\alpha)$. This means that $\ev(\alpha,k-1)\neq0$ holds for all $k\geq r$. 
Further, we take $c\in\KK^*$; in particular, if $t^{\lambda}=1$ for some $\lambda>1$, then we assume in addition that $c^{\lambda}=1$.
Now
extend $\ev$ from $\AA$ to $\AA\lr{t}$ by~\eqref{Equ:DefineEvExt} and~\eqref{Equ:SumProdHom}.
By definition, $\ev'(t,k)\neq0$ for all $k\in\NN$. 
Let $j\in\ZZ$. Then there is a $\delta\in\NN$ with $\delta\geq r$ such that
$\ev(\sigma^j(\alpha),k)=\ev(\alpha,k+j)$ holds 
for all $k\geq\delta$. Hence completely analogously to the sum case it follows that for all $k\geq\delta$ we have that \begin{equation}\label{Equ:ShiftTEvProd}
\ev'(\sigma^j(t),k)=\ev'(t,k+j)
\end{equation}
for all $k\geq\delta$.
Further, if we choose $\delta\geq r$ big enough (depending on the
$f_i$ and being larger than the chosen versions for~\eqref{Equ:ShiftTEvSum} or~\eqref{Equ:ShiftTEvProd}), we get~\eqref{Ev:Shift} for $f=\sum_{i}f_it^i$. Summarizing, property~\eqref{Ev:Shift} is established for an $APS$-monomial.\\
Now let $f=\sum_{i=a}^mf_it^i$, $g=\sum_{i=b}^ng_it^i\in\AA\lr{t}$ be arbitrary but fixed. Then we can choose a $\delta\geq0$ big enough
such that for all $k\geq\delta$ we have that
\begin{equation}\label{Equ:SumHom}
\begin{split}
\ev'&(f+g,k)=\ev'(\tsum_i (f_i+g_i)\,t^i,k)=\tsum_i((\ev(f_i,k)+\ev(g_i,k))\ev'(t,k)^i\\
&=\tsum_i\ev(f_i,k)\ev'(t,k)^i+\tsum_i\ev(g_i,k)\ev'(t,k)^i=\ev'(f,k)+\ev'(g,k).
\end{split}
\end{equation}
If $t$ is a $PS$-monomial, it follows for a $\delta\geq0$ chosen large enough that
\begin{equation}\label{Equ:ProdHom}
\begin{split}
\ev'(f\,g,k)&=
\ev'(\tsum_{j=a+b}^{m+n}t^j\tsum_{i}f_ig_{j-i},k)=\tsum_{j=a+b}^{m+n}\ev'(t,k)^j\tsum_{i}\ev(f_ig_{j-i},k)\\
&=
\Big(\tsum_{i=a}^{m}\ev(f_i,k)\ev'(t,k)^i\Big)\Big(\tsum_{j=b}^{n}\ev(g_j,k)\ev'(t,k)^j\Big)=
\ev'(f,k)\ev'(g,k)
\end{split}
\end{equation}
holds for all $k\geq\delta$; outside of the support of $f,g$ the $f_i,g_i$ are zero. In addition, if
$t$ is an $A$-monomial of order $\lambda$, we have $c^{\lambda}=\alpha^{\lambda}=1$ which implies that
$\ev'(t,k)^{\lambda}=c^{\lambda}\prod_{i=r}^k\ev(\alpha^{\lambda},i-1)=1$.
With this property, we can choose a $\delta\geq0$ sufficiently large and can again verify~\eqref{Equ:ProdHom} with $m=n=\lambda-1$. This establishes~\eqref{Ev:Add} and~\eqref{Ev:Mult}.
Moreover,~\eqref{Ev:Const}
holds, since $\ev'|_{\AA\times\NN}=\ev$. Summarizing,
if we define $\fct{\tau'}{\AA\lr{t}}{\seqK}$ by $\tau'(f)=\langle\ev'(f,i)\rangle_{i\geq0}$
for all $f\in\AA\lr{t}$ then $\tau'$ forms a $\KK$-homomorphism.\\
{\bf(2)}: This construction is unique up to $c$ and $r$ in~\eqref{Equ:SumProdHom}. Namely, take any other $\tau_2$ with $\tau_2(f)=\tau(f)$ for $f\in\AA$ and define $T:=\tau_2(t)$; if $\beta=0$, then we require in addition that $T$ is non-zero from a certain point on. Then $S(T)=S(\tau_2(t))=\tau_2(\sigma(t))=\tau_2(\alpha\,t+\beta)=\tau_2(\alpha)T+\tau(\beta)$.
Note that $\tau_2(1)=\vect{1}$ and $\tau_2(0)=\vect{0}$. Hence, if $\alpha=1$, then $S(T)=T+\tau(\beta)$, and therefore $S(T-\tau(t))=T-\tau(t)$, i.e., $T=\tau(t)+\vect{d}$ for some constant $d\in\KK$. Similarly, if $\beta=0$, then $S(T)=\tau(\alpha)T$. Since $\tau'(t)$ is non-zero from the point $r$ on, one can take the inverse $1/\tau'(t)\in\seqK$ and gets $S(\frac{T}{\tau'(t)})=\frac{T}{\tau'(t)}$. Hence $\frac{T}{\tau'(t)}=\vect{d}$ with $d\in\KK$, i.e., $T=\vect{d}\tau'(t)$. Since $T$ is non-zero for almost all entries, $d\neq0$. This shows that $\tau_2$ can be defined by~\eqref{Equ:SumProdHom} up to a constant $d\in\KK$; $d\neq0$ if $\beta=0$. In addition, if $t$ is an $A$-extension of order $\lambda$, it follows that $\vect{1}=\tau_2(1)=\tau_2(t^{\lambda})=\tau_2(t)^{\lambda}=T^{\lambda}=\vect{d}^{\lambda}\tau'(t)^{\lambda}=\vect{d}^{\lambda}\tau'(t^{\lambda})=\vect{d}^{\lambda}\tau'(1)=\vect{d}^{\lambda}$ which implies $d^{\lambda}=1$.
Note: a different $r$ for an $APS$-monomial can be compensated by an appropriate choice of $d$.\\
\noindent{\bf(3)} Now suppose that we are given an $o$-function $L$ for $\ev$. Take $r\in\NN$ for~\eqref{Equ:SumProdHom} such that~\eqref{Equ:ChooseLowerBound} holds. Then for all $k\geq r+\max(-j,0)$ we have that~\eqref{Equ:ShiftTEvSum} and~\eqref{Equ:ShiftTEvProd}, respectively. 
Define $\fct{L'}{\AA\lr{t}}{\NN}$ by
$$L'(f)=\begin{cases}
L(f)&\text{if }f\in\AA,\\
\max(r,L(f_a),\dots,L(f_m))&\text{if
$f=\sum_{i=a}^mf_it^i\notin\AA\lr{t}\setminus\AA$}.
\end{cases}$$
Then one can check that $L'$ is an o-function for $\tau'$ which extends $L$. Namely, let $f=\sum_{i=a}^mf_it^i$ and $g=\sum_{i=b}^ng_it^i\in\AA\lr{t}$. Then for all 
$$k\geq\max(r,L(f_a),\dots,L(f_m),L(g_b),\dots,L(g_n))=\max(L'(f),L'(g))$$ we have that~\eqref{Equ:SumHom} and~\eqref{Equ:ProdHom}. For $f\in\PPowers{\AA\lr{t}}{\AA}{G}$ we get $f=t^m\,h$ with $m\in\ZZ$ and $h\in G$ ($m=0$ if $t$ is an $AS$-monomial). We define $\fct{Z'}{\PPowers{\AA\lr{t}}{\AA}{G}}{\NN}$ by $Z'(f)=Z(h)$. If $t$ is an $RS$-monomial, $Z'$ is clearly a $z$-function of $\PPowers{\AA\lr{t}}{\AA}{G}=G$. Otherwise, if $t$ is a $P$-monomial, 
$\ev(t,k)\neq0$ for all $k\geq0$ and $\ev(h,k)\neq0$ for all $k\geq Z(h)$. Thus $\ev(f,k)=\ev(t^m\,h,k)\neq0$ for all $k\geq Z(h)=Z'(f)$. Hence $Z'$ is a $z$-function of $\PPowers{\AA\lr{t}}{\AA}{G}$ extending $Z$.\\
{\bf(4)}~In particular, if $L$ and $Z$ are computable, then also $L'$ and $Z'$ are computable. Moreover, if $\ev$ is computable, $\ev'$ with~\eqref{Equ:DefineEvExt} is computable.
\end{proof}

\noindent The iterative application of the (algorithmic) construction given in Lemma~\ref{Lemma:LiftEvToPoly} yields

\begin{proposition}\label{Prop:HomomorphsimExistence}
Let $\dfield{\AA}{\sigma}$ be a difference ring with constant field $\KK$, and let $G$ be a subgroup of $\AA^*$.
Let $\dfield{\EE}{\sigma}$ be a $G$-basic $APS$-extension of $\dfield{\AA}{\sigma}$, and let $\fct{\tau}{\AA}{\seqK}$ be a (shift-bounded) $\KK$-homomorphism. Then:
\begin{enumerate}
\item There exists a (shift-bounded)
$\KK$-homomorphism $\fct{\tau'}{\EE}{\seqK}$ with $\tau'|_{\AA}=\tau$. 
\item Let $\ev$ be a defining function of $\tau$ with a $z$-function $Z$ of $G$, and suppose that there is an $o$-function $L$ for $\ev$. Then there is a defining function $\ev'$ for $\tau'$ with an $o$-function $L'$ and a $z$-function $Z'$ of $\PPowers{\EE}{\AA}{G}$ where
$\ev'|_{\AA\times\NN}=\ev$, $L'|_{\AA}=L$, $Z'|_{G}=Z$.
\item If $\ev$, $L$ and $Z$ are computable, such functions $\ev'$, $L'$ and $Z'$ are computable. 
\end{enumerate}
\end{proposition}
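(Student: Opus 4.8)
The plan is to establish all three parts by one induction on the number $e$ of generators in the tower $\EE=\AA\lr{t_1}\dots\lr{t_e}$, extending the homomorphism one step at a time and invoking Lemma~\ref{Lemma:LiftEvToPoly} at each step. First I would normalize: we may assume $\const{\AA}{\sigma}^*\subseteq G$, because a $z$-function of $G$ for $\ev$ extends to a $z$-function of the subgroup $G\cdot\const{\AA}{\sigma}^*$ of $\AA^*$ (elements of $\const{\AA}{\sigma}^*$ are eventually constant and nonzero by~\eqref{Ev:Const}, and products are handled by~\eqref{Ev:Mult}), and enlarging $G$ this way weakens neither the hypotheses nor the conclusions. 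For $0\leq i\leq e$ put $\AA_i=\AA\lr{t_1}\dots\lr{t_i}$ (so $\AA_0=\AA$ and $\AA_e=\EE$) and $G_i=\PPowers{\AA_i}{\AA}{G}$, a subgroup of $\AA_i^*$ by the remarks following Definition~\ref{Def:BasicAPSExt}. The induction hypothesis at stage $i$ is that there is a $\KK$-homomorphism $\fct{\tau_i}{\AA_i}{\seqK}$ with $\tau_i|_\AA=\tau$ and, under the additional assumptions of parts~(2) and~(3), a defining function $\ev_i$ for $\tau_i$ restricting to $\ev_{i-1}$, an $o$-function $L_i$ for $\ev_i$ restricting to $L_{i-1}$, and a $z$-function $Z_i$ of $G_i$ for $\ev_i$ restricting to $Z_{i-1}$ (all computable when $\ev$, $L$, $Z$ are). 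The base case $i=0$ is exactly the hypothesis of the proposition with $\tau_0=\tau$, $\ev_0=\ev$, $L_0=L$, $Z_0=Z$, recalling that a $z$-function of the subgroup $G_0=G$ for $\ev$ always exists by~\eqref{Ev:Zero}.

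For the inductive step I would first record the structural observation that $\dfield{\AA_i}{\sigma}$ is a $G_{i-1}$-basic $APS$-extension of $\dfield{\AA_{i-1}}{\sigma}$: if $t_i$ is an $A$-monomial then $\frac{\sigma(t_i)}{t_i}\in\const{\AA}{\sigma}^*\subseteq G_{i-1}$, if it is a $P$-monomial then $\frac{\sigma(t_i)}{t_i}\in\PPowers{\AA_{i-1}}{\AA}{G}=G_{i-1}$ by the defining conditions of a $G$-basic extension, and an $S$-monomial imposes no constraint. I would then apply Lemma~\ref{Lemma:LiftEvToPoly} with $(\dfield{\AA}{\sigma},G,\AA\lr{t})$ replaced by $(\dfield{\AA_{i-1}}{\sigma},G_{i-1},\AA_i)$: part~(1) produces a $\KK$-homomorphism $\fct{\tau_i}{\AA_i}{\seqK}$ extending $\tau_{i-1}$ (for an $A$-monomial of order $\lambda$ one takes the constant $c=1\in\KK$, which satisfies $c^\lambda=1$); part~(3), applicable because $L_{i-1}$ is an $o$-function for $\ev_{i-1}$, produces a defining function $\ev_i$ for $\tau_i$ with an $o$-function $L_i$ extending $L_{i-1}$ and a $z$-function of $\PPowers{\AA_i}{\AA_{i-1}}{G_{i-1}}$ for $\ev_i$ extending $Z_{i-1}$; and part~(4) gives computability of $\ev_i$, $L_i$, $Z_i$ from that of $\ev_{i-1}$, $L_{i-1}$, $Z_{i-1}$. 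The identity worth spelling out is that this output $z$-function is exactly the one the next stage needs, namely $\PPowers{\AA_i}{\AA_{i-1}}{G_{i-1}}=\PPowers{\AA_i}{\AA}{G}=G_i$: an element of the left-hand group has the form $u\,t_i^{\,l_i}$ with $u\in G_{i-1}=\{v\,t_1^{\,l_1}\dots t_{i-1}^{\,l_{i-1}}\mid v\in G,\ l_j=0\text{ if }t_j\text{ is an }AS\text{-monomial}\}$ and $l_i=0$ if $t_i$ is an $AS$-monomial, which is precisely $G_i$. This closes the induction, and the case $i=e$ yields parts~(2) and~(3); since a $\KK$-homomorphism is shift-bounded exactly when its defining function admits an $o$-function, the shift-bounded half of part~(1) is subsumed.

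Finally, part~(1) without the shift-boundedness qualifier follows by running the same induction but invoking only part~(1) of Lemma~\ref{Lemma:LiftEvToPoly}: at each stage one needs merely a $z$-function of $G_{i-1}$ for the current defining function, which exists by~\eqref{Ev:Zero}, with no $o$-function required, and the resulting $\tau_e$ is the desired $\tau'$. I do not expect a genuine obstacle here. The only points requiring real care are verifying that each single-step extension is $G_{i-1}$-basic with its multiplier lying in the group resp.\ constant field demanded by Lemma~\ref{Lemma:LiftEvToPoly} (which is why the normalization $\const{\AA}{\sigma}^*\subseteq G$ is convenient), and the group-composition identity $\PPowers{\AA_i}{\AA_{i-1}}{G_{i-1}}=G_i$ that keeps the $z$-function bookkeeping consistent along the tower; everything else is a mechanical relay of the lemma's constructive outputs.
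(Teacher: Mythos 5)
Your proof is correct and follows essentially the same route as the paper's: an iterated application of Lemma~\ref{Lemma:LiftEvToPoly} along the tower, replacing $G$ at the $i$th step by $\PPowers{\AA\lr{t_1}\dots\lr{t_{i-1}}}{\AA}{G}$, which is exactly your bookkeeping identity $\PPowers{\AA_i}{\AA_{i-1}}{G_{i-1}}=G_i$. Your extra normalization $\const{\AA}{\sigma}^*\subseteq G$ is a reasonable way to make the $A$-monomial case fit the lemma's hypothesis $\alpha\in G$ (needed for $Z(\alpha)$), a minor point the paper's own proof passes over silently.
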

\begin{proof}
We apply part (1) of Lemma~\ref{Lemma:LiftEvToPoly} iteratively (for $\EE=\AA\lr{t_1}\dots\lr{t_e}$ we replace $G$ by $\PPowers{\AA\lr{t_1}\dots\lr{t_r}}{\AA}{G}$ for the $r$th iteration step). This shows that there is a difference ring homomorphism from $\fct{\tau'}{\EE}{\seqK}$ with $\tau'|_{\AA}=\tau$. Now let $\ev$ be a defining function of $\tau$ with a $z$-function $Z$ for $G$ (which exists by property~\eqref{Ev:Zero}). In addition, suppose that $\tau$ is shift-bounded. Then there is an $o$-function $L$ for $\ev$. 
Thus applying parts~(1) and~(3) of Lemma~\ref{Lemma:LiftEvToPoly} iteratively shows that there is a defining function $\ev'$ of $\tau'$ with $\ev'_{\AA\times\NN}=\ev$ together with an $o$-function $L'$ for $\ev'$ with $L'|_{\AA}=L$ and a $z$-function $\fct{Z'}{\PPowers{\EE}{\AA}{G}}{\NN}$ for $\ev'$ with $Z'|_{G}=Z$. By construction $\tau'$ is shift-bounded. This establishes parts (1) and~(2). In addition, if $\ev,$ $L$ and $Z$ are computable, one obtains also computable functions $L'$ and $Z'$ by part~(4) of Lemma~\ref{Lemma:LiftEvToPoly}. This completes the proof.
\end{proof}

\begin{example}\label{Exp:ConstructEmbedding}
Take the \pisiSE-field $\dfield{\QQ(n)(x)}{\sigma}$ over $\QQ(n)$ with $\sigma(x)=x+1$. With Example~\ref{Exp:Embedding1} we get the $\QQ(n)$-embedding $\fct{\tau}{\QQ(n)(x)}{\seqP{\QQ(n)}}$. Now consider the \rpisiSE-extension $\dfield{\AA[s_1]}{\sigma}$ of $\dfield{\QQ(n)(x)}{\sigma}$ with $\AA=\QQ(n)(x)[y]\ltr{p_1}\ltr{p_2}$ and~\eqref{Equ:s1Exp} 
from Example~\ref{Exp:FindIsomorphism} (or Example~\ref{Exp:MainDRNaiveDef} with $s_1=t_1$).
Starting with $\AA=\QQ(n)(x)$ and $G=\QQ(n)(x)^*$ we apply Lemma~\ref{Lemma:LiftEvToPoly} iteratively and extend $\tau$ successively by 
\begin{equation}\label{Equ:s1Embed}
\begin{aligned}
\tau(y)&=\langle \tprod_{i=1}^k(-1)\rangle_{k\geq0}=\langle (-1)^k\rangle_{k\geq0},&
\tau(p_1)&=\langle\tprod_{i=1}^k2\rangle_{k\geq0}=\langle 2^k\rangle_{k\geq0},\\
\tau(p_2)&=\langle\tprod_{i=1}^k\frac{n+1-i}{i}\rangle_{k\geq0}=\langle\tbinom{n}{k}\rangle_{k\geq0},&
\tau(s_1)&=\langle\tsum_{i=1}^k\frac{(-1)^i}{i}\rangle_{k\geq0}.
\end{aligned}
\end{equation}
\end{example}

\begin{example}\label{Exp:NestedP3}
Take the basic \piE-extension $\dfield{\KK(x)\ltr{p_1}\ltr{p_2}\ltr{p_3}}{\sigma}$ of $\dfield{\KK(x)}{\sigma}$ from Example~\ref{Exp:NestedP2}. 
With Example~\ref{Exp:Embedding1} we get the $\KK$-embedding $\fct{\tau}{\KK(x)}{\seqP{\KK}}$. 
Now we apply Lemma~\ref{Lemma:LiftEvToPoly} iteratively (starting with $\AA=\KK(x)$ and $G=\KK(x)^*$) and extend $\tau$ to the $\KK$-homomorphism $\fct{\tau}{\KK(x)\ltr{p_1}\ltr{p_2}\ltr{p_3}}{\seqK}$ with
\begin{align}\label{Equ:NestedPiEmb}
\tau(p_1)&=\langle k!\rangle_{k\geq0},&\tau(p_2)&=\langle \tprod_{i=1}^k i!\rangle_{k\geq0},&\tau(p_3)&=\langle \tprod_{i=1}^k \tprod_{j=1}^i j!\rangle_{k\geq0}. 
\end{align}
\end{example}

\noindent Finally, we concentrate on the question when the obtained $\KK$-homomorphism is injective. 

\begin{lemma}\label{Lemma:SimpleImpliesEmbedding}
Let $\dfield{\AA}{\sigma}$ be a difference ring with constant field $\KK$ and let $\fct{\tau}{\AA}{\seqK}$ be a $\KK$-difference ring homomorphism. If $\dfield{\AA}{\sigma}$ is simple, $\tau$ is a $\KK$-embedding. 
\end{lemma}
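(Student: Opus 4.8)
The plan is to use the kernel of $\tau$ together with the simplicity hypothesis. A ring homomorphism is injective precisely when its kernel is the zero ideal, so the natural strategy is: first observe that $\ker(\tau)$ is a difference ideal of $\dfield{\AA}{\sigma}$, then invoke simplicity to conclude that $\ker(\tau)$ is either $\{0\}$ or all of $\AA$, and finally rule out the second possibility.

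First I would verify that $\ker(\tau)$ is a difference ideal. It is an ideal since $\tau$ is a ring homomorphism. For the difference property, let $f\in\ker(\tau)$, i.e.\ $\tau(f)=\vect 0$. Since $\tau$ is a difference ring homomorphism, $\tau(\sigma(f))=\Shift(\tau(f))=\Shift(\vect 0)=\vect 0$, so $\sigma(f)\in\ker(\tau)$. Hence $\ker(\tau)$ is a difference ideal of $\dfield{\AA}{\sigma}$.

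Next, since $\dfield{\AA}{\sigma}$ is simple, the only difference ideals are $\{0\}$ and $\AA$. So either $\ker(\tau)=\{0\}$, in which case $\tau$ is injective and we are done, or $\ker(\tau)=\AA$. In the latter case $1\in\ker(\tau)$, so $\tau(1)=\vect 0$. But $\tau$ is a $\KK$-homomorphism and $1\in\KK$, so $\tau(1)=\vect 1=\langle 1,1,1,\dots\rangle\neq\vect 0$ (recall $\QQ\subseteq\KK$, so $1\neq 0$ in $\KK$). This contradiction eliminates the second case, leaving $\ker(\tau)=\{0\}$; therefore $\tau$ is injective, i.e.\ a $\KK$-embedding.

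I don't expect any real obstacle here — the statement is essentially the standard fact that a homomorphism out of a simple difference ring is injective unless it is the zero map, combined with the observation that a $\KK$-homomorphism cannot be the zero map because it fixes $1$. The only point requiring a moment's care is checking that $\ker(\tau)$ is closed under $\sigma$ (not merely an ideal), which follows immediately from $\tau$ commuting with the shift.
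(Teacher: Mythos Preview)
Your proof is correct and follows essentially the same approach as the paper: both show that $\ker(\tau)$ is a difference ideal via $\tau(\sigma(f))=\Shift(\tau(f))$, rule out $\ker(\tau)=\AA$ using $\tau(1)=\vect{1}\neq\vect{0}$, and conclude $\ker(\tau)=\{0\}$ by simplicity.
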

\begin{proof}
Take the ideal $I:=\ker(\tau)=\{f\in\AA|\tau(f)=\vect{0}\}$. Let $f\in I$. Then 
$\tau(\sigma(f))=S(\tau(f))=S(\vect{0})=\vect{0}$
and hence $\sigma(f)\in I$. This proves that $I$ is a difference ideal. Since $\tau(1)=\vect{1}$, $1\notin I$ and hence $I\neq\EE$. Since $\dfield{\EE}{\sigma}$ is simple, $I=\{0\}$, i.e., $\tau$ is injective.
\end{proof}

\noindent In a nutshell, we end up at the following central result by exploiting Theorem~\ref{Thm:RPiSiIsSimple} from Section~\ref{Sec:SimpleRings}; for an integral domain version see~\cite{Schneider:10c}.

\begin{theorem}\label{Thm:RPiSiImpliedEmbedding}
Let $\dfield{\EE}{\sigma}$ be a basic \rpisiSE-extension of a difference field $\dfield{\FF}{\sigma}$ with a (shift-bounded) $\KK$-embedding $\fct{\tau}{\FF}{\seqK}$. Then: 
\begin{enumerate}
\item There exists a (shift-bounded) $\KK$ embedding $\fct{\tau'}{\EE}{\seqK}$ with $\tau'|_{\FF}=\tau$.
\item If there is a computable defining function for $\tau$ equipped with a computable $o$-function and a computable $z$-function of $\FF^*$, such a $\KK$-embedding $\tau'$ can be given explicitly (i.e., one can construct a computable defining function for $\tau'$ equipped with a computable $o$-function and a computable $z$-function of $\PPowers{\EE}{\FF}{(\FF^*)}$). 
\end{enumerate}
\end{theorem}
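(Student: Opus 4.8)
The plan is to bolt together three facts already in hand: the iterative lifting of a $\KK$-homomorphism along $\FF^*$-basic $APS$-extensions (Proposition~\ref{Prop:HomomorphsimExistence}), the simplicity of basic \rpisiSE-extensions over a field (Theorem~\ref{Thm:RPiSiIsSimple}), and the observation that a $\KK$-homomorphism out of a simple difference ring is automatically injective (Lemma~\ref{Lemma:SimpleImpliesEmbedding}).

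First I would note that a basic \rpisiSE-extension $\dfield{\EE}{\sigma}$ of the difference field $\dfield{\FF}{\sigma}$ is, by Definition~\ref{Def:BasicAPSExt}, an $\FF^*$-basic $APS$-extension, and that $\const{\EE}{\sigma}=\const{\FF}{\sigma}=\KK$, so that the target ring $\seqK$ is the same on both levels. Hence Proposition~\ref{Prop:HomomorphsimExistence} applies verbatim with $\AA=\FF$ and $G=\FF^*$: from the given (shift-bounded) $\KK$-homomorphism $\fct{\tau}{\FF}{\seqK}$ we obtain a (shift-bounded) $\KK$-homomorphism $\fct{\tau'}{\EE}{\seqK}$ with $\tau'|_{\FF}=\tau$, and — by part~(3) of that proposition — if $\tau$ is presented by a computable defining function $\ev$ with a computable $o$-function $L$ and a computable $z$-function $Z$ of $\FF^*$, then $\tau'$ comes together with a computable defining function $\ev'$, a computable $o$-function $L'$ and a computable $z$-function $Z'$ of $\PPowers{\EE}{\FF}{(\FF^*)}$, all restricting to the corresponding data on $\FF$.

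It remains to see that $\tau'$ is injective. Since $\dfield{\FF}{\sigma}$ is a difference field and $\dfield{\EE}{\sigma}$ is a basic \rpisiSE-extension of it, Theorem~\ref{Thm:RPiSiIsSimple} tells us that $\dfield{\EE}{\sigma}$ is a simple difference ring. Applying Lemma~\ref{Lemma:SimpleImpliesEmbedding} to $\tau'$ then shows that $\ker(\tau')$, being a proper difference ideal of $\EE$, must be $\{0\}$, so $\tau'$ is a $\KK$-embedding. This gives part~(1); part~(2) is then exactly the computable refinement quoted from Proposition~\ref{Prop:HomomorphsimExistence} above, noting that a $z$-function of $\PPowers{\EE}{\FF}{(\FF^*)}$ restricts in particular to one of $\FF^*$, as required.

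I do not anticipate a real obstacle: all the substance has been front-loaded into Theorem~\ref{Thm:RPiSiIsSimple} and Proposition~\ref{Prop:HomomorphsimExistence}. The only genuinely delicate point — that the naively adjoined sums and products carry no hidden algebraic relations that could collapse under $\tau'$ and destroy injectivity — is precisely what \emph{simple} encodes, and it is Theorem~\ref{Thm:RPiSiIsSimple} that delivers it. Beyond that, the proof is bookkeeping: matching ``basic'' with ``$\FF^*$-basic'', keeping the constant field equal to $\KK$ all the way up the tower, and checking that the computable $o$- and $z$-functions produced in the lifting are exactly the data demanded by item~(2).
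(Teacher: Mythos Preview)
Your proof is correct and follows essentially the same approach as the paper: apply Proposition~\ref{Prop:HomomorphsimExistence} with $\AA=\FF$ and $G=\FF^*$ to lift $\tau$ to a (shift-bounded) $\KK$-homomorphism $\tau'$, invoke Theorem~\ref{Thm:RPiSiIsSimple} to get simplicity of $\dfield{\EE}{\sigma}$, and conclude injectivity via Lemma~\ref{Lemma:SimpleImpliesEmbedding}; the computable refinement for part~(2) is exactly part~(3) of Proposition~\ref{Prop:HomomorphsimExistence}.
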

\begin{proof}
By part~(1) of Proposition~\ref{Prop:HomomorphsimExistence} (with $\AA=\FF$ and $G=\FF^*$) there exists a (shift-bounded) $\KK$-homomorphism $\fct{\tau'}{\EE}{\seqK}$, and it can be constructed explicitly if a defining function of $\tau$ is computable and is equipped with a computable $o$-function and $z$-function. Since $\dfield{\EE}{\sigma}$ is a basic \rpisiSE-extension of $\dfield{\FF}{\sigma}$, $\dfield{\EE}{\sigma}$ is simple by Theorem~\ref{Thm:RPiSiIsSimple}. Hence $\tau'$ is a $\KK$-embedding by Lemma~\ref{Lemma:SimpleImpliesEmbedding}.
\end{proof}

\begin{example}
(1) Take the difference ring $\dfield{\EE}{\sigma}$ with $\EE=\QQ(n)(x)[y]\ltr{p_1}\ltr{p_2}[s_1]$ from Example~\ref{Exp:ConstructEmbedding} and take the shift-bounded $\QQ(n)$-homomorphism $\fct{\tau}{\EE}{\seqP{\QQ(n)}}$. Since $\dfield{\EE}{\sigma}$ is an \rpisiSE-extension of $\dfield{\QQ(n)(x)}{\sigma}$, $\tau$ is a $\QQ(n)$-embedding.\\
(2) Similarly, it follows that the $\KK$-homomorphism $\fct{\tau}{\KK(x)\ltr{p_1}\ltr{p_2}\ltr{p_3}}{\seqK}$ from Example~\ref{Exp:NestedP3} is a $\KK$-embedding.
\end{example}

\begin{remark} An existence statement related to Theorem~\ref{Thm:RPiSiImpliedEmbedding} has been proved in~\cite[Prop~4.1]{Singer:99}: there exists an embedding of a Picard-Vessiot extension over $\dfield{\FF}{\sigma}$ with $\KK=\const{\FF}{\sigma}$ into $\dfield{\seqK}{\Shift}$ if the following rather strong conditions hold: $\KK$ is algebraically closed and the algebraic closure of $\FF$ can be embedded into $\dfield{\seqK}{\Shift}$.\\
We remark further that Theorem~\ref{Thm:RPiSiImpliedEmbedding} has been exploited in~\cite{AS:15} to show that the class of harmonic sums~\cite{Bluemlein:99,Vermaseren:99} and cyclotomic harmonic sums~\cite{ABS:11} can be embedded into the ring of sequences; an interesting consequence is that the corresponding \rpisiSE-extension is obtained by simply using the underlying quasi-shuffle algebra~\cite{Bluemlein:04,ABS:11} of the nested sums. For further applications of Theorem~\ref{Thm:RPiSiImpliedEmbedding} within symbolic summation we refer to  Section~\ref{Sec:Application}. 
\end{remark}

\subsection{Further characterizations of \rpisiSE-extensions: the interlacing property}\label{Sec:CharacteriationEmbedding}

We will enhance the characterizations given in Theorem~\ref{Thm:EquivSimpleConst} with further equivalent statements in terms of $\KK$-embeddings. In Theorem~\ref{Thm:EquivSimpleConst} we assumed that the underlying difference field is constant-stable. First, we will show that this property is implied by the assumption that the underlying difference field is embedded into the ring of sequences.

\begin{lemma}\label{Lemma:EmbeddingImpliesConstantStable}
Let $\dfield{\AA}{\sigma}$ be a difference ring with constant field $\KK$. If $\AA$ is an integral domain and there is a $\KK$-embedding $\fct{\tau}{\AA}{\seqK}$, then $\dfield{\AA}{\sigma}$ is constant-stable.
\end{lemma}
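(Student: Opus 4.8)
The plan is to reduce the claim to the injectivity of $\tau$ together with the integrality of $\AA$. Since one always has $\const{\AA}{\sigma}\subseteq\const{\AA}{\sigma^k}$, it suffices to fix $k\ge1$ and an $f\in\AA$ with $\sigma^k(f)=f$ and to prove that $f\in\KK$; then $\sigma(f)=f$ follows at once because $\KK=\const{\AA}{\sigma}$, so $f\in\const{\AA}{\sigma}$.

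First I would pass to sequences. Write $\tau(f)=\langle a_n\rangle_{n\ge0}$ with $a_n\in\KK$. Since $\tau$ is a difference ring homomorphism, $\sigma^k(f)=f$ gives $\Shift^k(\tau(f))=\tau(f)$ in $\seqK$, i.e.\ $\langle a_{n+k}\rangle_{n\ge0}\sim\langle a_n\rangle_{n\ge0}$; hence there is a $D\ge0$ with $a_{n+k}=a_n$ for all $n\ge D$. Iterating the relation $a_m=a_{m-k}$ (valid whenever $m\ge D+k$) shows that $a_n\in S:=\{a_D,a_{D+1},\dots,a_{D+k-1}\}$ for every $n\ge D$, so $S$ is a nonempty finite subset of $\KK$ containing almost all entries of $\tau(f)$.

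The decisive step is to form the auxiliary element $g:=\prod_{s\in S}(f-s)\in\AA$ (here $S\subseteq\KK\subseteq\AA$, and $S\neq\emptyset$ since $k\ge1$). As $\tau$ is a ring homomorphism with $\tau(s)=\vect s$ for $s\in\KK$, we get $\tau(g)=\big\langle\prod_{s\in S}(a_n-s)\big\rangle_{n\ge0}$, and for $n\ge D$ one of the factors $a_n-s$ vanishes, so $\tau(g)=\vect0$. Since $\tau$ is injective, $g=0$; since $\AA$ is an integral domain, some factor $f-s$ with $s\in S\subseteq\KK$ must vanish, whence $f=s\in\KK$. This yields $\const{\AA}{\sigma^k}\subseteq\const{\AA}{\sigma}$, hence equality, for every $k$, i.e.\ constant-stability.

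I do not expect a genuine obstacle; the argument is short and uses only the definitions. The two points worth stating carefully are the passage from $\Shift^k$-invariance of the germ $\tau(f)$ to eventual periodicity of a representative (and hence to the finiteness of $S$), and the fact that integrality is truly used: dropping it, a $\sigma$-stable quadratic extension such as $\KK=\QQ\subseteq\QQ(i)$ with $\sigma$ acting as complex conjugation would violate the conclusion — although such an extension is itself not $\KK$-embeddable into $\seqK$, which is precisely why the hypothesis on $\tau$ excludes it. An alternative route would use the $\sigma$-invariant polynomial $\prod_{j=0}^{k-1}(T-\sigma^j(f))\in\KK[T]$ to see that $f$ is algebraic over $\KK$, and then observe that $\tau$ of a root of an irreducible polynomial of degree $\ge2$ over $\KK$ would be a sequence eventually hitting roots of that polynomial in $\KK$, a contradiction; but the argument above is more direct.
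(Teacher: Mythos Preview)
Your proof is correct and substantially simpler than the paper's. The paper argues by contradiction: assuming an $f\in\AA\setminus\KK$ with $\sigma^k(f)=f$ and $k$ minimal, it writes $\tau(f)$ as an interlacing of constant sequences $c_1,\dots,c_k$, then splits into cases according to whether some $c_r=0$. In each case it manufactures explicit products of shifts of $f$ (or of the differences $c_i\sigma(f)-c_{i+1}f$) whose image under $\tau$ vanishes, forcing zero-divisors in $\AA$; a further sub-case treats the possibility $\sigma(f)=d\,f$ with $d$ a primitive $k$th root of unity via a root-count for $x^k-1$. Your single-stroke argument --- collect the finitely many eventual values $S$ of $\tau(f)$, form $g=\prod_{s\in S}(f-s)$, and read off $\tau(g)=\vect0\Rightarrow g=0\Rightarrow f\in S\subseteq\KK$ --- replaces all of that case analysis with one product and one appeal to integrality. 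It also makes transparent exactly where each hypothesis enters (injectivity for $g=0$, integrality for passing from $g=0$ to a linear factor). One small remark: your aside about $\QQ\subseteq\QQ(i)$ with conjugation illustrates the necessity of the \emph{embedding} hypothesis, not of integrality, since $\QQ(i)$ is a field; a cleaner witness that integrality is genuinely needed is $\AA=\KK\times\KK$ with $\sigma$ swapping the factors, which embeds in $\seqK$ via $(a,b)\mapsto\langle a,b,a,b,\dots\rangle$ yet has $\const{\AA}{\sigma^2}=\AA\neq\KK$.
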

\begin{proof}
Suppose there is a $\KK$-embedding $\tau$ and suppose that $\AA$ is an integral domain, but suppose that $\dfield{\AA}{\sigma}$ is not constant-stable.
Hence we can take an $f\in\AA\setminus\KK$ and $k>1$ with $\sigma^k(f)=f$ and $\sigma^i(f)\neq f$ for all  $1\leq i<k$. Then $\tau(f)=\tau(\sigma^k(f))=S^k(\tau(f))$. Thus $\tau(f)$ is the interlacing of $k-1$ constant-sequences and we get
$$\tau(f)=\langle c_1,c_2,\dots,c_{k-1},c_1,c_2,\dots,c_{k-1},\dots\rangle$$
for some $c_1,\dots,c_{k-1}\in\KK$.
Suppose that $c_r=0$ for some $r$ with $1\leq r<k$. Define $h_l=f\,\sigma(f)\dots\sigma^l(f)$ with $0\leq l<k$. Then observe that $\tau(h_{k-1})=\prod_{i=0}^{k-1}\Shift^i\tau(f)=\vect{0}$ since $c_r$ will be multiplied to each component. Note that $h_{k-1}=0$ since $\tau$ is injective. Since $h_0=f\neq0$, there is an $s$ with $0\leq s<k-1$ and $h_s\neq0$. For the maximal $s$ we have  $0=h_{s+1}=h_s\,\sigma^{s+1}(f)$.
Since $f\neq0$, we get $\sigma^{s+1}(f)\neq0$, and thus $h_s$ and $\sigma^{s+1}(f)$ are zero-divisors; a contradiction since $\AA$ is integral. Consequently, $c_r\neq0$ for all $1\leq r<k$.
Now define
$$w_i:=c_{i}\,\sigma(f)-c_{i+1}\,f\text{ and }
w_k:=c_k\,\sigma(f)-c_1\,f_1$$
for $1\leq i\leq k-1$. Suppose that there is a $j$ with $1\leq j\leq k$ such that $w_j=0$ holds. Then there exists a $d\in\KK^*$ with
$\sigma(f)=d\,f$. With $\sigma^k(f)=f$, it follows that
$f=\sigma^k(f)=d^k\,f$. Since $\AA$ is an integral domain, $d^k=1$, i.e., $d$ is a $k$th root of unity.
Now suppose that $d^l=1$ for some $1\leq l<k$. Then $\sigma^l(f)=d^l\,f=f$, a contradiction to the minimality of $k$. Thus $d$ is a primitive $k$th root of unity and $d,d^2,\dots,d^k\in\KK$ are distinct roots of the polynomial $x^k-1\in\AA[x]$. Now observe that $\sigma(f^k)=d^k\,f^k=f^k$ where $f^k\neq0$ and thus $f^k=c$ for some $c\in\KK^*$.
Since $f\notin\KK$, we obtain $h:=\frac{f}{c}\in\AA\setminus\KK$ with $h^k=1$. Therefore also $h$ is a root of $x^k-1$. Thus $x^k-1$ has $k+1$ roots ($k$ distinct roots from $\KK$ and one extra root from $\AA\setminus\KK$), a contradiction to the assumption that $\AA$ is an integral domain.\\
Consequently, we may suppose that $w_j\neq0$ for all $1\leq j\leq k$. Observe that in $\tau(w_j)$ always the $j$th entry is zero, but $\tau(w_j)\neq\vect{0}$ (otherwise $w_j$ would be $0$ using the fact that $\tau$ is injective). Define $g_k=w_1,\dots,w_k$. Consequently $\vect{0}=\tau(w_1)\,\tau(w_2)\dots\tau(w_k)=\tau(w_1\,w_2\dots w_k)=\tau(g_k)$. Since $\tau$ is injective, $g_k=0$. By construction $g_1=w_1\neq0$. Now let $l$ with $1\leq l<k$ be maximal with $g_l\neq0$. Then $0=g_{l+1}=g_l\,w_{l+1}$ with $w_{l+1}\neq0$ and $g_l\neq0$. Therefore $g_l$ and $w_{l+1}$ are zero-divisors, again a contradiction to the assumption that $\AA$ is integral. This completes the proof.
\end{proof}

\noindent In addition, we will need the following simple observation.

\begin{lemma}\label{Lemma:EmbeddingImpliesConstants}
Let $\dfield{\AA}{\sigma}$ be a difference ring, let $\KK$ be a field with $\KK\subseteq\const{\AA}{\sigma}$, and let $\fct{\tau}{\AA}{\seqK}$ be a $\KK$-embedding. Then $\const{\AA}{\sigma}=\KK$.
\end{lemma}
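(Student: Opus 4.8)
The plan is to prove the two inclusions separately. The inclusion $\KK\subseteq\const{\AA}{\sigma}$ is exactly the hypothesis, so the whole task is to show $\const{\AA}{\sigma}\subseteq\KK$. I would take an arbitrary $c\in\const{\AA}{\sigma}$, i.e.\ an element with $\sigma(c)=c$, and track its image under $\tau$. Applying the difference ring homomorphism $\tau$ to the relation $\sigma(c)=c$ yields $\Shift(\tau(c))=\tau(\sigma(c))=\tau(c)$, so the germ $\tau(c)\in\seqK$ is fixed by the shift $\Shift$.

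The next step is to use the elementary description of $\Shift$-fixed germs. If $\langle a_0,a_1,a_2,\dots\rangle$ represents a germ with $\Shift(\tau(c))=\tau(c)$, then $\langle a_1,a_2,\dots\rangle\sim\langle a_0,a_1,\dots\rangle$, which by definition of $\sim$ means there is a $d\geq0$ with $a_{n+1}=a_n$ for all $n\geq d$; hence $a_n=a_d=:k$ for all $n\geq d$, so $\tau(c)=\vect{k}$ for some $k\in\KK$. Since $\tau$ is a $\KK$-embedding we have $\tau(k)=\vect{k}=\tau(c)$, and injectivity of $\tau$ forces $c=k\in\KK$. This gives $\const{\AA}{\sigma}\subseteq\KK$ and completes the argument.

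There is no real obstacle in this proof; the only point that needs to be recorded is that a germ fixed by $\Shift$ is the class of an eventually constant sequence, which is immediate from the construction of $\seqK$ as $\KK^{\NN}/{\sim}$. Everything else is just functoriality of $\tau$ together with its injectivity and the normalization $\tau(k)=\vect{k}$ for $k\in\KK$.
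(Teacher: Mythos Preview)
Your proof is correct and follows exactly the same route as the paper's own argument: apply $\tau$ to a constant, observe that its image is $\Shift$-fixed and hence equal to some $\vect{k}$, then use the $\KK$-embedding property and injectivity to conclude $c=k\in\KK$. The only difference is that you spell out why a $\Shift$-fixed germ is eventually constant, which the paper takes for granted.
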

\begin{proof}
Let $f\in\const{\AA}{\sigma}$. Then 
$\tau(f)=\tau(\sigma(f))=S(\tau(f))$. Thus $\tau(f)=\vect{k}$ for some $k\in\KK$. Since $\tau$ is a $\KK$-embedding, $\tau(k)=\vect{k}=\tau(f)$. Since $\tau$ is injective, $f=k\in\KK$. Therefore $\const{\AA}{\sigma}\subseteq\KK$. The assumption $\KK\subseteq\const{\AA}{\sigma}$ implies  $\KK=\const{\AA}{\sigma}$.
\end{proof}

\noindent Using these lemmas we can enrich the defining properties of \rpisiSE-extensions as follows.

\begin{theorem}[Characterization of \rpisiSE-extensions (III)]\label{Thm:EquivSimpleConstIso}
Let $\dfield{\FF}{\sigma}$ be a difference field with $\KK=\const{\FF}{\sigma}$ and with a $\KK$-embedding $\fct{\tau}{\FF}{\seqK}$.
Let $\dfield{\EE}{\sigma}$ be a basic $APS$-extension of $\dfield{\FF}{\sigma}$. Then the following statements are equivalent.
\begin{enumerate}
 \item $\dfield{\EE}{\sigma}$ is a basic \rpisiSE-extension of $\dfield{\FF}{\sigma}$ (i.e., $\const{\EE}{\sigma}=\const{\FF}{\sigma}).$
\item $\dfield{\EE}{\sigma}$ is a simple difference ring.
 \item There is a $\KK$-embedding $\fct{\tau'}{\EE}{\seqK}$ with $\tau'|_{\FF}=\tau$.
 \item There is a $\KK$-embedding $\fct{\tau'}{\EE}{\seqK}$.
\end{enumerate}
\end{theorem}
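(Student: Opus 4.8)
The plan is to establish the cycle of implications $(1)\Rightarrow(2)\Rightarrow(3)\Rightarrow(4)\Rightarrow(1)$, reusing the heavy machinery already assembled in the excerpt so that each arrow is short. The implication $(1)\Rightarrow(2)$ is immediate from Theorem~\ref{Thm:RPiSiIsSimple}, which says that a basic \rpisiSE-extension of a difference field is a simple difference ring (no constant-stability assumption is needed here). For $(2)\Rightarrow(3)$ I would combine two earlier results: Proposition~\ref{Prop:HomomorphsimExistence}(1) (with $\AA=\FF$ and $G=\FF^*$) gives a $\KK$-homomorphism $\fct{\tau'}{\EE}{\seqK}$ extending $\tau$, and then Lemma~\ref{Lemma:SimpleImpliesEmbedding} upgrades this $\KK$-homomorphism to a $\KK$-embedding precisely because $\dfield{\EE}{\sigma}$ is simple. (This is essentially the content of Theorem~\ref{Thm:RPiSiImpliedEmbedding}, except that there simplicity came via Theorem~\ref{Thm:RPiSiIsSimple} whereas here it is the hypothesis.) The step $(3)\Rightarrow(4)$ is trivial: an embedding extending $\tau$ is in particular an embedding.

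The remaining implication $(4)\Rightarrow(1)$ is the one that requires a genuine argument, and I expect it to be the main obstacle, since it is where the hypothesis that $\dfield{\FF}{\sigma}$ is embeddable (rather than merely constant-stable) must be leveraged. The idea is to reduce to Theorem~\ref{Thm:EquivSimpleConst}. That theorem needs either that $\dfield{\FF}{\sigma}$ is constant-stable or that all $A$-monomials of $\dfield{\EE}{\sigma}$ are \rE-monomials; the former is exactly what we can extract from a $\KK$-embedding. Concretely, suppose a $\KK$-embedding $\fct{\tau'}{\EE}{\seqK}$ exists. Restricting $\tau'$ to $\FF$ gives a $\KK$-embedding of $\dfield{\FF}{\sigma}$ into $\dfield{\seqK}{\Shift}$; since $\FF$ is a field, it is in particular an integral domain, so Lemma~\ref{Lemma:EmbeddingImpliesConstantStable} shows $\dfield{\FF}{\sigma}$ is constant-stable. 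Now apply the direction $(2)\Rightarrow(1)$ of Theorem~\ref{Thm:EquivSimpleConst}: it suffices to know that $\dfield{\EE}{\sigma}$ is simple. But simplicity follows from the existence of the $\KK$-embedding $\tau'$ — indeed, running the argument of Lemma~\ref{Lemma:SimpleImpliesEmbedding} in reverse is not quite what is needed; instead I would argue directly that $\const{\EE}{\sigma}=\KK$ using Lemma~\ref{Lemma:EmbeddingImpliesConstants} (with $\KK\subseteq\const{\EE}{\sigma}$ holding because $\KK=\const{\FF}{\sigma}\subseteq\const{\EE}{\sigma}$), which is already statement~(1).

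In fact this last observation shows that $(4)\Rightarrow(1)$ can be obtained without invoking Theorem~\ref{Thm:EquivSimpleConst} at all: a $\KK$-embedding $\tau'$ of $\dfield{\EE}{\sigma}$ forces $\const{\EE}{\sigma}=\KK=\const{\FF}{\sigma}$ by Lemma~\ref{Lemma:EmbeddingImpliesConstants}, which is precisely statement~(1). So the cleanest writeup closes the loop as $(1)\Rightarrow(2)$ via Theorem~\ref{Thm:RPiSiIsSimple}, $(2)\Rightarrow(3)$ via Proposition~\ref{Prop:HomomorphsimExistence} and Lemma~\ref{Lemma:SimpleImpliesEmbedding}, $(3)\Rightarrow(4)$ trivially, and $(4)\Rightarrow(1)$ via Lemma~\ref{Lemma:EmbeddingImpliesConstants}. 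The only subtlety I would double-check while writing is that the $\KK$-embedding produced in $(2)\Rightarrow(3)$ genuinely restricts to the given $\tau$ on $\FF$ — this is guaranteed by Proposition~\ref{Prop:HomomorphsimExistence}(1), which preserves $\tau'|_{\AA}=\tau$ — and that no shift-boundedness hypothesis is silently required (it is not: Proposition~\ref{Prop:HomomorphsimExistence}(1) is stated with the parenthetical "shift-bounded" optional, so the plain version suffices here). Lemma~\ref{Lemma:EmbeddingImpliesConstantStable} is then not strictly needed for the logical equivalence, but I would mention it in a remark, since it explains why the constant-stability hypothesis of Theorem~\ref{Thm:EquivSimpleConst} is automatically subsumed in the present setting.
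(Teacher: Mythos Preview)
Your proposal is correct and uses essentially the same ingredients as the paper: the paper first invokes Lemma~\ref{Lemma:EmbeddingImpliesConstantStable} to get constant-stability and then Theorem~\ref{Thm:EquivSimpleConst} for $(1)\Leftrightarrow(2)$, then Theorem~\ref{Thm:RPiSiImpliedEmbedding} for $(2)\Rightarrow(3)$, and finally Lemma~\ref{Lemma:EmbeddingImpliesConstants} for $(4)\Rightarrow(1)$. Your cycle is slightly cleaner in that you go directly from simplicity to the embedding via Proposition~\ref{Prop:HomomorphsimExistence} plus Lemma~\ref{Lemma:SimpleImpliesEmbedding} (which is exactly what Theorem~\ref{Thm:RPiSiImpliedEmbedding} does internally, but without needing $(1)$ as an intermediate step), and as you correctly observe, this makes Lemma~\ref{Lemma:EmbeddingImpliesConstantStable} logically superfluous for the equivalence itself.
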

\begin{proof}
Since $\dfield{\FF}{\sigma}$ is constant-stable by Lemma~\ref{Lemma:EmbeddingImpliesConstantStable},
the equivalence $(1)\Leftrightarrow(2)$ follows by Theorem~\ref{Thm:EquivSimpleConst}. $(2)\Rightarrow(3)$ follows by Theorem~\ref{Thm:RPiSiImpliedEmbedding}.
$(3)\Rightarrow(4)$ is obvious. Suppose that statement (4) holds.
Since $\KK\subseteq\const{\EE}{\sigma}$ and $\tau'$ is a $\KK$-embedding, it follows that $\const{\EE}{\sigma}=\KK$ by Lemma~\ref{Lemma:EmbeddingImpliesConstants}. Hence statement $(1)$ holds.  
\end{proof}

\noindent Finally, we will refine these characterizations further by linking them to the idempotent representation given in Section~\ref{Sec:Copies}. First, we work out how the idempotent elements are evaluated within a $\KK$-embedding.

\begin{lemma}\label{Lemma:EvEi}
Let $\dfield{\FF[y]}{\sigma}$ be an \rE-extension of $\dfield{\FF}{\sigma}$ of order $n$ with $\sigma(y)=\alpha\,y$. Let $\fct{\tau}{\FF}{\seqK}$ be a $\KK$-embedding with $\KK=\const{\FF}{\sigma}$ and let $\fct{\tau'}{\FF[y]}{\seqK}$ be a $\KK$-embedding with $\tau'|_{\FF}=\tau$. Then there is a defining function $\ev$ of $\tau'$ such that for all $k\in\NN$ we have  $\ev(y,k)=\alpha^u\,\alpha^k$ for some $u\in\NN$ with $0\leq u<n$. In particular, for the $e_s$ with $0\leq s<n$ as defined in~\eqref{Equ:SpecificEi} we get
\begin{equation}\label{Equ:EvEi}
\ev(e_{s},k)=
\begin{cases}
0&\text{if }n\nmid k+u+s+1\\
1&\text{if }n\mid k+u+s+1.
\end{cases}
\end{equation}
\end{lemma}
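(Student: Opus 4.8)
The plan is to start from the structural results on $\KK$-homomorphisms that adjoin an $R$-monomial (Lemma~\ref{Lemma:LiftEvToPoly}) and pin down exactly what the defining function must look like on $y$. By part~(2) of Lemma~\ref{Lemma:LiftEvToPoly}, any difference ring homomorphism $\fct{\tau'}{\FF[y]}{\seqK}$ extending $\tau$ has a defining function $\ev$ with $\ev(y,k)=c\,\prod_{i=r}^k\ev(\alpha,i-1)$ for some $r\in\NN$ and $c\in\KK$ with $c^n=1$ (the root-of-unity condition on $c$ comes from $y^n=1$, which is recorded in the same part of the lemma). Since $\alpha\in\KK^*$ is a primitive $n$th root of unity (it is in the constant field, so $\ev(\alpha,i)=\alpha$ for all $i$ after a suitable point — and in fact by freely adjusting $r$ we may assume $\ev(\alpha,i)=\alpha$ for all $i\geq 0$), the telescoping product collapses: $\prod_{i=r}^k\ev(\alpha,i-1)=\alpha^{k-r+1}$. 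Hence $\ev(y,k)=c\,\alpha^{k-r+1}$. Writing $c=\alpha^{v}$ with $0\leq v<n$ (possible since $\alpha$ is a primitive $n$th root of unity, so its powers exhaust all $n$th roots of unity, and $c$ is one of them), we get $\ev(y,k)=\alpha^{v-r+1}\alpha^{k}=\alpha^{u}\alpha^{k}$ where $u:=(v-r+1)\bmod n\in\{0,\dots,n-1\}$. This establishes the first claim.

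Next I would compute $\ev(e_s,k)$ using the homomorphism property of $\ev$ together with the polynomial form $e_s=e_s(y)\in\FF[y]$. Recall from~\eqref{Equ:SpecificEi} that $e_s(y)=\tilde e_s(y)/\tilde e_s(\alpha^{n-1-s})$ with $\tilde e_s(y)=\prod_{j=0,\,j\neq n-1-s}^{n-1}(y-\alpha^j)$. Since $\ev$ is (up to a finite initial segment) a ring homomorphism on $\FF[y]$ and restricts to $\tau$ on $\FF$, for $k$ large enough we have $\ev(e_s,k)=e_s\big(\ev(y,k)\big)=e_s(\alpha^{u+k})$ — here I use that $\ev(\alpha^j,k)=\alpha^j$ for all the constants $\alpha^j$ occurring, again valid after discarding finitely many indices, and that $e_s$ has coefficients in $\FF$ on which $\ev$ agrees with $\tau$ which sends $\KK$-elements to themselves. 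Actually, to get the identity for \emph{all} $k\in\NN$ rather than just large $k$, I would note that we are free to choose the defining function $\ev$ (the statement only asserts existence of such an $\ev$), so I simply \emph{define} $\ev(e_s,k):=e_s(\alpha^{u+k})$ for all $k$, consistent with $\ev(y,k)=\alpha^u\alpha^k$ and the multiplicative/additive extension from $\FF$; this is a legitimate defining function by the converse direction noted after~\eqref{Ev:Zero} in the excerpt.

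Now the evaluation $e_s(\alpha^{u+k})$ is governed by the second identity in~\eqref{Equ:BasicEiProp}: $e_s(\alpha^i)=1$ if $i\equiv n-1-s\pmod n$ and $e_s(\alpha^i)=0$ otherwise. With $i=u+k$ this reads: $\ev(e_s,k)=1$ iff $u+k\equiv n-1-s\pmod n$, i.e.\ iff $n\mid (k+u+s+1)$, and $\ev(e_s,k)=0$ otherwise. This is exactly~\eqref{Equ:EvEi}, completing the proof.

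I do not expect a serious obstacle here; the only point requiring a little care is the bookkeeping of ``for almost all $k$'' versus ``for all $k$'' — and this is handled cleanly by exploiting the freedom in choosing the defining function $\ev$ rather than trying to force an arbitrary given $\ev$ to satisfy~\eqref{Equ:EvEi} on the nose. A secondary small point is justifying $\ev(\alpha,i)=\alpha$ for all $i$ (not merely eventually): since $\alpha\in\KK$ and we may absorb the finitely many exceptional indices by increasing $r$ in~\eqref{Equ:SumProdHom} (or, again, by choosing $\ev$ appropriately), this is harmless. Everything else is the routine collapse of a telescoping product of a constant and the standard Lagrange-interpolation-type evaluation of the idempotents recorded in~\eqref{Equ:BasicEiProp}.
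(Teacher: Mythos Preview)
Your proof is correct and follows essentially the same approach as the paper: invoke parts (1) and (2) of Lemma~\ref{Lemma:LiftEvToPoly} to get $\tau'(y)=\langle c\,\alpha^k\rangle_{k\geq0}$ with $c^n=1$, write $c=\alpha^u$ using that $\alpha$ is a primitive $n$th root of unity, and then read off~\eqref{Equ:EvEi} from~\eqref{Equ:BasicEiProp}. The paper's proof compresses all of this into four sentences, whereas you spell out the telescoping of the product, the reduction of $c$ to a power of $\alpha$, and the ``for all $k$'' versus ``for almost all $k$'' bookkeeping---but the logical skeleton is identical.
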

\begin{proof}
By Lemma~\ref{Lemma:LiftEvToPoly} (parts (1) and (2)) it follows that $\tau(y)=\langle c\,y^k\rangle_{k\geq0}$ for some $c\in\KK^*$ with $c^n=1$. Thus $c=\alpha^u$ for some $u\in\{0,\dots,n-1\}$. Hence we can choose a defining function for $\tau'$ with $\ev(y,k)=\alpha^u\,\alpha^k$. With~\eqref{Equ:BasicEiProp} we conclude that~\eqref{Equ:EvEi} holds.
\end{proof}

\noindent Now we are ready to show that the representation in terms of our idempotent elements encodes the interlacing of certain sequences. In this context, it will be convenient to introduce for $0\leq s<n$ the function
$\fct{\lambda_s}{\seqK}{\seqK}$ defined by 
$$\lambda_s(\langle a_k\rangle_{k\geq0})=\langle a_{n\,r+(n-1-s)}\rangle_{r\geq0}.$$ 
In short, $\lambda_s$ picks out the $n$th entries with offset $n-1-s$. It is easily checked that $\lambda_s$ forms a difference ring homomorphism from $\dfield{\seqK}{\Shift^n}$ to $\dfield{\seqK}{\Shift}$.

\begin{lemma}\label{Lemma:Homoidempotent}
Let $\dfield{\FF}{\sigma}$ be a difference field with $\KK=\const{\FF}{\sigma}$,
and let $\dfield{\EE}{\sigma}$ be a basic \rpisiSE-extension of $\dfield{\FF}{\sigma}$ given in the form~\eqref{Equ:SingleRPS} where the \rE-monomial $y$ has order $n$ with $\alpha=\frac{\sigma(y)}{y}\in\KK$. Let $e_0,\dots,e_{n-1}$ be the idempotent elements defined in~\eqref{Equ:SpecificEi}.
\begin{enumerate}
 \item (Subsequences) Let $\fct{\tau}{\EE}{\seqK}$ be a $\KK$-embedding with a defining function $\ev$ where $\ev(y,k)=\alpha^{u+k}$ for some $0\leq u<n$. Then for $0\leq s<n$ and $l:=s+u\mod n$ we get the $\KK$-embedding $\tau_s=\lambda_l\circ\tau|_{e_s\,\EE}$ from $\dfield{e_s\,\EE}{\sigma^n}$ to $\dfield{\seqK}{\Shift}$ where 
 \begin{equation}\label{Equ:SequCut}
 \tau_s(e_s\,f)=\langle\ev(e_s\,f,r\,n+n-1-l)\rangle_{r\geq0}.
 \end{equation}
 \item (Interlacing) Let $\fct{\tau_s}{e_s\,\EE}{\seqK}$ be a $\KK$-embedding from $\dfield{e_s\,\EE}{\sigma^n}$ to $\dfield{\seqK}{\Shift}$ with a defining function $\ev_s$ for some $0\leq s<n$. Then for $0\leq j< n$ the maps $\fct{\tau_{j}}{e_j\,\EE}{\seqK}$ defined by $\tau_j(e_j\,f)=\tau_s(e_s\,\sigma^{s-j}(f))$ are $\KK$-embeddings from $\dfield{e_j\,\EE}{\sigma^n}$ to $\dfield{\seqK}{\Shift}$ with the defining functions $\ev_j(e_j\,f,k)=\ev_s(e_s\,\sigma^{s-j}(f),k)$ for all $f\in\EE$. Moreover, the interlacing of the sequences 
 $\tau_{n-1}(e_{n-1}\,f),\dots,\tau_{0}(e_{0}\,f),$
 i.e., the map $\fct{\tau}{\EE}{\seqK}$ defined by\footnote{At a first glance this construction seems odd. However note that the shift in the representation~\eqref{Equ:DiagramCopies} is left to right. But the shift in $\dfield{\seqK}{\Shift}$ means that a sequence is moved from right to left dropping the first term. As a consequence also the interlacing of the sequences must be given in the order $\ev_{n-1},\ev_{n-2},\dots,\ev_{0}$ in order to construct a difference ring homomorphism.}
\begin{equation}\label{Equ:TauInterlacing}
 \tau(f)=\big\langle\begin{array}[t]{ccccc}
 \ev_{n-1}(e_{n-1}\,f,0),&\ev_{n-2}(e_{n-2}\,f,0),&\dots,&\ev_{0}(e_{0}\,f,0),&\\
 \ev_{n-1}(e_{n-1}\,f,1),&\ev_{n-2}(e_{n-2}\,f,1),&\dots,&\ev_{0}(e_{0}\,f,1),&\\
% \ev_{n-1}(e_{n-1}\,f,2),&\ev_{n-2}(e_{n-2}\,f,2),&\dots,&\ev_{0}(e_{0}\,f,2),&\\
 \vdots&\vdots&\dots&\vdots&\big\rangle
 \end{array}
 \end{equation}
 is a $\KK$-embedding from $\dfield{\EE}{\sigma}$ into $\dfield{\seqK}{\Shift}$; here we have that $\tau(y)=\langle \alpha^k\rangle_{k\geq0}$.
\end{enumerate}
\end{lemma}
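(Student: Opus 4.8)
The plan is to prove the two parts separately, using the structural results of Theorem~\ref{Thm:Interlacing} together with the $\ev$-calculus set up in Lemma~\ref{Lemma:LiftEvToPoly} and~\eqref{Ev:Const}--\eqref{Ev:Zero}, and the explicit formula~\eqref{Equ:EvEi} for $\ev(e_s,k)$ from Lemma~\ref{Lemma:EvEi}.

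\medskip

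\noindent\emph{Proof of part (1).} First I would note that $\lambda_l$ is a difference ring homomorphism from $\dfield{\seqK}{\Shift^n}$ to $\dfield{\seqK}{\Shift}$ (a routine check: $\lambda_l$ commutes with $+,\cdot$ componentwise and $\lambda_l\circ\Shift^n=\Shift\circ\lambda_l$ since shifting a sequence by $n$ shifts the subsequence indexed $n r+(n-1-l)$ by one). Since $\dfield{e_s\,\EE}{\sigma^n}$ is a difference ring with automorphism $\sigma^n$ and $\tau$ restricted to $e_s\,\EE$ intertwines $\sigma$ with $\Shift$, hence $\sigma^n$ with $\Shift^n$, the composition $\tau_s:=\lambda_l\circ\tau|_{e_s\,\EE}$ is a difference ring homomorphism from $\dfield{e_s\,\EE}{\sigma^n}$ to $\dfield{\seqK}{\Shift}$. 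Formula~\eqref{Equ:SequCut} is just the definition of $\lambda_l$ applied to $\tau(e_s\,f)=\langle\ev(e_s\,f,k)\rangle_{k\ge0}$. For the $\KK$-property: for $c\in\KK$ we have $e_s\,c\in e_s\,\EE$ with $\tau_s(e_s\,c)=\langle\ev(e_s\,c,nr+n-1-l)\rangle_{r\ge0}$; by~\eqref{Ev:Mult} and~\eqref{Ev:Const}, for $r$ large enough this equals $\ev(e_s,nr+n-1-l)\cdot c$, and by~\eqref{Equ:EvEi} with $u$ as in the hypothesis and $l=s+u\bmod n$ we have $n\mid (nr+n-1-l)+u+s+1$, so $\ev(e_s,nr+n-1-l)=1$; hence $\tau_s(e_s\,c)=\vect c$ (the multiplicative identity of $e_s\,\EE$ being $e_s$, this is the correct $\KK$-normalisation). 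For injectivity I would argue: $\dfield{\EE}{\sigma}$ is simple by Theorem~\ref{Thm:RPiSiIsSimple}, so by Theorem~\ref{Thm:EquivSimpleCopy} (or directly by part~(5) of Theorem~\ref{Thm:Interlacing} plus constant-stability, which holds by Lemma~\ref{Lemma:EmbeddingImpliesConstantStable}) $\dfield{e_s\,\EE}{\sigma^n}$ is simple, so any nonzero kernel of the difference ring homomorphism $\tau_s$ would be a nontrivial difference ideal --- but $\tau_s(e_s)=\vect 1\ne\vect 0$, so $\ker\tau_s\ne e_s\,\EE$, forcing $\ker\tau_s=\{0\}$. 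Thus $\tau_s$ is a $\KK$-embedding.

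\medskip

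\noindent\emph{Proof of part (2).} I would first observe that $\sigma^{s-j}$ is, by part~(4) of Theorem~\ref{Thm:Interlacing}, a difference ring isomorphism $\dfield{e_j\,\tilde\EE}{\sigma^n}\to\dfield{e_s\,\tilde\EE}{\sigma^n}$, carrying $e_j\,f\mapsto e_s\,(\sigma^{s-j}(f))|_{\ldots}$; hence $\tau_j:=\tau_s\circ\sigma^{s-j}$ is a difference ring homomorphism $\dfield{e_j\,\EE}{\sigma^n}\to\dfield{\seqK}{\Shift}$, it is injective as a composition of injective maps, and it is $\KK$-normalised because $\sigma^{s-j}$ fixes $\KK$ and $\tau_s$ is a $\KK$-embedding. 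Its defining function is $\ev_j(e_j\,f,k)=\ev_s(e_s\,\sigma^{s-j}(f),k)$ by construction. The core of part~(2) is to show that the interlacing map $\tau$ of~\eqref{Equ:TauInterlacing} is a difference ring homomorphism; here the key identity is that $\Shift$ applied to the interlaced sequence $\langle\ev_{n-1}(\cdot,0),\ev_{n-2}(\cdot,0),\dots,\ev_0(\cdot,0),\ev_{n-1}(\cdot,1),\dots\rangle$ drops the leading entry, and one must check that the resulting shifted interlacing coincides with $\tau(\sigma(f))$. Using the decomposition $f=\sum_{i=0}^{n-1}e_i\,f_i$ and Lemma~\ref{Lemma:MapSigma}(2), we have $\sigma(f)=\sum_i e_i\,h_i$ with $h_i=\sigma(f_{i-1\bmod n})|_{y\to\alpha^{n-1-i}}$, and $\ev_i(e_i\,\sigma(f))$ must be matched against the appropriately shifted $\ev_{i+1}(e_{i+1}\,f)$; this is precisely the computation that the footnote to~\eqref{Equ:TauInterlacing} anticipates, namely that the order $\ev_{n-1},\ev_{n-2},\dots,\ev_0$ in the interlacing is forced so that "$f$-shift left-to-right" in~\eqref{Equ:DiagramCopies} matches "sequence-shift right-to-left" in $\dfield{\seqK}{\Shift}$. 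I expect \textbf{this bookkeeping --- aligning the cyclic index shift $\sigma:e_s\,\tilde\EE\to e_{s+1\bmod n}\,\tilde\EE$ with the single shift $\Shift$ on the interlaced sequence --- to be the main obstacle}, and I would carry it out by checking the homomorphism property on the generators $t_i$ of $\tilde\EE$ and on $y$, then invoking linearity and multiplicativity of $\ev$. Finally, injectivity of $\tau$ follows because, up to a permutation-and-deinterleaving, the sequence $\tau(f)$ carries the data of all $\tau_j(e_j\,f)$; since $e_0,\dots,e_{n-1}$ sum to $1$ and are orthogonal (Lemma~\ref{Lemma:PropertiesOnEi}), $\tau(f)=\vect 0$ forces $\tau_j(e_j\,f)=\vect 0$ for every $j$, hence $e_j\,f=0$ for all $j$ by part~(1)/injectivity of the $\tau_j$, hence $f=\sum_j e_j\,f=0$. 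The final assertion $\tau(y)=\langle\alpha^k\rangle_{k\ge0}$ is read off from~\eqref{Equ:NormalizeeiRep}: $e_j\,y=e_j\,\alpha^{n-1-j}$, so in the $j$-th slot of~\eqref{Equ:TauInterlacing} at level $r$ the entry is $\alpha^{n-1-j}$, which is exactly $\alpha^k$ for $k=nr+(n-1-j)$ after the index identification $\alpha^{n}=1$, giving the plain sequence $\langle\alpha^k\rangle_{k\ge0}$ (this forces the normalisation $u=0$ in part~(1)'s defining function, consistent with the hypothesis of Lemma~\ref{Lemma:EvEi}).
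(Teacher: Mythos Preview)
Your proposal is correct and in part~(2) follows essentially the same route as the paper: construct the $\tau_j$ via the isomorphisms $\sigma^{s-j}$ from Theorem~\ref{Thm:Interlacing}(4), verify the shift compatibility by the cyclic index computation, and read off $\tau(y)$. Your explicit injectivity argument for the interlaced $\tau$ (via the orthogonal decomposition and injectivity of each $\tau_j$) and your direct computation of $\tau(y)$ from $e_j\,y=e_j\,\alpha^{n-1-j}$ are in fact cleaner than the paper's treatment, which is a bit terse on both points.

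The one genuine difference is in part~(1), where you prove injectivity of $\tau_s$ by invoking simplicity of $\dfield{e_s\,\EE}{\sigma^n}$ (via Theorem~\ref{Thm:Interlacing}(5) or Theorem~\ref{Thm:EquivSimpleCopy}, after securing constant-stability from Lemma~\ref{Lemma:EmbeddingImpliesConstantStable}) and then the standard kernel-is-a-difference-ideal argument. The paper instead argues directly and more elementarily: by Lemma~\ref{Lemma:EvEi}, $\ev(e_s,k)=0$ whenever $n\nmid k+u+s+1$, so the sequence $\tau(e_s f)$ is supported only on the arithmetic progression $\{nr+n-1-l:r\ge0\}$; hence applying $\lambda_l$ discards only zeros and $\tau_s$ inherits injectivity straight from the already-injective $\tau|_{e_s\,\EE}$. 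The paper's argument is self-contained within the $\ev$-calculus and avoids pulling in the simplicity machinery; yours imports heavier structural results but is equally valid and more in the spirit of the characterizations developed elsewhere in the paper.
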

\begin{proof}
(1) Note that $\tau$ is a $\KK$-embedding of $\dfield{\EE}{\sigma^n}$ into $\dfield{\seqK}{\Shift^n}$ since $\tau(\sigma^n(f))=\Shift^n(\tau(f))$ for all $f\in\EE$. In particular, $\tau|_{e_s\,\EE}$ is a $\KK$-embedding of $\dfield{e_s\,\EE}{\sigma^n}$ into $\dfield{\seqK}{\Shift^n}$. By Lemma~\ref{Lemma:EvEi} it follows that
$\ev(e_s\,f,k+n\,r)$ with $r\in\NN$ is zero if $k\neq n-1-l$. This implies that $\tau_s=\lambda_l\circ\tau|_{e_s\,\EE}$ is a $\KK$-embedding. The evaluation in~\eqref{Equ:SequCut} is obvious.\\
(2) Take the $\KK$-embedding $\fct{\tau_s}{e_s\,\EE}{\seqK}$. By Theorem~\ref{Thm:Interlacing} it follows that the $\dfield{e_j\,\EE}{\sigma^n}$ are isomorphic to $\dfield{e_s\,\EE}{\sigma^n}$ with $\sigma^{s-j}$ for $0\leq j<n$. Thus we get the $\KK$-embeddings $\tau_j$ as claimed in the statement. In particular, a defining function for $\tau_j$ with $0\leq j<n$ is given by $\ev_j(e_j\,f,k)=\ev_s(e_s\,\sigma^{s-j}(f),k)$ for all $f\in\EE$. Now take the map $\tau$ as defined in~\eqref{Equ:TauInterlacing}. It is easily seen that $\tau$ is a ring isomorphism. 
Moreover, for any $f\in\EE$, for $k\in\NN$ big enough and $0\leq r<n$ it follows that
$\ev_{n-1-r}(e_{n-1-r}\sigma(f),k)=\ev_s(e_s\sigma^{s-n+1+r}(\sigma(f)),k)=\ev_s(e_s\sigma^{s-n+1+(r+1)}(f),k)=\ev_{n-1-(r+1)}(e_{n-1-(r+1)}f,k)$. With~\eqref{Equ:TauInterlacing} we conclude that 
$\tau(\sigma(f))=\Shift(\tau(f))$ and consequently $\tau$ is a difference ring isomorphism. Finally, for any $c\in\KK$ and $0\leq s<n$ we have $\tau_s(c)=\vect{c}$ and therefore $\tau(c)=\vect{c}$. Consequently, $\tau$ is a $\KK$-embedding.
Consider $\tau'=\tau|_{\FF[y]}$ and let $\ev'$ be a defining function. By the construction~\eqref{Equ:TauInterlacing} we have that $\ev'(e_{n-1},n\,k)=\ev_{n-1}(e_{n-1},k)$ for all $k\in\NN$. Since $\tau_{n-1}$ is a $\KK$-embedding and $e_{n-1}$ is the multiplicative identity, $\tau_{n-1}(e_{n-1})=\vect{1}$. Thus there is a $\delta\in\NN$ with $\ev'(e_{n-1},n\,k)=1$ for all $k\geq\delta$. By 
Lemma~\ref{Lemma:EvEi} it follows that $\ev'(y,k)=\alpha^{k+u}$ for some $0\leq u<n$ and $k$ big enough and that 
$\ev'(e_{n-1},n\,k)=1$ if $n\mid n\,k+u+n$. This implies that $u=0$. Thus $\tau(y)=\langle\ev'(y,k)\rangle_{k\geq0}=\langle \alpha^k\rangle_{k\geq0}$.
\end{proof}

\begin{remark}\label{Remark:Interlacing1}
We obtain an alternative proof of part (5) of Theorem~\ref{Thm:Interlacing} if one assumes that there is a $\KK$-embedding $\fct{\tau}{\FF}{\seqK}$ (instead of the weaker assumption that $\dfield{\FF}{\sigma}$ is constant-stable). Namely, together with the assumption that $\dfield{\EE}{\sigma}$ is an \rpisiSE-extension of $\dfield{\FF}{\sigma}$, we can conclude that there is a $\KK$-embedding $\fct{\tau'}{\EE}{\seqK}$ with $\tau'|_{\FF}=\tau$ by using Theorem~\ref{Thm:EquivSimpleConstIso}.
Furthermore, by part~(1) of Lemma~\ref{Lemma:Homoidempotent} we get the $\KK$-embeddings $\fct{\tau_s}{e_s\,\EE}{\seqK}$ for $0\leq s<n$. By part~(3) of Theorem~\ref{Thm:Interlacing} it follows that the $\dfield{e_s\,\EE}{\sigma^n}$ are $PS$-extensions of $\dfield{e_s\,\FF}{\sigma^n}$ for $0\leq s<n$. 
Thus they are \pisiSE-extensions by the equivalence (1)$\Leftrightarrow$(4) of Theorem~\ref{Thm:EquivSimpleConstIso}.
\end{remark}

\noindent We end up at the following final characterizations of \rpisiSE-extensions.

\begin{theorem}[Characterization of \rpisiSE-extensions (IV)]\label{Thm:EquivSimpleConstIso2}
Let $\dfield{\FF}{\sigma}$ be a difference field with $\KK=\const{\FF}{\sigma}$ equipped with a $\KK$-embedding $\fct{\tau}{\FF}{\seqK}$.
Let $\dfield{\EE}{\sigma}$ be a basic $RPS$-extension of $\dfield{\FF}{\sigma}$ given in the form~\eqref{Equ:SingleRPS} where the \rE-monomial $y$ has order $n$ with $\alpha=\frac{\sigma(y)}{y}\in\KK$. Let $e_0,\dots,e_{n-1}$ be the idempotent elements defined in~\eqref{Equ:SpecificEi}. Then the following statements are equivalent.
\begin{enumerate}
 \item[(1)]-- (6) from Theorem~\ref{Thm:EquivSimpleCopy}. 
 \item[(7)] There is a $\KK$-embedding $\fct{\tau'}{\EE}{\seqK}$ with $\tau'|_{\FF}=\tau$.
 \item[(8)] There is a $\KK$-embedding $\fct{\tau'}{\EE}{\seqK}$.
 \item[(9)] For all $s\in\{0,\dots,n-1\}$ there is a $\KK$-embedding $\tau_s$ from $\dfield{e_s\EE}{\sigma^n}$ into $\dfield{\seqK}{\Shift}$ with $\tau_s(e_s\,f)=\lambda_s(\tau(f))$ for all $f\in\FF$.
 \item[(10)] There is an $s\in\{0,\dots,n-1\}$ such that there is a $\KK$-embedding $\tau_s$ from $\dfield{e_s\EE}{\sigma^n}$ into $\dfield{\seqK}{\Shift}$ with $\tau_s(e_s\,f)=\lambda_s(\tau(f))$ for all $f\in\FF$.
 \item[(11)] There are $\KK$-embeddings of $\dfield{e_s\,\EE}{\sigma^n}$ into $\dfield{\seqK}{\Shift}$ for $0\leq s<n$ such that
 $\fct{\tau'}{\EE}{\seqK}$ with~\eqref{Equ:TauInterlacing} is a $\KK$-embedding with $\tau'|_{\FF}=\tau$.
 \item[(12)] There is an $s\in\{0,\dots,n-1\}$ such that there is a $\KK$-embedding $\tau_s$ from $\dfield{e_s\EE}{\sigma^n}$ into $\dfield{\seqK}{\Shift}$.
 \end{enumerate}
\end{theorem}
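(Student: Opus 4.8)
The strategy is to establish a cycle of implications that weaves the new sequence-theoretic statements (7)--(12) into the already-proven equivalence web (1)--(6) of Theorem~\ref{Thm:EquivSimpleCopy}. First I note that $\dfield{\FF}{\sigma}$ is constant-stable by Lemma~\ref{Lemma:EmbeddingImpliesConstantStable} (since a difference field is in particular an integral domain), so Theorem~\ref{Thm:EquivSimpleCopy} applies verbatim and statements (1)--(6) are mutually equivalent. I then close the loop by showing $(2)\Rightarrow(7)\Rightarrow(11)\Rightarrow(9)\Rightarrow(10)\Rightarrow(12)\Rightarrow(6)$ and $(7)\Rightarrow(8)\Rightarrow(1)$; this, together with the obvious $(9)\Rightarrow(10)$ and $(11)\Rightarrow(7)$, suffices.

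The implication $(2)\Rightarrow(7)$ is Theorem~\ref{Thm:RPiSiImpliedEmbedding} (the hypothesis of a simple difference ring gives, via Theorem~\ref{Thm:RPiSiIsSimple} and Lemma~\ref{Lemma:SimpleImpliesEmbedding}, a $\KK$-embedding $\tau'$ extending $\tau$). For $(7)\Rightarrow(11)$: given $\tau'$ with $\tau'|_{\FF}=\tau$, use Lemma~\ref{Lemma:EvEi} to normalize a defining function so that $\ev(y,k)=\alpha^{u+k}$, and then part~(1) of Lemma~\ref{Lemma:Homoidempotent} produces $\KK$-embeddings $\fct{\tau_s}{e_s\,\EE}{\seqK}$ from $\dfield{e_s\,\EE}{\sigma^n}$ into $\dfield{\seqK}{\Shift}$; moreover by the computation in part~(2) of that lemma (the offset $u$ is forced to be $0$ once $\tau'|_{\FF}=\tau$) the interlacing~\eqref{Equ:TauInterlacing} of these $\tau_s$ reconstructs exactly $\tau'$, which gives (11). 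The step $(11)\Rightarrow(9)$ is immediate, reading off from~\eqref{Equ:TauInterlacing} and the definition of $\lambda_s$ that $\tau_s(e_s\,f)=\lambda_s(\tau(f))$ for $f\in\FF$ (here $u=0$ makes $l=s$ in the notation of Lemma~\ref{Lemma:Homoidempotent}); and $(9)\Rightarrow(10)$ and $(10)\Rightarrow(12)$ are trivial weakenings. For $(12)\Rightarrow(6)$: a $\KK$-embedding of $\dfield{e_s\,\EE}{\sigma^n}$ into $\dfield{\seqK}{\Shift}$ forces $\const{e_s\,\EE}{\sigma^n}=e_s\,\const{\FF}{\sigma^n}$ by Lemma~\ref{Lemma:EmbeddingImpliesConstants} applied over the constant field $e_s\,\KK$ of the ring $e_s\,\EE$ (note $e_s\,\FF\cong\FF$ as difference fields under $\sigma^n$, using that $\dfield{\FF}{\sigma^n}$ has constant field $\KK$ by constant-stability), and since by Theorem~\ref{Thm:Interlacing}(3) $\dfield{e_s\,\EE}{\sigma^n}$ is a $PS$-extension of $\dfield{e_s\,\FF}{\sigma^n}$, this says precisely that it is a $\Pi\Sigma^*$-extension, i.e.\ statement~(6). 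Finally $(7)\Rightarrow(8)$ is trivial, and $(8)\Rightarrow(1)$ follows from Lemma~\ref{Lemma:EmbeddingImpliesConstants} since $\KK\subseteq\const{\EE}{\sigma}$ and any $\KK$-embedding forces equality.

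The one place that needs care — and which I regard as the main technical obstacle — is the precise bookkeeping in $(7)\Rightarrow(11)$ and $(11)\Rightarrow(9)$: matching the offsets between the three indexings at play (the idempotent index $s$, the ``picking'' offset $n-1-l$ in $\lambda_l$, and the shift amounts $s-j$ relating the $\dfield{e_j\,\EE}{\sigma^n}$ via $\sigma$), and checking that when $\tau'|_{\FF}=\tau$ the normalization forces $u=0$ so that $l=s$ exactly. All of this is already assembled in Lemma~\ref{Lemma:Homoidempotent} and its proof, so in the write-up I would simply invoke parts~(1) and~(2) of that lemma and point out that the interlacing~\eqref{Equ:TauInterlacing} built from the $\tau_s$ agrees with the originally given $\tau'$ by uniqueness (Lemma~\ref{Lemma:LiftEvToPoly}(2)); the remaining implications are then either citations to earlier results or one-line verifications, so no lengthy computation is required.
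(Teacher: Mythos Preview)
Your overall strategy---cycling through the new statements and hooking back into (1)--(6)---matches the paper's, and most steps (constant-stability via Lemma~\ref{Lemma:EmbeddingImpliesConstantStable}, $(8)\Rightarrow(1)$ via Lemma~\ref{Lemma:EmbeddingImpliesConstants}, $(12)\Rightarrow(5)$ via Theorem~\ref{Thm:Interlacing}(3) together with Theorem~\ref{Thm:EquivSimpleConstIso}) are handled the same way. The order of implications differs: the paper proves $(7)\Rightarrow(9)\Rightarrow(10)\Rightarrow(11)\Rightarrow(12)\Rightarrow(5)$, whereas you attempt $(7)\Rightarrow(11)\Rightarrow(9)$ directly.

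There is, however, a genuine error in your $(7)\Rightarrow(11)$. The claim that ``the offset $u$ is forced to be $0$ once $\tau'|_{\FF}=\tau$'' is false: that hypothesis constrains $\tau'$ only on $\FF$ and says nothing about $\tau'(y)$. By Lemma~\ref{Lemma:LiftEvToPoly}(2) the extension of $\tau$ to $\FF[y]$ is unique only up to a choice of $c\in\KK$ with $c^n=1$, so any $u\in\{0,\dots,n-1\}$ can occur. If $u\neq 0$, then the embeddings produced by Lemma~\ref{Lemma:Homoidempotent}(1) are $\lambda_l\circ\tau'|_{e_s\EE}$ with $l=(s+u)\bmod n\neq s$, and a direct computation shows that the interlacing~\eqref{Equ:TauInterlacing} of these maps, evaluated at position $rn+j$ on $f\in\FF$, returns $\ev(f,rn+((j-u)\bmod n))$ rather than $\ev(f,rn+j)$; hence it does \emph{not} restrict to $\tau$ on $\FF$, and (11) is not obtained. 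Your fallback ``by uniqueness (Lemma~\ref{Lemma:LiftEvToPoly}(2))'' fails for the same reason: that lemma gives uniqueness only up to parameters, not absolute uniqueness, so it cannot force the interlaced map to coincide with the original $\tau'$.

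The repair is easy. Either follow the paper's route, in which the step $(10)\Rightarrow(11)$ uses the compatibility hypothesis $\tau_s(e_s\,f)=\lambda_s(\tau(f))$ of (10) to carry out explicitly the verification that the interlaced map agrees with $\tau$ on $\FF$; or, in your route, note that once (7) holds the already-established equivalence with (1) lets you \emph{choose} an extension $\tau'$ with $u=0$ (construct it via Theorem~\ref{Thm:RPiSiImpliedEmbedding}, taking $c=1$ in the $y$-step of Lemma~\ref{Lemma:LiftEvToPoly}), after which your argument for $(7)\Rightarrow(11)$ goes through verbatim.
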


\begin{proof}
Since $\dfield{\FF}{\sigma}$ is constant-stable  by Lemma~\ref{Lemma:EmbeddingImpliesConstantStable}, the equivalences (1)--(8) follow by Theorems~\ref{Thm:EquivSimpleCopy} and~\ref{Thm:EquivSimpleConstIso}.\\ $(7)\Rightarrow(9)$: Suppose that (7) holds and let $s\in\NN$ with $0\leq s<n$. Then by part~(1) of  Lemma~\ref{Lemma:Homoidempotent} we get the $\KK$-embedding $\tau_s:=\lambda_s\circ\tau'|_{e_s\,\EE}$ from $\dfield{e_s\EE}{\sigma^n}$ into $\dfield{\seqK}{\Shift}$. Since $\tau'|_{\FF}=\tau|_{\FF}$, we have $\lambda_s(\tau'(e_s\,f))=\lambda_s(\tau(f))$ for all $f\in\FF$ which shows part (9).\\ 
$(9)\Rightarrow(10)$ is immediate. $(10)\Rightarrow(11)$: Suppose that (10) holds and let $\ev_s$ and $\ev$ be defining functions for $\tau_s$ and $\tau$, respectively. 
Then by part~(2) of Lemma~\ref{Lemma:Homoidempotent} 
we get the $\KK$-embeddings $\tau_j$ from $\dfield{e_j\,\EE}{\sigma^n}$ to $\dfield{\seqK}{\Shift}$ for $0\leq j<n$ with $\tau_j(e_j\,f)=\tau_s(e_s\,\sigma^{s-j}(f))$ for all $f\in\EE$; for a defining function we take $\ev_j(e_j\,f):=\ev_s(e_s\sigma^{j-s}(f),k)$ for $f\in\EE$. By assumption we have that $\ev_s(e_s\,h,\tilde{k})=\ev(h,n\,\tilde{k}+n-1-s)$ for all $h\in\FF$ and for all $\tilde{k}\in\NN$ and $0\leq s<n$.
Further, we get the $\KK$-embedding $\fct{\tau'}{\EE}{\seqK}$ with~\eqref{Equ:TauInterlacing}. Let $\ev'$ be a defining function for $\tau'$.  Now let $k\in\NN$ be chosen big enough and take $\tilde{k}\in\NN$ and $j$ with $0\leq j<n$ and $k=\tilde{k}\,n+j$. Then for 
$f\in\FF$ we get
\begin{align*}
\ev'(f,k)&=\ev_{n-1-j}(e_{n-1-j}f,\tilde{k})=\ev_s(e_s\sigma^{s-n+1+j}(f),\tilde{k})\\
&=\ev(\sigma^{s-n+1+j}(f),n\,\tilde{k}+n-1-s)=\ev(f,n\,\tilde{k}+j)=\ev(f,k) 
\end{align*}
which proves that $\tau'(f)=\tau(f)$ for all $f\in\FF$. Thus (11) is proven. $(11)\Rightarrow(12)$ is trivial.\\
$(12)\Rightarrow(5)$: By Theorem~\ref{Thm:Interlacing} $\dfield{e_s\EE}{\sigma^n}$ is a $PS$-extension of $\dfield{e_s\FF}{\sigma^n}$. Since $\dfield{\FF}{\sigma}$ is constant-stable, also $\dfield{e_s\,\FF}{\sigma}$ is constant-stable.
Therefore by $(4)\Rightarrow(1)$ of Theorem~\ref{Thm:EquivSimpleConstIso} it follows that $\dfield{e_s\EE}{\sigma^n}$ is a \pisiSE-extension of $\dfield{e_s\FF}{\sigma^n}$. This completes the proof.
\end{proof}

\begin{remark}\label{Remark:Interlacing2}
We emphasize property (11) of Theorem~\ref{Thm:EquivSimpleConstIso2}.  
Consider the representation~\eqref{Equ:DiagramCopies} of an \rpisiSE-extension $\dfield{\EE}{\sigma}$ of $\dfield{\FF}{\sigma}$ together with a $\KK$-embedding $\fct{\tau}{\FF}{\seqK}$. Then one can construct a $\KK$-embedding $\fct{\tau'}{\EE}{\seqK}$ by Theorem~\ref{Thm:EquivSimpleConstIso2}; here we choose $\tau(y)=\langle\alpha^k\rangle_{k\geq0}$. In particular, by part~(1) of Lemma~\ref{Lemma:Homoidempotent}
the $\tau_s:=\lambda_s\circ\tau|_{e_s\,\EE}$ for $0\leq s<n$ are $\KK$-embeddings of the \pisiSE-extensions $\dfield{e_s\,\EE}{\sigma^n}$ of $\dfield{e_s\,\FF}{\sigma^n}$. Thus $\dfield{\EE}{\sigma}$ is isomorphic to $\dfield{\tau(\EE)}{\Shift}$ and by part 2 of  Lemma~\ref{Lemma:Homoidempotent} this ring is nothing else than the interlacing of the sequences $\tau_{n-1}(e_{n-1}\,\EE),\dots,\tau_{0}(e_{0}\,\EE)$. 
And since the $\dfield{e_s\,\EE}{\sigma^n}$ are isomorphic to $\dfield{\tau_s(e_s\,\EE)}{\Shift^n}$, the representation~\eqref{Equ:DiagramCopies}, in particular the statements in Theorem~\ref{Thm:Interlacing}, formulate precisely the interlacing property in the \rpisiSE-language.
\end{remark}

\section{Application I: an alternative algorithm for parameterized telescoping}\label{Sec:Telescoping}

As motivated in Subsection~\ref{Sec:BasicExt}, in particular
in Remark~\ref{Remark:ConstructiveVersion}, one is interested in the telescoping problem: given a difference ring $\dfield{\AA}{\sigma}$ with constant field $\KK$ and given an $f\in\AA$, decide constructively if there exists a $g\in\AA$ with 
\begin{equation}\label{Equ:TeleNewAlg}
\sigma(g)-g=f.
\end{equation}
More generally, the parameterized telescoping problem in Subsec.~\ref{Sec:PLDE} below will
play a prominent role: given $\vect{f}=(f_1,\dots,f_d)\in\AA^d$, find all $c_1,\dots,c_d\in\KK$ and $g\in\AA$ with
\begin{equation}\label{Equ:PT}
\sigma(g)-g=c_1\,f_1+\dots+c_d\,f_d.
\end{equation}
Note that the set (compare~\cite{Karr:81})
$$V(\vect{f},\AA):=\{(c_1,\dots,c_d,g)\in\KK^d\times\AA|\text{ ~\eqref{Equ:PT}~holds}\}$$
is a subspace of $\KK^d\times\AA$ over $\KK$ and its dimension is at most $d+1$; see~\cite[Lemma~2.17]{Schneider:16a}. Thus finding all such solutions can be summarized as follows.

\medskip

\noindent\fbox{\begin{minipage}{12.9cm}
\noindent \textbf{Problem PTDR for $\dfield{\AA}{\sigma}$:} Parameterized Telescoping for a Difference Ring.\\
\textit{Given} a difference ring $\dfield{\AA}{\sigma}$ with constant field $\KK$ and $\vect{f}=(f_1,\dots,f_d)\in\AA^d$.\\
\textit{Find} a basis of the $\KK$-vector space $V(\vect{f},\AA)$.
\end{minipage}}

\medskip

\noindent In~\cite{Schneider:16a} we derived an efficient algorithmic framework that solves Problem~PTDR for a basic \rpisiSE-extension $\dfield{\EE}{\sigma}$ of a difference field $\dfield{\FF}{\sigma}$ under the assumption that the following two problems can be solved in $\dfield{\FF}{\sigma}$. First, we require that there is an algorithm that finds all solutions of a given first-order parameterized linear difference equation.

\medskip

\noindent\fbox{\begin{minipage}{12.9cm}
\noindent \textbf{Problem PFLDE for $\dfield{\FF}{\sigma}$:} Parameterized First-Order Linear Difference Equ.\\
\textit{Given} a difference field $\dfield{\FF}{\sigma}$ with constant field $\KK$, $a\in\FF^*$ and $\vect{f}=(f_1,\dots,f_d)\in\FF^d$.\\
\textit{Find} a basis\footnote{Note that $V$ is $\KK$-subspace of $\KK^d\times\FF$ whose dimension is at most $d+1$; see~\cite[Lemma~2.17]{Schneider:16a}.} of $V:=\{(c_1,\dots,c_d,g)\in\KK^d\times\FF\mid \sigma(g)+a\,g=c_1f_1+\dots+c_d\,f_d\}$.
\end{minipage}}

\medskip

\noindent Second, we need an algorithm for the following problem that can be considered as the multiplicative version of Problem~PTDR; see also~\cite{Karr:81}.

\medskip

\noindent\fbox{\begin{minipage}{12.9cm}
\noindent \textbf{Problem PMT for $\dfield{\FF}{\sigma}$:} Parameterized Multiplicative Telescoping.\\
\textit{Given} a difference field $\dfield{\FF}{\sigma}$ and $\vect{f}=(f_1,\dots,f_d)\in(\FF^*)^d$.\\
\textit{Find} a basis\footnote{Note that $M$ is a $\ZZ$-submodule of $\ZZ^d$ whose rank is at most $d$.} $M:=\{(z_1,\dots,z_d)\in\ZZ^d\mid \frac{\sigma(g)}{g}=f_1^{z_1}\dots f_d^{z_d}\text{ for some }g\in\FF^*\}$.
\end{minipage}}

\medskip

\noindent Then a special case of~\cite[Thm.~2.23]{Schneider:16a} leads to the following algorithmic result.

\begin{theorem}\label{Thm:RPSReduction}
Let $\dfield{\EE}{\sigma}$ be a basic \rpisiSE-extension of a difference field $\dfield{\FF}{\sigma}$. If Problems~PFLDE and~PMT are solvable for $\dfield{\FF}{\sigma}$ then Problem~PTDR is solvable for $\dfield{\EE}{\sigma}$.
\end{theorem}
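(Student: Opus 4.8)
The plan is to deduce the statement from its constructive counterpart in~\cite{Schneider:16a} by induction on the length $e$ of the tower $\EE=\FF\lr{t_1}\dots\lr{t_e}$, carried through with a strengthened induction hypothesis. Writing $\AA_k=\FF\lr{t_1}\dots\lr{t_k}$, the hypothesis I would maintain is that for every $k$ one can solve over $\dfield{\AA_k}{\sigma}$ not only Problem~PTDR, but also the parameterized first-order linear difference equation $\sigma(g)+a\,g=c_1f_1+\dots+c_df_d$ for the leading coefficients $a$ that actually occur (namely $a=-1$, and $a=-\alpha^{-i}$ where $\alpha$ is a shift quotient of one of the generators, so that $\alpha$ has the monomial shape described in Theorem~\ref{Thm:ShiftQuotient}), together with the multiplicative Problem~PMT. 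For $k=0$ this is precisely the hypothesis of the theorem: Problem~PTDR over $\FF$ is the instance $a=-1$ of Problem~PFLDE, and PFLDE and PMT over $\FF$ are given. Throughout, the bound $\dim_{\KK}V(\vect{f},\AA)\le d+1$ from~\cite[Lemma~2.17]{Schneider:16a} ensures that all solution spaces produced are finite-dimensional, so the intersections and projections below reduce to linear algebra over $\KK$.

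For the inductive step I would adjoin $t=t_{k+1}$ to $\AA:=\AA_k$ and reduce an equation over $\AA\lr{t}$ to finitely many equations over $\AA$ by comparing coefficients of $t$, distinguishing the three monomial types. If $t$ is a \sigmaSE-monomial with $\sigma(t)=t+\beta$, one first bounds $\deg_t g$ in terms of $\max_l\deg_t f^{(l)}$ — the crucial point being that, since $t$ is a \sigmaSE-monomial, the top $t$-degree is not lowered by $\sigma$, so no cancellation occurs there — and then comparing coefficients of $t^N,t^{N-1},\dots$ yields a cascade of parameterized telescoping equations over $\AA$ whose right-hand sides depend $\KK$-linearly on the original parameters and on the coefficients already computed; each is solvable by the induction hypothesis. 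If $t$ is a \piE-monomial with $\sigma(t)=\alpha\,t$, $\alpha\in\AA^*$, one works with Laurent polynomials, bounds the $t$-exponent range of $g$ from below and above, and the coefficient of $t^i$ yields $\sigma(g_i)\,\alpha^i-g_i=\sum_l c_l f^{(l)}_i$, a parameterized first-order linear difference equation over $\AA$ with coefficient $-\alpha^{-i}\in\AA^*$; here Problem~PMT, lifted to $\AA$ and ultimately to $\FF$ via Theorem~\ref{Thm:ShiftQuotient}, is what decides solvability of the homogeneous parts $\sigma(h)=\alpha^{-i}h$ and supplies the required denominator and periodicity data. If $t=y$ is an \rE-monomial of order $\lambda$ with $\sigma(y)=\alpha\,y$, $\alpha\in\KK^*$, then $g=\sum_{i=0}^{\lambda-1}g_i\,y^i$ is already a finite sum and coefficient comparison reduces directly to the $\lambda$ equations $\sigma(g_i)\,\alpha^i-g_i=\sum_l c_l f^{(l)}_i$ over $\AA$. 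In every case a basis of $V(\vect{f},\AA\lr{t})$ is assembled from the finitely many finite-dimensional solution spaces of the reduced equations by linear algebra, which preserves effectivity; and one checks that the strengthened hypothesis propagates, i.e.\ that PFLDE over $\AA\lr{t}$ (for admissible leading coefficients) follows by the same coefficient-comparison reduction, while PMT over $\AA\lr{t}$ follows from Theorem~\ref{Thm:ShiftQuotient}, which forces any $g\in(\AA\lr{t})^{*}$ with $\sigma(g)/g$ a monomial in the generators to be itself such a monomial, collapsing the search to a $\ZZ$-linear problem for the exponent vectors plus a residual multiplicative equation over $\FF$ handled by PMT for $\FF$.

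The main obstacle — and the single place where the \rpisiSE-property is genuinely needed — is the degree and denominator bounding in the \sigmaSE- and \piE-cases: one must produce effective bounds on $\deg_t g$ and exclude the degenerate situation in which the reduced homogeneous equations acquire extra solutions that would prevent any finite bound. By Theorem~\ref{Thm:RPSCharacterization} this degenerate situation arises exactly when $t$ fails to be a \sigmaSE- or \piE-monomial, so it is precisely ruled out by the hypothesis, and turning this obstruction into explicit bounds is the technical heart of~\cite{Schneider:16a}. Consequently the present theorem is the specialization of~\cite[Thm.~2.23]{Schneider:16a} obtained by taking the leading coefficient $-1$ in the parameterized equation and observing that a basic \rpisiSE-extension satisfies the hypotheses of that theorem; the inductive reduction sketched above is exactly the one carried out there for the more general parameterized linear difference equation problem.
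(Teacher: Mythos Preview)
Your proposal is correct and aligns with the paper's treatment: the paper does not prove this theorem at all but simply states that it is a special case of \cite[Thm.~2.23]{Schneider:16a}, which is exactly the conclusion you reach in your final paragraph. Your inductive sketch of the coefficient-comparison reduction (degree bounding for \sigmaSE-monomials, exponent-range bounding for \piE-monomials, finite expansion for \rE-monomials, with PMT controlling the homogeneous pieces) accurately reflects the structure of the argument in \cite{Schneider:16a}, so you have in effect supplied the outline that the paper omits by citation.
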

\noindent For instance, one can solve Problems~PFLDE and~PMT for $\dfield{\FF}{\sigma}$ if $\dfield{\FF}{\sigma}$ is one of the difference fields of Example~\ref{Exp:PSField}. More generally, there is the following result~\cite{Schneider:06d} that one obtains by analyzing carefully Karr's summation algorithm~\cite{Karr:81}.
\begin{theorem}\label{Thm:PSReduction}
Let $\dfield{\FF}{\sigma}$ be a \pisiSE-field extension of $\dfield{\GG}{\sigma}$. If Problem~PFLDE is solvable for $\dfield{\GG}{\sigma}$ and $\dfield{\GG}{\sigma}$ is $\sigma^*$-computable (i.e., certain algorithmic properties hold that are specified in~\cite[Def.~1]{Schneider:06d}), then Problems~PFLDE and~PMT are solvable for $\dfield{\FF}{\sigma}$.
\end{theorem}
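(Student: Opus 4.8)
The plan is to derive Theorem~\ref{Thm:PSReduction} by reducing the case of a tower of \pisiSE-field extensions to the single-generator case, and then to invoke the known first-order machinery from Karr's algorithm. More precisely, I would argue by induction on the number $e$ of generators in the tower $\dfield{\GG(t_1)\dots(t_e)}{\sigma}$, so that the heart of the matter is to solve Problems~PFLDE and~PMT for a single \pisiSE-field extension $\dfield{\FF(t)}{\sigma}$ of $\dfield{\FF}{\sigma}$, assuming inductively that both problems are already solvable for $\dfield{\FF}{\sigma}$ (and that the bottom field $\dfield{\GG}{\sigma}$ is $\sigma^*$-computable so that the base case, i.e.\ Problem~PFLDE for $\dfield{\GG}{\sigma}$ plus the additional decision procedures packaged into $\sigma^*$-computability, is available).

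The key steps are as follows. First, for Problem~PMT over $\dfield{\FF(t)}{\sigma}$: given $f_1,\dots,f_d\in\FF(t)^*$, I would use the structure of units and the shift-quotient characterization to reduce the question $\frac{\sigma(g)}{g}=f_1^{z_1}\dots f_d^{z_d}$ to a problem over $\dfield{\FF}{\sigma}$. Concretely, if $t$ is a \piE-monomial one must also allow contributions from the power $t^m$ in $g$; comparing the $t$-degrees / the behaviour at the places of $\FF(t)$ (or, in the integral-domain picture, using that $\sigma$ permutes the irreducible factors only trivially for \pisiSE-monomials) shows that the exponent vectors $(z_1,\dots,z_d)$ satisfying the equation form a $\ZZ$-submodule of $\ZZ^d$, and a basis is computed from a PMT-instance and a PFLDE-instance over $\dfield{\FF}{\sigma}$. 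Second, for Problem~PFLDE over $\dfield{\FF(t)}{\sigma}$: given $a\in\FF(t)^*$ and $f_1,\dots,f_d\in\FF(t)$, one performs the classical two-stage reduction — first bound the denominator of any solution $g$ (the ``denominator bound'', using the $\sigma^*$-computability assumptions and solvability of PMT over $\dfield{\FF}{\sigma}$), reducing to a polynomial problem, then bound the degree of the polynomial numerator (the ``degree bound''), reducing to a linear system whose unknowns are the coefficients of $g$ in $\FF$ together with the parameters $c_i\in\KK$. Assembling the linear constraints coefficient-by-coefficient in $t$ and solving them recursively via PFLDE over $\dfield{\FF}{\sigma}$ yields a basis of $V$. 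The induction step then simply applies this to each successive generator, noting that $\const{\FF(t_1)\dots(t_e)}{\sigma}=\KK$ throughout by the \pisiSE-property, so the constant field never changes.

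The main obstacle — and the place where essentially all of Karr's original work lives — is the denominator and degree bounding in the PFLDE reduction for a single \pisiSE-monomial. The denominator bound requires controlling which irreducible polynomials of $\FF[t]$ (for $t$ an $S$-monomial) or which powers of $t$ (for $t$ a \piE-monomial) can occur in the denominator of a solution, and this is exactly where one needs the ability to decide existence of $g\in\FF$ with $\sigma(g)/g$ equal to a prescribed element (Problem~PMT for $\dfield{\FF}{\sigma}$) and to solve shifted first-order equations over $\dfield{\FF}{\sigma}$; the degree bound similarly hinges on analyzing leading coefficients under $\sigma$. I expect to handle this not by re-proving it but by citing the detailed analysis of~\cite{Karr:81} as carried out in~\cite{Schneider:06d}: the statement of Theorem~\ref{Thm:PSReduction} is explicitly that ``analyzing carefully Karr's summation algorithm'' gives the result, so the proof I would present is really a structured pointer to those references, spelling out the induction on the tower and verifying that at each stage the two auxiliary problems feed exactly into the hypotheses needed for the next stage. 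The only genuinely new bookkeeping is checking that $\sigma^*$-computability of $\dfield{\GG}{\sigma}$ together with solvability of PFLDE for $\dfield{\GG}{\sigma}$ is preserved in the precise form needed as one climbs the tower — but since PMT and PFLDE for $\dfield{\FF(t)}{\sigma}$ are produced simultaneously at each step, the induction closes.
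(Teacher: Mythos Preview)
Your proposal is sound, but note that the paper does not actually prove Theorem~\ref{Thm:PSReduction}: it is stated as a known result, with the sentence ``there is the following result~\cite{Schneider:06d} that one obtains by analyzing carefully Karr's summation algorithm~\cite{Karr:81}'' serving as the entire justification. Your sketch --- induction on the tower, reducing PFLDE over $\FF(t)$ via denominator and degree bounds to PFLDE over $\FF$, and reducing PMT over $\FF(t)$ to PMT and PFLDE over $\FF$ --- is exactly the content of those references, and you correctly identify that the substance lives in Karr's bounding arguments rather than in anything to be redone here. So there is no discrepancy: both you and the paper treat this theorem as a citation, and your outline matches what the cited works do.
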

\noindent E.g., if $\dfield{\GG}{\sigma}$ is the difference field of unspecified sequences~\cite{Schneider:06d} or of radical expressions~\cite{Schneider:07f}, it is $\sigma^*$-computable and Problem~PFLDE is solvable for $\dfield{\GG}{\sigma}$. Thus one can solve Problems~PFLDE and~PMT for a tower of \pisiSE-field extensions over $\dfield{\GG}{\sigma}$; see Remark~\ref{Remark:ConstructiveVersion}.

In Theorem~\ref{Thm:PTRing} below we will derive an alternative approach that tackles Problem~PTDR for a basic \rpisiSE-extension $\dfield{\EE}{\sigma}$ of $\dfield{\FF}{\sigma}$. This new reduction will require more conditions than the ones formulated in Theorem~\ref{Thm:RPSReduction}, but it provides more flexibility: Problem~PTDR can be solved for the total ring of  fractions $Q(\EE)$ by exploiting the existing summation algorithms for \pisiSE-field extensions~\cite{Karr:81,Schneider:04a,Schneider:05f,Schneider:07d,Schneider:08c,Schneider:15}.

We assume that the \rpisiSE-extension $\dfield{\EE}{\sigma}$ of a constant-stable difference field $\dfield{\FF}{\sigma}$ is given in the form~\eqref{Equ:SingleRPS} where the \rE-monomial $y$ has order $n$ and the idempotent elements are given by~\eqref{Equ:SpecificEi} with $0\leq s<n$. Here we emphasize once more that several basic \rE-monomials can be reduced to this specific situation using Lemma~\ref{Lemma:RSingleIsoMultiple}. Then using Theorem~\ref{Thm:Interlacing} we get the direct sum~\eqref{Equ:decomposition} where the $\dfield{e_s\,\EE}{\sigma^n}$ are \pisiSE-extensions of $\dfield{\FF}{\sigma}$ with $0\leq s<n$. More precisely, taking into account the isomorphisms~\eqref{Equ:FirstEquiv} we get the \pisiSE-extensions $\dfield{\tilde{\EE}}{\sigma_s}$ of $\dfield{\FF}{\sigma^n}$ for $0\leq s<n$ with $\tilde{\EE}=\FF\lr{t_1}\dots\lr{t_e}$ and~\eqref{Equ:Sigmas}. 
In this setting we get the following lemma.

\begin{lemma}\label{Lemma:fIsConstant}
Let $\dfield{\FF}{\sigma}$ be a constant-stable difference field with $\KK=\const{\FF}{\sigma}$.
Let $\dfield{\EE}{\sigma}$ be a basic \rpisiSE-extension of $\dfield{\FF}{\sigma}$ with~\eqref{Equ:SingleRPS} and~\eqref{Equ:SpecificEi} for $0\leq s<n$. Then for any $f\in\EE$ with $f+\sigma(f)+\dots+\sigma^{n-1}(f)=0$ we have that $f\in\KK[y]$. 
\end{lemma}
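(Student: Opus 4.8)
The plan is to use the idempotent decomposition of $\EE$ together with the interlacing structure established in Theorem~\ref{Thm:Interlacing}. Write $f = \sum_{s=0}^{n-1} e_s\,f_s$ with $f_s \in \tilde{\EE} = \FF\lr{t_1}\dots\lr{t_e}$ according to the direct sum~\eqref{Equ:decomposition}. By part~(2) of Lemma~\ref{Lemma:MapSigma}, applying $\sigma^k$ cyclically permutes the components, so the coefficient of $e_j$ in $f + \sigma(f) + \dots + \sigma^{n-1}(f)$ is obtained by collecting, for each $s$, the $\sigma^{k}$-image of $f_s$ that lands in the $e_j$-slot (with the appropriate substitution $y \to \alpha^{n-1-j}$). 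The hypothesis forces each such $e_j$-component to vanish. Concretely, fixing $j$ and using the isomorphisms $\sigma^{k}\colon \dfield{e_s\,\tilde{\EE}}{\sigma^n} \to \dfield{e_{s+k \bmod n}\,\tilde{\EE}}{\sigma^n}$ from part~(4) of Theorem~\ref{Thm:Interlacing}, we get a relation inside the integral domain $\tilde{\EE}$ expressing one $f_s$ as a combination (via the $\sigma_s$-maps~\eqref{Equ:Sigmas}) of the others, which sum to zero.

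The key step is then to show that this forces all $f_s \in \KK$. Here I would pass to the \pisiSE-extension $\dfield{\tilde{\EE}}{\sigma_0}$ of $\dfield{\FF}{\sigma^n}$ (recall from the discussion around~\eqref{Equ:FirstEquiv} that $\dfield{e_s\,\EE}{\sigma^n} \simeq \dfield{\tilde{\EE}}{\sigma_s}$, and the various $\sigma_s$ are mutually conjugate by $\fct{\tau_s}{\tilde{\EE}}{\tilde{\EE}}$). Translating the interlacing relation through these conjugations, the vanishing of the $e_j$-component for all $j$ becomes: a single element $g \in \tilde{\EE}$ satisfies $g + \sigma_0(g) + \dots + \sigma_0^{?}(g) = 0$ — but in fact the cleaner route is that each $f_s$, being the $s$-th component of a $\sigma$-orbit-sum that vanishes, satisfies $\sigma_0^{\,n}$-type periodicity after relabelling. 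Since $\dfield{\tilde{\EE}}{\sigma_0}$ is a \pisiSE-extension of $\dfield{\FF}{\sigma^n}$ and $\dfield{\FF}{\sigma}$ is constant-stable (hence so is $\dfield{\FF}{\sigma^n}$, so by Theorem~\ref{Thm:RPiSiIsSimple}/Theorem~\ref{Thm:EquivSimpleConst} this extension is simple with the same constants), any element fixed by the relevant power of the automorphism lies in $\const{\FF}{\sigma^n} = \KK$. More directly: from $\sum_k \sigma^k(f) = 0$ one extracts $\sigma(f) = -f - \sum_{k=2}^{n-1}\sigma^{k-1}(\sigma(f))$, but the transparent argument is to note the telescoping identity $\sigma\big(f + \sigma(f) + \dots + \sigma^{n-2}(f)\big) - \big(f + \dots + \sigma^{n-2}(f)\big) = \sigma^{n-1}(f) - f$, and combine with the hypothesis to conclude $\sigma^{n-1}(f) - f = \sigma(h) - h$ would need more care; cleaner is: set $h := f + \sigma(f) + \dots + \sigma^{n-2}(f) = -\sigma^{n-1}(f)$, so $\sigma(h) - h = \sigma^{n-1}(f) - f = -h + \sigma^{-1}(\sigma^{n}(f))$. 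I expect the slickest formulation to run through the componentwise picture: $e_s\,f_s$ must be a $\sigma^n$-constant after the dust settles, whence $f_s \in \const{\tilde{\EE}}{\sigma_s} = \KK$ by the \pisiSE-property, and therefore $f = \sum_s e_s\,f_s \in \KK[y]$.

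The main obstacle will be bookkeeping the index shifts correctly: tracking which substitution $y \to \alpha^{n-1-j}$ accompanies each $\sigma^k$ in Lemma~\ref{Lemma:MapSigma}(2), and verifying that the vanishing of all $e_j$-components of $\sum_{k=0}^{n-1}\sigma^k(f)$ is genuinely equivalent to each $f_s$ being $\sigma_s$-invariant (up to the conjugations $\tau_s$) rather than merely satisfying a weaker orbit-sum relation. I would handle this by first treating $n=2$ explicitly as a sanity check — there $e_0 = \tfrac{1-y}{2}$, $e_1 = \tfrac{1+y}{2}$, $f = e_0 f_0 + e_1 f_1$, and $f + \sigma(f) = 0$ reads $f_0 = -\sigma_0(f_1)$ together with $f_1 = -\sigma_1(f_0)$, i.e. $\sigma_0(\sigma_1(f_0)) = f_0$ after one conjugation, forcing $f_0, f_1 \in \KK$ — and then record the general case by the same pattern, invoking that $\dfield{\tilde{\EE}}{\sigma_s}$ being a \pisiSE-extension of the constant-stable field $\dfield{\FF}{\sigma^n}$ makes its ring of $\sigma_s$-constants equal to $\KK$. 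Once all $f_s \in \KK$, the conclusion $f \in \KK[y]$ is immediate from the representation $f = \sum_s e_s f_s$ and the fact that each $e_s \in \FF[y]$ with $\KK$-coefficients, so in particular $f \in \KK[y]$ (indeed the $e_s$ span the same $\KK$-space as $1, y, \dots, y^{n-1}$ by Lemma~\ref{Lemma:RootImplEqual}).
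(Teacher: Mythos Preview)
Your plan is correct in its endpoint but misses the one-line observation that makes the proof immediate. From $S := f + \sigma(f) + \dots + \sigma^{n-1}(f) = 0$ one has $\sigma(S) - S = \sigma^n(f) - f = 0$, so $\sigma^n(f) = f$ directly. Then by part~(3) of Lemma~\ref{Lemma:MapSigma} (not part~(2)), $\sigma^n$ acts componentwise on the decomposition $f = \sum_s e_s f_s$, giving $\sigma^n(e_s f_s) = e_s f_s$ for each $s$. Since $\dfield{e_s\EE}{\sigma^n}$ is a \pisiSE-extension of $\dfield{e_s\FF}{\sigma^n}$ by part~(5) of Theorem~\ref{Thm:Interlacing}, this forces $f_s \in \const{\FF}{\sigma^n} = \KK$ (using constant-stability), and hence $f \in \KK[y]$.

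By contrast, you work through part~(2) of Lemma~\ref{Lemma:MapSigma}, tracking how each $\sigma^k$ cyclically permutes the $e_j$-slots, and then try to close up the orbit relations via the conjugating maps $\tau_s$. This does eventually yield $\sigma_s(f_s) = f_s$, but the index bookkeeping you flag as ``the main obstacle'' is entirely self-inflicted: it disappears once you observe $\sigma^n(f) = f$. Your $n=2$ sanity check illustrates the pitfall --- you write ``$f_0 = -\sigma_0(f_1)$'', but the $e_0$-component of $f+\sigma(f)=0$ actually gives $f_0 = -\sigma(f_1)|_{y\to -1}$, a single application of $\sigma$, whereas $\sigma_0$ as defined in~\eqref{Equ:Sigmas} involves $\sigma^2$. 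Composing two such single-$\sigma$ steps does recover $\sigma_0(f_0)=f_0$, so the argument can be salvaged, but the notation conflates distinct maps. The paper's route via $\sigma^n(f)=f$ avoids all of this.
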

\begin{proof}
Let $f=e_0\,f_0+\dots+e_{n-1}\,f_{n-1}$ as above. Then $0=\sigma(\sigma^{n-1}(f)+\dots+f)-(\sigma^{n-1}(f)+\dots+f)=\sigma^n(f)-f$. Let $0\leq s<n$ . Then by part~(3) of Lemma~\ref{Lemma:MapSigma} we get $\sigma^n(e_s\,f_s)=e_s\,f_s$. Since $\dfield{e_s\,\EE}{\sigma^n}$ is a \pisiSE-extension of $\dfield{e_s\FF}{\sigma^n}$ by part (5) of Theorem~\ref{Thm:Interlacing}, it follows that $e_s\,f_s\in\const{e_s\,\FF}{\sigma^n}$. Therefore $f_s\in\const{\FF}{\sigma^n}$ and since $\dfield{\FF}{\sigma}$ is constant-stable, $f_s\in\KK$. Thus $f_s\in\KK$ for all $0\leq s<n$, and hence $f\in\KK[y]$.  
\end{proof}

\noindent Now we are ready to present the following reduction strategy.

\begin{proposition}\label{Prop:ReduceToCopies}
Let $\dfield{\EE}{\sigma}$ be a basic \rpisiSE-extension of a constant-stable difference field $\dfield{\FF}{\sigma}$ with~\eqref{Equ:SingleRPS} and~\eqref{Equ:SpecificEi} for $0\leq s<n$ and take the \pisiSE-extensions $\dfield{\tilde{\EE}}{\sigma_s}$ of $\dfield{\FF}{\sigma^n}$ for $0\leq s<n$. Let $f\in\EE$ and define $\tilde{f}_s=\Big(\sum_{i=0}^{n-1}\sigma^i(f)\Big)|_{y\to\alpha^{n-1-s}}\in\tilde{\EE}$. 
\begin{enumerate}
\item Then there exists a $g\in\EE$  with~\eqref{Equ:TeleNewAlg} iff there are $\tilde{g}_s\in\tilde{\EE}$ for $0\leq s<n$ with
\begin{equation}\label{Equ:TeleCopy}
\sigma_s(\tilde{g}_s)-\tilde{g}_s=\tilde{f}_s.
\end{equation}
\item If we are given such $\tilde{g}_s$ of $0\leq s<n$, then we get $f':=f-(\sigma(\tilde{g})-\tilde{g})\in\KK[y]$ for $\tilde{g}=e_0\,\tilde{g}_0+\dots+e_{n-1}\,\tilde{g}_{n-1}$. Furthermore, a solution $g'\in\KK[y]$ of $\sigma(g')-g'=f'$ can be computed, and $g=\tilde{g}+g'$ is a solution of~\eqref{Equ:TeleNewAlg}. 
\end{enumerate}
\end{proposition}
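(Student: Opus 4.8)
The statement splits naturally into two implications plus a constructive procedure, and the engine driving everything is the decomposition $\EE = e_0\,\tilde\EE\oplus\dots\oplus e_{n-1}\,\tilde\EE$ from Theorem~\ref{Thm:Interlacing} together with the shift formula of Lemma~\ref{Lemma:MapSigma}. The first thing I would establish is that the quantity $\sum_{i=0}^{n-1}\sigma^i(f)$ is $\sigma^n$-invariant: writing $F := \sum_{i=0}^{n-1}\sigma^i(f)$, one has $\sigma(F) - F = \sigma^n(f) - f$, so $\sigma^n(F) = F$ would follow if $\sigma(F)=F$, but in general it is the telescoping relation that matters. Concretely, if $g\in\EE$ solves $\sigma(g)-g=f$, then $\sigma(F) = \sigma(g) - g + \text{(shifts)}$... more precisely $F = \sum_{i=0}^{n-1}\sigma^i(f) = \sum_{i=0}^{n-1}\sigma^i(\sigma(g)-g) = \sigma^n(g) - g$, which is manifestly $\sigma^n$-invariant up to the telescoping of the $\sigma^n$-action; so $F\in\EE$ has $\sigma^n(F) = \sigma^n(\sigma^n(g)) - \sigma^n(g)$, hmm — the cleanest route is simply to note $\sigma^n(g) - g = F$ and apply $\sigma^n$ to get the components. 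Decomposing $F = \sum_s e_s\,\tilde f_s$ with $\tilde f_s = F|_{y\to\alpha^{n-1-s}}$ (using~\eqref{Equ:NormalizeeiRep}), part~(3) of Lemma~\ref{Lemma:MapSigma} tells us $\sigma^n$ acts componentwise, so the single equation over $\EE$ splits into $n$ independent equations over $\tilde\EE$, one for each $s$, with the operator $\sigma_s$ defined in~\eqref{Equ:Sigmas}; this is exactly~\eqref{Equ:TeleCopy}.

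For the forward direction of part~(1): given $g\in\EE$ with $\sigma(g)-g=f$, write $g = \sum_s e_s\,g_s$. Then $\sigma^n(g) - g = \sum_{i=0}^{n-1}\sigma^i(\sigma(g)-g) = \sum_{i=0}^{n-1}\sigma^i(f) = F$, and componentwise (Lemma~\ref{Lemma:MapSigma}~(3)) this reads $\sigma_s(g_s) - g_s = \tilde f_s$ after normalizing via $y\to\alpha^{n-1-s}$; so $\tilde g_s := g_s|_{y\to\alpha^{n-1-s}}$ solves~\eqref{Equ:TeleCopy}. For the converse, suppose we have $\tilde g_s\in\tilde\EE$ solving~\eqref{Equ:TeleCopy} for all $s$, and set $\tilde g = \sum_s e_s\,\tilde g_s\in\EE$. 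Using Lemma~\ref{Lemma:MapSigma}, $\sigma^n(\tilde g) - \tilde g = \sum_s e_s(\sigma_s(\tilde g_s) - \tilde g_s) = \sum_s e_s\,\tilde f_s = F = \sum_{i=0}^{n-1}\sigma^i(f)$. Now define $f' := f - (\sigma(\tilde g) - \tilde g)$. Applying $\sum_{i=0}^{n-1}\sigma^i$ to $f'$ gives $\sum_{i=0}^{n-1}\sigma^i(f) - (\sigma^n(\tilde g) - \tilde g) = 0$. By Lemma~\ref{Lemma:fIsConstant}, $f'\in\KK[y]$. This is the key reduction: the original telescoping problem over the full ring $\EE$ has been replaced by $n$ telescoping problems over the $\Pi\Sigma^*$-rings $\dfield{\tilde\EE}{\sigma_s}$ plus a residual problem over the finite-dimensional $\KK$-algebra $\KK[y]$.

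For part~(2) it remains to solve $\sigma(g') - g' = f'$ with $f'\in\KK[y]$ and to argue $g'$ can be taken in $\KK[y]$. Here I would expand $f' = \sum_{j=0}^{n-1} c_j\,y^j$ with $c_j\in\KK$, seek $g' = \sum_{j=0}^{n-1} d_j\,y^j$, and use $\sigma(y^j) = \alpha^j\,y^j$ (since $\alpha\in\KK$ so $\sigma(\alpha)=\alpha$) to get, coefficientwise, $(\alpha^j - 1)\,d_j = c_j$. For $j=0$ this forces $c_0 = 0$ (else no solution — but $c_0=0$ is guaranteed: apply $\sum_{i=0}^{n-1}\sigma^i$ to $f'$, the $y^0$-component, and use that the $e_s$-components of $f'$ must telescope trivially; alternatively $f' = f - (\sigma(\tilde g)-\tilde g)$ has a telescoping solution over $\EE$ by construction of $\tilde g$, hence over $\KK[y]$ the $j=0$ obstruction vanishes), and $d_0\in\KK$ is free; for $1\le j<n$, $\alpha^j\neq 1$ since $\alpha$ is a primitive $n$th root of unity (Proposition~\ref{Prop:RExt}), so $d_j = c_j/(\alpha^j-1)$ is determined. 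Thus $g'\in\KK[y]$ exists and is computable. Finally $g := \tilde g + g'$ satisfies $\sigma(g) - g = (\sigma(\tilde g) - \tilde g) + f' = f$, completing the proof.

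\textbf{Main obstacle.} The one genuinely delicate point is verifying that $f'\in\KK[y]$ admits a telescoping solution \emph{within} $\EE$ (equivalently, that the $y^0$-coefficient $c_0$ of $f'$ vanishes) — without this the final step fails. The clean argument is: by construction $f' = f - (\sigma(\tilde g) - \tilde g)$, and the whole point of choosing the $\tilde g_s$ to solve~\eqref{Equ:TeleCopy} is that it kills precisely the non-$\KK[y]$ part of the obstruction; the residual $c_0$ then vanishes because, were it nonzero, $\sum_{i=0}^{n-1}\sigma^i(f')$ would equal $n\,c_0\,y^0 + (\text{terms with } j\geq 1) \neq 0$, contradicting the identity $\sum_{i=0}^{n-1}\sigma^i(f') = 0$ established above (the $y^j$ terms for $j\ge1$ also cancel since $\sum_{i=0}^{n-1}\alpha^{ij} = 0$ for $j\not\equiv 0 \pmod n$). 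So in fact $\sum_i\sigma^i(f')=0$ forces all components to behave correctly, and $c_0=0$ drops out for free. Everything else is the bookkeeping of the idempotent decomposition, which Theorem~\ref{Thm:Interlacing} and Lemma~\ref{Lemma:MapSigma} have already set up.
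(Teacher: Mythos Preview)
Your proof is correct and follows essentially the same route as the paper: telescope $\sigma(g)-g=f$ to $\sigma^n(g)-g=\sum_{i=0}^{n-1}\sigma^i(f)$, split componentwise via the idempotent decomposition (Theorem~\ref{Thm:Interlacing} and Lemma~\ref{Lemma:MapSigma}), then for the converse use Lemma~\ref{Lemma:fIsConstant} to reduce the residue $f'$ to $\KK[y]$ and solve there by inverting $\alpha^j-1$. You are in fact more explicit than the paper on the one subtle point: the paper writes the formula $\gamma=y^i/(\alpha^i-1)$ ``for any $0\le i<n$'' without remarking that the case $i=0$ is vacuous, whereas you correctly argue that the $y^0$-coefficient $c_0$ of $f'$ must vanish because $\sum_{i=0}^{n-1}\sigma^i(f')=0$ forces $n\,c_0=0$. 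One cosmetic nit: in the forward direction your $g_s$ already lie in $\tilde\EE$ once you write $g=\sum_s e_s\,g_s$, so the substitution $g_s|_{y\to\alpha^{n-1-s}}$ does nothing; the relevant substitution is the one hidden inside the definition of $\sigma_s$, and you can simply take $\tilde g_s=g_s$.
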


\begin{proof}
``$\Rightarrow$'': Suppose that there is a $g\in\EE$ with~\eqref{Equ:TeleNewAlg} and write $g=e_0\,g_0+\dots+e_{n-1}\,g_{n-1}$ with $g_i\in\tilde{\EE}$ for $0\leq i<n$. For $\tilde{f}:=\sum_{i=0}^{n-1}\sigma^i(f)$ we have that $\tilde{f}=\sum_{i=0}^{n-1}\sigma^i(\sigma(g)-g)=\sigma^n(g)-g$. Note in addition that
$\tilde{f}=e_0\,\tilde{f}_0+\dots+e_{n-1}\,\tilde{f}_{n-1}$.
Thus by Theorem~\ref{Thm:Interlacing} it follows that $\sigma^n(e_s\,g_s)-e_s\,g_s=e_s\,\tilde{f}_s$ and therefore $\sigma_s(g_s)-g_s=\tilde{f}_s$ for all $0\leq s<n$.\\
``$\Leftarrow$'': Suppose that we get $\tilde{g}_s\in\tilde{\EE}$ for $0\leq s<e$ with~\eqref{Equ:TeleCopy}.
Now define $f':=f-(\sigma(\tilde{g})-\tilde{g})\in\EE$. Observe that
$f'+\sigma(f')+\dots+\sigma^{n-1}(f')=\tilde{f}-(\sigma^n(\tilde{g})-\tilde{g})=\sum_{s=0}^{n-1}(\tilde{f}_s-(\sigma(\tilde{g}_s)-\tilde{g}_s))=0.$
Therefore by Lemma~\ref{Lemma:fIsConstant} it follows that $f'\in\KK[y]$. 
Moreover, note that for any $0\leq i<n$ we have that
$\sigma(\gamma)-\gamma=\frac{1}{\alpha^i-1}(\alpha^i\,y^i-y^i)=y^i$ for $\gamma=\frac{y^i}{\alpha^i-1}$. Thus by linearity we can compute a $g'\in\KK[y]$ with $\sigma(g')-g'=f'$. Consequently we can take $g=\tilde{g}+g'$ and get 
$f-(\sigma(g)-g)=f-(\sigma(\tilde{g})-\tilde{g})-(\sigma(g')-g')=f'-f'=0.$
Due to this explicit construction, the second part of the proposition follows.
\end{proof}

\noindent In summary, solving the telescoping problem in $\dfield{\EE}{\sigma}$ can be reduced to solving the telescoping problem in the $n$ \pisiSE-extensions $\dfield{\tilde{\EE}}{\sigma_s}$ of $\dfield{\FF}{\sigma^n}$ with $0\leq s<n$. By Theorem\footnote{Here one does not use the full machinery for \rpisiSE-monomials, but exploits only sub-algorithms that tackle \pisiSE-monomials. More precisely, the resulting algorithm equals a special version given in~\cite{Schneider:15}.}~\ref{Thm:RPSReduction} this is possible if one can solve Problems~PFLDE and~PMT for $\dfield{\FF}{\sigma^n}$.

\begin{example}\label{Exp:ComputeTele}
We continue with Examples~\ref{Exp:Structure1}--\ref{Exp:Structure3}. Given $f\in\EE$ from Example~\ref{Exp:Structure1}, we want to calculate a $g\in\EE$ with~\eqref{Equ:TeleNewAlg}. Therefore we set $\tilde{f}=f+\sigma(f)$ and get
$\tilde{f}=\tfrac{-2 y s_1}{(x+1) (x+2)}
+\tfrac{2(
        x^2+4 x+5)}{(x+1)^2 (x+2)^2 (x+3)}
-\tfrac{y}{(x+1) (x+2)}=e_0\,\tilde{f}_0+e_1\,\tilde{f}_1$
with
$\tilde{f}_0=\tilde{f}|_{y\to-1}=\tfrac{2 s_1}{(x+1) (x+2)}
+\tfrac{x^3+8 x^2+19 x+16}{(x+1)^2 (x+2)^2 (x+3)}$ and 
$\tilde{f}_1=\tilde{f}|_{y\to1}=-\tfrac{2 s_1}{(x+1) (x+2)}
-\tfrac{x^3+4 x^2+3 x-4}{(x+1)^2 (x+2)^2 (x+3)}$. 
We follow Proposition~\ref{Prop:ReduceToCopies} 
and try to compute the $\tilde{g_s}\in\tilde{\EE}$ with $s=0,1$ such that~\eqref{Equ:TeleCopy} holds. 
Using the algorithms from~\cite{Schneider:15} we compute 
$\tilde{g}_0=s_1
+s_1^2
-s'_2
+\frac{x+2}{x+1}$ and $\tilde{g}_1=s_1
+s_1^2
-s'_2
-\frac{x}{x+1}$. Thus we set $\tilde{g}=e_0\,\tilde{g}_0+e_1\,\tilde{g}_1
=s_1
+s_1^2
-s'_2
-\frac{-1+y+x\,y}{x+1}\in\EE$ and define $f':=f-(\sigma(\tilde{g})-\tilde{g})=-2\,y\in\KK[y]$. Finally, we compute $g'=y$ with $\sigma(g')-g'=f'$ and get the solution $g=\tilde{g}+g'=s_1
+s_1^2
-s'_2
+\frac{1}{x+1}$ of~\eqref{Equ:TeleNewAlg}.
\end{example}

Now we turn to the problem to solve the telescoping problem for the total difference ring of fractions $\dfield{Q(\EE)}{\sigma}$. Here we will exploit the representation
\begin{equation}\label{Equ:QuotRep}
Q(\EE)=e_0\,Q(\tilde{\EE})\oplus\dots\oplus e_{n-1}\,Q(\tilde{\EE});
\end{equation}
compare~\cite[Sec.~1.3 ]{Singer:99} and~\cite[Cor.~6.9]{Singer:08}.
Note that the difference field
$\dfield{Q(\tilde{\EE})}{\sigma_s}$ forms a polynomial \pisiSE-field extension of $\dfield{\FF}{\sigma^n}$ by iterative application of Corollary~\ref{Cor:LiftToField}. Then using these properties, one can extract the following reduction strategy.

\begin{proposition}\label{Prop:ReduceToCopiesQuot}
Let $\dfield{\EE}{\sigma}$ be a basic \rpisiSE-extension of a constant-stable difference field $\dfield{\FF}{\sigma}$ with~\eqref{Equ:SingleRPS} and~\eqref{Equ:SpecificEi} for $0\leq s<n$ and take the \pisiSE-field extensions $\dfield{Q(\tilde{\EE})}{\sigma_s}$ of $\dfield{\FF}{\sigma^n}$ for $0\leq s<n$. Let $f\in Q(\EE)$ and set $\tilde{f}_s:=\Big(\sum_{i=0}^{n-1}\sigma^i(f)\Big)|_{y\to\alpha^{n-1-s}}\in Q(\tilde{\EE})$. 
\begin{enumerate}
\item Then there is a $g\in Q(\EE)$  with~\eqref{Equ:TeleNewAlg} iff there are $\tilde{g}_s\in Q(\tilde{\EE})$ with~\eqref{Equ:TeleCopy} for $0\leq s<n$. 
\item If we are given such $\tilde{g}_s$ for $0\leq s<n$, a solution $g\in Q(\EE)$ of~\eqref{Equ:TeleNewAlg} can be computed as given in part~(2) of Proposition~\ref{Prop:ReduceToCopies}. 
\end{enumerate}
\end{proposition}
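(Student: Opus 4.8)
The proof will closely follow that of Proposition~\ref{Prop:ReduceToCopies}, with the integral domain $\tilde{\EE}$ replaced everywhere by its field of fractions $Q(\tilde{\EE})$. The first step is to transport the structural setup into the total ring of fractions. Since the idempotents $e_0,\dots,e_{n-1}$ of~\eqref{Equ:SpecificEi} already lie in $\FF[y]\subseteq\EE$, localization respects them, so one has the decomposition~\eqref{Equ:QuotRep} into the fields $e_s\,Q(\tilde{\EE})$, and the formulas of Lemma~\ref{Lemma:MapSigma} carry over verbatim: for $f=\sum_{s=0}^{n-1}e_s\,f_s$ with $f_s\in Q(\tilde{\EE})$ one has $\sigma^n(f)=\sum_{s=0}^{n-1}e_s\,\sigma_s(f_s)$, where $\sigma_s$ now denotes the field automorphism of $Q(\tilde{\EE})$ extending the automorphism of $\tilde{\EE}$ from~\eqref{Equ:Sigmas}. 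By iterated application of Corollary~\ref{Cor:LiftToField} (this is exactly the remark preceding the proposition) the $\dfield{Q(\tilde{\EE})}{\sigma_s}$ are polynomial \pisiSE-field extensions of $\dfield{\FF}{\sigma^n}$; in particular $\const{Q(\tilde{\EE})}{\sigma_s}=\const{\FF}{\sigma^n}=\KK$, the last equality because $\dfield{\FF}{\sigma}$ is constant-stable.

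The second step is the $Q(\EE)$-analogue of Lemma~\ref{Lemma:fIsConstant}: if $f\in Q(\EE)$ satisfies $f+\sigma(f)+\dots+\sigma^{n-1}(f)=0$, then $f\in\KK[y]$. Applying $\sigma$ to the hypothesis and subtracting gives $\sigma^n(f)=f$; writing $f=\sum_{s=0}^{n-1}e_s\,f_s$ and using the component-wise formula for $\sigma^n$ from the first step yields $\sigma_s(f_s)=f_s$, hence $f_s\in\const{Q(\tilde{\EE})}{\sigma_s}=\KK$ for every $s$, so $f\in\KK[y]$. With these two ingredients the remaining arguments are a transcription of the proof of Proposition~\ref{Prop:ReduceToCopies}. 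For ``$\Rightarrow$'' in part~(1), a solution $g=\sum_{s=0}^{n-1}e_s\,g_s\in Q(\EE)$ of~\eqref{Equ:TeleNewAlg} gives $\tilde{f}:=\sum_{i=0}^{n-1}\sigma^i(f)=\sigma^n(g)-g$ by telescoping, and comparing the components in $\tilde{f}=\sum_{s=0}^{n-1}e_s\,\tilde{f}_s$ with $\sigma^n(g)-g=\sum_{s=0}^{n-1}e_s\,(\sigma_s(g_s)-g_s)$ yields~\eqref{Equ:TeleCopy} with $\tilde{g}_s=g_s$. For ``$\Leftarrow$'', set $\tilde{g}=\sum_{s=0}^{n-1}e_s\,\tilde{g}_s$; then $f':=f-(\sigma(\tilde{g})-\tilde{g})$ is annihilated by $1+\sigma+\dots+\sigma^{n-1}$ (using~\eqref{Equ:TeleCopy}), hence $f'\in\KK[y]$ by the second step, and moreover the coefficient of $y^0$ in $f'$ vanishes since $1+\sigma+\dots+\sigma^{n-1}$ sends $\sum_i c_i\,y^i\in\KK[y]$ to $n\,c_0$. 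Thus, writing $f'=\sum_i c_i\,y^i$, the element $g':=\sum_{1\leq i<n}\frac{1}{\alpha^i-1}\,c_i\,y^i$ satisfies $\sigma(g')-g'=f'$, and $g=\tilde{g}+g'$ solves~\eqref{Equ:TeleNewAlg}; this explicit construction is precisely part~(2).

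I expect the only genuine work to be the bookkeeping in the first step: one has to confirm that localization at the idempotents produces exactly the decomposition~\eqref{Equ:QuotRep}, that $\sigma^n$ restricted to $e_s\,Q(\tilde{\EE})$ is the field automorphism extending $\sigma_s$ (rather than a new map), and that the constant field of each $\dfield{Q(\tilde{\EE})}{\sigma_s}$ is still $\KK$. All three are either cited (the decomposition, via~\cite{Singer:99,Singer:08}) or immediate consequences of Corollary~\ref{Cor:LiftToField} together with part~(5) of Theorem~\ref{Thm:Interlacing}; once they are in place the proof presents no further difficulty, and in fact every ``zero-divisor'' subtlety of the $\EE$-version of Proposition~\ref{Prop:ReduceToCopies} disappears here because each $e_s\,Q(\tilde{\EE})$ is a field.
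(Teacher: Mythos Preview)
Your proof is correct and follows exactly the approach the paper intends: the paper simply states that ``the proof is analogous to the one for Proposition~\ref{Prop:ReduceToCopies} and is skipped,'' and you have carried out precisely that analogy, replacing $\tilde{\EE}$ by $Q(\tilde{\EE})$ and invoking Corollary~\ref{Cor:LiftToField} for the constant field. Your explicit verification that the $y^0$-coefficient of $f'$ vanishes (so that the formula $\gamma=\tfrac{y^i}{\alpha^i-1}$ applies) is a small clarification the paper leaves implicit even in the proof of Proposition~\ref{Prop:ReduceToCopies}.
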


\noindent The proof is analogous to the one for Proposition~\ref{Prop:ReduceToCopies} and is skipped. 
As a consequence, we obtain a telescoping algorithm for $\dfield{Q(\EE)}{\sigma}$ whenever we are given a telescoping algorithm for the \pisiSE-field extensions $\dfield{Q(\tilde{\EE})}{\sigma_s}$ of $\dfield{\FF}{\sigma^n}$ with $0\leq s<n$. 

\smallskip

\noindent Finally, we will solve Problem~PTDR based on the above telescoping technology.

\begin{theorem}\label{Thm:PTRing}
Let $\dfield{\EE}{\sigma}$ be a basic \rpisiSE-extension of a constant-stable difference field $\dfield{\FF}{\sigma}$ with~\eqref{Equ:SingleRPS} where the \rE-monomial $y$ has order $n$.
\begin{enumerate}
\item If Problems~PFLDE and~PMT are solvable for $\dfield{\FF}{\sigma^n}$ then Problem~PTDR is solvable for $\dfield{\EE}{\sigma}$.
\item If $\dfield{\FF}{\sigma^n}$ is $\sigma^*$-computable and Problem~PFLDE is solvable for $\dfield{\FF}{\sigma^n}$ then Problem~PTDR is solvable for $\dfield{Q(\EE)}{\sigma}$.
\end{enumerate}
\end{theorem}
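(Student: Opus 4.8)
The plan is to reduce Problem~PTDR over $\dfield{\EE}{\sigma}$ (resp. $\dfield{Q(\EE)}{\sigma}$) to a collection of parameterized-telescoping problems over the \pisiSE-extensions $\dfield{\tilde{\EE}}{\sigma_s}$ (resp. the \pisiSE-field extensions $\dfield{Q(\tilde{\EE})}{\sigma_s}$) of $\dfield{\FF}{\sigma^n}$ for $0\leq s<n$, and then invoke the existing algorithmic machinery. First I would observe that Proposition~\ref{Prop:ReduceToCopies} (resp. Proposition~\ref{Prop:ReduceToCopiesQuot}) already handles the case $d=1$, i.e. pure telescoping. The key point is that the same linearity argument works verbatim for the parameterized version: given $\vect{f}=(f_1,\dots,f_d)\in\EE^d$, set $\tilde{f}_{s,j}:=\bigl(\sum_{i=0}^{n-1}\sigma^i(f_j)\bigr)|_{y\to\alpha^{n-1-s}}\in\tilde{\EE}$ for $1\leq j\leq d$ and $0\leq s<n$. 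Then $(c_1,\dots,c_d,g)\in V(\vect{f},\EE)$ forces, by applying $\sum_{i=0}^{n-1}\sigma^i$ to~\eqref{Equ:PT} and using $\sum_{i=0}^{n-1}\sigma^i(\sigma(g)-g)=\sigma^n(g)-g$ together with part~(3) of Lemma~\ref{Lemma:MapSigma}, that $\sigma_s(\tilde{g}_s)-\tilde{g}_s=c_1\,\tilde{f}_{s,1}+\dots+c_d\,\tilde{f}_{s,d}$ for each $s$, where $g=\sum_{i=0}^{n-1}e_i\,\tilde{g}_i$. Conversely, given constants $c_j$ and elements $\tilde{g}_s\in\tilde{\EE}$ solving all $n$ parameterized telescoping equations simultaneously, one forms $\tilde{g}=\sum_{i=0}^{n-1}e_i\,\tilde{g}_i$, checks via Lemma~\ref{Lemma:fIsConstant} that $f':=\sum_j c_j f_j-(\sigma(\tilde{g})-\tilde{g})\in\KK[y]$, and solves $\sigma(g')-g'=f'$ explicitly in $\KK[y]$ using $\sigma(\tfrac{y^i}{\alpha^i-1})-\tfrac{y^i}{\alpha^i-1}=y^i$ as in the proof of Proposition~\ref{Prop:ReduceToCopies}; then $g=\tilde{g}+g'$ completes the tuple.

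Next I would make the "simultaneous solvability" precise in linear-algebra terms. For each $s$, Problem~PFLDE for $\dfield{\FF}{\sigma^n}$ combined with the \pisiSE-machinery of~\cite{Schneider:16a} (invoked through Theorem~\ref{Thm:RPSReduction}, since a \pisiSE-extension is in particular a basic \rpisiSE-extension built from $\Pi\Sigma^*$-monomials, and one only needs the sub-algorithms handling $\Pi\Sigma^*$-monomials — cf. the footnote to the discussion after Proposition~\ref{Prop:ReduceToCopies}) yields a basis of the solution space $V_s:=V\bigl((\tilde{f}_{s,1},\dots,\tilde{f}_{s,d}),\tilde{\EE}\bigr)\subseteq\KK^d\times\tilde{\EE}$. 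The set of constant-tuples $(c_1,\dots,c_d)$ that extend to a solution in $\emph{all}$ $n$ copies is the intersection of the projections $\pi_s(V_s)\subseteq\KK^d$ over $0\leq s<n$; this is a $\KK$-subspace of $\KK^d$ that can be computed from the $n$ bases by standard linear algebra over $\KK$, and for each basis vector of the intersection one can pick corresponding $\tilde{g}_s$ from the $V_s$. Feeding each such data packet through the backward direction of the reduction produces a spanning set of $V(\vect{f},\EE)$; since $\dim V(\vect{f},\EE)\leq d+1$ by~\cite[Lemma~2.17]{Schneider:16a}, one then trims to a basis. This proves part~(1). For part~(2), one runs the identical scheme but with $\tilde{\EE}$ replaced by $Q(\tilde{\EE})$, using Proposition~\ref{Prop:ReduceToCopiesQuot} in place of Proposition~\ref{Prop:ReduceToCopies} and using the representation~\eqref{Equ:QuotRep}; the point is that $\dfield{Q(\tilde{\EE})}{\sigma_s}$ is a polynomial \pisiSE-field extension of $\dfield{\FF}{\sigma^n}$ by iterated application of Corollary~\ref{Cor:LiftToField}, so that $\sigma^*$-computability of $\dfield{\FF}{\sigma^n}$ together with solvability of Problem~PFLDE for $\dfield{\FF}{\sigma^n}$ makes Problem~PTDR solvable for each $\dfield{Q(\tilde{\EE})}{\sigma_s}$ by Theorem~\ref{Thm:PSReduction} combined with Theorem~\ref{Thm:RPSReduction}.

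Two bookkeeping points need care. First, the reduction to the single-\rE-monomial normal form~\eqref{Equ:SingleRPS}: a general basic \rpisiSE-extension has a block of several \rE-monomials $y_1,\dots,y_l$, and Lemma~\ref{Lemma:RSingleIsoMultiple} replaces them by one \rE-monomial $y$ of order $n=\prod\lambda_i$ via an explicitly computable difference ring isomorphism $\tau$, which transports Problem~PTDR faithfully (apply $\tau$ to $\vect{f}$, solve, apply $\tau^{-1}$). Here one must also note that $\dfield{\FF}{\sigma^n}$ is again constant-stable — which is part of the definition of constant-stable — and that $\KK=\const{\FF}{\sigma}=\const{\FF}{\sigma^n}$, so the constant field does not change under passing to $\sigma^n$. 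Second, the $\Pi\Sigma^*$-extension structure of $\dfield{\tilde{\EE}}{\sigma_s}$ over $\dfield{\FF}{\sigma^n}$ is exactly part~(5) of Theorem~\ref{Thm:Interlacing}, which requires $\dfield{\FF}{\sigma}$ constant-stable — granted by hypothesis. I expect the main obstacle to be purely expository rather than mathematical: spelling out precisely how the $n$ separate solution-space bases are intersected and recombined to yield a basis of $V(\vect{f},\EE)$, and verifying that the backward reconstruction (the $\KK[y]$-correction step) is compatible with carrying along the parameters $c_j$ rather than a fixed right-hand side. Once the $d=1$ case of Propositions~\ref{Prop:ReduceToCopies} and~\ref{Prop:ReduceToCopiesQuot} is granted, the parameterized version is a routine linearization, so the proof can legitimately be kept short by saying "the argument is analogous to the proof of Proposition~\ref{Prop:ReduceToCopies}, carrying the parameters $c_1,\dots,c_d$ along and intersecting the solution spaces of the $n$ copies."
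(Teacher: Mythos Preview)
Your proposal is correct and follows essentially the same route as the paper's own proof: reduce Problem~PTDR to the $n$ copies $\dfield{\tilde{\EE}}{\sigma_s}$ (resp.\ $\dfield{Q(\tilde{\EE})}{\sigma_s}$) via the parameterized version of Proposition~\ref{Prop:ReduceToCopies} (resp.\ Proposition~\ref{Prop:ReduceToCopiesQuot}), solve there by Theorem~\ref{Thm:RPSReduction} (resp.\ Theorem~\ref{Thm:PSReduction}), intersect the constant parts of the $n$ solution spaces by linear algebra, and finish with the $\KK[y]$-correction step. Your ``produce a spanning set, then trim to a basis'' formulation is in fact slightly more careful than the paper's, which asserts directly that the corrected family is a basis of $V(\vect{f},\EE)$.
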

\begin{proof}
(1) Take the idempotent elements~\eqref{Equ:SpecificEi} with $0\leq s<n$. Let $\vect{f}=(f_1,\dots,f_d)\in\EE^d$.
Define $\tilde{f}_{i,s}=(f_i+\sigma(f_i)+\dots+\sigma(f_i)^{n-1})|_{y\to \alpha^{n-1-s}}\in\tilde{\EE}$ for $0\leq s<n$ and $1\leq i\leq d$, and set $\vect{h}_s=(\tilde{f}_{1,s},\dots,\tilde{f}_{d,s})\in\tilde{\EE}^d$ for $0\leq s<n$. Since we can solve Problems~PFLDE and~PMT for $\dfield{\FF}{\sigma^n}$, we can solve~PTDR for $\dfield{\tilde{\EE}}{\sigma_s}$ with $0\leq s<n$ by Theorem~\ref{Thm:RPSReduction}. Thus we can compute the bases of $V_s:=V(\vect{h}_s,\dfield{\tilde{\EE}}{\sigma_s})$ for $0\leq s<n$. By linear algebra we can calculate a basis, say $\{(c_{i,1},\dots,c_{i,d},\tilde{g}_i)\}_{1\leq i\leq r}$, of 
$$W:=\{(c_1,\dots,c_d,e_0\,h_0+\dots+e_{n-1}\,h_{n-1})|\,(c_1,\dots,c_d,h_s)\in V_s\text{ for all }0\leq s<n\}.$$
Now set $f'_i=c_{i,1}f_1+\dots+c_{i,d}\,f_d-(\sigma(\tilde{g}_i)-\tilde{g}_i)$ for $1\leq i\leq r$. By Proposition~\ref{Prop:ReduceToCopies} it follows that $f'_i\in\KK[y]$.
Thus we can calculate $g'_i\in\KK[y]$ such that $\sigma(g'_i)-g'_i=f'_i$ holds for $1\leq i\leq r$; for details see the proof of Proposition~\ref{Prop:ReduceToCopies}. Hence $\{(c_{i,1},\dots,c_{i,d},\tilde{g}_i+g'_i\}_{1\leq i\leq r}$ forms a basis of $V(\vect{f},\EE)$ by Proposition~\ref{Prop:ReduceToCopies}.\\
(2) The proof is analogous to part (1). We start with $\vect{f}=(f_1,\dots,f_d)\in Q(\EE)^d$ and define the vectors $\vect{h}_s\in Q(\tilde{\EE})^d$ for $0\leq s<n$ as in part (1).
Since $\dfield{\FF}{\sigma^n}$ is $\sigma^*$-computable and Problem~PFLDE is solvable for $\dfield{\FF}{\sigma^n}$, we can compute bases $V_s=V(\vect{h}_s,Q(\tilde{\EE}))$ for $0\leq s<n$ by Theorem~\ref{Thm:PSReduction}. The remaining calculation steps are as in part (1).
\end{proof}

\noindent We remark that any of the difference fields mentioned in Remark~\ref{Remark:ConstructiveVersion} satisfy the required properties of Theorem~\ref{Thm:PTRing}.
We observe further that the obtained algorithm is based on solving Problem~PTDR in $\dfield{\tilde{\EE}}{\sigma_s}$ for $0\leq s<n$. Here the summands and multiplicands blow up (see e.g., Example~\ref{Exp:Structure2}) which results in heavy ring calculations. In contrast to that the algorithm presented in~\cite{Schneider:16a} works just with the automorphism $\sigma$ and keeps the multiplicands and summands as simple as possible.
Nevertheless, the new approach is interesting on its own and yields the first algorithm that solves Problem~PTDR for the total ring of fractions of \rpisiSE-extensions. 
\section{Application 2: Symbolic summation and the transcendence of sequences}\label{Sec:Application}

We will utilize the constructive aspects of the previous sections in order to obtain a fully automatic toolbox for symbolic summation that is implemented within the package \texttt{Sigma}~\cite{Schneider:07a,Schneider:13a}. In Subsection~\ref{Sec:RepresentExprToRPS} (Propositions~\ref{Prop:TacticForEAR} and~\ref{Prop:ComputeRPSForExpr}) we will show how one can compute nested sum and product representations where the sums and products are algebraically independent among each other. In addition, we will solve the zero-recognition problem within such expressions.
Moreover, we will provide an automatic machinery in Subsection~\ref{Sec:PLDE} that enables one to tackle the parameterized telescoping problem, in particular the creative telescoping paradigm, for nested sums over nested products. Furthermore,
we will enhance substantially the techniques of~\cite{Schneider:10c} in Theorem~\ref{Thm:ParaProvedTrans} to show the transcendence of sequences using the parameterized telescoping paradigm. 

In the following we will focus on sequences that can be generated by the class of
nested sums over nested products.
The subclass of \notion{nested sums over hypergeometric products} can be defined recursively as follows. Let $f(k)$ be an expression that evaluates at non-negative 
integers (from a certain point on) to elements of a field $\set K$. 

\begin{itemize}
\item $f(k)$ is called a \notion{hypergeometric product w.r.t.\ $\KK$ and $k$} if it is given in the form $f(k)=\prod_{j=l}^kh(j)$ with $l\in\NN$ and a rational function $h(j)\in\KK(j)$; here $l$ is chosen big enough such that $h(\nu)$ has no pole and is non-zero for all $\nu\in\NN$ with $\nu\geq l$.
\item $f(k)$ is called a \notion{nested hypergeometric product w.r.t.\ $\KK$ and $k$} if it is given in the form $\prod_{j=l}^k h_1(j)h_2(j)\dots h_\nu(j)$ with $l\in\NN$ where for $1\leq i\leq \nu$ one of the following holds:
\begin{itemize}
 \item $h_i(j)\in\KK(j)^*$ where $h_i(\nu)$ has no pole and is non-zero for all $\nu\geq k$;
 \item $h_i(j)$ is a nested hypergeometric product w.r.t.\ $\KK$ and $j$ where $h_i(j)$ is free of $k$.
 \end{itemize}
\item $f(k)$ is called a \notion{nested sum expression over hypergeometric products (or over nested hypergeometric products) w.r.t $\KK$ and $k$}
if it is composed recursively by
\begin{itemize}
\item elements from the rational function field $\set K(k)$;
\item hypergeometric products (or nested hypergeometric products) w.r.t.\ $\KK$ and $k$;
\item the three operations\footnote{We do not allow divisions, i.e., sums and products may not occur in denominators. However, we can write ${(\prod_{j=l}^k h(j)})^{-1}$ as $\prod_{j=l}^k \frac1{h(j)}$ which allows to represent Laurent polynomial expressions.}
($+,-,\cdot$); 
\item sums of the form $\sum_{j=l}^kh(j)$ with $l\in\set N$ and with 
$h(j)$ being a nested sum expression over hypergeometric products (or over nested hypergeometric products) w.r.t.\ $\KK$ and $j$ and being free of $k$; here $l$ is chosen big enough such that $h(j)|_{j\to\nu}$ does not introduce poles for any $\nu\geq l$. $\sum_{j=l}^kh(j)$ is also called a nested sum over hypergeometric products (or over nested hypergeometric products) w.r.t.\ $k$ and $\KK$.
\end{itemize}
\end{itemize}

\begin{example}
$\prod_{i=1}^k(-1)(=(-1)^k)$, $\prod_{i=1}^k2(=2^k)$, $\prod_{i=1}^k\frac{n-i+1}{i}(=\binom{n}{k})$, $\prod_{i=1}^k\frac{i}{n-i+1}(=\binom{n}{k}^{-1})$ are hypergeometric products w.r.t.\ $\QQ(n)$ and $k$. The expressions on the right hand side of~\eqref{Equ:NestedPiEmb} are nested hypergeometric products w.r.t.\ $\QQ$ and $k$. The expressions 
$E_1(k)$, $E_2(k)$, $E_3(k)$ in~\eqref{Equ:EExpr} are nested sums over hypergeometric products w.r.t.\ $\QQ$ and $k$. 
\end{example}

\noindent More generally, we introduce \notion{$q$-hypergeometric products} $\prod_{j=l}^kh(q^j)$ with $h(x)\in\KK(x)$ where $\KK(x)$ and $\KK=\KK(q)$ are rational functions fields. Further, we introduce \notion{$q$-mixed hypergeometric products} $\prod_{j=l}^kh(j,q_1^j,\dots,q_v^j)$ with $h(x,x_1,\dots,x_v)\in\KK(x,x_1,\dots,x_v)$ where $\KK(x,x_1,\dots,x_v)$ and $\KK=\KK'(q_1,\dots,q_v)$ are rational function fields. In what follows, by a \notion{simple product} we mean a hypergeometric, $q$-hypergeometric or $q$-mixed hypergeometric product. If one takes nested versions of these products (the same definition as for nested hypergeometric products), one obtains the so-called \notion{nested products}. Finally, the definitions of nested sum expressions over simple products/nested products carry over immediately. The sums are also called \notion{nested sums over simple/nested products}.

We remark that the class of nested sums over hypergeometric products contains as special cases harmonic sums~\cite{Bluemlein:99,Vermaseren:99},
and more generally, generalized harmonic sums~\cite{Moch:02,ABS:13}, cyclotomic 
harmonic sums~\cite{ABS:11} or nested binomial 
sums~\cite{ABRS:14}, which occur as basic building blocks, e.g., in combinatorics or particle physics~\cite{Schneider:16b}. Further, d'Alembertian solutions~\cite{Abramov:94} (and Liouvillian solutions~\cite{Singer:99} given by the interlacing of d'Alembertian solutions~\cite{Reutenauer:12,Petkov:2013}) can be represented by nested sums over simple products.

Any nested sum or product with upper bound $k$ can be evaluated at $\nu$ for any non-negative integer $\nu$ since poles are excluded by definition. In particular, a nested sum expression over simple/nested products $f(k)$ defined as above can be evaluated at $k=\nu$ if the rational functions outside of sums and products do not introduce any poles. If this is the case, we write $f(\nu)$ or $f|_{k\to\nu}$ or $f(k)|_{k\to\nu}$ to perform the evaluation for $\nu\in\NN$. 

\smallskip

We emphasize that basic \rpisiSE-extensions in combination with an appropriately chosen $\KK$-embedding represent exactly this class of expressions. Namely, suppose that we are given 
one of the difference fields $\dfield{\FF}{\sigma}$ of Example~\ref{Exp:qRat} together with the $\KK$-embedding given in the Examples~\ref{Exp:Embedding1} or~\ref{Exp:Embedding2}. Moreover, suppose that we constructed a basic \rpisiSE-extension $\dfield{\EE}{\sigma}$ of $\dfield{\FF}{\sigma}$ with $\EE=\FF\lr{t_1}\dots\lr{t_e}$ and $\KK=\const{\EE}{\sigma}$ and that we constructed a $\KK$-embedding $\fct{\tau}{\EE}{\seqK}$ by iterative applications of Lemma~\ref{Lemma:LiftEvToPoly} (see also  Theorem~\ref{Thm:RPiSiImpliedEmbedding}). 
Within this process we get in addition a defining function $\ev$ together with an $o$-function $L$ and a $z$-function $Z$ for $\tau$. Note that the $t_i$ are mapped to~\eqref{Equ:SumProdHom} where the summands or multiplicands are given in terms of the variables $t_1,\dots,t_{i-1}$. Expanding this definition leads precisely to nested sums over nested products.\\ 
In this regard, we will introduce the following definition. Take $f\in\EE$ and replace all occurrences of $x$ by the symbolic variable $k$ and all $t_i$ by the nested sums and products where the outermost upper bound is $k$. 
The via $\tau$ (or $\ev$) derived expression will be also called an \notion{$\tau$-induced (or $\ev$-induced) expression} w.r.t.\ $k$ and will be denoted by $\expr_k(f)$. 
By construction,  
$$\ev(f,\nu)=\expr_k(f)|_{k\to\nu}$$
for all $\nu\in\NN$ with $\nu\geq L(f)$. In particular, $\expr_k$ satisfies the evaluation properties~\eqref{Ev:Const}--\eqref{Ev:Shift}; here the bound $\delta$ can be obtained by the given $o$-function $L$ and $z$-function $Z$.

\begin{example}
Consider the \rpisiSE-extension $\dfield{\EE}{\sigma}$ of the difference field $\dfield{\QQ(n)(x)}{\sigma}$ with $\EE=\QQ(n)(x)[y]\lr{p_1}\lr{p_2}[s_1]$ and the $\KK$-embedding $\tau$ defined by~\eqref{Equ:s1Embed} from
Example~\ref{Exp:ConstructEmbedding}. Then for 
$\beta_2=\frac{-y}{(x+1)(x+2)}\in\EE$ we get 
\begin{equation}\label{Equ:Beta2}
\expr_k(\beta_2)=\frac{1}{(k+1)(k+2)}\tprod_{i=1}^k(-1)=\frac{1}{(k+1)(k+2)}(-1)^k
\end{equation}
and for $\beta_3=\tfrac{1}{(x+1)^2 (x+2)} \big(
        3
        +x
        -(x+1) (x+2) y \big(
                2 s_1+1\big)
\big)\in\EE$ we obtain
\begin{equation}\label{Equ:Beta3}
\expr_k(\beta_3)=\frac{1}{(k+1)^2 (k+2)} \big(
        3
        +k
        -(k+1) (k+2) (-1)^k \big(
                2 \tsum_{i=1}^k\tfrac{(-1)^i}{i}+1\big)
\big).
\end{equation}
\end{example}

\subsection{Automatic representations of nested sums and products in basic \rpisiSE-extensions}\label{Sec:RepresentExprToRPS}

\noindent We consider the following key problem of indefinite summation.
\medskip

\noindent\fbox{\begin{minipage}{12.9cm}
\noindent \textbf{Problem EAR:} Elimination of Algebraic Relations.\\
\textit{Given} a nested sum expression $A(k)$ over nested products.\\
\textit{Find} a $\delta\in\NN$ and a nested sum expression $B(k)$ over nested products with
\begin{enumerate}
 \item $A(\nu)=B(\nu)$ for all $\nu\geq\delta$;
 \item the sequences produced by the sums and products occurring in $B(k)$ (except products over roots of unity) are algebraically independent.
\end{enumerate}
\end{minipage}}
\medskip

\noindent In part~(1) of Proposition~\ref{Prop:TacticForEAR} a recipe that solves Problem~EAR will be provided. In particular, if Problem~EAR is tackled in this way, it is shown in part~(2) of Proposition~\ref{Prop:TacticForEAR} that the zero-recognition problem is solved automatically.

\begin{proposition}\label{Prop:TacticForEAR}
Let $A(k)$ be a nested sum expression over nested products w.r.t. $\KK$ and $k$. Let $\dfield{\EE}{\sigma}$ be an \rpisiSE-extension of one of the difference fields $\dfield{\FF}{\sigma}$ from Example~\ref{Exp:qRat} with constant field $\KK$; let $\fct{\tau}{\EE}{\seqK}$ be a $\KK$-embedding given by iterative application of Lemma~\ref{Lemma:LiftEvToPoly} starting with the embedding given in Examples~\ref{Exp:Embedding1} or~\ref{Exp:Embedding1}. Let $a\in\EE$ with $B(k):=\expr_k(a)$ and $\delta\in\NN$ be such that $A(\nu)=B(\nu)$ holds for all $\nu\in\NN$ with $\nu\geq\delta$.
Then:
\begin{enumerate}
\item $B(k)$ and $\delta$ are a solution of Problem~EAR.
\item $B(k)$ is the zero-expression iff $A(\nu)=0$ for all $\nu\geq\lambda$ for some $\lambda\in\NN$.
\end{enumerate}
\end{proposition}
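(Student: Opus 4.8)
The plan is to read off both statements directly from the structural results about \rpisiSE-extensions, together with the properties of the $\KK$-embedding $\tau$. The central observation is that, by Theorem~\ref{Thm:EquivSimpleConstIso}, an \rpisiSE-extension $\dfield{\EE}{\sigma}$ of $\dfield{\FF}{\sigma}$ (with the assumed $\KK$-embedding of $\dfield{\FF}{\sigma}$ from Example~\ref{Exp:Embedding1} or~\ref{Exp:Embedding2}) admits a $\KK$-embedding $\fct{\tau}{\EE}{\seqK}$ extending the one on $\dfield{\FF}{\sigma}$; this is precisely the $\tau$ produced by iterating Lemma~\ref{Lemma:LiftEvToPoly}, and $\tau$ is injective. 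For part~(1), I first note that $B(k)=\expr_k(a)$ is by construction a nested sum expression over nested products w.r.t.\ $\KK$ and $k$: the generators $t_i$ of $\EE$ are mapped under $\tau$ to sequences of the form~\eqref{Equ:SumProdHom}, whose summands/multiplicands are themselves $\expr$-expressions in $t_1,\dots,t_{i-1}$, so expanding recursively yields exactly a nested sum expression over nested products. The agreement $A(\nu)=B(\nu)$ for $\nu\geq\delta$ is part of the hypothesis (it is how $a$ and $\delta$ are chosen to represent $A(k)$ in $\EE$), so condition~(1) of Problem~EAR holds.

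The real content of part~(1) is condition~(2): the sums and products occurring in $B(k)$, other than the products over roots of unity, must produce algebraically independent sequences. Here I would argue as follows. The sums and products literally appearing in $B(k)=\expr_k(a)$ correspond to the \rpisiSE-monomials $t_i$ of $\EE$ that occur in $a$ (the roots-of-unity products correspond precisely to the \rE-monomials $y_j$, which we are permitted to exclude). Since $\dfield{\EE}{\sigma}$ is an \rpisiSE-extension, after reordering to the form~\eqref{Equ:PASOrdered} the $\Pi$- and $\Sigma^*$-monomials $p_1,\dots,p_r,s_1,\dots,s_v$ are algebraically independent over $\FF$ (this is exactly what the $\Pi\Sigma^*$-extension property encodes: each $t_i$ is transcendental over the ring generated by the earlier ones). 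Applying the injective $\KK$-homomorphism $\tau$, the sequences $\tau(p_i),\tau(s_j)$ — which are exactly the product- and sum-sequences appearing in $B(k)$ — must then be algebraically independent in $\seqK$ as well: any polynomial relation among them would, via the ring homomorphism property~\eqref{Ev:Const}--\eqref{Ev:Add}, pull back to a relation among the $p_i,s_j$ in $\EE$ valid from some point on, hence (since $\EE$ injects into $\seqK$ and the $p_i,s_j$ are algebraically independent) force the relation to be trivial. I expect the only delicate point to be bookkeeping: one must be careful that the $R$-monomials $y_j$ carried along in a mixed term do not spoil the argument, which is handled by the reordering~\eqref{Equ:PASOrdered} (the $\Pi$-monomials are free of $R$-monomials) and by the fact that excluded products-over-roots-of-unity are precisely the images of the $y_j$. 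This is the main obstacle — not a deep one, but the place where care is needed.

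For part~(2), both directions are short. If $A(\nu)=0$ for all $\nu\geq\lambda$, then $\tau(a)=\langle\ev(a,0),\ev(a,1),\dots\rangle$ agrees with $\vect 0$ from $\max(\lambda,L(a))$ on, so $\tau(a)=\vect 0$ in $\seqK$; since $\tau$ is injective, $a=0$, and therefore $B(k)=\expr_k(0)$ is the zero-expression. Conversely, if $B(k)$ is the zero-expression then $a=0$, hence $\ev(a,\nu)=\expr_k(a)|_{k\to\nu}=0=A(\nu)$ for all $\nu\geq L(a)$, so one may take $\lambda:=L(a)$. This completes the plan; the whole argument is essentially an application of Theorem~\ref{Thm:RPiSiImpliedEmbedding}/Theorem~\ref{Thm:EquivSimpleConstIso} (existence and injectivity of $\tau$) plus the translation dictionary between \rpisiSE-monomials and nested sums/products furnished by $\expr_k$.
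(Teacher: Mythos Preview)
Your approach is essentially the same as the paper's: for part~(1) you use the injectivity of $\tau$ to transport the (Laurent-)polynomial ring structure of $\EE$ over $\FF[y_1,\dots,y_l]$ into $\seqK$, concluding that the $\Pi$- and $\Sigma^*$-sequences are algebraically independent; the paper writes this out more explicitly by displaying $\tau(\EE)$ as a tower of (Laurent) polynomial rings over $R:=\tau(\FF)[\tau(y_1),\dots,\tau(y_l)]$, but the underlying idea is identical.

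There is one small imprecision in your part~(2). In the direction ``$B$ zero $\Rightarrow$ $A$ eventually zero'' you conclude $A(\nu)=0$ for all $\nu\geq L(a)$, but the hypothesis only gives $A(\nu)=B(\nu)$ for $\nu\geq\delta$, so the correct choice is $\lambda:=\delta$ (as the paper takes), not $\lambda:=L(a)$. Similarly, in the converse direction the cutoff should be $\max(\lambda,\delta)$ rather than $\max(\lambda,L(a))$. These are bookkeeping slips; the logic via injectivity of $\tau$ is correct and matches the paper.
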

\begin{proof}
(1) To show explicitly that $B(k)$ is a solution of Problem~EAR, reorder $\dfield{\EE}{\sigma}$ to the form~\eqref{Equ:APSOrdered} where $\AA=\FF$. Then using the fact that $\tau$ is a $\KK$-embedding we get 
$$\tau(\EE)=\overbrace{\underbrace{\tau(\FF)[\tau(y_1),\dots,\tau(y_l)]}_{=:R}[\tau(p_1),\tau(p_1^{-1})]\dots[\tau(p_r),\tau(p_r^{-1})]}^{=:\LL}[\tau(s_1)]\dots[\tau(s_v)]$$
where $R$ is the ring given by the root of unity sequences, $\LL$ is the ring extension of Laurent-polynomials with the algebraically independent sequences $\tau(p_1),\tau(p_1^{-1}\!),\dots,\tau(p_r),\tau(p_r^{-1}\!)$, and $\LL[\tau(s_1)]\dots[\tau(s_v)]$ forms the polynomial ring with the algebraically independent sequences $\tau(s_1),\dots,\tau(s_v)$. Hence $B$ and $\delta$ are indeed a solution of Problem~EAR.\\
(2) $A(\nu)=B(\nu)$ holds for all $\nu\geq\delta$. This implies ``$\Rightarrow$''.\\ 
``$\Leftarrow$'': Suppose that $B(k)$ is not the zero-expression. Then $B\neq0$ implies $a\neq0$ and thus $\tau(a)\neq\vect{0}$. Hence for any $\lambda\in\NN$ there exists a $\nu\geq\lambda$ with $0\neq B(\nu)=A(\nu)$. 
\end{proof}

\noindent Suppose we are given a nested sum expression $A(k)$ over nested products w.r.t.\ $\KK$ and $k$. Then we will carry out the strategy introduced in Proposition~\ref{Prop:TacticForEAR} as follows.

\smallskip

\begin{description}
\item[Preparation:] Determine a $\rho\in\NN$ such that 
$A(\nu)$ can be evaluated for all $\nu\in\NN$ with $\nu\geq\rho$. Namely, consider the denominators in $A(k)$ which do not arise inside of sums and products, and determine the finite number of zeros, say $z_1,\dots,z_r\in\NN$, that can arise there (for the $q$--case and $q$--mixed case see~\cite[Sec.~3.7]{Bauer:99}). Then we can take $\rho=\max(z_1,\dots,z_r)+1$. In addition, take the appropriate difference field $\dfield{\FF}{\sigma}$  with the $\KK$-embedding $\fct{\tau}{\FF}{\seqK}$ from Examples~\ref{Exp:Embedding1} or~\ref{Exp:Embedding2}.

\smallskip

\item[Process all products:] Try to compute a basic $R$\piE-extension
 $\dfield{\HH}{\sigma}$ of $\dfield{\FF}{\sigma}$ and, using Lemma~\ref{Lemma:LiftEvToPoly}, extend
 $\tau$ to a $\KK$-embedding $\fct{\tau}{\HH}{\seqK}$ together with an $o$-function $L$ and a $z$-function $Z$ such that the following holds: for any product $P(k)$ arising in $A(k)$ (here also the products inside of sums and products are handled where the upper bounds are replaced with $k$) one obtains a $p\in\HH$ and $\delta_p\in\NN$ such that $\expr_k(p)|_{k\to\nu}=P(\nu)$ holds for all $\nu\geq\delta_{p}$; note that $\tau(p)=\langle P(k)\rangle_{k\geq 0}$. If this fails, STOP.  
\end{description}

\begin{remark}\label{Remark:ProductTranslation} 
For a finite set of hypergeometric products w.r.t.\ $\KK$ and $k$ with a rational function field $\KK=\KK'(n_1,\dots,n_{\lambda})$ and $\KK'=\QQ$ this product representation can be calculated by the algorithms given in~\cite{Schneider:05c,DR2}. In~\cite{Ocansey:16} these ideas were generalized to simple products, i.e., to ($q$--)hyperge\-ometric and $q$--mixed hypergeometric expressions where the subfield $\KK'$ of $\KK$ can even be an algebraic number field. For nested products the automatic construction is more subtle and is currently under investigation.
\end{remark}

 \begin{description}
 \item[Process all sums:] By iterative application of Theorem~\ref{Thm:RPSCharacterization} and Lemma~\ref{Lemma:LiftEvToPoly} we will construct a \sigmaSE-extension $\dfield{\EE}{\sigma}$ of $\dfield{\HH}{\sigma}$, and we will extend $\tau$ to a difference ring embedding $\fct{\tau}{\EE}{\seqK}$ such that for any sum $S(k)$ in $A(k)$ (we take also sums occurring inside of other sums and replace the outermost summation bound with $k$) we can take an element $s\in\EE$ with $\tau(s)=\langle S(k)\rangle_{k\geq0}$. We proceed stepwise.\\ 
 Suppose that we have already treated the nested sums and products $A_1(k),\dots,A_n(k)$ of $A(k)$ where for each $A_j(k)$ with $1\leq j\leq n$ we are given an $a_j\in\EE$ with $\tau(a_j)=\langle A_j(i)\rangle_{i\geq0}$ and we are given a $\delta_j\in\NN$ such that $A_j(\nu)=\expr_k(a_j)|_{k\to \nu}$ holds for all $\nu\geq\delta_j$.
 We process the next sum $A_{n+1}(k)=\sum_{i=\lambda}^kh(i)$ with $\lambda\in\NN$ where all sums and products in $h(k)$ have been treated earlier.
 Hence replacing all occurrences of $A_j$ in $h(k)$ by $a_j$ leads to $\beta'\in\EE$ such that $\expr_k(\beta')|_{k\to\nu}=h(\nu)$ holds for all $\nu\geq\delta:=\max(\lambda,\delta_1,\dots,\delta_n,L(\beta'))$ (it suffices to take only those $\delta_j$ where $t_j$ occurs in $\beta$). Note that $\tau(\beta')=\langle h(\nu)\rangle_{\nu\geq0}$. Now we set $\beta:=\sigma(\beta')$ and consider two cases.
 \begin{description}
 \item[(1)] With the algorithms from~\cite{Schneider:16a} one shows that there is no $g\in\EE$ with $\sigma(g)=g+\beta$. Hence we can construct the \sigmaSE-extension $\dfield{\EE(t_{e+1})}{\sigma}$ of $\dfield{\EE}{\sigma}$ with $\sigma(t_{e+1})=t_{e+1}+\beta$. In addition, we can extend $\tau$ to $\fct{\tau}{\EE[t_{e+1}]}{\seqK}$ with~\eqref{Equ:SumProdHom} where $r:=\delta$ and $c=\sum_{k=\lambda}^{\delta-1} h(i)\in\KK$. With $a_{n+1}:=t_{e+1}$ and $\delta_{n+1}:=\delta$ we have for all $\nu\geq\delta_{n+1}$:
 $\ev(a_{n+1},\nu)=\ev(t_{e+1},\nu)=\sum_{i=\delta}^{\nu}\ev(\beta,i-1)+c=\sum_{i=\lambda}^{\nu} h(k)=A_{n+1}(\nu).$
 \item[(2)] One obtains a $g\in\EE$ with $\sigma(g)=g+\beta$. Take $\delta_{n+1}:=\max(\delta,L(g))$ and set $c:=-\ev(g,\delta)+\sum_{i=\lambda}^{\delta} h(i)\in\KK$. With $a_{n+1}:=g+c$ this yields
 $\ev(a_{n+1},\nu+1)=\ev(a_{n+1},\nu)+\ev(\beta,\nu)$ and $A_{n+1}(\nu+1)-A_{n+1}(\nu)=\ev(\beta,\nu)$ for all $\nu\geq\delta_{n+1}$. Since the initial values agree, i.e.,
 $\ev(a_{n+1},\delta)=\ev(g+c,\delta)=\ev(g,\delta)+c=\sum_{i=\lambda}^{\delta}h(i)=A_{n+1}(\delta),$ 
 we conclude that $A_{n+1}(\nu)=\ev(a,\nu)=\expr_k(a)|_{k\to\nu}$ for all $\nu\geq\delta_{n+1}$.
 \end{description}
  $A_{n+1}(k)$ has been tackled and we continue with the remaining sums in $A(k)$.
 \item[Combine expressions:] Let $A_1(k),\dots,A_{m}(k)$ be all sums and products occurring in $A(k)$ with the corresponding $a_1,\dots,a_m\in\EE$ and $\delta_1,\dots,\delta_m\in\NN$, respectively. Replace the $A_j(k)$ by the $a_j$ in $A(k)$ which yields $b\in\EE$. With $\rho$ from the preprocessing step we define $\delta:=\max(\rho,\delta_1,\dots,\delta_{m},L(b))$ and $B:=\expr_k(b)$. By construction we get $B(\nu)=\ev(b,\nu)=A(\nu)$ for all $\nu\geq\delta$.
\end{description}

\medskip

\noindent If one succeeds in this construction, one has solved Problem~EAR by Proposition~\ref{Prop:TacticForEAR}. In particular, if one restricts to the class of nested sums over simple products, this method turns into a complete algorithm by Remark~\ref{Remark:ProductTranslation} which can be summarized as follows.

\begin{proposition}\label{Prop:ComputeRPSForExpr}
Let $A(k)$ be a nested sum expression over simple products w.r.t. $\KK$ and $k$ where $\KK$ is a rational function field over an algebraic number field. Then $\dfield{\EE}{\sigma}$ and $\fct{\tau}{\EE}{\seqK}$ as assumed in Proposition~\ref{Prop:TacticForEAR} can be computed. Further, Problem~EAR and the zero recognition problem for $A(k)$ are solved.
\end{proposition}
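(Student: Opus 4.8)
The plan is to check that the step-by-step procedure described immediately after Proposition~\ref{Prop:TacticForEAR} is effective for the input class at hand, and then to read off the claim from that proposition. First I would fix the ground field according to the products occurring in $A(k)$: if only hypergeometric products appear, take the rational difference field $\dfield{\KK(x)}{\sigma}$ of Example~\ref{Exp:qRat}(1); if $q$- or $q$-mixed hypergeometric products appear, take the corresponding $q$-mixed difference field of Example~\ref{Exp:qRat}(2). In either case $\dfield{\FF}{\sigma}$ carries the explicit $\KK$-embedding of Example~\ref{Exp:Embedding1} or~\ref{Exp:Embedding2}, together with a computable defining function and computable $o$- and $z$-functions. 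The preparation step is then effective: the finitely many non-negative integer poles of the rational parts of $A(k)$ that lie outside sums and products are computable (for the $q$- and $q$-mixed cases by~\cite[Sec.~3.7]{Bauer:99}), which yields $\rho$.

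Next I would justify that the ``process all products'' step always succeeds. This is precisely Remark~\ref{Remark:ProductTranslation}: for a finite set of hypergeometric products over a rational function field $\KK$ whose constant subfield is an algebraic number field --- and, by~\cite{Ocansey:16}, also for $q$-hypergeometric and $q$-mixed hypergeometric products --- one can compute a basic $R\Pi$-extension $\dfield{\HH}{\sigma}$ of $\dfield{\FF}{\sigma}$ together with elements $p\in\HH$ and evaluation bounds $\delta_p$ that represent each such product, and Lemma~\ref{Lemma:LiftEvToPoly} lifts $\tau$ to a computable $\KK$-embedding $\fct{\tau}{\HH}{\seqK}$ with computable $o$- and $z$-functions. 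Here one uses crucially that the products are \emph{simple}, i.e.\ no product is nested inside another; the analogous construction for genuinely nested products is, as noted, not yet algorithmic, which is why the proposition is stated for simple products only.

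For the ``process all sums'' step I would proceed by induction over the nesting structure of $A(k)$. When a sum $A_{n+1}(k)=\sum_{i=\lambda}^k h(i)$ is reached, all sums and products in $h$ have already been represented, so substituting their representations gives $\beta'\in\EE$ with $\tau(\beta')=\langle h(\nu)\rangle_{\nu\geq 0}$, and one sets $\beta=\sigma(\beta')$. Since Problems~PFLDE and~PMT are solvable for the chosen ground field (Example~\ref{Exp:PSField} together with Theorem~\ref{Thm:RPSReduction}), Problem~PTDR is solvable for the current basic \rpisiSE-extension, so one can decide whether there is a $g\in\EE$ with $\sigma(g)=g+\beta$. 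In Case~(2) such a $g$ yields the representation $a_{n+1}=g+c$ of the sum directly; in Case~(1) one adjoins a \sigmaSE-extension $t_{e+1}$ with $\sigma(t_{e+1})=t_{e+1}+\beta$, which is a genuine \sigmaSE-monomial by Theorem~\ref{Thm:RPSCharacterization}, and extends $\tau$ by Lemma~\ref{Lemma:LiftEvToPoly}. Either way one obtains $a_{n+1}\in\EE$, a bound $\delta_{n+1}$ and agreement of values for $\nu\geq\delta_{n+1}$. The final ``combine expressions'' step is then routine bookkeeping: substitute the $a_j$ into $A(k)$ to get $b\in\EE$, put $B=\expr_k(b)$ and take $\delta=\max(\rho,\delta_1,\dots,\delta_m,L(b))$.

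It remains to conclude, which is immediate from Proposition~\ref{Prop:TacticForEAR}: part~(1) shows that $(B(k),\delta)$ solves Problem~EAR, and part~(2) shows that deciding $b=0$ (equivalently $\tau(b)=\vect{0}$, which is decidable since $B$ is an explicit $\tau$-induced expression) solves the zero-recognition problem for $A(k)$. I expect the only genuine obstacle to be the product step: one must ensure that the cited product-representation algorithms terminate and return a \emph{basic} $R\Pi$-extension, so that the whole tower stays basic and the algorithmic toolbox of~\cite{Schneider:16a} used in the sum step applies; everything else is an assembly of results already available in the excerpt (Remark~\ref{Remark:ProductTranslation}, Theorem~\ref{Thm:RPSCharacterization}, Lemma~\ref{Lemma:LiftEvToPoly}, Theorem~\ref{Thm:RPSReduction}, Proposition~\ref{Prop:TacticForEAR}).
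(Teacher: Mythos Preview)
Your proposal is correct and follows essentially the same route as the paper. In fact the paper does not give a separate proof of this proposition at all: it simply states, just before the proposition, that the described procedure (Preparation, Process all products, Process all sums, Combine expressions) becomes a complete algorithm for nested sums over simple products by Remark~\ref{Remark:ProductTranslation}, and the conclusion is then read off from Proposition~\ref{Prop:TacticForEAR}; your write-up is a more explicit unpacking of exactly this argument.
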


\noindent The described method for nested sums over nested products can be executed within the summation package \texttt{Sigma} with the function call \texttt{SigmaReduce[A,k]}.

\begin{example}\label{Exp:SymbolicSummation}
Consider the expressions in~\eqref{Equ:EExpr} and apply \texttt{SigmaReduce} to them. \texttt{Sigma} starts with the \pisiSE-field $\dfield{\KK(x)}{\sigma}$ with $\sigma(x)=x+1$ over $\KK=\QQ(n)$ together with the $\KK$-embedding $\fct{\tau}{\KK(x)}{\seqK}$ from Example~\ref{Exp:NestedP2}. Then the objects given in~\eqref{Equ:EExpr} (parsed in the order they are written) are represented in an \rpisiSE-extension. To be more precise, as worked out in the Examples~\ref{Exp:MainDRNaiveDef} and~\ref{Exp:ConstructRPiSiExt} \texttt{Sigma} constructs the basic \rpisiSE-extension $\dfield{\AA}{\sigma}$ of $\dfield{\KK(x)}{\sigma}$ with $\AA=\KK(x)[y]\ltr{p_1}\ltr{p_2}[s_1]$ (here we set $s_1=t_1$) where $\sigma(y)=-y$, $\sigma(p_1)=2\,p_1$, $\sigma(p_2)=\frac{n-x}{x+1}p_2$ and $\sigma(s_1)=s_1-\frac{y}{x+1}$. Furthermore, \texttt{Sigma} extends $\tau$ to the $\KK$-embedding $\fct{\tau}{\AA}{\seqK}$ with~\eqref{Equ:s1Embed}.
In short, $x$, $y$, $p_1$, $p_2$ and $s_1$ represent the summation objects
$k,(-1)^k,\,2^k,\,\binom{n}{k}=\prod_{i=1}^k\frac{n-i+1}{i},\,E_1(k)$, respectively.\\
Next, \texttt{Sigma} treats $E_2(k)=\sum_{i=1}^kh_2(i)$ with $h_2(k)=\frac{(-1)^k}{k(k+1)}$ given in~\eqref{Equ:EExpr}. Here we can represent the summand $h_2(k)$ by $\beta'_2=\frac{y}{x(x+1)}$. Set
$\beta_2:=\sigma(\beta'_2)=\frac{-y}{(x+1)(x+2)}$. Then we have that $\tau(\beta_2)=\langle h_2(\nu+1)\rangle_{\nu\geq0}$ where  $\expr_k(\beta_2)|_{k\to\nu}=h_2(\nu+1)$ for all $\nu\geq\delta_0:=0$; see also~\eqref{Equ:Beta2}.
As observed in Examples~\ref{Exp:ConstructRPiSiExt} or~\ref{Exp:FindIsomorphism}, one finds $g=\frac{2 (x+1) s_1-y}{x+1}\in\AA$ such that $\sigma(g)=g+\beta_2$ holds. Thus we get $\Shift(\tau(g))=\tau(g)+\tau(\beta_2)$. In particular, for $\delta'_0=\max(L(g),L(\beta_2),\delta_0)=0$ and 
$G_2(k)=\expr_k(g)=\tfrac1{{k+1}}\big(2 (k+1) E_1(k)-(-1)^k\big)$
we obtain $G_2(k+1)=G_2(k)+h_2(k+1)$ for all $k\geq\delta_0'=0$. Since $E_2(k+1)=E_1(k)+h_2(k+1)$, it follows that $G_2(k)+c'$ and $E_2(k)$ agree for all $k\geq0$ if they agree for $k=\delta'_0=0$. Taking $c'=1$ we get $E_2(k)=G_2(k)+1$ which is the second identity in~\eqref{Equ:IdsForEi}.\\  
Finally, \texttt{Sigma} treats $E_3(k)=\sum_{j=1}^k h_3(j)$. The sums and products in $h_3(k)$ have been represented already in $\dfield{\AA}{\sigma}$. As a consequence $h_3(k)$ can be represented by $\beta_3'=\frac{-1
+y (1+x)(
        1
        +2 s_1)
}{x (1+x)}\in\AA$. In particular, $\beta_3:=\sigma(\beta'_3)$ represents $h_3(k+1)$:
$\tau(\beta_3)=\langle h_3(k+1)\rangle_{k\geq0}$ and by construction we get $\expr_k(\beta_3)=h_3(k+1)$; see also~\eqref{Equ:Beta3}.
We demonstrate the importance of enhanced telescoping algorithms to achieve simplifications.\\
$\bullet$ \textit{Naive telescoping.}  
We do not find a $g\in\AA[s_1]$ with $\sigma(g)=g+\beta_3$; see Example~\ref{Exp:FindIsomorphism}. Hence we take the \sigmaSE-extension $\dfield{\AA[s_2]}{\sigma}$ of $\dfield{\AA}{\sigma}$ with $\sigma(s_2)=s_2+\beta_3$ and extend $\tau$ to $\fct{\tau}{\AA[s_2]}{\seqK}$ with $\tau(s_2)=\langle \sum_{i=1}^k h_3(i)\rangle_{k\geq0}$. No simplification has been accomplished and $\expr_k(s_2)$ equals the right side of the third identity in~\eqref{Equ:IdsForEi}. Still we have solved Problem~EAR and concluded that $\langle E_3(k)\rangle_{k\geq0}$ is algebraically independent over $\tau(\AA)$.\\
$\bullet$ \textit{Refined telescoping.} Alternatively, we can use refined summation algorithms as worked out in~\cite{Schneider:05f,Schneider:08c,Schneider:10b,Schneider:15}. Activating \texttt{Sigma} one computes the \sigmaSE-extension $\dfield{\AA[s'_2]}{\sigma}$ of $\dfield{\AA}{\sigma}$ with $\sigma(s'_2)=s'_2+\frac{1}{(x+1)^2}$ together with the solution\footnote{For the given $\dfield{\AA[s'_2]}{\sigma}$ the solution $g\in\AA[s'_2]$ has been computed in Example~\ref{Exp:ComputeTele}.}
$g=s'_2
+s_1
+s_1^2
+\frac{1}{x+1}
\in\AA[s_2]$ of $\sigma(g)=g+\beta_3$. Extending the $\KK$-embedding to $\fct{\tau}{\AA[s'_2]}{\seqK}$ with $\tau(s'_2)=\langle\sum_{i=1}^k\frac{1}{i^2}\rangle_{k\geq0}$ we obtain $\Shift(\tau(g))=\tau(g)+\tau(\beta_3)$
and
$G_3(k+1)=G_3(k)+h_3(k+1)$ for $k\geq \delta_1:=\max(L(g),L(\beta_3),\delta'_0)=0$ where 
$G_3(k)=\expr_k(g)=\sum_{j=1}^k\frac1{i^2}
+\sum_{j=1}^k\frac{(-1)^j}{j}
+\big(\sum_{j=1}^k\frac{(-1)^j}{j}\big)^2
+\frac{1}{k+1}$. 
Thus $G_3(k)+c''$ with $c''\in\KK$ and $E_3(k)$ agree if they are equal at $k=\delta_1=0$. This is the case for $c''=-1$ yielding
\begin{equation}\label{Equ:E3ClosedForm}
E_3(k)=\expr_k(g-1)=G_3(k)-1=\tsum_{j=1}^k \tfrac{1}{j^2}
+
\tsum_{j=1}^k \tfrac{(-1)^j}{j}
+\Big(
        \tsum_{j=1}^k \tfrac{(-1)^j}{j}\Big)^2
-\frac{k}{k+1}.
\end{equation}
Thus $\langle E_3(k)\rangle_{k\geq0}=\tau(g-1)$. We remark that we get the following (Laurent) polynomial ring over the coefficient ring $R=\tau(\QQ(n)(x))[\langle (-1)^k\rangle_{k\geq0}]$:  
$$R\big[\langle 2^k\rangle_{k\geq0},\langle 2^{-k}\rangle_{k\geq0}\big]\big[\langle\tbinom{n}{k}\rangle_{k\geq0},\langle\tbinom{n}{k}^{-1}\rangle_{k\geq0}\big]\big[\langle \tsum_{i=1}^{k}\tfrac{(-1)^i}{i}\rangle_{k\geq0},\langle\tsum_{i=1}^{k}\tfrac{1}{i^2}\rangle_{k\geq0}\big].$$
\end{example}

\begin{example}
When executing \texttt{SigmaReduce[A,k]} with the input $A(k)=\prod_{i=1}^k \prod_{j=1}^i j!$, the basic \piE-extension $\dfield{\KK(x)\ltr{p_1}\ltr{p_2}\ltr{p_3}}{\sigma}$ of $\dfield{\KK(x)}{\sigma}$ from Example~\ref{Exp:NestedP2} is constructed together with the $\KK$-embedding $\fct{\tau}{\KK(x)\ltr{p_1}\ltr{p_2}\ltr{p_3}}{\seqK}$ defined by~\eqref{Equ:NestedPiEmb}. 
Here $p_3$ represents $A(k)$ and we get $\tau(p_3)=\langle A(k)\rangle_{k\geq0}$. Summarizing, \texttt{Sigma} returns $A(k)$ without any simplification. As a by-product, we gain the insight that  
$$\tau(\KK(x))\big[\langle k!\rangle_{k\geq0},\langle k!^{-1}\rangle_{k\geq0}\big]\big[\langle \tprod_{i=1}^k i!\rangle_{k\geq0},\langle \tprod_{i=1}^k i!^{-1}\rangle_{k\geq0}\big]\big[\langle A(k)\rangle_{k\geq0},\langle A(k)^{-1}\rangle_{k\geq0}\big]$$ 
forms a (Laurent) polynomial ring over the field $\tau(\KK(x))$ of rational sequences. 
\end{example}

%\begin{example}
%Applying \texttt{SigmaReduce} to
%\begin{align*}
%f(k)=2^k\binom{n}{k}-\prod_{i=1}^k\tfrac{2(n-i+1)}{i}
%+\sum_{i=1}^k \frac{1}{i^2}
%+
%\sum_{i=1}^k \tfrac{(-1)^i}{i}
%+\big(
%        \sum_{i=1}^k \tfrac{(-1)^i}{i}\big)^2-\tfrac{k}{k+1}-\sum_{j=1}^k \tfrac{(-1)^j}{j} 
%\sum_{i=1}^j \tfrac{(-1)^i}{i (1+i)}.
%\end{align*}
%and parsing the summation objects from left to right delivers, e.g., the \rpisiSE-extension $\dfield{\AA[s'_2]}{\sigma}$ of $\dfield{\AA}{\sigma}$ and the $\KK$ embedding $\fct{\tau'}{\AA[s'_2]}{\seqK}$ from Example~\ref{Exp:SymbolicSummation}. In particular, \texttt{Sigma} produces $f=0$ such that $\tau(f)=\langle f(k)\rangle_{k\geq0}$. This implies that $f(k)=0$ for all $k$ from a certain point on. In particular,
%by book-keeping of all the evaluations of the $o$-functions for the intermediate identities, we conclude that $f(k)=0$ for all $k\geq 0$.
%\end{example}

\subsection{Parameterized telescoping and algebraic independence of sum sequences}\label{Sec:PLDE}

The summation paradigm of parameterized telescoping in terms of nested sum expressions can be formulated as follows.

\medskip

\noindent\fbox{\begin{minipage}{12.9cm}
\noindent \textbf{Problem PT:} Parameterized Telescoping.\\
\textit{Given} nested sum expressions $F_1(k),\dots,F_d(k)$ over nested products w.r.t.\ $\KK$ and $k$.\\
\textit{Find} an appropriate\footnote{In the simplest version one searches for a $G(k)$ in terms of the objects occurring in the $F_i(k)$.}
nested sum expression $G(k)$ over nested products, find constants $c_1,\dots,c_d\in\KK$, not all zero, and find a $\delta\in\NN$ s.t.\ for all $k\geq\delta$ we have
\begin{equation}\label{equ:creaProbSeq}
G(k+1)-G(k)=c_1\,F_1(k)+\dots+c_d\,F_d(k).
\end{equation}
\end{minipage}}

\medskip

\noindent Suppose that we succeed in computing such a $\delta$, a $G(k)$ and the $c_i$. Then we can sum~\eqref{equ:creaProbSeq} over $k$ from $\delta$ to $a$ and obtain the sum relation
\begin{equation}\label{Equ:SumRelation}
c_1\,\sum_{k=\delta}^a F_1(k)+\dots+c_d\,\sum_{k=\delta}^a F_d(k)=G(a+1)-G(\delta).
\end{equation}
Note that the special case $d=1$ boils down to the telescoping problem for indefinite summation. Further, Problem~PT contains the summation paradigm of creative telescoping~\cite{Zeilberger:91} for definite summation, which we will illustrate in Example~\ref{Exp:CreativeTele} below. In this regard, we refer to~\cite{AequalB,PauleSchorn:95,PauleRiese:97,Bauer:99,CK:12,CSFFL:15} for the \hbox{($q$--mixed)}hypergeometric approach, to~\cite{Zeilberger:90a,Chyzak:00,Koutschan:13} for the holonomic approach, to~\cite{Wilf:92,Wegschaider,AZ:06} for the multi-sum approach, or to~\cite{Schneider:05f,Schneider:08c,Schneider:10b,Schneider:15} for further refinements of the difference field approach.

\medskip

The simplest form of Problem~PT can be solved within the summation package \texttt{Sigma} by executing the function call \texttt{ParameterizedTelescoping[$\{F_1,F_2,\dots,F_d\},k]$}. Here \texttt{Sigma} starts with the \pisiSE-field $\dfield{\FF}{\sigma}$ and $\KK$-embedding from Examples~\ref{Exp:Embedding1} or~\ref{Exp:Embedding2}. Using the machinery of Subsection~\ref{Sec:RepresentExprToRPS} (see Proposition~\ref{Prop:ComputeRPSForExpr})
\texttt{Sigma} computes an \rpisiSE-extension $\dfield{\EE}{\sigma}$ of $\dfield{\FF}{\sigma}$ and a $\KK$-embedding $\fct{\tau}{\EE}{\seqK}$ with a defining function $\ev$ with the following elements: $f_1,\dots,f_d\in\EE$ and $\lambda=\max(L(f_1),\dots,L(f_d))$ such that $\expr_k(f_i)|_{k\to\nu}=\ev(f_i,\nu)=F_i(\nu)$ holds for all $\nu\geq\lambda$ and $1\leq i\leq d$. We emphasize that this construction is fully algorithmic for nested sums over simple products.\\
Using the algorithms from~\cite{Schneider:16a,DR2} or from Section~\ref{Sec:Telescoping} (or enhanced telescoping algorithms from~\cite{Schneider:05f,Schneider:08c,Schneider:10b,Schneider:15}) we can now compute a basis of $V((f_1,\dots,f_d),\EE)$, and can decide if there exists a $g\in\EE$ and $(c_1,\dots,c_d)\in\KK^d\setminus\{\vect{0}\}$ such that~\eqref{Equ:PT} holds. If yes, we get
$$\ev(g,\nu+1)-\ev(g,\nu)=c_1\,\ev(f_1,\nu)+\dots+c_d\,\ev(f_d,\nu)$$
for all $\nu\geq\delta$ with $\delta:=\max(\lambda,L(g))$. In particular, setting $G(k):=\expr_k(g)$ we have that $G(\nu)=\ev(g,\nu)$ and $\ev(f_j,\nu)=\expr_k(f_i)|_{k\to\nu}=F_i(\nu)$ for all $\nu\geq\delta$. Thus $(c_{1},\dots,c_{d},G(k))$ is a solution of~\eqref{equ:creaProbSeq} for all $k\geq\delta$.

\begin{example}\label{Exp:CreativeTele}
A special case of parameterized telescoping is Zeilberger's creative telescoping paradigm~\cite{Zeilberger:91}. We illustrate it with the sum
\begin{equation}\label{Equ:DefiniteSummand}
S(n)=\sum_{k=0}^n F(n,k)=\sum_{k=0}^n\binom{n}{k} \big(
        (-2)^k
        +2^k
\big) \tsum_{i=1}^k\frac{(-1)^i}{i}.
\end{equation}
We set $F_i(k):=F(n+i-1,k)$ for $i\geq1$ and obtain
$F_i(k)=\prod_{j=1}^{i-1}\frac{n+j}{n-k+j}F(n,k)$. Now we try to find a solution of~\eqref{equ:creaProbSeq} for $d=1,2,3\dots$.
Given $F_1(k)$, we start with the \pisiSE-field $\dfield{\KK(x)}{\sigma}$ over $\KK=\QQ(n)$ with $\sigma(x)=x+1$. We parse the summation objects in~\eqref{Equ:DefiniteSummand} and construct the \rpisiSE-extension $\dfield{\EE}{\sigma}$ of $\dfield{\FF}{\sigma}$ with $\EE=\FF[y]\lr{p_1}\lr{p_2}[s_1]$ from Example~\ref{Exp:ConstructEmbedding} together with the $\QQ(n)$-embedding 
$\fct{\tau}{\EE}{\seqP{\QQ(n)}}$ with~\eqref{Equ:s1Embed}. Here we obtain $f_1=p_2\,s_1\big(
        y\, p_1
        +p_1
\big)$ with $\expr_k(f_1)=F_1(k)$. Activating the algorithms in \texttt{Sigma}, we fail to find a $g\in\EE$ such that~\eqref{equ:creaProbSeq} holds for $d=1$. Hence we proceed with $d=2,3,\dots$. Here we represent the $F_i(k)$ with  $f_i=\prod_{j=1}^{i-1}\frac{n+j}{n-x+j} f_0\in\EE$ for $i\geq1$, i.e., we have that $\expr_k(f_i)=F_i(k)$ for all $k\geq0$.
Eventually, at $d=5$ we succeed: we find 
$c_1=9 (n+1) (n+2)$, $c_2=12 (n+2)^2$, $c_3=-2 \big(
        n^2+5 n+9\big)$, $c_4=-4 (n+3)^2$, $c_5=(n+3) (n+4)$ 
and $g\in\EE$ (which is too big to print it here) such that~\eqref{Equ:PT} holds. Taking $G(k)=\expr_k(g)$ leads to the solution~\eqref{equ:creaProbSeq} with $d=5$ for $k\geq0$. Summing this equation over $k$ from $0$ to $a$ gives the sum relation
\begin{equation}\label{Equ:ExpParaSol}
c_1\,\sum_{k=0}^aF(n,k)+c_2\,\sum_{k=0}^aF(n,k+1)+\dots+c_5\,\sum_{k=0}^aF(n+4,k)=G(a+1)-G(0).
\end{equation}
Finally, setting $a=n$ and taking care of the missing terms produces
\begin{align*}
9 (n+1) (n+2) S(n)
+12 (n+2)^2 &S(n+1)
-2 \big(
        n^2+5 n+9\big) S(n+2)\\
-4 (n+3)^2 &S(n+3)
+(n+3) (n+4) S(n+4)
=-8.
\end{align*}
Solving this recurrence relation in terms of d'Alembertian solutions with the algorithms from~\cite{Abramov:89a,Petkov:92,Abramov:94,Abramov:96,Bron:00,Schneider:01,Schneider:05a}, simplifying the solutions by our advanced telescoping algorithms~\cite{Schneider:05f,Schneider:08c,Schneider:10b,Schneider:15} and taking the first 4 initial values of $S(n)$ produces
$$S(n)=(-1)^n 
\sum_{i=1}^n \frac{(-3)^i}{i}
-(-1)^n 
\sum_{i=1}^n \frac{(-1)^i}{i}
+3^n 
\sum_{i=1}^n \frac{\big(
        -\tfrac{1}{3}\big)^i}{i}
-3^n 
\sum_{i=1}^n \frac{\big(\tfrac{1}{3}\big)^i}{i}.
$$
\end{example}

Summarizing, a non-trivial solution of the parameterized telescoping problem provides the linear relation~\eqref{Equ:SumRelation}. Conversely, it is amazing that the non-existence of such a solution implies the algebraic independence of the sums given in~\eqref{Equ:SumRelation}. This aspect has been worked out in~\cite{Schneider:10c} in the setting of \pisiSE-field extensions; see also~\cite{Singer:08}. In the following we will generalize these concepts from the field to the ring setting. In this regard, we utilize~\cite[Proposition~1]{AS:15}; compare~\cite{Karr:85,Schneider:10a} for various field versions.

\begin{proposition}\label{Prop:TeleStructure}
Let $\dfield{\AA[t_1]\dots[t_e]}{\sigma}$ be a \sigmaSE-extension of $\dfield{\AA}{\sigma}$ with $\sigma(t_i)-t_i\in\AA$ for $1\leq i\leq e$ where $\KK:=\const{\AA}{\sigma}$ is a field. Let $g\in\AA[t_1]\dots[t_e]$ with $\sigma(g)-g\in\AA$. Then $g=\sum_{i=1}^e c_i\,t_i+w$ with $c_i\in\KK$ and $w\in\AA$.
\end{proposition}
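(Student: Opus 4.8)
The plan is to argue by induction on the number $e$ of adjoined $S$-monomials; the base case $e=0$ is trivial with $w=g$. Before the induction step I would record the routine fact that $\const{\AA[t_1]\dots[t_j]}{\sigma}=\KK$ for every $0\le j\le e$, which follows from the chain of inclusions $\KK\subseteq\const{\AA[t_1]\dots[t_j]}{\sigma}\subseteq\const{\AA[t_1]\dots[t_e]}{\sigma}=\KK$; in particular each intermediate tower is again a \sigmaSE-extension of $\dfield{\AA}{\sigma}$. Writing $\AA':=\AA[t_1]\dots[t_{e-1}]$, the difference ring $\dfield{\AA'}{\sigma}$ thus satisfies the hypotheses of the proposition with $e-1$ in place of $e$, and $\const{\AA'}{\sigma}=\KK$.

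For the induction step put $t:=t_e$ and $\beta:=\sigma(t)-t\in\AA$, and write $g=\sum_{j=0}^{d}g_j\,t^j$ with $g_j\in\AA'$; if $g\in\AA'$ we apply the induction hypothesis to $g$ directly and append $c_e=0$, so assume $d\ge1$ and $g_d\neq0$. Expanding $\sigma(g)=\sum_{j=0}^{d}\sigma(g_j)(t+\beta)^{j}$, I would compare coefficients of powers of $t$ in the relation $\sigma(g)-g\in\AA$. Since $d\ge1$, the coefficient of $t^d$, namely $\sigma(g_d)-g_d$, must vanish, so $g_d\in\const{\AA'}{\sigma}=\KK$. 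If $d=1$, then $g=g_1\,t+g_0$ with $g_1\in\KK$, and $\sigma(g)-g=g_1\,\beta+\bigl(\sigma(g_0)-g_0\bigr)\in\AA$ together with $g_1\,\beta\in\AA$ forces $\sigma(g_0)-g_0\in\AA$; applying the induction hypothesis to $g_0\in\AA'$ gives $g_0=\sum_{i=1}^{e-1}c_i\,t_i+w$ with $c_i\in\KK$ and $w\in\AA$, and setting $c_e:=g_1$ finishes the step.

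The one point that needs genuine work is excluding the case $d\ge2$. Assuming $d\ge2$, the coefficient of $t^{d-1}$ in $\sigma(g)-g$ equals $d\,\sigma(g_d)\,\beta+\sigma(g_{d-1})-g_{d-1}$, and it must vanish; since $g_d\in\KK$ this rearranges to $\sigma\bigl(g_{d-1}+d\,g_d\,t\bigr)=g_{d-1}+d\,g_d\,t$, so $g_{d-1}+d\,g_d\,t\in\const{\AA'[t]}{\sigma}=\const{\AA[t_1]\dots[t_e]}{\sigma}=\KK$. But $t$ is transcendental over $\AA'\supseteq\KK$ and its coefficient $d\,g_d$ is nonzero — here $\QQ\subseteq\KK$ is essential, ensuring $d\neq0$ — so $g_{d-1}+d\,g_d\,t\notin\KK$, a contradiction. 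Hence $d\le1$, which completes the induction. I expect this degree bound to be the main (and essentially the only) obstacle; the remaining arguments are straightforward bookkeeping, and the statement can also be recognised as a ring analogue of the \pisiSE-field fact cited just before the proposition.
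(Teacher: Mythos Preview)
Your argument is correct. The paper itself does not supply a proof of this proposition but merely cites \cite[Proposition~1]{AS:15} (with pointers to \cite{Karr:85,Schneider:10a} for field versions), so there is no in-paper proof to compare against; your induction on $e$ with coefficient comparison in the top variable, using $\QQ\subseteq\KK$ to rule out degree $\geq 2$, is exactly the standard route and presumably matches the cited reference.
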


\noindent With this result we can now generalize~\cite[Thm~3.1]{Schneider:10c} to the ring setting. 

\begin{theorem}\label{Thm:ConnectionDefSumAndProperSum} Let
$\dfield{\AA}{\sigma}$ be a difference ring  with constant field
$\KK$ and let $(f_1,\dots,f_d)\in\AA^d$. Then the following statements
are equivalent.

\begin{enumerate}
\item There are no $g\in\AA$ and $\vect{0}\neq(c_1,\dots,c_d)\in\KK^d$
with~\eqref{Equ:PT}.
\item There is a \sigmaSE-extension
$\dfield{\AA[t_1]\dots[t_d]}{\sigma}$ of $\dfield{\AA}{\sigma}$ with
$\sigma(t_i)=t_i+f_i$ for $1\leq i\leq d$.
\end{enumerate}
\end{theorem}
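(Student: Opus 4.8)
\noindent\emph{Proof plan.} The plan is to treat the two implications separately, using Theorem~\ref{Thm:RPSCharacterization}(1) to recognise $\Sigma^*$-monomials and Proposition~\ref{Prop:TeleStructure} to pin down the possible shape of a telescoper living inside the polynomial tower. Throughout one uses that $\KK=\const{\AA}{\sigma}$ is a field, as required by Theorem~\ref{Thm:RPSCharacterization}(1).

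For the direction $(2)\Rightarrow(1)$ I would argue by contradiction. Assume $\dfield{\AA[t_1]\dots[t_d]}{\sigma}$ is the claimed \sigmaSE-extension and that there are $g\in\AA$ and $\vect{0}\neq(c_1,\dots,c_d)\in\KK^d$ with $\sigma(g)-g=\sum_{i=1}^d c_i f_i$. Set $h:=\sum_{i=1}^d c_i t_i-g$. Since $\sigma(t_i)-t_i=f_i$, applying $\sigma-\mathrm{id}$ gives $\sigma(h)-h=\sum_i c_i f_i-(\sigma(g)-g)=0$, so $h\in\const{\AA[t_1]\dots[t_d]}{\sigma}=\const{\AA}{\sigma}=\KK\subseteq\AA$ by the \sigmaSE-property. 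But the $t_i$ are transcendental and algebraically independent over $\AA$, so $\sum_i c_i t_i-g$ with not all $c_i$ zero has positive total degree in the $t_i$ and hence lies in $\AA[t_1]\dots[t_d]\setminus\AA$; this contradiction proves $(1)$.

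For $(1)\Rightarrow(2)$ I would first observe that, since each $f_i\in\AA$, the tower $\dfield{\AA[t_1]\dots[t_d]}{\sigma}$ with $\sigma(t_i)=t_i+f_i$ is a well-defined $S$-extension (each $t_i$ being an $S$-monomial over $\AA[t_1]\dots[t_{i-1}]$ because $f_i\in\AA$), and it remains to show it is a \sigmaSE-extension. I would do this by induction on $d$, the case $d=0$ being trivial. For the step, note that statement $(1)$ for $(f_1,\dots,f_d)$ implies $(1)$ for $(f_1,\dots,f_{d-1})$ (append a zero component), so by the induction hypothesis $\dfield{\AA[t_1]\dots[t_{d-1}]}{\sigma}$ is a \sigmaSE-extension; in particular its constant field is $\KK$ and $\sigma(t_i)-t_i=f_i\in\AA$ for $1\le i\le d-1$. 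By Theorem~\ref{Thm:RPSCharacterization}(1), $t_d$ fails to be a \sigmaSE-monomial over $\AA[t_1]\dots[t_{d-1}]$ only if there is a $g\in\AA[t_1]\dots[t_{d-1}]$ with $\sigma(g)-g=f_d$. Assuming such a $g$, Proposition~\ref{Prop:TeleStructure} (applicable since $\sigma(g)-g=f_d\in\AA$) gives $g=\sum_{i=1}^{d-1}c_i t_i+w$ with $c_i\in\KK$ and $w\in\AA$; applying $\sigma-\mathrm{id}$ yields $\sigma(w)-w=f_d-\sum_{i=1}^{d-1}c_i f_i$, i.e.\ a relation of the form~\eqref{Equ:PT} with the nonzero coefficient vector $(-c_1,\dots,-c_{d-1},1)$, contradicting $(1)$. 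Hence $t_d$ is a \sigmaSE-monomial and the whole tower is a \sigmaSE-extension.

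The argument is largely routine given the cited results; the one genuinely load-bearing step is the induction step, where Proposition~\ref{Prop:TeleStructure} is needed to reduce a hypothetical telescoper living in the full polynomial tower to a $\KK$-linear combination of the $t_i$ plus an element of $\AA$ — exactly the reduction that converts "inner" solvability over $\AA[t_1]\dots[t_{d-1}]$ into the "outer" parameterized relation~\eqref{Equ:PT} that $(1)$ forbids. A minor point to keep in mind is that for $d=1$ one uses that $\KK$ is a field, so that $\sigma(g)-g=c_1 f_1$ with $c_1\neq0$ normalises to $\sigma(g/c_1)-g/c_1=f_1$, matching the hypothesis of Theorem~\ref{Thm:RPSCharacterization}(1).
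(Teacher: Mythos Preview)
Your proof is correct and follows essentially the same approach as the paper's. Both directions match: for $(2)\Rightarrow(1)$ you and the paper produce the same constant $h=\sum_i c_i t_i-g$ and derive the same contradiction; for $(1)\Rightarrow(2)$ you use explicit induction on $d$ while the paper phrases the same argument via a maximal index $i$ at which the tower first fails to be a \sigmaSE-extension, but in both cases the crux is applying Proposition~\ref{Prop:TeleStructure} to the hypothetical telescoper to extract a forbidden relation~\eqref{Equ:PT}.
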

\begin{proof}
$(2)\Rightarrow(1)$: Suppose that~\eqref{Equ:PT} holds for some
$\vect{0}\neq(c_1,\dots,c_d)\in\KK^d$ and $g\in\AA$. In
addition, assume that there exists a \sigmaSE-extension
$\dfield{\EE}{\sigma}$ of $\dfield{\AA}{\sigma}$ with $\EE=\AA[t_1]\dots[t_d]$ and
$\sigma(t_i)=t_i+f_i$ for $1\leq i\leq d$. Then
$\sigma(g)-g=\sum_{i=1}^dc_i\big(\sigma(t_i)-t_i\big)
=\sigma(\sum_{i=1}^d c_i\,t_i)-\sum_{i=1}^d c_i\,t_i$, and thus
$\sigma(h)=h$ with $h=\sum_{i=1}^d c_i\,t_i-g$. Since not all $c_i$ are zero and $g\in\AA$, $h\notin\const{\EE}{\sigma}\setminus\KK$; a contradiction since $\dfield{\EE}{\sigma}$ is a \sigmaSE-extension of $\dfield{\AA}{\sigma}$.\\
$(1)\Rightarrow(2)$: Let $0\leq i<d$ be maximal such that
$\dfield{\AA[t_1]\dots[t_{i}]}{\sigma}$ is a \sigmaSE-extension of
$\dfield{\AA}{\sigma}$, but $t_{i+1}$ is not a \sigmaSE-monomial. Hence $\sigma(\gamma)-\gamma=f_{i+1}$ for some
$\gamma\in\AA[t_1]\dots[t_{i}]$, and 
thus
$\gamma=h+\sum_{j=1}^{i}c_j\,t_j$ for some $c_j\in\KK$ and  
$h\in\AA$ by Prop.~\ref{Prop:TeleStructure}. Then
$\sigma(h)-h=f_{i+1}-\sum_{j=1}^{i}
c_j(\sigma(t_j)-t_j)=f_{i+1}-\sum_{j=1}^ic_j\,f_j$, i.e., we get a solution of~\eqref{Equ:PT} in $\AA$.
\end{proof}

\noindent To this end, we arrive at the following result; for a special case see~\cite[Thm.~5.2]{Schneider:10c}.

\begin{theorem}\label{Thm:ParaProvedTrans}
Let $\dfield{\EE}{\sigma}$ be a basic \rpisiSE-extension of $\dfield{\FF}{\sigma}$ with constant field
$\KK:=\const{\FF}{\sigma}$, and let
$\fct{\tau}{\EE}{\seqK}$ be a $\KK$-embedding
with~\eqref{Equ:EvDef} together with an o-function $L$; let
$(f_1,\dots,f_d)\in\EE^d$.  Then the following
statements are equivalent:
\begin{enumerate}
\item There are no $g\in\EE$ and $\vect{0}\neq(c_1,\dots,c_d)\in\KK^d$
with~\eqref{Equ:PT}.

\item The sequences $\langle S_1(a)\rangle_{a\geq0},\dots,\langle S_d(a)\rangle_{a\geq0}$ given by
\begin{equation}\label{Equ:SumEvEmbed}
S_1(a):=\sum_{k=l}^a\ev(f_1,k),\dots,S_d(a):=\sum_{k=l}^a\ev(f_d,k)
\end{equation}
with $l\geq\max(L(f_1),\dots,L(f_d))\in\NN$ are algebraically independent over $\tau(\EE)$.
\end{enumerate}
\end{theorem}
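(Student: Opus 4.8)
The plan is to reduce the equivalence to Theorem~\ref{Thm:ConnectionDefSumAndProperSum} applied to the difference ring $\dfield{\EE}{\sigma}$ and to transport the resulting statement about \sigmaSE-extensions into a statement about sequences via $\tau$. First I would observe that statement~(1) is, by Theorem~\ref{Thm:ConnectionDefSumAndProperSum}, equivalent to the existence of a \sigmaSE-extension $\dfield{\EE[t_1]\dots[t_d]}{\sigma}$ of $\dfield{\EE}{\sigma}$ with $\sigma(t_i)=t_i+f_i$ for $1\leq i\leq d$; here I use that $\const{\EE}{\sigma}=\KK$, which holds since $\dfield{\EE}{\sigma}$ is a basic \rpisiSE-extension of $\dfield{\FF}{\sigma}$. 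Since such a tower of $S$-extensions is automatically an $APS$-extension over $\dfield{\FF}{\sigma}$, and since $\dfield{\FF}{\sigma}$ is constant-stable (by Lemma~\ref{Lemma:EmbeddingImpliesConstantStable}, using that $\tau$ is a $\KK$-embedding and $\FF$ is a field, hence an integral domain), being a \sigmaSE-extension is by Theorem~\ref{Thm:EquivSimpleConst} equivalent to $\dfield{\EE[t_1]\dots[t_d]}{\sigma}$ being simple, equivalently a basic \rpisiSE-extension of $\dfield{\FF}{\sigma}$.

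Next I would invoke Theorem~\ref{Thm:RPiSiImpliedEmbedding} (part~(1)): under statement~(1), the basic \rpisiSE-extension $\dfield{\EE[t_1]\dots[t_d]}{\sigma}$ admits a $\KK$-embedding $\fct{\tau'}{\EE[t_1]\dots[t_d]}{\seqK}$ extending $\tau$, and by Lemma~\ref{Lemma:LiftEvToPoly}(1) (applied iteratively) one may take the defining function $\ev'$ so that $\ev'(t_i,k)=\sum_{j=l}^k\ev(f_i,j-1)+c$ for a suitable constant; after a shift of index this is exactly $S_i(a)$ up to an additive constant from $\KK$ and up to finitely many initial terms. Because $\tau'$ is injective, the subring of $\dfield{\seqK}{\Shift}$ generated by $\tau(\EE)$ together with $\tau'(t_1),\dots,\tau'(t_d)$ is isomorphic to the polynomial ring $\EE[t_1,\dots,t_d]$, so $\tau'(t_1),\dots,\tau'(t_d)$ are algebraically independent over $\tau(\EE)$. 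Since adding a constant sequence $\vect{c}$ and dropping finitely many entries does not affect algebraic independence over $\tau(\EE)$ (constants lie in $\tau(\KK)\subseteq\tau(\EE)$ and germs ignore finite prefixes), it follows that $\langle S_1(a)\rangle_{a\geq0},\dots,\langle S_d(a)\rangle_{a\geq0}$ are algebraically independent over $\tau(\EE)$; this is statement~(2).

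For the converse, I would argue contrapositively: if statement~(1) fails, there are $g\in\EE$ and $\vect{0}\neq(c_1,\dots,c_d)\in\KK^d$ with $\sigma(g)-g=c_1f_1+\dots+c_df_d$. Applying $\tau$ and summing, one gets $\sum_{k=l}^a\bigl(\ev(g,k+1)-\ev(g,k)\bigr)=c_1 S_1(a)+\dots+c_d S_d(a)$ for all $a\geq l$ large enough (using the evaluation properties~\eqref{Ev:Shift}, \eqref{Ev:Mult}, \eqref{Ev:Add} and the $o$-function $L$ to control the threshold), hence $c_1 S_1(a)+\dots+c_d S_d(a)=\ev(g,a+1)-\ev(g,l)$. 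Thus the germ $c_1\langle S_1(a)\rangle+\dots+c_d\langle S_d(a)\rangle$ equals $\Shift(\tau(g))-\tau(g)|_{\text{shifted}}\in\tau(\EE)$, which is a nontrivial $\tau(\EE)$-linear, in particular polynomial, dependence among the $\langle S_i(a)\rangle$; so statement~(2) fails. The main obstacle I anticipate is purely bookkeeping: matching the various finite truncation thresholds ($l$, $L(f_i)$, $L(g)$, and the $r$ appearing in Lemma~\ref{Lemma:LiftEvToPoly}) so that the identities hold for \emph{all} sufficiently large $a$, and checking that the additive and index-shift ambiguities in $\ev'(t_i,k)$ genuinely do not disturb algebraic independence over $\tau(\EE)$ — both are routine once the germ formalism is used consistently, but they must be handled carefully to make the two directions line up.
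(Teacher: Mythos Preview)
Your proposal is correct and follows essentially the same route as the paper: reduce (1) via Theorem~\ref{Thm:ConnectionDefSumAndProperSum} to the existence of a \sigmaSE-tower, extend the $\KK$-embedding to that tower using Lemma~\ref{Lemma:LiftEvToPoly} and the simplicity/embedding machinery, and read off algebraic independence from injectivity; for the converse, telescope and read off a linear relation in $\tau(\EE)$.

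Two minor points where the paper is slightly cleaner. First, the paper applies Theorem~\ref{Thm:ConnectionDefSumAndProperSum} to $\sigma(f_1),\dots,\sigma(f_d)$ rather than to $f_1,\dots,f_d$ (after observing that the parameterized telescoping problem for the $f_i$ and for the $\sigma(f_i)$ are equivalent, via $g\mapsto\sigma^{-1}(g)$). With $\sigma(s_i)=s_i+\sigma(f_i)$, formula~\eqref{Equ:SumProdHom} gives $\ev'(s_i,k)=\sum_{j=l}^k\ev(\sigma(f_i),j-1)=\sum_{j=l}^k\ev(f_i,j)=S_i(k)$ on the nose, so no index shift appears and the bookkeeping you anticipate disappears. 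Your version with $\sigma(t_i)=t_i+f_i$ produces $\tau'(t_i)$ that differs from $\langle S_i(a)\rangle$ not just by a constant from $\KK$ but by the element $\tau(f_i)\in\tau(\EE)$ (since $\Shift(\tau'(t_i))=\tau'(t_i)+\tau(f_i)$); this still preserves algebraic independence over $\tau(\EE)$, but your phrase ``additive constant from $\KK$'' undersells what is being added. Second, the detour through Theorem~\ref{Thm:EquivSimpleConst} and constant-stability is unnecessary: once $\dfield{\EE[t_1]\dots[t_d]}{\sigma}$ is a \sigmaSE-extension of $\dfield{\EE}{\sigma}$, it is by definition a basic \rpisiSE-extension of $\dfield{\FF}{\sigma}$ (the constants are preserved and $S$-monomials impose no basicness constraint), so Theorem~\ref{Thm:RPiSiImpliedEmbedding} applies directly.
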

\begin{proof}
$(1)\Rightarrow(2)$: Suppose that there does not exist a $g\in\EE$ and $\vect{0}\neq(c_1,\dots,c_d)\in\KK^d$ with~\eqref{Equ:PT}. Then there does not exist a $g\in\EE$ and $\vect{0}\neq(c_1,\dots,c_d)\in\KK^d$ with $\sigma(g)-g=c_1\,\sigma(f_1)+\dots+c_d\,\sigma(f_d)$. By Theorem~\ref{Thm:ConnectionDefSumAndProperSum} we can construct the \sigmaSE-extension $\dfield{\HH}{\sigma}$ of $\dfield{\EE}{\sigma}$ with $\HH=\EE[s_1]\dots[s_d]$ and $\sigma(s_i)=s_i+\sigma(f_i)$ for $1\leq i\leq d$. Note that $\dfield{\HH}{\sigma}$ is a basic \rpisiSE-extension of $\dfield{\FF}{\sigma}$. By iterative application of Lemmas~\ref{Lemma:LiftEvToPoly} and~\ref{Lemma:SimpleImpliesEmbedding} we can extend $\tau$ to a $\KK$-embedding $\fct{\tau}{\HH}{\seqK}$. In particular, we can take $l\in\NN$ with $l\geq\max(L(f_1),\dots,L(f_d))\in\NN$ and $c=0$ in~\eqref{Equ:SumProdHom} and get $\expr_k(s_j)|_{k\to n}=S_i(n)$ for all $n\geq l$. Consequently $\tau(\EE)[\langle S_1(n)\rangle_{n\geq 0},\dots,\langle S_d(n)\rangle_{n\geq 0}]$ forms a polynomial ring.\\
$(2)\Rightarrow(1)$: Assume that there exist a $g\in\EE$ and $\vect{0}\neq(c_1,\dots,c_d)\in\KK^d$ with~\eqref{Equ:PT}. Then we obtain~\eqref{equ:creaProbSeq} with $G(k)=\ev(g,k)$ and $F_i(k)=\ev(f_i,k)$ for all $k\geq\delta$ with $\delta:=\max(L(f_1),\dots,L(f_d),L(g))$. Therefore summing~\eqref{equ:creaProbSeq} over $k$ from $\delta$ to $a$ yields~\eqref{Equ:SumRelation} for all $a\geq\delta$. Note that for $l\in\NN$ with $l\geq\max(L(f_1),\dots,L(f_d))$ we have that $l\leq\delta$. Therefore we can adapt the lower bounds of the sums in~\eqref{Equ:SumRelation} to $l$ by taking care of compensating terms which are elements from $\KK$. Thus we obtain a slightly modified right hand side in~\eqref{Equ:SumRelation}. Thus the sequences produced by $S_1(a),\dots,S_d(a)$ are algebraically dependent over $\tau(\EE)$.
\end{proof}

\begin{example}
Take the \rpisiSE-extension $\dfield{\EE}{\sigma}$ of $\dfield{\FF}{\sigma}$ from Example~\ref{Exp:CreativeTele} (resp.\ from Examples~\ref{Exp:ConstructRPiSiExt} and~\ref{Exp:ConstructEmbedding}). 
As demonstrated in Example~\ref{Exp:CreativeTele}, we can represent the shifted versions $F_i(k)=F(n+i-1,k)$ of the summand $F(n,k)$ in~\eqref{Equ:DefiniteSummand} for all $i\geq1$ by $f_i\in\EE$. Namely, we have that $\ev(f_i,k)=F_i(k)=F(n+i-1,k)$ for all $k\geq0$.
In particular, we checked in Example~\ref{Exp:CreativeTele} that there is no $g\in\EE$ with~\eqref{Equ:PT} for $d=4$.
Hence the sequences~\eqref{Equ:SumEvEmbed} with $l=0$ and $d=4$ are algebraically independent over $\tau(\EE)$. We remark that for $d=5$ we obtain the relation~\eqref{Equ:ExpParaSol} for explicitly given $c_i$ and $G(a+1)-G(0)$.
\end{example}

\section{Conclusion}\label{Sec:Conclusion}

Starting from the results of~\cite{Karr:81,Schneider:16a} we derived new insight for basic \rpisiSE-extensions and 
found results similar to those known from the Galois theory of difference equations~\cite{Singer:99,Singer:08}. As a consequence we obtained new intrinsic characterizations of basic \rpisiSE-extensions based on the notion of simple difference rings, on the decomposition of interlaced difference rings, and on the embedding of difference rings into the ring of sequences. In particular, these results yield a new method to solve the parameterized telescoping problem within such a ring or its total ring of fractions. Moreover, we provided a constructive machinery to embed basic \rpisiSE-extensions into the ring of sequences. As a consequence we can justify in full generality that the summation package \texttt{Sigma} produces simplifications where the arising sums are algebraically independent. In this regard, we could generalize the difference field results of~\cite{Schneider:10c} to show that the non-existence of a parameterized telescoping solution (in particular, of a creative telescoping solution) in an \rpisiSE-extension provides a proof that certain indefinite nested sums are algebraically independent. 

Further investigations will be necessary in order to obtain similar results for simple \rpisiSE-extensions~\cite{Schneider:16a} which enable one to represent also expressions in terms of nested products that depend on $(-1)^{\binom{k}{l}}$ for some $l\in\NN$. Besides this, it will be very interesting to see if the obtained results (like, e.g., Theorem~\ref{Thm:ConstructIso}) can contribute to new algorithmic aspects of the Galois theory of difference equations~\cite{Singer:97,Singer:08}.

\section*{Acknowledgement}

\noindent I would like to thank the anonymous referees for their detailed comments and suggestions to improve the presentation of this article.

%\bibliographystyle{plain}
%\bibliography{biblio}

\end{document}